\definecolor{nblue}{RGB}{19, 39, 150}
\definecolor{dark-gray}{gray}{0.35}
\theoremstyle{plain}
\newtheorem{theorem}{Theorem}
\newtheorem{lemma}[theorem]{Lemma}
\newtheorem{corollary}[theorem]{Corollary}
\newtheorem{proposition}[theorem]{Proposition}
\theoremstyle{definition}
\newtheorem{definition}[theorem]{Definition}
\theoremstyle{remark}
\newtheorem{remark}[theorem]{Remark}
\pgfplotsset{compat=1.13}
\newcommand{\N}{\mathbb{N}}
\newcommand{\Z}{\mathbb{Z}}
\newenvironment{rtheorem}[3][]{%
\noindent\ifthenelse{\equal{#1}{}}{\bf #2 #3.}{\bf #2 #3 (#1)}%
\begin{it}}{\end{it}}
\newcommand{\mysetminusD}{\hbox{\tikz{\draw[line width=0.6pt,line cap=round] (3pt,0) -- (0,6pt);}}}
\newcommand{\mysetminusT}{\mysetminusD}
\newcommand{\mysetminusS}{\hbox{\tikz{\draw[line width=0.45pt,line cap=round] (2pt,0) -- (0,4pt);}}}
\newcommand{\mysetminusSS}{\hbox{\tikz{\draw[line width=0.4pt,line cap=round] (1.5pt,0) -- (0,3pt);}}}
\newcommand{\mysetminus}{\mathbin{\mathchoice{\mysetminusD}{\mysetminusT}{\mysetminusS}{\mysetminusSS}}}
\title{Rapid Mixing of the Switch Markov Chain for Strongly Stable Degree Sequences and 2-Class Joint Degree Matrices\thanks{This work is supported by NWO Gravitation Project NETWORKS, Grant Number
		024.002.003.}}
\author{Georgios Amanatidis \\
	\small Centrum Wiskunde \& Informatica (CWI)\\[-0.8ex]
	\small Amsterdam, The Netherlands\\
	\small\tt georgios.amanatidis@cwi.nl\\
	\and
	Pieter Kleer \\ 
	\small Centrum Wiskunde \& Informatica (CWI)\\[-0.8ex]
	\small Amsterdam, The Netherlands\\
	\small\tt kleer@cwi.nl
}
\begin{document}
	
\begin{titlepage}
	\clearpage\maketitle
	\thispagestyle{empty}
	
\begin{abstract}
The \emph{switch Markov chain} has been extensively studied as the most natural Markov Chain Monte Carlo approach for sampling graphs with prescribed degree sequences. We use comparison arguments with other---less natural but simpler to analyze---Markov chains, to show that the switch chain mixes rapidly in two different settings. We first study the classic problem of uniformly sampling simple undirected, as well as bipartite, graphs with a given degree sequence. We apply an embedding argument, involving a Markov chain defined by Jerrum and Sinclair (TCS, 1990) for sampling graphs that \emph{almost} have a given degree sequence, to show rapid mixing for degree sequences satisfying \emph{strong stability}, a notion closely related to $P$-stability. This results in a much shorter proof that unifies the currently known rapid mixing results of the switch chain and extends them up to sharp characterizations of $P$-stability. In particular, our work resolves an open problem posed by Greenhill (SODA, 2015).
		
Secondly, in order to illustrate the power of our approach, we study the problem of uniformly sampling graphs for which---in addition to the degree sequence---a \emph{joint degree distribution} is given. Although the problem was formalized over a decade ago, and despite its practical significance in generating synthetic network topologies, small progress has been made on the random sampling of such graphs. The case of a single degree class reduces to sampling of regular graphs, but beyond this almost nothing is known. We fully resolve the case of two degree classes, by showing that the switch Markov chain is \emph{always} rapidly mixing. Again, we first analyze an auxiliary chain for strongly stable instances on an augmented state space and then use an embedding argument.
	
%\bigskip\noindent \textbf{Keywords:} switch Markov chain; graph sampling; fixed degree sequence; joint degree distribution.
\end{abstract}

\end{titlepage}

\newpage

%%%%%%%%%%%%%%%%--to be removed--%%%%%%%%%%%%%%%%%%%%%%
\clearpage
\tableofcontents
\thispagestyle{empty}

\newpage

\setcounter{page}{1}
%%%%%%%%%%%%%%%%--to be removed--%%%%%%%%%%%%%%%%%%%%%%

%%%%%%%%%%%%%%%%%%%%%%%%%%%%%%%%%%%%%%%%%%%%%%%%%%%%%%%%%%%%%%%%%%%%%%%%
%%%%%%%%%%%%%%%%%%%%%%%%%%%%%%%%%%%%%%%%%%%%%%%%%%%%%%%%%%%%%%%%%%%%%%%%
\section{Introduction} 
%%%%%%%%%%%%%%%%%%%%%%%%%%%%%%%%%%%%%%%%%%%%%%%%%%%%%%%%%%%%%%%%%%%%%%%%
%%%%%%%%%%%%%%%%%%%%%%%%%%%%%%%%%%%%%%%%%%%%%%%%%%%%%%%%%%%%%%%%%%%%%%%%
The problem of (approximately) uniform sampling of simple graphs with a given degree sequence has received considerable attention, and has applications in domains as diverse as hypothesis testing in network structures \cite{Olding2014} and ecology, see, e.g., \cite{Rao1996} and references therein. We refer the interested  reader to \cite{Blitzstein2011} for more pointers to possible applications. 

Several variants of the problem have been studied, including sampling undirected, bipartite, connected, or directed graphs. In particular, there is an extensive line of work on Markov Chain Monte Carlo (MCMC) approaches, see, e.g., \cite{Jerrum1990,Kannan1999,Feder2006,Cooper2007,Greenhill2017journal,Erdos2013,Erdos2015decomposition,ErdosMMS2018,CooperDGH17,CarstensK18}.
In such an approach one studies a random walk on the space of all graphical realizations.\footnote{Sometimes the state space is augmented with a set of auxiliary states, as in \cite{Jerrum1990}.} This random walk is induced by making small local changes to a given realization using a probabilistic procedure that defines the Markov chain. The idea, roughly, is that after a sufficient number of steps, the so-called \emph{mixing time}, the resulting graph  corresponds to a sample from an almost uniform distribution over all graphical realizations of the given degree sequence. The goal is to show that the chain mixes \emph{rapidly}, meaning that one only needs to perform a polynomial (in the size of the graph) number of transitions  of the chain in order to obtain an approximately uniform sample.  One of the most well-known probabilistic procedures for making these small changes uses local operations called \emph{switches} (also known as \emph{swaps} or \emph{transpositions}); see, e.g., \cite{Rao1996} and Figure \ref{fig:switch_0} for an example.
The resulting \emph{switch Markov chain} has been shown to be rapidly mixing for various degree sequences \cite{Cooper2007,Greenhill2015, Greenhill2017journal,Erdos2013,Erdos2015decomposition,ErdosMMS2018}, but it is still open whether it is rapidly mixing for all degree sequences. %\yarem{Do Kannan et al.~actually conjecture this? I Can't find it in their paper. \pkblue{Good point. They hint at it, but never say it explicitly. However in some Erdos papers it is phrased as a conjecture...}} 

\begin{figure}[ht!]
	\centering
	\begin{tikzpicture}[scale=0.6]
	\coordinate (A1) at (0,0); 
	\coordinate (A2) at (0,1.5);
	\coordinate (M1) at (2,0);
	\coordinate (M2) at (2,1.5);
	
	\coordinate (P1) at (3,1);
	\coordinate (P2) at (4,1);
	%\draw [very thick,fill opacity=0.5] (A1) -- (A2) -- (M2) -- (T2)--(T1)--(M1)--(A1)--cycle;
	
	\node at (A1) [circle,scale=0.7,fill=black] {};
	\node (a1) [below=0.1cm of A1]  {$v$};
	\node at (A2) [circle,scale=0.7,fill=black] {};
	\node (a2) [above=0.1cm of A2]  {$w$};
	\node at (M1) [circle,scale=0.7,fill=black] {};
	\node (m1) [below=0.1cm of M1]  {$x$};
	\node at (M2) [circle,scale=0.7,fill=black] {};
	\node (m2) [above=0.1cm of M2]  {$y$};
	
	\path[every node/.style={sloped,anchor=south,auto=false}]
	(A1) edge[-,very thick] node {} (A2) 
	(M1) edge[-,very thick] node {} (M2)
	(P1) edge[->,very thick] node {} (P2);  
	\end{tikzpicture}
	\quad
	\begin{tikzpicture}[scale=0.6]
	\coordinate (A1) at (0,0); 
	\coordinate (A2) at (0,1.5);
	\coordinate (M1) at (2,0);
	\coordinate (M2) at (2,1.5);
	
	%\draw [very thick,fill opacity=0.5] (A1) -- (A2) -- (M2) -- (T2)--(T1)--(M1)--(A1)--cycle;
	
	\node at (A1) [circle,scale=0.7,fill=black] {};
	\node (a1) [below=0.1cm of A1]  {$v$};
	\node at (A2) [circle,scale=0.7,fill=black] {};
	\node (a2) [above=0.1cm of A2]  {$w$};
	\node at (M1) [circle,scale=0.7,fill=black] {};
	\node (m1) [below=0.1cm of M1]  {$x$};
	\node at (M2) [circle,scale=0.7,fill=black] {};
	\node (m2) [above=0.1cm of M2]  {$y$};
	
	\path[every node/.style={sloped,anchor=south,auto=false}]
	(A1) edge[-,very thick] node {} (M2) 
	(M1) edge[-,very thick] node {} (A2);  
	\end{tikzpicture}
	\caption{Example of a switch in which edges $\{v,w\},\{x,y\}$ are replaced by $\{v,y\},\{x,w\}$. Note that the degree sequence is preserved when applying a switch operation.}
	\label{fig:switch_0}
\end{figure}
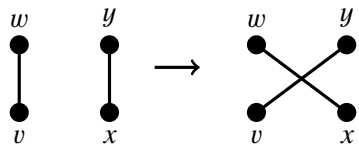

In this work, besides the problem of sampling undirected, as well as bipartite, simple graphs with a given degree sequence, we also focus on  the problem of sampling undirected simple graphs with a given \emph{joint degree distribution}. That is, in addition to the degrees, the number of edges between nodes of degree $i$ and degree $j$ is also specified for every pair $(i,j)$.\footnote{We refer the reader to Subsection \ref{sec:jdm_model} for an exact definition and also Appendix \ref{app:example} for an example.} The motivation for using such a metric is that this extra information restricts the space of possible realizations to graphs with more desirable structure.
This was first observed by Mahadevan et al.~\cite{MahadevanKFV06} who argued that the joint degree distribution is a much more reliable metric for a synthetic graph to resemble a real network topology, compared to just using the degree sequence. The joint degree matrix model of Amanatidis, Green, and Mihail \cite{Amanatidis2015} formalizes this approach.
Although there are polynomial-time algorithms that produce a graphical realization of a given joint degree distribution \cite{Amanatidis2015,AGM2018,StantonP11,CzabarkaDEM15,GjokaTM15}, it is not
known how to  uniformly sample such a realization efficiently. In particular, bounding the mixing time of the natural restriction of the switch Markov chain for this setting has been an open problem since the introduction of the model \cite{Amanatidis2015,StantonP11,Erdos2015decomposition}.

\vspace{-5pt}
%\pkrem{I moved the title of subsection up. Otherwise it reads a bit weird right after definition of JDM...}
\paragraph{Our Contribution.} The proofs of the results in \cite{Erdos2013,Greenhill2015,Erdos2015decomposition,ErdosMMS2018} for the analysis of the switch Markov chain in undirected (or bipartite) graphs are all using conceptually similar  ideas to the ones introduced by Cooper, Dyer and Greenhill \cite{Cooper2007} for the analysis of the switch chain for regular undirected graphs, and are based on the multicommodity flow method of Sinclair \cite{Sinclair1992}. The individual parts of this template can become quite technical and require long proofs.
In this work we take a different approach for proving that the switch chain is rapidly mixing. First we analyze some easier auxiliary Markov chain; such a chain  can be used to sample graphical realizations that \emph{almost} have a given fixed degree sequence or joint degree distribution. We  show that there exists an efficient multicommodity flow for the auxiliary chain when the given instance is \emph{strongly stable}, \footnote{In the case of sampling graphs with a given degree sequence, the existence of such a flow for the auxiliary chain we use was already claimed in \cite{Jerrum1990}, at least for the degree sequences satisfying a more restrictive analog of inequality \eqref{eq:stable1}. For completeness, we give a detailed proof in Appendix \ref{app:js}.}  
and then show how it can be transformed into an efficient multicommodity flow for the switch chain.
Note that this last step compares two Markov chains with different state spaces, as the auxiliary chain samples from a strictly larger set of graphs than the switch chain. Thus, for the flow transformation we use embedding arguments similar to those in Feder et al.~\cite{Feder2006}.\footnote{We also refer the reader to the survey of Dyer et al.~\cite{Dyer2006} for more on Markov chain comparison techniques.}
%\footnote{We also refer the reader to the survey of Dyer, Goldberg, Jerrum, and Martin~\cite{Dyer2006} for more on Markov chain comparison techniques.} %We also refer the reader to the survey of Dyer et al.~\cite{Dyer2006} for more on Markov chain comparison techniques. 
\medskip

\noindent Using the aforementioned approach we obtain the following two main results:%\pkrem{Maybe numbers here, instead of bulletins?}
%\vspace{-5pt}
\begin{enumerate}[label=\arabic*),leftmargin=*]
	\item We show rapid mixing of the switch chain for \emph{strongly stable} families of degree sequences (Theorem \ref{thm:switch}), thus providing a rather short proof that unifies and extends the results in \cite{Cooper2007,Greenhill2015} (Corollaries \ref{lem:stable_jerrum} and \ref{cor:stable}). 
	
	We introduce strong stability as a stricter version of the notion of $P$-stability \cite{Jerrum1990}. The strong stability condition is satisfied by the degree sequences in the works \cite{Kannan1999,Cooper2007,Greenhill2015,Erdos2013,Erdos2015decomposition,ErdosMMS2018} and by characterizations of $P$-stability \cite{Jerrum1989graphical}. These characterizations are employed to get Corollaries \ref{lem:stable_jerrum} and \ref{cor:stable}. In particular, investigating the mixing time for the degree sequences of Corollary \ref{cor:stable} was posed as an open question by Greenhill \cite{Greenhill2015}; here we resolve it in the positive. 
	
	\item We show that the switch chain restricted on the space of the graphic realizations of a given joint degree distribution with two degree classes is \emph{always} rapidly mixing (Theorem \ref{thm:stable_jdm1}). 
	
	Despite being for the case of two classes, this is the very first rapid mixing result for the problem. 
	Establishing the rapid mixing of the auxiliary chain in this case presents significant challenges. To attack this problem, we rely on ideas introduced by Bhatnagar, Randall, Vazirani and Vigoda \cite{Bhatnagar2008}. 
	At the core of this approach lies the \emph{mountain-climbing problem} \cite{Homma1952,Whittaker1966}.
	We should note that the auxiliary chain used here is analyzed in a much more general model than the joint degree matrix model. 
\end{enumerate}

Besides the above results we also unify the results in \cite{Kannan1999,Erdos2013,Erdos2015decomposition,ErdosMMS2018} for bipartite degree sequences and extend them for the  special case of an equally sized bipartition; see Corollaries \ref{cor:bipartite} and \ref{cor:bipartite2} in Appendix \ref{sec:bipartite}. 
We should note that the unification of the existing results mentioned so far is qualitative rather than quantitative, in the sense that our simpler, indirect approach provides weaker polynomial bounds for the mixing time. For examples of explicit mixing time bounds % on the switch chain for sampling undirected graphs, 
we refer the reader to \cite{Cooper2007,Cooper2012corrigendum,Greenhill2017journal}. 

Finally, as a byproduct of our analysis for the auxiliary Markov chain used to prove Theorem \ref{thm:stable_jdm1}, we obtain the first \emph{fully polynomial almost uniform generator} \cite{Sinclair1989} for sampling graphical realizations of certain sparse \emph{partition adjacency matrix} instances with two partition classes \cite{Czabarka14,ErdosHIM2017} (this is a generalization of the joint degree distribution problem; see Appendix \ref{sec:auxiliary} for definitions). See Corollary \ref{cor:first_sparse} in Appendix \ref{sec:stable_jdm}. 
%\pkrem{Something about FPRAS}

%Summarizing, based on the rapid mixing of the Jerrum-Sinclair chain, our work provides a rather short rapid mixing proof for the switch chain for a large range of degree sequences unifying a long line of work \cite{Kannan1999,Cooper2007,Greenhill2015,Erdos2013,Erdos2015decomposition,ErdosMMS2018} and extending these results to sharp characterizations of $P$-stability. To the best of our knowledge, no efficient approximately uniform %(either MCMC or non-MCMC)
%MCMC generator is known for non-trivial, %\footnote{For example, there exist classes of non-stable degree sequences where every sequence only has one graphical realization, see \cite{Jerrum1989graphical}.} 
%non-stable classes of degree sequences. 

\vspace{-5pt}
\paragraph{Related Work.}
Here the focus is on MCMC approaches. As such approaches have impractical mixing times in general, we should note that there is a line of work on non-MCMC sampling algorithms which, albeit having weaker properties, do have practical running times. See, e.g., \cite{BayatiKS10,GaoW18}  and references therein. 

Jerrum and Sinclair \cite{Jerrum1990} give a fully polynomial almost uniform generator for generating graphical realizations of degree sequences coming from any \emph{$P$-stable} family of  sequences (see preliminaries). The auxiliary chain we use to prove Theorem \ref{thm:switch}, henceforth referred to as Jerrum-Sinclair (JS) chain, is presented in \cite{Jerrum1990} as a more straightforward implementation of this generator. One drawback is that the algorithms in \cite{Jerrum1990} work with auxiliary states. Kannan, Tetali and Vempala \cite{Kannan1999} introduce the switch chain as a simpler and more direct generator that does not have to work with auxiliary states. They addressed the mixing time of such a switch-based Markov chain for the regular bipartite case. 
Cooper et al.~\cite{Cooper2007} then gave a rapid mixing proof for regular undirected graphs, and later Greenhill \cite{Greenhill2015} extended this result to certain ranges of irregular degree sequences (see also Greenhill and Sfragara \cite{Greenhill2017journal}). Mikl\'os, Erd\H{o}s and Soukup  \cite{Erdos2013} proved rapid mixing for the \emph{half-regular} bipartite case, and Erd\H{o}s, Mikl\'os and Toroczkai \cite{Erdos2015decomposition} for the \emph{almost half-regular} case. Very recently, Erd\H{o}s et al.~\cite{ErdosMMS2018} presented a range of bipartite degree sequences unifying and generalizing the results in \cite{Erdos2013,Erdos2015decomposition}.  

Switch-based Markov chain Monte Carlo approaches have also been studied for other graph sampling problems. Feder et al.~\cite{Feder2006} as well as Cooper et al.~\cite{CooperDGH17} study the mixing time of a Markov chain using a switch-like probabilistic procedure (called a \emph{flip}) for sampling connected graphs. For sampling perfect matchings, switch-based Markov chains have also been studied, see, e.g., the recent work of Dyer, Jerrum and M\"{u}ller \cite{Dyer2017} and references therein. It is interesting to note that Dyer et al.~\cite{Dyer2017} also use a lemma on the mountain-climbing problem that is very similar to Lemma \ref{lem:mountain}. 

The joint degree matrix model was first studied by Patrinos and Hakimi \cite{PatrinosH76}, albeit with a  different formulation and name, and was reintroduced in Amanatidis et al.~\cite{Amanatidis2015}. While it has been shown that the switch chain restricted on the space of the graphic realizations of any given joint degree distribution is irreducible \cite{Amanatidis2015,CzabarkaDEM15}, almost no progress has been made towards bounding its mixing time. 
Stanton and Pinar \cite{StantonP11} performed experiments based on the autocorrelation of each edge, suggesting that the switch chain mixes quickly. The only  relevant result is that of Erd\H{o}s et al.~\cite{Erdos2015decomposition} showing 
fast mixing for a related Markov chain over the severely restricted subset of the so-called \emph{balanced} joint degree matrix realizations; this special case, however, lacks several of the technical challenges that arise in the original problem. %\yarem{Maybe also add Bassler et al.~here?}
%that only contains realizations where the edges satisfying the joint degree requirements are as equally %distributed on the nodes of each class as possible. 

%\color{red}
%
%
%Our proof outline is using, to some extent, similar ideas as the seminal work of
%Kannan, Tetali and Vempala \cite{Kannan1999}.
%Roughly speaking, in \cite{Kannan1999} these ideas are analyzed through Tutte's construction, seemingly inspired by the analysis in \cite{Jerrum1990}. The analysis uses a mixture of perfect matching and switch-based arguments in Tutte's construction.
%However, as Cooper, Dyer and Greenhill \cite{Cooper2007} point out: \emph{``this does not appear complete in every respect, and it is certainly unclear how it extends to generating non-bipartite graphs"}. 
%The main conceptual and technical difference with our approach is that we split the analysis into two separate parts. We first study the JS chain, and we then compare it with the switch chain in terms of efficient multicommodity flows. 

%\medskip 

\vspace{-5pt}
\paragraph{Outline.} In Section \ref{sec:preliminaries} we give all the necessary preliminaries and we formally  
describe  the JS chain, the switch chain, and the restricted switch chain. Our first main result is Theorem \ref{thm:switch} in Section \ref{sec:main_result}, where we show that the switch chain is rapidly mixing for families of strongly stable degree sequences. Given the rapid mixing of the JS chain (Appendix \ref{app:js}), the proof of Theorem \ref{thm:switch} in Section \ref{sec:main_result} is self-contained. The corresponding result for the bipartite case is completely analogous and is deferred to Appendix \ref{sec:bipartite}. In Section \ref{sec:switch_jdm_main} we state our second main result, Theorem \ref{thm:stable_jdm1}. For the sake of presentation, we defer the proof of Theorem \ref{thm:stable_jdm1} to Appendices \ref{sec:auxiliary}, \ref{sec:stable_jdm} and \ref{sec:switch_for_jdm}, and we only include a short discussion of our approach in Section \ref{sec:switch_jdm_main}.

%\color{black}

%%%%%%%%%%%%%%%%%%%%%%%%%%%%%%%%%%%%%%%%
%%%%%%%%%%%%%%%%%%%%%%%%%%%%%%%%%%%%%%%%
\section{Preliminaries}\label{sec:preliminaries}
%%%%%%%%%%%%%%%%%%%%%%%%%%%%%%%%%%%%%%%%
%%%%%%%%%%%%%%%%%%%%%%%%%%%%%%%%%%%%%%%%
We begin with the  preliminaries regarding Markov chains and the multicommodity flow method of Sinclair \cite{Sinclair1992}. For Markov chain definitions not given here, see for example \cite{Levin2009}.

%\yarem{Shall we define MC, ergodic, etc.? Maybe add a short appendix for completeness. \pkblue{We never use the definitions, so not sure if we want to elaborate on it. Added reference to standard textbook for completeness. Good?}}

Let $\mathcal{M} = (\Omega,P)$ be an ergodic, time-reversible Markov chain over state space $\Omega$ with transition matrix $P$ and stationary distribution $\pi$. We write $P^t(x,\cdot)$ for the distribution over $\Omega$ at time step $t$ given that the initial state is $x \in \Omega$. 
The \emph{total variation distance} at time $t$ with initial state $x$ is
\[
\Delta_x(t) = \max_{S \subseteq \Omega} \big| P^t(x,S) - \pi(S)\big| = \frac{1}{2}\sum_{y \in \Omega} \big| P^t(x,y) - \pi(y)\big| \,,
\]
and the \emph{mixing time} $\tau(\epsilon)$ is defined as 
%\[
$\tau(\epsilon) = \max_{x \in \Omega}\left\{ \min\{ t : \Delta_x(t') \leq \epsilon \text{ for all } t' \geq t\}\right\}$. % \,.
%\]
Informally, $\tau(\epsilon)$  is the number of steps until the Markov chain is $\epsilon$-close to its stationary distribution. A Markov chain is said to be \emph{rapidly mixing} if the mixing time can be upper bounded by a function polynomial in $\ln(|\Omega|/\epsilon)$. 

It is well-known that, since the Markov chain is time-reversible, the matrix $P$ only has real eigenvalues $1 = \lambda_0 > \lambda_1 \geq \lambda_2 \geq \dots \geq \lambda_{|\Omega|-1} > -1$.  We may replace the transition matrix $P$ of the Markov chain by $(P+I)/2$, to make the chain \emph{lazy}, and hence  guarantee that all its eigenvalues are non-negative. It then follows that the second-largest eigenvalue of $P$ is $\lambda_1$. In this work we always consider the lazy versions of the Markov chains involved.
It follows directly from Proposition 1 in \cite{Sinclair1992} that
%\begin{equation*}
$\tau(\epsilon) \ 
\leq \ \frac{1}{1 - \lambda_1} \big(\ln(1/\pi_*) + \ln(1/\epsilon)\big)$, %  \,,
%\end{equation*}
where $\pi_* = \min_{x \in \Omega} \pi(x)$. For the special case where $\pi(\cdot)$ is the uniform distribution, as is the case here, the above bound becomes
$\tau(\epsilon) 
\leq \ln(|\Omega|/\epsilon)/(1 - \lambda_1)$.
%\begin{equation*}\label{eq:mixing_time}
%\tau(\epsilon) 
%\leq \frac{\ln(|\Omega|/\epsilon)}{1 - \lambda_1} \,.
%\end{equation*}
%\pkrem{Removed sentence about mixing time determined by spectral gap}%This roughly implies that the mixing time is determined  by the \emph{spectral gap} $(1 - \lambda_1)$, or rather its inverse, the \emph{relaxation time} $(1 - \lambda_1)^{-1}$. 
The quantity $(1 - \lambda_1)^{-1}$ can be upper bounded using the \emph{multicommodity flow method} of Sinclair  \cite{Sinclair1992}.

We define the state space graph of the chain $\mathcal{M}$ as the directed graph $\mathbb{G}$ with node set $\Omega$ that contains exactly the arcs $(x,y) \in \Omega \times \Omega$ for which $P(x,y) > 0$ and $x \neq y$. Let $\mathcal{P} = \cup_{x \neq y} \mathcal{P}_{xy}$, where $\mathcal{P}_{xy}$ is the set of simple paths between $x$ and $y$ in the state space graph $\mathbb{G}$.
A \emph{flow} $f$ in $\Omega$ is a function $\mathcal{P} \rightarrow [0,\infty)$ satisfying
$\sum_{p \in \mathcal{P}_{xy}} f(p) = \pi(x)\pi(y)$ for all $x,y \in \Omega, x \neq y$.
%\[
%\sum_{p \in \mathcal{P}_{xy}} f(p) = \pi(x)\pi(y) \quad \text{ for all } x,y \in \Omega, x \neq y \,.
%\]
The flow $f$ can be extended to a function on oriented edges of $\mathbb{G}$ by setting 
$f(e) =  \sum_{p \in \mathcal{P} : e \in p } f(p)$,
%\[
%f(e) = \! \sum_{p \in \mathcal{P} : e \in p } \! f(p) \,,
%\]
so that $f(e)$ is the total flow routed through $e\in E(\mathbb{G})$. Let $\ell(f) = \max_{p \in \mathcal{P} : f(p) > 0} |p|$ be the length of a longest flow carrying path, and let 
$
\rho(e) = f(e)/Q(e)
$
be the \emph{load} of the edge $e$, where $Q(e) = \pi(x)P(x,y)$ for $e = (x,y)$.
The maximum load of the flow is
%\[ 
$
\rho(f) = \max_{e\in E(\mathbb{G})} \rho(e).
$ %\]
Sinclair (\cite{Sinclair1992}, Corollary $6^{\,\prime}$) shows that 
%\begin{equation*}\label{eq:sinclair}
$(1 - \lambda_1)^{-1} \leq \rho(f)\ell(f)$.
%\end{equation*} 

We  use the following standard technique for bounding the maximum load of a flow in case the chain $\mathcal{M}$ has uniform stationary distribution $\pi$. 
Suppose $\theta$ is the smallest positive transition probability of the Markov chain between two distinct states.
If $b$ is such that $f(e) \leq b / |\Omega|$ for all $e\in E(\mathbb{G})$, then it follows that
$
\rho(f) \leq b/\theta
$.
Thus, we have
\begin{equation*}
\tau(\epsilon) \leq \frac{\ell(f)\cdot b}{\theta}\ln(|\Omega|/\epsilon)\,.
\end{equation*}
Now, if $\ell(f), b$ and $1/\theta$ can be bounded by a function polynomial in $\log(|\Omega|)$, it follows that
the Markov chain $\mathcal{M}$ is rapidly mixing. In this case, we say that $f$ is an \emph{efficient} flow.
Note that in this approach the transition probabilities do not play a role as long as $1/\theta$ is polynomially bounded.

%%%%%%%%%%%%%%%%%%%%%%%%%%%%%%%%%%%%%%%%%%%%%%%%%%%%%%%%%%%%%%%%%%%%%%%%
\subsection{Graphical Degree Sequences}\label{sec:degree_sequence}
%%%%%%%%%%%%%%%%%%%%%%%%%%%%%%%%%%%%%%%%%%%%%%%%%%%%%%%%%%%%%%%%%%%%%%%%
A sequence of non-negative integers $d = (d_1,\dots,d_n)$ is called a \emph{graphical degree sequence} if there exists a simple, undirected, labeled graph on $n$ nodes having degrees $d_1,\dots,d_n$; such a graph is called a \emph{graphical realization of $d$}. 
For a given  degree sequence $d$,  $\mathcal{G}(d)$ denotes the set of all graphical realizations of $d$. Throughout this work we only consider sequences $d$ with positive components, and for which $\mathcal{G}(d) \neq \emptyset$. 
Let $\mathcal{G}'(d) = \cup_{d'} \mathcal{G}(d')$ with $d'$ ranging over the set 
$\left\{ d' : d'_j \leq d_j \text{ for all } j \text{, and } \sum_{i=1}^n |d_i - d_i'| \leq 2\right\}$.
%\[
%\bigg\{ d' : d'_j \leq d_j \text{ for all } j \text{, and } \sum_{i=1}^n |d_i - d_i'| \leq 2\bigg\} \,.
%\]
That is, we have \emph{(i)} $d' = d$, or \emph{(ii)} there exist distinct $\kappa,\lambda$ such that
$d'_i = d_i - 1$ if $i \in \{\kappa,\lambda\}$ and $d'_i =d_i$ otherwise,
or \emph{(iii)} there exists a $\kappa$ so that
$d'_i = d_i - 2$ if $i = \kappa$ and $d'_i =d_i$ otherwise.
%That is, we have $d' = d$, or there exist distinct $\kappa,\lambda$ such that
%\[
%d'_i = \left\{ \begin{array}{ll} d_i - 1 & \text{ if } i \in \{\kappa,\lambda\}, \\
%d_i & \text{ otherwise}\,,
%\end{array}\right.
%\]
%or there exists a $\kappa$ so that
%\[
%d'_i = \left\{ \begin{array}{ll} d_i - 2 & \text{ if } i = \kappa, \\
%d_i & \text{ otherwise}\,.
%\end{array}\right.
%\]
%
%That is, we have 
%\[d' = d, \text{ or there exist distinct } \kappa,\lambda \text{ such that }
%d'_i = \left\{ \begin{array}{ll} d_i - 1 & \text{ if } i \in \{\kappa,\lambda\}, \\
%d_i & \text{ otherwise}\,,
%\end{array}\right.
%\text{ or there exists a } \kappa \text{ so that }
%d'_i = \left\{ \begin{array}{ll} d_i - 2 & \text{ if } i = \kappa, 
%d_i & \text{ otherwise}\,.
%\end{array}\right.
%\]
In the case (ii) we say that $d'$ has two nodes with degree deficit one, and in the case (iii) we say that $d'$ has one node with degree deficit two.
A family $\mathcal{D}$ of graphical degree sequences is called \emph{$P$-stable} \cite{Jerrum1990}, if there exists a polynomial $q(n)$ such that for all $d \in \mathcal{D}$ 
we have $|\mathcal{G}'(d)|/|\mathcal{G}(d)| \leq q(n)$,
%\begin{equation*}\label{eq:def_stable}
%\frac{|\mathcal{G}'(d)|}{|\mathcal{G}(d)|} \leq q(n) \,,
%\end{equation*}
where $n$ is the number of components of $d$.

Jerrum and Sinclair \cite{Jerrum1990} define the following Markov chain on $\mathcal{G}'(d)$, which will henceforth be referred to as the \emph{JS chain}.\footnote{A slightly different definition of stability is given by Jerrum, McKay and Sinclair \cite{Jerrum1989graphical}. Based on this variant, one could define the corresponding variant of the JS chain. Nevertheless, the definitions of stability in \cite{Jerrum1989graphical} and \cite{Jerrum1990} (and their corresponding definitions of strong stability) are equivalent. To avoid confusion, here we only use the definitions in \cite{Jerrum1990}.} Let $G \in \mathcal{G}'(d)$ be the current state of the chain:
\begin{itemize}[topsep=3pt,itemsep=3pt,parsep=0pt,partopsep=20pt]
	\item With probability $1/2$, do nothing.
	\item Otherwise, select an ordered pair $i,j$ of nodes uniformly at random and 
	\begin{itemize}[topsep=3pt,itemsep=3pt,parsep=0pt,partopsep=20pt]
		\item if $G \in \mathcal{G}(d)$ and $(i,j)$ is an edge of $G$, then delete $(i,j)$ from $G$,
		\item if $G \notin \mathcal{G}(d)$, the degree of $i$ in $G$ is less than $d_i$, and $(i,j)$ is not an edge of $G$, then add $(i,j)$ to $G$. If the new degree of $j$  exceeds $d_j$, then select an edge $(j,k)$ uniformly at random and delete it.
	\end{itemize}
\end{itemize}
The graphs $G,G' \in \mathcal{G}'(d)$ are %said to be 
\emph{JS adjacent} if $G$ can be obtained from $G'$ with positive probability in one transition of the JS chain and vice versa.  The following properties of the JS chain are easy to check.
\begin{theorem}[Follows by \cite{Jerrum1990}] 
	The JS chain is irreducible, aperiodic and symmetric, and, hence, has uniform stationary distribution over $\mathcal{G}'(d)$. Moreover, $P(G,G')^{-1} \leq 2n^3$ for all JS adjacent $G,G' \in \mathcal{G}'(d)$, and also 
	the maximum in- and out-degrees of the state space graph of the JS chain are bounded by $n^3$.
\end{theorem}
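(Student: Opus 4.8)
The statement recollects the basic properties of the Jerrum--Sinclair chain, so the plan is to verify each of them directly from the definition. The four assertions decouple cleanly: \emph{(a)} symmetry of the transition matrix $P$, which makes $P$ doubly stochastic and hence forces the uniform distribution to be stationary; \emph{(b)} aperiodicity, which is immediate because the ``with probability $1/2$, do nothing'' step places a self-loop at every state; \emph{(c)} irreducibility, which together with \emph{(a)} and \emph{(b)} makes the uniform distribution the unique stationary distribution; and \emph{(d)} the two quantitative bounds. The real content is in \emph{(a)} and \emph{(c)}; \emph{(b)} and \emph{(d)} are short.

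For \emph{(a)} I would show $P(G,G')=P(G',G)$ for every pair of JS-adjacent states by a case analysis on the type of transition. A move that actually changes the state is of exactly one kind: (i) from $G\in\mathcal G(d)$, pick an edge $\{i,j\}$ and delete it, landing in a state with degree deficit one at $i,j$; (ii) from a deficit-one state with deficient vertices $p,q$ with $\{p,q\}$ absent, add $\{p,q\}$, landing back in $\mathcal G(d)$; (iii) from a deficit-one (resp.\ deficit-two) state, pick a deficient vertex $i$ and a non-deficient non-neighbour $j$, add $\{i,j\}$, and --- since $j$ is then one above its target degree --- delete a uniformly random edge $\{j,k\}$ incident to $j$, with $k\ne i$. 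For each kind one writes down the unique reverse move and checks that the two probabilities agree. The only delicate case is (iii): both directions carry the same extra factor $1/(\deg(j)+1)$, and this works out precisely because an overflow at $j$ can occur only when $j$ is at its exact target degree before the addition, so in the reverse move $j$ again has $\deg(j)+1$ incident edges at the moment of deletion. Two bookkeeping points must also be checked: each transition of type (iii) arises from a single choice of the ordered pair together with a single choice of the deleted edge (so probabilities are not undercounted), and the ``do-nothing'' outcomes --- picking a non-deficient $i$, or re-deleting the edge just added --- do not disturb the matching.

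For \emph{(c)} the plan has three parts. First, every state outside $\mathcal G(d)$ reaches $\mathcal G(d)$ with positive probability: when the two deficient vertices $p,q$ are non-adjacent a single step of type (ii) suffices, and in general one compares the current graph with a fixed realization in $\mathcal G(d)$ along an alternating path to first reach such a configuration --- this is the argument of \cite{Jerrum1990}. Second, any two realizations in $\mathcal G(d)$ are joined by a sequence of switches; this is the classical connectivity theorem for the switch graph. Third, a single switch that replaces $\{v,w\},\{x,y\}$ by $\{v,y\},\{x,w\}$ is simulated by three JS transitions: delete $\{v,w\}$ (type (i)); from the resulting deficit-one state add $\{v,y\}$, which forces the deletion of $\{x,y\}$ (type (iii)); then add $\{x,w\}$, which joins the two remaining deficient vertices (type (ii)). Chaining these three facts yields irreducibility on all of $\mathcal G'(d)$.

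Finally, \emph{(d)} is counting. From a fixed $G$ a transition is determined by the ordered pair $i,j$ (at most $n^2$ choices) together with at most one further choice of an edge incident to $j$ (at most $\deg(j)+1\le n$ choices), and distinct target states require distinct choices, so the out-degree of the state-space graph is at most $n^3$; since each arc $(G,G')$ is accompanied by the reverse arc $(G',G)$ by part \emph{(a)}, the in-degree equals the out-degree and is also at most $n^3$. For the probability bound, the smallest positive transition probability between distinct states is the type-(iii) value $\tfrac12\cdot\tfrac1{n(n-1)}\cdot\tfrac1{\deg(j)+1}\ge\tfrac1{2n^3}$ (using $\deg(j)\le n-1$), while types (i) and (ii) give at least $\tfrac1{n(n-1)}$; hence $P(G,G')^{-1}\le 2n^3$. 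I expect the main obstacle to be the bookkeeping in part \emph{(a)} --- enumerating the transition types and reverse moves without missing a degenerate subcase --- and, to a lesser extent, the alternating-path argument in part \emph{(c)}; both are exactly what \cite{Jerrum1990} carries out, so a self-contained writeup is essentially a careful reorganization of that analysis.
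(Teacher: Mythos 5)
Your proposal is correct. The paper itself gives no proof here: it labels the theorem ``Follows by [Jerrum1990]'' and remarks only that the properties ``are easy to check,'' so there is nothing to compare against except the original Jerrum--Sinclair argument, which your sketch reproduces faithfully. The case analysis for symmetry is right, and the key observation --- that an overflow at $j$ arises only when $j$ is at target degree, so both the forward and reverse transition carry the identical $1/(d_j+1)$ factor --- is exactly the point that makes the check go through. Your irreducibility argument (reach $\mathcal{G}(d)$ by peeling an alternating path, connect $\mathcal{G}(d)$ by switches, simulate each switch by a delete--hinge--add triple) is also the standard route and is sound.

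Two minor points of precision. First, in your switch simulation you write that adding $\{v,y\}$ ``forces the deletion of $\{x,y\}$''; in fact the chain deletes a uniformly random edge incident to $y$, so $\{x,y\}$ is deleted only with probability $1/(d_y+1)$. Positive probability is all irreducibility needs, but the wording overstates it. Second, for the uniqueness claim underpinning symmetry you should note explicitly that a type-(iii) transition $G'\to G''$ has a unique (ordered pair, deleted edge) representation: the added edge $e=G''\setminus G'$ and the deleted edge $f=G'\setminus G''$ share exactly one endpoint $b$ (they cannot coincide and cannot be disjoint, since the move pivots at $b$), so $i$ is forced to be the other endpoint of $e$, $k$ the other endpoint of $f$, and the factor $1/(d_b+1)$ appears with no hidden multiplicity. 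With those remarks added the writeup is a complete proof of the statement.
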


We say that two graphs $G, G'$ are within distance $r$ in the JS chain if there exists a path of at most length $r$ from $G$ to $G'$ in the state space graph of the JS chain. By $\text{dist}(G,d)$ we denote the  minimum distance of $G$ from an element in $\mathcal{G}(d)$.
The following parameter will play a central role in this work.  Let 
\begin{equation}\label{eq:distance}
k_{JS}(d) = \max_{G \in \mathcal{G}'(d)} \text{dist}(G,d) \,.
\end{equation}
Based on the parameter $k_{JS}(d)$, we define the notion of \emph{strong stability}. The simple observation in Proposition \ref{prop:strong_stable} justifies the terminology. For the different settings studied in this work, i.e., for sampling bipartite graphs or joint degree matrix realizations, the definition of $k_{JS}$ is adjusted accordingly (see Appendices \ref{sec:bipartite} and \ref{sec:auxiliary}).

\begin{definition}[Strong stability]\label{def:strong_stable}
	A family of graphical degree sequences $\mathcal{D}$ is called \emph{strongly stable} if there exists a constant $\ell$ such that $k_{JS}(d) \leq \ell$ for all $d \in \mathcal{D}$.
\end{definition}

\begin{proposition}\label{prop:strong_stable}
	If $\mathcal{D}$ is strongly stable, then  it is $P$-stable.
\end{proposition}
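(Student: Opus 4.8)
The plan is to exploit the bounded out-degree of the JS chain's state space graph together with the bounded distance that strong stability provides. Fix an arbitrary $d \in \mathcal{D}$ with $n$ components, and let $\ell$ be the constant from Definition~\ref{def:strong_stable}, so that $k_{JS}(d) \leq \ell$. By \eqref{eq:distance}, every $G \in \mathcal{G}'(d)$ has $\text{dist}(G,d) \leq \ell$; that is, there is a directed path of length at most $\ell$ from $G$ to some realization $H_G \in \mathcal{G}(d)$ in the state space graph of the JS chain.

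First I would use the fact, recorded in the stated properties of the JS chain, that it is symmetric, so its state space graph is symmetric as a directed graph: $(x,y)$ is an arc if and only if $(y,x)$ is. Reversing the path from $G$ to $H_G$ then yields a directed path of length at most $\ell$ from $H_G$ to $G$. Hence, writing $B_\ell(H)$ for the set of vertices reachable from $H$ by a directed path of length at most $\ell$, we obtain the covering $\mathcal{G}'(d) \subseteq \bigcup_{H \in \mathcal{G}(d)} B_\ell(H)$.

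Next I would bound $|B_\ell(H)|$ using the other stated property, namely that every vertex of the state space graph has out-degree at most $n^3$. A breadth-first count then gives $|B_\ell(H)| \leq \sum_{i=0}^{\ell} (n^3)^i \leq (\ell+1)\,n^{3\ell}$. Combined with the covering, $|\mathcal{G}'(d)| \leq |\mathcal{G}(d)|\cdot(\ell+1)\,n^{3\ell}$, and therefore $|\mathcal{G}'(d)|/|\mathcal{G}(d)| \leq (\ell+1)\,n^{3\ell}$. Since $\ell$ is an absolute constant (independent of $d$), $q(n) := (\ell+1)\,n^{3\ell}$ is a polynomial in $n$ that works uniformly over all $d \in \mathcal{D}$, which is exactly the definition of $P$-stability.

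I do not expect a real obstacle here. The only points that need a moment's care are invoking symmetry of the JS chain to turn ``$G$ is close to $\mathcal{G}(d)$'' into ``$G$ lies in a small ball around a realization of $d$,'' and noting that $\mathcal{G}(d) \neq \emptyset$ (assumed throughout) so that the covering is non-vacuous. The argument only uses a polynomial bound on the out-degree of the auxiliary chain and a constant bound on $k_{JS}$, so it is insensitive to the precise constants; in particular the same reasoning will apply verbatim to the bipartite and joint degree matrix variants once $k_{JS}$ is defined there.
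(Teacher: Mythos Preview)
Your proof is correct and follows essentially the same approach as the paper's: both cover $\mathcal{G}'(d)$ by short-radius balls around elements of $\mathcal{G}(d)$ and bound the ball sizes via the polynomial degree bound on the state space graph. The only cosmetic difference is that the paper uses the in-degree bound directly (counting paths ending at a given $H\in\mathcal{G}(d)$), whereas you first invoke symmetry of the JS chain to reverse paths and then apply the out-degree bound; this yields a slightly looser constant $(\ell+1)n^{3\ell}$ versus the paper's $n^{3\ell}$, which is immaterial.
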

\begin{proof}
	Suppose $\mathcal{D}$ is strongly stable with respect to the constant $\ell$.
	Let $d \in \mathcal{D}$ be a degree sequence with $n$ components. For every $G \in \mathcal{G}'(d) \mysetminus \mathcal{G}(d)$ choose some $\varphi(G) \in \mathcal{G}(d)$ within distance $k = k_{JS}(d)$ of $G$. As the in-degree of any node in the state space graph of the JS chain is bounded by $n^3$, the number of paths with length at most $k$ that end up at any particular graph in $\mathcal{G}(d)$ is upper bounded by $(n^3)^k$. Therefore, 
	$|\mathcal{G}'(d)|/|\mathcal{G}(d)| \leq n^{3k} \le  n^{3\ell}$,
%	\[
%	\frac{|\mathcal{G}'(d)|}{|\mathcal{G}(d)|} \leq n^{3k},
%	\]
	meaning that $\mathcal{D}$ is stable, since $\ell$ is constant.
\end{proof}

Finally, the lazy version of the \emph{switch chain on $\mathcal{G}(d)$} is defined as follows (see, e.g., \cite{Kannan1999,Cooper2007}). 
Let $G\in \mathcal{G}(d)$ be the current state of the  chain: 
\begin{itemize}[topsep=3pt,itemsep=3pt,parsep=0pt,partopsep=20pt]
	\item With probability $1/2$, do nothing.
	\item Otherwise, select two edges $\{a,b\}$ and $\{x,y\}$ uniformly at random, and select a perfect matching $M$ on nodes $\{x,y,a,b\}$ uniformly at random (there are three possible options). If $M \cap E(G) = \emptyset$, then delete $\{a,b\}, \{x,y\}$ from $E(G)$ and add the edges of $M$. This local operation is called a \emph{switch}.
\end{itemize}
The graphs $G,G' \in \mathcal{G}(d)$ are \emph{switch adjacent} if $G$ can be obtained from $G'$ with positive probability in one transition of this chain and vice versa. 
It is well-known that the switch chain is  irreducible, aperiodic and symmetric (e.g., \cite{Greenhill2017journal} and references therein), and, thus, has uniform stationary distribution over $\mathcal{G}(d)$. 
Furthermore, is it a matter of simple counting that $P(G,G')^{-1} \leq 6n^4$ for all switch adjacent $G,G' \in \mathcal{G}(d)$, and  
the maximum in- and out-degrees of the state space graph of the switch chain are bounded by $n^4$.

  %%%%%%%%%%%%%%%%%%%%%%%%%%%%%%%%%%%%%%%%%%%%%%%%%%%%%%%%%%%%%%%%%%%%%%%%
  \subsection{Joint Degree Matrix Model}\label{sec:jdm_model}
  %%%%%%%%%%%%%%%%%%%%%%%%%%%%%%%%%%%%%%%%%%%%%%%%%%%%%%%%%%%%%%%%%%%%%%%%
  Here in addition to the degrees, we would also like to specify the number of edges between nodes of degree $i$ and nodes of degree $j$ for every pair $(i,j)$. 
  %The motivation for this is that this extra information restricts the space of possible realizations to graphs that more closely resemble real-world networks. 
  Let $V = \{1,\dots,n\}$ be a set of nodes. An instance of the joint degree matrix (JDM) model is given by a partition $V_1 \cup V_2 \cup \dots \cup V_q$ of $V$ into pairwise disjoint (degree) classes,  a symmetric \emph{joint degree  matrix} $c = (c_{ij})_{i,j \in [q]}$  of non-negative integers, and a sequence $d = (d_1,\dots,d_q)$ of non-negative integers.\footnote{This is  shorthand notation for the degree sequence. Alternatively, we could write $\hat{d} = \Big(d_1^1,\dots,d_1^{|V_1|},\dots,d_q^1,\dots,d_q^{|V_q|}\Big)$ corresponding to the definition of a graphical degree sequence. In such a case, $d_i^{j}=d_i$ for $i\in V$ and $j \in \{1,\ldots,|V_i|\}$.} 
  We say that the tuple $((V_i)_{i \in q},c,d)$ (or just $(c,d)$ when it is clear what the partition is) is graphical, if there exists a simple, undirected, labeled graph $G = (V,E)$ on the nodes in $V$ such that all nodes in  $V_i$ have degree $d_i$ and there are precisely $c_{ij}$ edges between nodes in $V_i$ and $V_j$. 
  %This is denoted by $E[V_i,V_j] = c_{ij}$. 
  Such a $G$ is called a graphical realization of the tuple. We let $\mathcal{G}((V_i)_{i \in q},c,d)$ (or just $\mathcal{G}(c,d)$) denote the set of all graphical realizations of  $((V_i)_{i \in q},c,d)$. %We often write $\mathcal{G}(c,d)$ instead of $\mathcal{G}((V_i)_{i \in [q]},c,d)$ but it will always be clear what the partition is. %For general instances, it is not known if an initial state can be computed in time polynomial in $n$. It is conjectured to be NP-hard \cite{ErdosHIM2017} in general. 
  %
  %One special case of the partition adjacency model is when the degrees are component-wise regular, i.e., when $d_a = d_b$ whenever $a,b \in V_i$ for $i = 1,\dots,q$. This corresponds to the \emph{joint degree matrix} model \cite{Amanatidis2015}.
 We focus on the case of $q=2$, i.e., when two degree classes are given.  
  
While switches maintain the degree sequence, this is no longer true for the joint degree constraints. However, some switches do respect these constraints as well, e.g., if $w, y$ in Figure \ref{fig:switch_0} are in the same degree class. Thus, we are interested in the following (lazy) \emph{restricted switch Markov chain} for sampling graphical realizations of $\mathcal{G}(c,d)$. Let $G\in \mathcal{G}(c,d)$ be the current state of the  chain: 
  \begin{itemize}[topsep=3pt,itemsep=3pt,parsep=0pt,partopsep=20pt]
  	\item With probability $1/2$, do nothing.
  	\item Otherwise, try to perform a \emph{switch move}: select two edges $\{a,b\}$ and $\{x,y\}$ uniformly at random, and select a perfect matching $M$ on nodes $\{x,y,a,b\}$ uniformly at random. If $M \cap E(G) = \emptyset$ and $G + M - (\{a,b\} \cup \{x,y\}) \in \mathcal{G}(c,d)$, then  delete $\{a,b\}, \{x,y\}$ from $E(G)$ and add the edges of $M$. %This local operation is called a \emph{switch}.
  	%\item With probability $1/4$ perform a \emph{double switch move}: \pkblue{to do...}
  \end{itemize}
  This chain is  irreducible, aperiodic and symmetric \cite{Amanatidis2015,CzabarkaDEM15}.
  Like the switch chain defined above, $P(G,G')^{-1} \leq n^4$ for all adjacent $G, G' \in \mathcal{G}'(c,d)$, and also the maximum in- and out-degrees of the state space graph are less than $n^4$.
Since bounding the mixing time of this chain on $\mathcal{G}(c,d)$ has been elusive \cite{Amanatidis2015,StantonP11,Erdos2015decomposition}, we follow a similar approach as in the case of the switch chain for undirected and bipartite graphs. In Appendix \ref{sec:auxiliary} the simpler \emph{hinge flip Markov chain} is defined on a strictly larger state space (and even for a more general model than JDM). This also gives rise to the corresponding strong stability definition. As the whole analysis of this auxiliary chain takes place in the appendix, we defer any further definitions there.

%%%%%%%%%%%%%%%%%%%%%%%%%%%%%%%%%%%%%%%%%%%%%%%%%%%%%%%%%%%%%%%%%%%%%%%%
%%%%%%%%%%%%%%%%%%%%%%%%%%%%%%%%%%%%%%%%%%%%%%%%%%%%%%%%%%%%%%%%%%%%%%%%
\section{Sampling Undirected Graphs}\label{sec:main_result}
%%%%%%%%%%%%%%%%%%%%%%%%%%%%%%%%%%%%%%%%%%%%%%%%%%%%%%%%%%%%%%%%%%%%%%%%
%%%%%%%%%%%%%%%%%%%%%%%%%%%%%%%%%%%%%%%%%%%%%%%%%%%%%%%%%%%%%%%%%%%%%%%%
The result in Theorem \ref{thm:switch} below is our main result regarding the mixing time of the switch chain for strongly stable degree sequences. Its proof is divided in two parts. 
First, in Section \ref{sec:js_flow}, by giving an efficient multicommodity flow, we show that for any $d$ in a family of strongly stable degree sequences the JS chain is rapidly mixing on $\mathcal{G}'(d)$. Then, in Section \ref{sec:js_to_switch}, we show that such an efficient  flow for the JS chain  on $\mathcal{G}'(d)$ can be transformed into an efficient flow for the switch chain on $\mathcal{G}(d)$. This  yields the following theorem. 

\begin{theorem}\label{thm:switch}
	Let $\mathcal{D}$ be a strongly stable family of degree sequences with respect to some constant $k$. Then there exists a polynomial $q(n)$ such that, for any $0 < \epsilon < 1$, the mixing time $\tau_{\mathrm{sw}}$ of the switch chain for a graphical sequence $d = (d_1,\dots,d_n) \in \mathcal{D}$ satisfies
	\[
	\tau_{\mathrm{sw}}(\epsilon) \leq q(n)^k \ln(1/\epsilon) \,.
	\]
\end{theorem}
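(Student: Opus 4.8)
The plan is to prove Theorem~\ref{thm:switch} in two stages, exactly as foreshadowed in the outline. The first stage, carried out in Section~\ref{sec:js_flow}, is to construct an efficient multicommodity flow for the JS chain on $\mathcal{G}'(d)$. Since the JS chain has uniform stationary distribution, polynomially bounded in-/out-degrees ($\leq n^3$), and $P(G,G')^{-1}\leq 2n^3$ for adjacent states, by the general load-bounding recipe in the preliminaries it suffices to route, for each ordered pair of states, a unit of flow along paths of polynomial length so that every arc carries at most $b/|\mathcal{G}'(d)|$ flow for some polynomial $b$. The natural construction is via \emph{canonical paths}: given $G, H \in \mathcal{G}'(d)$, look at the symmetric difference $G \triangle H$, decompose it into alternating circuits and paths, and process these one at a time, at each step performing a JS transition that fixes a disagreement. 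The key point is that the JS chain, operating on the enlarged space that allows small degree deficits, can traverse an alternating path by a sequence of single add/delete moves without ever leaving $\mathcal{G}'(d)$ (one only ever accumulates deficit one or two at the endpoints), which is precisely why the JS chain is \emph{simpler} to analyze than the switch chain. The standard encoding argument --- pair each arc $(W,W')$ on a canonical path from $G$ to $H$ with an auxiliary ``complementary'' graph so that the triple $(W, \text{encoding})$ essentially recovers $(G,H)$ --- then bounds the congestion by a polynomial in $n$. For completeness this is deferred to Appendix~\ref{app:js}, matching the claim already present in~\cite{Jerrum1990}.

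The second stage, in Section~\ref{sec:js_to_switch}, is the embedding argument that transfers this flow to the switch chain on the \emph{smaller} state space $\mathcal{G}(d)$. The idea is: given $G, H \in \mathcal{G}(d) \subseteq \mathcal{G}'(d)$, take the JS-flow path $G = W_0, W_1, \dots, W_m = H$ between them; this path wanders through $\mathcal{G}'(d)$, so we cannot use it directly for the switch chain. Instead, we ``project'' it back into $\mathcal{G}(d)$: for each intermediate $W_i \notin \mathcal{G}(d)$, fix (once and for all) a graph $\psi(W_i) \in \mathcal{G}(d)$ within JS-distance $k_{JS}(d) \leq k$ of $W_i$, using strong stability; for $W_i \in \mathcal{G}(d)$ set $\psi(W_i) = W_i$. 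Then, between consecutive images $\psi(W_i)$ and $\psi(W_{i+1})$, which are now both genuine realizations of $d$ and are within JS-distance $2k+1$ of each other, we connect them by a short path in the switch chain's state space graph. The crucial structural fact to establish here is that two graphs in $\mathcal{G}(d)$ that are JS-close can be connected by a \emph{short} sequence of switches --- roughly, any single JS step between two members of $\mathcal{G}(d)$ is a switch (or can be simulated by $O(1)$ switches), and more generally bounded JS-distance inside $\mathcal{G}(d)$ implies bounded switch-distance. Concatenating these short detours yields a switch-chain path from $G$ to $H$ of length $O(mk)$, i.e., still polynomial times a constant depending on $k$.

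To control the congestion of the resulting switch-chain flow, the standard embedding-lemma bookkeeping (cf.\ Feder et al.~\cite{Feder2006}) applies: the load on any switch-chain arc $e$ is bounded by the total JS-flow passing through the ``preimage'' of $e$ under the projection, times the maximum length of a switch detour, times the ratio of relevant transition probabilities and a factor accounting for how many JS-arcs map to a neighborhood of $e$. Because $\psi$ maps each $W_i$ to one of at most $(n^3)^{k}$ candidate graphs reachable within $k$ JS-steps --- and conversely each switch arc is ``hit'' by JS-arcs coming from a set of size polynomial in $n$ raised to the power $O(k)$ --- every factor that enters is of the form $\mathrm{poly}(n)^{O(k)}$. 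Multiplying through with the efficient JS-flow bound from stage one gives $\tau_{\mathrm{sw}}(\epsilon) \leq q(n)^k \ln(1/\epsilon)$ for a suitable polynomial $q$, using $\ln|\mathcal{G}(d)| \leq \ln|\mathcal{G}'(d)| = O(n\log n)$.

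The main obstacle I expect is the second stage, specifically making the projection $\psi$ behave well with respect to \emph{flow congestion} rather than just distances. It is easy to see that JS-close graphs in $\mathcal{G}(d)$ are switch-close; the delicate part is ensuring that the many-to-one map from JS-paths to switch-paths does not overload any single switch arc --- i.e., that the number of JS-flow-carrying paths whose projection uses a given switch transition is bounded by $\mathrm{poly}(n)^{O(k)}$. This requires an encoding/injectivity argument analogous to the one in the JS-flow construction but now ``lifted'' through $\psi$: from a switch arc $e = (G', H')$ on the projected path and a small amount of auxiliary information (which JS-arc was being projected, the bounded-length detour taken, the $O(1)$ switches used to simulate a JS step), one must be able to reconstruct enough of the original JS-arc $(W_i, W_{i+1})$ to invoke the JS congestion bound. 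Handling the endpoints carefully (where deficit-one or deficit-two vertices appear and where the detours of length up to $2k+1$ are inserted) is where the bulk of the technical care will go, but the constant $\ell$ from strong stability keeps every blow-up at the level $\mathrm{poly}(n)^{O(k)}$, which is exactly the shape of the claimed bound.
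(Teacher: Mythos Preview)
Your proposal is correct and matches the paper's approach closely: Theorem~\ref{thm:js_mixing} gives the efficient JS flow via canonical paths and an encoding argument, and Theorem~\ref{thm:transformation} transforms it into a switch-chain flow by projecting each auxiliary state $W_i$ to some $\varphi(W_i)\in\mathcal{G}(d)$ within $k$ JS-steps, then rerouting each resulting ``illegal'' edge $(\varphi(W_i),\varphi(W_{i+1}))$ over a short switch path (using that $|E(\varphi(W_i))\triangle E(\varphi(W_{i+1}))|\leq 4k$ and the swap bound of Erd\H{o}s--Kir\'aly--Mikl\'os, Theorem~\ref{thm:swap}, giving at most $2k$ switches).

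One remark on your anticipated obstacle: the congestion bound for the embedding is \emph{simpler} than the encoding/injectivity argument you are bracing for. No reconstruction is needed; the paper just counts. A legal switch arc $e$ can only be reached by illegal edges within switch-distance $2k$, and the switch state-space graph has degree at most $n^4$, so at most $(n^4)^{2k}\cdot(n^4)^{2k}=n^{16k}$ illegal edges reroute through $e$; combined with $|\varphi^{-1}(H)|\leq n^{3k}$ for each $H\in\mathcal{G}(d)$, every factor is already of the form $n^{O(k)}$. Also, a minor correction: two graphs in $\mathcal{G}(d)$ are \emph{never} JS-adjacent (a Type~0 step from $\mathcal{G}(d)$ always lands in $\mathcal{G}'(d)\setminus\mathcal{G}(d)$), so the relevant fact is not that a JS step between $\mathcal{G}(d)$-elements is a switch, but rather that bounded symmetric difference implies bounded switch-distance via Theorem~\ref{thm:swap}.
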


We  discuss two direct corollaries of Theorem \ref{thm:switch}. 
Both corollaries are consequences of the corresponding results in \cite{Jerrum1989graphical}, where it is shown that the families of sequences satisfying \eqref{eq:stable1} and \eqref{eq:stable2}, respectively, are (strongly) stable. We work with a slightly different definition of stability here than the one used in \cite{Jerrum1989graphical}. The reason why the results from \cite{Jerrum1989graphical} carry over to the definition used here (which was introduced in \cite{Jerrum1990}) is explained in Appendix \ref{app:open_question}. The equivalence of these definitions of $P$-stability is also claimed in \cite{Jerrum1989graphical}.

Corollary \ref{lem:stable_jerrum} below extends the rapid mixing results in \cite{Cooper2007,Greenhill2017journal}. In particular, the condition of \cite{Greenhill2017journal} for rapid mixing is $\delta\ge 1$ and $3\le \Delta \le \frac{1}{3}\sqrt{2m}$, which is a special case of the condition \eqref{eq:stable1} below.\footnote{The condition in Theorem 1.1 in \cite{Greenhill2017journal} is a special case of the condition of Theorem 3.1 in \cite{Jerrum1990} which in turn is a special case of the condition of Corollary \ref{lem:stable_jerrum}. See also the remark after Theorem 8 in \cite{Jerrum1989graphical}.} 

\begin{corollary}\label{lem:stable_jerrum}
	Let $\mathcal{D} = \mathcal{D}(\delta,\Delta,m)$ be the set of all graphical degree sequences $d = (d_1,\dots,d_n)$ satisfying 
	\begin{equation}\label{eq:stable1}
	(2m - n \delta)(n\Delta - 2m) \leq (\Delta - \delta)\big[(2m - n\delta)(n - \Delta - 1) + (n\Delta - 2m)\delta\big]
	\end{equation}
	where $\delta$ and $\Delta$ are the minimum and maximum component of $d$, respectively, and $m = \frac{1}{2} \sum_{i = 1}^n d_i$. For any $d \in \mathcal{D}$, we have
	$
	k_{JS}(d) \leq 6.
	$
	Hence, the switch chain is rapidly mixing for sequences in $\mathcal{D}$. 
\end{corollary}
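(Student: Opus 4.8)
The plan is to reduce the corollary to the single inequality $k_{JS}(d) \le 6$ for every $d \in \mathcal{D}(\delta,\Delta,m)$. Once this is established, $\mathcal{D}$ is strongly stable with constant $6$ by Definition~\ref{def:strong_stable}, so Theorem~\ref{thm:switch} produces a polynomial $q$ with $\tau_{\mathrm{sw}}(\epsilon) \le q(n)^{6}\ln(1/\epsilon)$; thus the ``hence'' part of the statement is immediate and all the work lies in the distance bound.

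To bound $k_{JS}(d) = \max_{G \in \mathcal{G}'(d)} \text{dist}(G,d)$, fix $G \in \mathcal{G}'(d)\setminus\mathcal{G}(d)$, so that $G$ has total degree deficit $2$: either \emph{(ii)} at two distinct vertices $\kappa,\lambda$, each deficient by one, or \emph{(iii)} at a single vertex $\kappa$ deficient by two. First I would record the effect of one JS transition out of a non-realization: choosing an ordered pair $(i,j)$ with $\deg_G(i)<d_i$ and $ij \notin E(G)$, the chain adds $ij$ and, if $j$ is then pushed above $d_j$, deletes an incident edge $jk$; so one transition either \emph{(a)} cancels a unit of deficit at both $i$ and $j$ (when $j$ was itself deficient), or \emph{(b)} transfers a unit of deficit from $i$ to a target-degree neighbour $k$ of a target-degree vertex $j$. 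Hence $\text{dist}(G,d)$ is governed by how few moves relocate the deficiencies onto a pair of non-adjacent deficient vertices and then cancel them: in case \emph{(iii)} one move of type \emph{(b)} reduces matters to case \emph{(ii)}; in case \emph{(ii)} with $\kappa\lambda \notin E(G)$ a single move of type \emph{(a)} finishes; and when $\kappa\lambda \in E(G)$ one needs a short alternating sequence of moves---equivalently, an alternating path in $G$ from a non-neighbour of $\kappa$ to a non-neighbour of $\lambda$ through target-degree vertices---to move the deficient pair to a non-adjacent one. This is exactly where \eqref{eq:stable1} enters: it is precisely the sufficient condition isolated in \cite{Jerrum1989graphical} guaranteeing that such an alternating path of length at most a small absolute constant always exists, proved by bounding the number of ``obstructed'' intermediate vertices against the number of admissible ones; the remaining degenerate regime, where almost every candidate intermediate is obstructed because degrees are large, is reduced to the previous one via the complementation symmetry of \eqref{eq:stable1} (it is invariant under $d_i \mapsto n-1-d_i$, under which $2m \mapsto n(n-1)-2m$ and $\delta \leftrightarrow n-1-\Delta$), passing to the complementary sequence. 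Adding the at most two add-moves that open and close the repair to the bounded-length detour yields $k_{JS}(d) \le 6$.

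One subtlety must be flagged before invoking \cite{Jerrum1989graphical}: its notions of near-realization and of (strong) stability are formulated slightly differently from the $\mathcal{G}'(d)$-based ones of \cite{Jerrum1990} that underlie \eqref{eq:distance}, so one must check that the two are equivalent and that the length bound on the repair sequence transfers verbatim into a bound on JS transitions; this is the content of Appendix~\ref{app:open_question}, and the equivalence itself is asserted in \cite{Jerrum1989graphical}. I expect the main obstacle to be the combinatorial estimate above---deducing from \eqref{eq:stable1} alone that the alternating path has constant length and pinning down the exact constant $6$ after the bookkeeping---which needs a careful case split on whether the deficient vertices are adjacent, on the degrees of the available intermediates, and on the degenerate large-degree configurations; for this I would invoke the analysis of \cite{Jerrum1989graphical} rather than reprove it.
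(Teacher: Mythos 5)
Your proposal takes essentially the same route as the paper: reduce the corollary to $k_{JS}(d)\le 6$, invoke the short-alternating-path result of Jerrum, McKay and Sinclair for sequences satisfying \eqref{eq:stable1}, check the equivalence between the $\pm P$-stability notion of \cite{Jerrum1989graphical} and the $\mathcal{G}'(d)$-based one used here, and convert the alternating path into JS transitions. That is precisely the content of Appendix~\ref{app:open_question}: Lemma~\ref{lem:app_A} shows that an alternating path of length at most $k_0$ in the $\mathcal{G}''(d)$-picture (starting with an edge at the surplus vertex, ending with a co-edge at the deficit vertex) yields $k_{JS}(d)\le k_0/2+1$, and the proof of Theorem~8 of \cite{Jerrum1989graphical} supplies $k_0=10$, hence $6$. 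You correctly flag the definitional subtlety and correctly defer the combinatorial estimate to \cite{Jerrum1989graphical}.

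One warning about the bookkeeping you sketch, though. You propose to dispatch the one-vertex-deficit-two configuration by first spending a Type~1 move to split the double deficit across two vertices and then running the two-vertex argument. But the two-vertex case with the deficient pair adjacent already costs up to $6$ moves (five Type~1 transitions along the length-$11$ alternating walk, then a closing Type~2), so your reduction would give $7$, not $6$. Lemma~\ref{lem:app_A} sidesteps this by treating the doubly deficient $\kappa$ exactly like a deficit-one vertex: pick a non-neighbour $x$ of $\kappa$, observe $G+\{\kappa,x\}\in\mathcal{G}''(d)$ with $x$ in surplus and $\kappa$ still in deficit by one, and use the length-$\le 10$ alternating path from $x$ back to $\kappa$ — in $G$ this is an alternating closed walk at $\kappa$ of length $\le 11$ starting and ending with co-edges — which again converts to six JS moves. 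With that adjustment your sketch matches the paper's argument and the claimed constant.
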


The next corollary is a special case of the  Corollary \ref{lem:stable_jerrum} and answers an open question posed in \cite{Greenhill2017journal}. It is a result of similar flavor, but the corresponding condition is stated only in terms of $\delta$ and $\Delta$.

\begin{corollary}\label{cor:stable}
	Let $\mathcal{D} = \mathcal{D}(\delta,\Delta)$ be the set of all graphical degree sequences $d = (d_1,\dots,d_n)$ satisfying 
	\begin{equation}\label{eq:stable2}
	(\Delta - \delta + 1)^2 \leq 4\delta(n - \Delta - 1)
	\end{equation}
	where $\delta$ and $\Delta$ are the minimum and maximum component of $d$, respectively. For any $d \in \mathcal{D}$, we have 
	$
	k_{JS}(d)  \leq 6.
	$
	Hence, the switch chain is rapidly mixing for sequences in $\mathcal{D}$.
\end{corollary}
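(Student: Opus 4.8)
The plan is to deduce the statement from Corollary~\ref{lem:stable_jerrum}. Once we show that condition~\eqref{eq:stable2} implies condition~\eqref{eq:stable1} for every graphical sequence $d$ with minimum entry $\delta$, maximum entry $\Delta$, and $m = \tfrac12\sum_i d_i$ edges, the bound $k_{JS}(d) \le 6$ and the rapid mixing of the switch chain follow immediately from Corollary~\ref{lem:stable_jerrum}; equivalently, $\mathcal{D}(\delta,\Delta)$ is then a strongly stable family with constant $6$, so Theorem~\ref{thm:switch} applies. Thus the whole task reduces to a purely algebraic implication, and no further Markov chain analysis is required.

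To carry out this reduction I would set $a = 2m - n\delta$ and $b = n\Delta - 2m$. These are nonnegative, since every degree lies in $[\delta,\Delta]$ and hence $n\delta \le 2m \le n\Delta$, and they satisfy $a + b = n(\Delta-\delta)$. In this notation \eqref{eq:stable1} becomes $ab \le (\Delta-\delta)\,\bigl(a(n-\Delta-1) + b\delta\bigr)$. Substituting $b = n(\Delta-\delta) - a$, expanding, and using $a+b = n(\Delta-\delta)$ to simplify the linear coefficient, this rearranges to $Q(a) \ge 0$ with $Q(a) = a^2 - (\Delta-\delta)(\Delta+\delta+1)\,a + n\delta(\Delta-\delta)^2$. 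It suffices to check that $Q(a) \ge 0$ for \emph{all} real $a$ (the case $\Delta = \delta$ being trivial, as both sides of \eqref{eq:stable1} then vanish), and for $\Delta > \delta$ this holds exactly when the discriminant $(\Delta-\delta)^2\bigl((\Delta+\delta+1)^2 - 4n\delta\bigr)$ is nonpositive, i.e. when $(\Delta+\delta+1)^2 \le 4n\delta$.

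Finally I would use the identity $(\Delta+\delta+1)^2 = (\Delta-\delta+1)^2 + 4\delta(\Delta+1)$, which shows that $(\Delta+\delta+1)^2 \le 4n\delta$ is equivalent to $(\Delta-\delta+1)^2 \le 4\delta(n-\Delta-1)$, i.e. to \eqref{eq:stable2}. This establishes the implication $\eqref{eq:stable2} \Rightarrow \eqref{eq:stable1}$ and completes the argument. There is no genuine obstacle here: it is essentially a one-line discriminant computation. The only points requiring care are the bookkeeping when rewriting \eqref{eq:stable1} as the quadratic $Q(a)$, and the (mild) observation that it is cleanest to prove the \emph{uniform} statement ``$Q \ge 0$ on all of $\mathbb{R}$'' rather than only at the single value of $a$ determined by $d$, since this uniform version is precisely what matches the symmetric discriminant condition equivalent to \eqref{eq:stable2}.
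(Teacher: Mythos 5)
Your proof is correct, and the algebra checks out. Writing $a = 2m - n\delta$, $b = n\Delta - 2m$ with $a + b = n(\Delta-\delta)$, condition \eqref{eq:stable1} rearranges to
$a^2 - (\Delta-\delta)(\Delta+\delta+1)a + n\delta(\Delta-\delta)^2 \ge 0$,
whose discriminant is $(\Delta-\delta)^2\bigl((\Delta+\delta+1)^2 - 4n\delta\bigr)$, and the identity $(\Delta+\delta+1)^2 = (\Delta-\delta+1)^2 + 4\delta(\Delta+1)$ converts the nonpositivity of this discriminant exactly into \eqref{eq:stable2}. Since \eqref{eq:stable2} is independent of $m$, asking for $Q(a)\ge 0$ on all of $\R$ rather than at the single value $a = 2m - n\delta$ is the natural thing to do, and it is what makes the discriminant criterion sharp. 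So indeed \eqref{eq:stable2} $\Rightarrow$ \eqref{eq:stable1}, and Corollary~\ref{cor:stable} follows from Corollary~\ref{lem:stable_jerrum}.

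Your route differs from the paper's. The main text asserts that Corollary~\ref{cor:stable} ``is a special case of Corollary~\ref{lem:stable_jerrum}'' but never carries out the reduction; the appendix instead proves both corollaries in parallel by invoking the alternating-path theorems (Theorems~2 and~8) of Jerrum, McKay, and Sinclair~\cite{Jerrum1989graphical} directly, obtaining $k_0 = 10$ and hence $k_{JS}(d) \le k_0/2 + 1 = 6$ via Lemma~\ref{lem:app_A}. Your argument is more self-contained: once Corollary~\ref{lem:stable_jerrum} is in hand, you deduce Corollary~\ref{cor:stable} by elementary algebra, with no further appeal to~\cite{Jerrum1989graphical}. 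The paper's route, on the other hand, makes transparent where the constant $6$ actually comes from (a length-$10$ alternating path) and treats the two corollaries symmetrically. Both are valid, and your reduction in fact supplies the justification for the paper's ``special case'' remark that the paper itself leaves implicit.
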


Explicit families satisfying these conditions are given in \cite{Jerrum1989graphical}. For instance, all sequences $d = (d_1,\dots,d_n)$  with (i) $\delta(d) \geq 1$ and $\Delta(d) \leq 2\sqrt{n}-2$, or (ii) $\delta(d) \geq \frac{1}{4}n$ and $\Delta(d) \leq \frac{3}{4}n - 1$ satisfy \eqref{eq:stable2} but not necessarily the conditions in \cite{Greenhill2017journal}. We refer the reader to \cite{Jerrum1989graphical,Salas2016} for more examples.  
The bounds in Corollaries \ref{lem:stable_jerrum} and \ref{cor:stable}  are in a sense  best possible with respect to the graph parameters involved. Namely, there exist non-stable degree sequence families the members of which only slightly violate \eqref{eq:stable2}; see  the discussion in \cite{Jerrum1989graphical} for details.

%%%%%%%%%%%%%%%%%%%%%%%%%%%%%%%%%%%%%%%%%%%%%%%%%%%%%%%%%%%%%%%%%%%%%%%%
\subsection{Flow for the Jerrum-Sinclair Chain}\label{sec:js_flow}
%%%%%%%%%%%%%%%%%%%%%%%%%%%%%%%%%%%%%%%%%%%%%%%%%%%%%%%%%%%%%%%%%%%%%%%%
Jerrum and Sinclair \cite{Jerrum1990} claim, without proof,
that the JS chain is rapidly mixing for (some) families of stable degree sequences.  
For completeness, we prove in Theorem \ref{thm:js_mixing} that the chain is rapidly mixing for any family of strongly stable degree sequences. For the proof of the theorem see Appendix \ref{app:js}.

\begin{theorem}[\cite{Jerrum1990}]\label{thm:js_mixing}
	Let $\mathcal{D}$ be a strongly stable family of degree sequences with respect to some constant $k$. Then there exist polynomials $p(n)$ and $r(n)$ such that for any $d = (d_1,\dots,d_n)\in \mathcal{D}$ there exists an efficient multicommodity flow $f$ for the JS chain on $\mathcal{G}'(d)$   	
	satisfying $\max_e f(e) \leq p(n)/ |\mathcal{G}'(d)|$  and  $\ell(f) \leq r(n)$.
\end{theorem}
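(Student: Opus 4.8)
The plan is to construct the multicommodity flow for the JS chain by a canonical-path-type argument, routing $\pi(x)\pi(y)$ units of flow from each $x$ to each $y$ along a small number of carefully chosen paths in the state-space graph of the JS chain, and then to bound the maximum edge load via an encoding (injective-map) argument. Since $\pi$ is uniform on $\mathcal{G}'(d)$, the target value on each commodity is $1/|\mathcal{G}'(d)|^2$, and by the generic bound recalled in Section \ref{sec:preliminaries} it suffices to show $f(e)\le p(n)/|\mathcal{G}'(d)|$ and $\ell(f)\le r(n)$ for polynomials $p,r$ — together with $1/\theta\le 2n^3$, which is already given by Theorem 2.

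\textbf{Step 1: Reduce to paths between realizations in $\mathcal{G}(d)$.} First I would handle the auxiliary states. Every $G\in\mathcal{G}'(d)\mysetminus\mathcal{G}(d)$ has $\mathrm{dist}(G,d)\le k_{JS}(d)\le k$, so fix for each such $G$ a path $\sigma(G)$ of length $\le k$ in the JS state-space graph to some $\varphi(G)\in\mathcal{G}(d)$. To route flow from $x$ to $y$, I would prepend $\sigma(x)$ (reversed) and append $\sigma(y)$, reducing to the problem of routing flow between two genuine realizations $\varphi(x),\varphi(y)\in\mathcal{G}(d)$. This adds at most $2k$ to every path length, and each JS edge lying on such a prefix/suffix is used by flow originating (resp.\ ending) at graphs $G$ with $\varphi(G)$ the endpoint; since the in-degree of the state-space graph is $\le n^3$, at most $(n^3)^k$ states $G$ map to any given realization, and routing through such an edge the flow for all pairs $(G,y)$ contributes at most $(n^3)^k\cdot|\mathcal{G}'(d)|\cdot 1/|\mathcal{G}'(d)|^2 = n^{3k}/|\mathcal{G}'(d)|$ — polynomially bounded since $k$ is constant.

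\textbf{Step 2: Route between two realizations in $\mathcal{G}(d)$ via a symmetric-difference decomposition.} For $G,H\in\mathcal{G}(d)$, consider $G\triangle H$, which decomposes into alternating circuits and paths (in the standard way used in all switch-chain analyses). Process these components in a canonical order; within each component, repeatedly pick the ``next'' edge discrepancy and fix it. The key point is that the JS chain, unlike the switch chain, can add or delete a single edge (possibly passing through an auxiliary state with degree deficit one or two), so one does not need to synchronize pairs of edges as in the switch chain — a single alternating path of $G\triangle H$ can be resolved by a bounded number of JS moves that temporarily leave $\mathcal{G}(d)$ but return to it, detouring by at most $k_{JS}$-bounded excursions. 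This keeps $\ell(f)\le r(n)$ with $r$ polynomial (in fact linear up to constants depending on $k$).

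\textbf{Step 3: Encoding argument for the load.} For a fixed JS edge $e=(Z,Z')$, I would define an encoding $L_e(G,H)$ — roughly, the graph $(G\cup H)\triangle Z$ together with $O(1)$ bits recording which component is currently being processed and the local state of the surgery — and argue that $(G,H)$ is recoverable from $(L_e(G,H),e)$ up to $\mathrm{poly}(n)$ choices. Since $L_e(G,H)$ differs from a realization of a degree sequence close to $d$ (degrees off by $O(1)$ in $O(1)$ places), it lies in a set of size $O(n^{O(1)})\cdot|\mathcal{G}'(d)|$; combined with the $\mathrm{poly}(n)$ ambiguity this bounds the number of pairs routed through $e$ by $\mathrm{poly}(n)\cdot|\mathcal{G}'(d)|$, giving $f(e)\le \mathrm{poly}(n)/|\mathcal{G}'(d)|$ as required. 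Finally I would absorb the Step-1 contribution and fold the constant-exponent factors $n^{3k}$ etc.\ into $p(n)$.

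\textbf{Main obstacle.} The hard part is Step 2--3 done \emph{cleanly}: designing the single canonical unwinding procedure so that (a) every intermediate graph is a legal JS state (the auxiliary states have degree deficit at most two, so one must be careful that a surgery step never creates a deficit-three vertex or a multi-edge), and (b) the encoding in Step 3 genuinely determines $(G,H)$ with only polynomial ambiguity despite the auxiliary excursions. This is exactly the technical heart that \cite{Jerrum1990} left unproven; the strong-stability hypothesis is what makes the excursions have bounded length and hence keeps both $\ell(f)$ and the combinatorial overhead polynomial. The details are carried out in Appendix \ref{app:js}.
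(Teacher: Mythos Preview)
Your outline follows essentially the same three-part template as the paper (reduce to $\mathcal{G}(d)$, process the symmetric difference via JS moves, bound the load by an encoding), but there are two misconceptions worth correcting.

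First, strong stability plays no role in Step~2. Since $G,H\in\mathcal{G}(d)$ have the \emph{same} degree sequence, every vertex of $G\triangle H$ has equal red and blue degree, hence even total degree; the decomposition is into alternating \emph{circuits} only, never paths. Processing a circuit $C=vx_1\cdots x_qv$ is entirely straightforward: delete the blue edge $vx_1$ (Type~0), then for each odd $i$ add red $x_ix_{i+1}$ and delete blue $x_{i+1}x_{i+2}$ (Type~1), and finally add red $x_qv$ (Type~2). Every intermediate graph has either two vertices with deficit one or one vertex with deficit two, so it lies in $\mathcal{G}'(d)$ by definition --- no $k_{JS}$-bounded ``excursions'' are needed, and the canonical path has length $O(|E(G)\triangle E(H)|)\le n^2$, independent of $k$. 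Strong stability is used \emph{only} in Step~1 (Lemma~\ref{lem:flow_simplification}).

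Second, your encoding sketch omits the key device that makes the injection work: the flow from $G$ to $G'$ is not sent over a single canonical path but is spread uniformly over all \emph{pairings} $\psi\in\Psi(G,G')$ (a pairing matches, at each vertex, the incident red and blue edges). A pairing determines a canonical circuit decomposition of $G\triangle G'$, and the encoding $L_t(G,G')=(G\triangle G')\triangle(Z\cup Z')-e_{H,t}$ is itself an element of $\mathcal{G}'(d)$; crucially, $(G,G')$ is recoverable from $(t,L,\psi)$ uniquely (Lemma~\ref{lem:recovery}). The extra summation over pairings is then controlled by the simple count $|\Psi'(L)|\le n^2|\Psi(G,G')|$. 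Without the pairing layer the circuit decomposition is not canonical, and one cannot claim recoverability of $(G,H)$ from the encoding with only polynomial ambiguity.
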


Our proof of Theorem \ref{thm:js_mixing} uses conceptually similar arguments to the ones used in \cite{Cooper2007} for the analysis of the switch chain on regular undirected graphs. However, the analysis done here  for the JS chain is, in our opinion, easier and cleaner than the corresponding analysis for the switch chain. In particular, the so-called \emph{circuit processing procedure} is simpler in our setting, as it only involves altering edges in the symmetric difference of two graphical realizations in a straightforward fashion. In the switch chain analyses  \cite{Cooper2007,Greenhill2017journal,Erdos2013,Erdos2015decomposition,ErdosMMS2018}  one also has to temporarily alter edges that are not in the symmetric difference and this significantly complicates things. 
Moreover, for the analysis of the JS chain, we can rely on arguments used (in a somewhat different context) by Jerrum and Sinclair \cite{Jerrum1989} for the analysis of a Markov chain for  sampling (near) perfect matchings of a given graph. This usage of arguments in \cite{Jerrum1989} was suggested by Jerrum and Sinclair \cite{Jerrum1990} for showing that the JS chain is rapidly mixing for stable degree sequences.

%%%%%%%%%%%%%%%%%%%%%%%%%%%%%%%%%%%%%%%%%%%%%%%%%%%%%%%%%%%%%%%%%%%%%%%%
\subsection{Flow Transformation}\label{sec:js_to_switch}
%%%%%%%%%%%%%%%%%%%%%%%%%%%%%%%%%%%%%%%%%%%%%%%%%%%%%%%%%%%%%%%%%%%%%%%%
Next we show that, when $d$ comes from a family of strongly stable degree sequences, an efficient multicommodity flow for the JS chain on $\mathcal{G}'(d)$ can be transformed into an efficient multicommodity flow for the switch chain on $\mathcal{G}(d)$. In combination with Theorem \ref{thm:js_mixing} this implies that if $\mathcal{D}$ is  strongly stable, then for any sequence in $\mathcal{D}$ there exists an efficient flow for the switch chain. For the sake of simplicity, we did not attempt to optimize the bounds in the proof of Theorem \ref{thm:transformation}.

\begin{theorem}%[Flow transformation]
	\label{thm:transformation}
	Let $\mathcal{D}$ be a strongly stable family of degree sequences with respect to some constant $k$, and let    $p(n)$ and $r(n)$ be polynomials such that for any $d = (d_1,\dots,d_n)\in \mathcal{D}$ there exists an efficient multicommodity flow $f_d$ for the JS chain on $\mathcal{G}'(d)$ with the property that $max_{e} f(e) \leq p(n) / |\mathcal{G}'(d)|$ and $\ell(f) \leq r(n)$.

	Then there exists a polynomial $t(n)$ such that for all $d = (d_1,\dots,d_n) \in \mathcal{D}$ there is a feasible multicommodity flow $g_d$ for the switch chain on $\mathcal{G}(d)$ 
	with \emph{(i)} $\ell(g_d) \leq 2k \cdot \ell(f_d)$, and \emph{(ii)} for every edge $e$ of the state space graph of the switch chain, we have 
\vspace{-10pt}
	\begin{equation}\label{eq:good_bound}
	g_d(e) \leq t(n)^k \cdot \frac{p(n)}{|\mathcal{G}(d)|} \,.
	\end{equation}
\end{theorem}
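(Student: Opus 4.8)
The plan is to build the switch-chain flow $g_d$ by taking the given JS-chain flow $f_d$ on $\mathcal{G}'(d)$ and "projecting" it down onto $\mathcal{G}(d)$. The conceptual picture is an embedding argument in the style of Feder et al.: since $\mathcal{G}(d) \subseteq \mathcal{G}'(d)$, every flow-carrying path $p$ between two realizations $G, G' \in \mathcal{G}(d)$ in the JS state space graph is a sequence of JS-adjacent graphs $G = H_0, H_1, \dots, H_\ell = G'$ that may dip into $\mathcal{G}'(d) \setminus \mathcal{G}(d)$ (graphs with one or two units of degree deficit). First I would fix, for each $H \in \mathcal{G}'(d) \setminus \mathcal{G}(d)$, a canonical "lift" to a nearby true realization: using strong stability, pick $\varphi(H) \in \mathcal{G}(d)$ with $\mathrm{dist}(H,\varphi(H)) \le k$, together with a shortest JS-path from $H$ to $\varphi(H)$. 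Replacing each $H_i$ in the path $p$ by $\varphi(H_i)$ (with $\varphi$ the identity on $\mathcal{G}(d)$) and splicing in the short connecting paths yields a walk in $\mathcal{G}(d)$ between $G$ and $G'$; since consecutive images $\varphi(H_i), \varphi(H_{i+1})$ are within JS-distance $2k+1$ of each other, and each JS step within $\mathcal{G}(d)$ corresponds to a bounded number of switch steps, this gives a path of length $O(k \cdot \ell(f_d))$ in the switch state space graph — establishing (i) up to the constant. The flow $f_d(p)$ is then routed along this derived switch path.

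The key estimate is (ii): bounding the load $g_d(e)$ on a fixed switch edge $e = (A, B)$. The standard way is a counting/encoding argument. A unit of $g_d$-flow passes through $e$ only if $e$ lies on the switch path derived from some JS-path $p$ through some JS-edge $e' = (H, H')$; so $g_d(e) \le \sum_{e'} (\text{number of switch edges' "slots"}) \cdot f_d(e')$ where the sum is over JS-edges $e'$ whose derived switch segment can contain $e$. The crucial point is to bound, for a fixed $e$, how many JS-edges $e'$ contribute and with what multiplicity. Here is where I expect the main obstacle: one must show that from the switch edge $e$ (plus a bounded amount of auxiliary "advice" information recording where we are inside the length-$O(k)$ detour) one can reconstruct the JS-edge $e'$, i.e., the map $e' \mapsto e$ is "few-to-one up to $\mathrm{poly}(n)^k$ choices". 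The advice needed to recover $e'$ from $e$ encodes (a) which of the $\le \ell$-ish positions along the connecting shortest path we are at, and (b) the $O(k)$ edges that were temporarily altered in passing between $H$ and $\varphi(H)$ and between $H'$ and $\varphi(H')$; each such edge is one of $\le n^2$ pairs, giving at most $(n^2)^{O(k)} = t(n)^k$ possibilities. Combining, $g_d(e) \le t(n)^k \cdot \max_{e'} f_d(e') \cdot (\text{length blowup})$, and then using $f_d(e') \le p(n)/|\mathcal{G}'(d)| \le p(n)/|\mathcal{G}(d)|$ gives exactly \eqref{eq:good_bound} after absorbing the length factor $r(n)$ and the out-degree bound $n^3$ of the JS chain into the polynomial $t(n)$.

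Concretely, I would carry out the steps in this order: (1) fix the canonical maps $\varphi$ and the associated shortest JS-paths, invoking Definition \ref{def:strong_stable}; (2) define the derived switch path for each flow-carrying JS-path and verify it is a valid path in the switch state space graph, giving the length bound (i); (3) set $g_d$ to route $f_d(p)$ along the derived path, and check the feasibility condition $\sum_{p \in \mathcal{P}_{GG'}} g_d(p) = \pi(G)\pi(G')$ — this is immediate since we only reroute, not rescale, and $\mathcal{G}(d)$ inherits the uniform measure; (4) do the encoding argument: fix $e$, enumerate the "advice" that pins down a JS-edge $e'$ and a position, show the number of advice strings is $\le t(n)^k$, and conclude the load bound (ii). The one genuinely delicate point is making the encoding in step (4) precise — in particular ensuring that the temporarily-altered edges along the JS detours are recoverable, which is exactly the analogue of the "circuit processing / advice" bookkeeping in the switch-chain literature, only simpler here because the detours have bounded length $O(k)$ rather than length $\Omega(n)$.
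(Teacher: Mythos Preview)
Your overall architecture is correct and essentially matches the paper's proof: project each JS path onto $\mathcal{G}(d)$ via a fixed map $\varphi$, then replace each hop $\varphi(H_i)\to\varphi(H_{i+1})$ by a short switch path, and bound the congestion by counting how many JS edges could contribute to a given switch edge. However, there are two concrete gaps you should fix.

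\textbf{Rescaling of the flow.} Your step (3) is wrong as stated. The JS flow $f_d$ routes $1/|\mathcal{G}'(d)|^2$ units between each pair $G,G'\in\mathcal{G}(d)$, but the switch chain has uniform stationary distribution on $\mathcal{G}(d)$, so a feasible flow must route $1/|\mathcal{G}(d)|^2$ units. Simply rerouting $f_d(p)$ therefore gives too little flow, not a feasible multicommodity flow. The fix is to multiply by $|\mathcal{G}'(d)|^2/|\mathcal{G}(d)|^2$; by strong stability (Proposition~\ref{prop:strong_stable}) this ratio is at most $n^{6k}$, which you then absorb into $t(n)^k$. The paper does this explicitly by first normalizing to $\alpha(p)=f(p)\,|\mathcal{G}'(d)|^2$ and then setting $h(p)=\alpha(p)/|\mathcal{G}(d)|^2$.

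\textbf{Converting JS distance to switch distance.} Your sentence ``each JS step within $\mathcal{G}(d)$ corresponds to a bounded number of switch steps'' does not make sense: there are no JS transitions between two elements of $\mathcal{G}(d)$ (a JS step from $\mathcal{G}(d)$ always deletes an edge and leaves $\mathcal{G}(d)$). What you actually need is that $\varphi(H_i)$ and $\varphi(H_{i+1})$, being both in $\mathcal{G}(d)$ with $|E(\varphi(H_i))\triangle E(\varphi(H_{i+1}))|\le 4k$, are connected by at most $2k$ switches. This requires an external connectivity result for the switch chain, e.g.\ the theorem of Erd\H{o}s, Kir\'aly and Mikl\'os that any two realizations $H,H'\in\mathcal{G}(d)$ are connected by at most $\tfrac12|E(H)\triangle E(H')|$ switches. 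The paper invokes exactly this (Theorem~\ref{thm:swap}); you should too. Once you have this, the congestion bound is obtained by the straightforward degree-counting the paper uses (at most $n^{16k}$ illegal edges can detour through a given legal switch edge), which is simpler and cleaner than the encoding-with-advice picture you sketched in step (4), though your version would also work.
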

\begin{proof}
	Let $d \in \mathcal{D}$. For simplicity we will write $f$ and $g$ instead of $f_d$ and $g_d$ respectively.
	Since there are two Markov chains involved in the proof, each with a different state space graph, we should clarify that $\mathcal{P}_{xy}$ refers to the set of simple paths between $x$ and $y$ in the state space graph of the \textit{JS chain}.
	We first introduce some additional notation. 
	
	For every pair $(x,y) \in \mathcal{G}'(d) \times \mathcal{G}'(d)$ with $x \neq y$, and for any $p \in \mathcal{P}_{xy}$, we write 
	$\alpha(p)= f(p)|\mathcal{G}'(d)|^2$.
%	\[
%	\alpha(p)= f(p)|\mathcal{G}'(d)|^2 \,.
%	\]
	Recall that since the stationary distribution of the JS chain is uniform on $\mathcal{G}'(d)$ we have $\sum_{p \in \mathcal{P}_{xy}} f(p) = |\mathcal{G}'(d)|^{-2}$.  Thus, $\sum_{p \in \mathcal{P}_{xy}} \alpha(p) = 1$. Moreover, we define $\alpha(e) = \sum_{p \in \mathcal{P}_{xy} : e \in p}  \alpha(p) = f(e) |\mathcal{G}'(d)|^2$. 
	
	Now, for every $G \in \mathcal{G}'(d) \mysetminus \mathcal{G}(d)$ choose some $\varphi(G) \in \mathcal{G}(d)$ that is within distance $k$ of $G$ in the JS chain, and take $\varphi(G) = G$ for $G \in \mathcal{G}(d)$. Based on the arguments in the proof of Proposition \ref{prop:strong_stable}, it follows that for any $H \in \mathcal{G}(d)$,
	\begin{equation}\label{eq:neighborhood_bound}
	|\varphi^{-1}(H)| \leq n^{3k} \,,			
	\end{equation}						
	using that the maximum in-degree of any element in the state 				
	space graph of the JS chain is upper bounded by $n^3$. 
	In particular, this implies that
	\begin{equation}\label{eq:ratio}
	\frac{|\mathcal{G}'(d)|}{|\mathcal{G}(d)|} \leq n^{3k} \,.
	\end{equation}
	Let the flow $h$ be defined as follows for any given pair $(x,y)$. If $(x,y) \in \mathcal{G}(d) \times \mathcal{G}(d)$, take 
	$h(p) = \alpha(p)/|\mathcal{G}(d)|^2$
%	\[
%	h(p) = \alpha(p)/|\mathcal{G}(d)|^2
%	\]
	for all $p \in \mathcal{P}_{xy}$. If either $x$ or $y$ is not contained in $\mathcal{G}(d)$, take $h(p) = 0$ for every $p \in \mathcal{P}_{xy}$. Note that $h$ is a multicommodity flow that routes $1/|\mathcal{G}(d)|^2$ units of flow between any pair $(x,y) \in \mathcal{G}(d) \times \mathcal{G}(d)$, and zero units of flow between any other pair of states in $\mathcal{G}'(d)$.
	
	Note that
	\begin{equation}\label{eq:traverse1}
	h(e) \le \frac{|\mathcal{G}'(d)|^2}{|\mathcal{G}(d)|^2} \cdot f(e) \leq \frac{|\mathcal{G}'(d)|^2}{|\mathcal{G}(d)|^2} \frac{p(n)}{|\mathcal{G}'(d)|}  = \frac{p(n)}{|\mathcal{G}(d)|} \frac{|\mathcal{G}'(d)|}{|\mathcal{G}(d)|}  \leq  n^{3k} \cdot \frac{p(n)}{|\mathcal{G}(d)|} \,,
	\end{equation}
	using the definition of $h$ in the first inequality, the assumption on $f$ in the second inequality, and the upper bound of (\ref{eq:ratio}) in the last one. 
	
	Next, we merge the ``auxiliary states'' in $\mathcal{G}'(d) \mysetminus \mathcal{G}(d)$, i.e., the states not reached by the switch chain, with the elements of $\mathcal{G}(d)$. Informally speaking, for every $H \in \mathcal{G}(d)$ we merge all the nodes in $\varphi^{-1}(H)$ into a \emph{supernode}. Self-loops created in this process are removed, and parallel arcs between states are merged into one arc that gets all the flow of the parallel arcs. Formally, we consider the graph $\Gamma$ where $V(\Gamma) = \mathcal{G}(d)$ and $e = (H,H') \in E(\Gamma)$ if and only if $H$ and $H'$ are switch adjacent or if there exist $G \in \varphi^{-1}(H)$ and $G' \in \varphi^{-1}(H')$ such that $G$ and $G'$ are JS adjacent.  Moreover, for a given $h$-flow carrying path $(G_1,G_2,\ldots,G_q) = p \in \mathcal{P}_{xy}$, let 
	$p'_{\Gamma} = (\varphi(G_1),\varphi(G_2),\dots,\varphi(G_q))$
%	\[
%	p'_{\Gamma} = (\varphi(G_1),\varphi(G_2),\dots,\varphi(G_q))
%	\]
	be the corresponding (possibly non-simple) directed path in $\Gamma$. Any self-loops and cycles can be removed from $p'_{\Gamma}$ and let $p_{\Gamma}$ be the resulting simple path in $\Gamma$. Over $p_{\Gamma}$ we route $h_{\Gamma}(p_{\Gamma})= h(p)$ units of flow.  Note that $h_{\Gamma}$ is a flow that routes $1/|\mathcal{G}(d)|^2$ 			
	units of flow between any pair of states $(x,y) \in \mathcal{G}(d) \times \mathcal{G}(d)$ in the graph $\Gamma$ and that $\ell(h_{\Gamma}) \leq \ell(f)$. Furthermore, the flow $h_{\Gamma}$ on an edge $(H, H') \in E(\Gamma)$ is then bounded by
	\begin{equation}\label{eq:up_bound_h_Gamma}
	h_{\Gamma}(H,H') \le \!\! \sum_{\substack{(G,G') \in \varphi^{-1}(H) \times \varphi^{-1}(H') \\ G \text{ and } G' \text{ are JS adjacent}}} \!\! h(G,G') \,,
	\end{equation}
	where the inequality (instead of an equality) follows from the fact that when we map a path $p \in \mathcal{P}_{xy}$ to the corresponding path $p_{\Gamma}$, some edges of the intermediate path $p'_{\Gamma}$ may be deleted.
	Using (\ref{eq:neighborhood_bound}), it follows that 
	$|\varphi^{-1}(H) \times \varphi^{-1}(H')| \leq n^{3k} \cdot n^{3k} = n^{6k}$
%	\[
%	|\varphi^{-1}(H) \times \varphi^{-1}(H')| \leq n^{3k} \cdot n^{3k} = n^{6k}
%	\]
	and therefore, in combination with \eqref{eq:traverse1} and \eqref{eq:up_bound_h_Gamma}, we have that
	\begin{equation}\label{eq:traverse2}
	h_{\Gamma}(e) \leq n^{3k} \cdot  n^{6k} \cdot \frac{p(n)}{|\mathcal{G}(d)|} \,.
	\end{equation}
	
	Now recall how $E(\Gamma)$ was defined. An edge $(H,H')$ might have been added because: \emph{(i)} $H$ and $H'$ are switch adjacent (we call these edges of $\Gamma$ \emph{legal}), or \emph{(ii)} $H$ and $H'$ are \emph{not} switch adjacent but there exist $G \in \varphi^{-1}(H)$ and $G' \in \varphi^{-1}(H')$ that are JS adjacent (we call these edges of $\Gamma$ \emph{illegal}).
	The final step of the proof consists of showing that the flow on every illegal edge in $E(\Gamma)$ can be rerouted over a ``short'' path consisting only of legal edges. In particular, for every flow carrying path $p$ using $e$, we are going to show that the flow $h_{\Gamma}(p)$ is rerouted over some legal detour, the length of which is bounded by a multiple of $k$. Doing this iteratively for every remaining  illegal edge on $p$, we obtain a directed path $p''$ only using legal edges, i.e., edges of the state space graph of the switch chain. Of course, $p''$ might not be simple, so any self-loops and cycles can be removed, as before, to obtain the simple legal path $p'$. Figure \ref{fig:rerouting} illustrates this procedure for a path with a single illegal edge. Note that deleting self-loops and cycles only decreases the amount of flow on an edge.

	\begin{figure}[t!]
		\centering
		\scalebox{0.6}{
			\begin{tikzpicture}[
			->,
			>=stealth',
			shorten >=0.5pt,
			auto,
			node distance=1cm,
			semithick,
			every state/.style={circle,text=black,inner sep=5pt,minimum size=1pt},
			]
			\begin{scope}
			\node[state]  (M1)               					 {x};
			\node[state]  (M2) [right=2cm of M1] 				 {};
			\node[state]  (M3) [right=2cm of M2]  			 {};
			\node[state]  (M4) [right=2cm of M3]			     {};
			\node[state]  (M5) [right=2cm of M4] 					{y};
			\node[state]  (T1) [above=1.5cm of M2] 				{};
			\node[state]  (T2) [above=1.5cm of M3]				 {};
			\node[state]  (B1) [below=1.5cm of M2]				 {};       
			\node[state]  (B2) [below=1.5cm of M3] 				 {}; 
			
			\path[every node/.style={sloped,anchor=south,auto=false}]
			(M1) edge[line width=1.5pt] 	node {} (M2)
			(M2) edge[line width=1.5pt] 	node {} (M3)
			(M3) edge[dashed,line width=1.5pt] 	node {} (M4)
			(M4) edge[line width=1.5pt] 	node {} (M5)
			(M3) edge[line width=4pt] 	node {} (B2)
			(B2) edge[line width=4pt] 	node {} (B1)
			(B1) edge[line width=4pt] 	node {} (M2)
			(M2) edge[line width=4pt] 	node {} (T1)
			(T1) edge[line width=4pt] 	node {} (T2)
			(T2) edge[line width=4pt] 	node {} (M4);
			
			\end{scope}
			\end{tikzpicture}
			\ \ \ \ \ \ \ \ \ 
			\quad
			\ \ \ \ \ \ \ \ \ 
			\begin{tikzpicture}[
			->,
			>=stealth',
			shorten >=0.5pt,
			auto,
			node distance=1cm,
			semithick,
			every state/.style={circle,text=black,inner sep=5pt,minimum size=1pt},
			]
			\begin{scope}
			\node[state]  (M1)               					 {x};
			\node[state]  (M2) [right=2cm of M1] 				 {};
			\node[state]  (M3) [right=2cm of M2]  			 {};
			\node[state]  (M4) [right=2cm of M3]			     {};
			\node[state]  (M5) [right=2cm of M4] 					{y};
			\node[state]  (T1) [above=1.5cm of M2] 				{};
			\node[state]  (T2) [above=1.5cm of M3]				 {};
			\node[state]  (B1) [below=1.5cm of M2]				 {};       
			\node[state]  (B2) [below=1.5cm of M3] 				 {}; 
			
			\path[every node/.style={sloped,anchor=south,auto=false}]
			(M1) edge[line width=2pt] 	node {} (M2)
			%(M2) edge[dashed,line width=1.5pt] 	node {} (M3)
			%(M3) edge[line width=1.5pt] 	node {} (M4)
			(M4) edge[line width=2pt] 	node {} (M5)
			%(M3) edge[line width=4pt] 	node {} (B2)
			%(B2) edge[line width=4pt] 	node {} (B1)
			%(B1) edge[line width=4pt] 	node {} (M2)
			(M2) edge[line width=2pt] 	node {} (T1)
			(T1) edge[line width=2pt] 	node {} (T2)
			(T2) edge[line width=2pt] 	node {} (M4);
			
			\end{scope}
			\end{tikzpicture}}
		\caption{The dashed edge on the left represents an illegal edge, and the bold path represents a ``short'' detour. The shortcutted path on the right is the result of removing any loops and cycles.}
		\label{fig:rerouting}
	\end{figure}
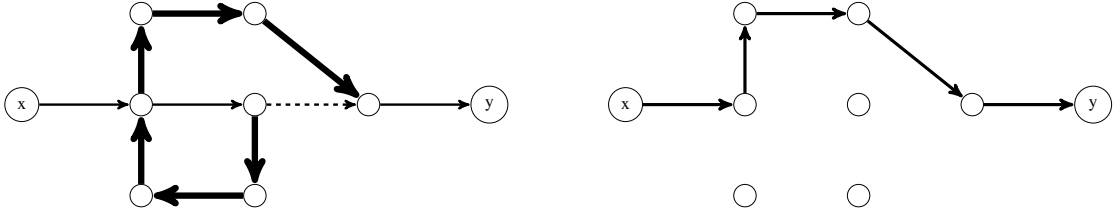

	The crucial observation here is that if $(H,H') \in E(\Gamma)$, then 
	$|E(H) \triangle  E(H')| \leq 4k$.
%	\[
%	|E(H) \triangle  E(H')| \leq 4k  \,.
%	\]
	That is, even though $H$ and $H'$ might not be switch adjacent, they are not too far apart. 
	To see this, first note that the symmetric difference of any two JS adjacent graphs has size at most 2. Moreover, if one of any two JS adjacent graphs is in $\mathcal{G}(d)$, then  their symmetric difference has size 1.
	In particular, for any $G^* \in \mathcal{G}'(d)$, we have $|E(G^*) \triangle E(\varphi(G^*))|\le 2k - 1$.
	
	Clearly, if $(H,H') \in E(\Gamma)$ is legal, then $|E(H) \triangle  E(H')| = 4 \leq 4k$. 
	Assume $(H,H') \in E(\Gamma)$ is illegal. Then there exist JS adjacent $G \in \varphi^{-1}(H)$ and $G'\in \varphi^{-1}(H')$ and according to the above we have 
	\begin{eqnarray*}
	|E(H) \triangle E(H')| & \leq & |E(H) \triangle E(G)| + |E(G) \triangle E(G')| + |E(G') \triangle E(H')|\nonumber \\ 
	& \leq & 2k-1 + 2 + 2k - 1 \leq 4k \,. \nonumber
	\end{eqnarray*}
	Moreover, this implies that we can go from $H$ to $H'$ in a ``small'' number of moves in the switch chain. This easily follows from most results showing that the state space of the switch chain is connected, e.g., from \cite{Taylor81}.\footnote{To be precise, we can focus on the subgraph induced by the nodes with positive degree in the symmetric difference. Taylor's proof on the connectivity of the state space of the switch chain \cite{Taylor81} implies that we can find $O(k^2)$ switches to get from $H$ to $H'$, only using edges in this induced subgraph.}  Specifically, here we use the following result of Erd{\H{o}}s, Kir{\'a}ly, and Mikl{\'o}s \cite{Erdos2013swap} which implies that we can go from $H$ to $H'$ in $2k$ switches. %\pkrem{Proof? Check reference}
	\begin{theorem}[follows from Theorem 3.6 in \cite{Erdos2013swap}] \label{thm:swap}
		Let $d= (d_1,\dots,d_n)$ be a degree sequence. For any two graphs $H, H' \in \mathcal{G}(d)$, $H$ can be transformed into $H'$ using at most $\frac{1}{2}|E(H) \triangle E(H')|$ switches.
	\end{theorem}
	
	For every \emph{illegal} edge $e \in E(\Gamma)$, we choose such a (simple) path from $H$ to $H'$ with at most $2k$ transitions and reroute the flow of $e$ over this path. Note that for any legal edge $e \in E(\Gamma)$, the number of illegal edge detours that use $e$ for this rerouting procedure, is at most $( n^4)^{2 k}\cdot ( n^4)^{2 k} =  n^{16 k}$,
%	\[
%	\left( n^4\right)^{2 k}\cdot \left( n^4\right)^{2 k} =  n^{16 k}  \,,
%	\]
	using the fact that in the state space graph of the switch chain the maximum degree of an element  is at most $n^4$  and  any illegal edge using  $e$ in its rerouting procedure must lie within distance $2 k$ of $e$. Combining this with (\ref{eq:traverse2}), we see that the resulting flow, $g$, satisfies 
	\begin{equation*}\label{eq:traverse3}
	g(e) \leq \frac{p(n) \cdot n^{9k} + p(n)  \cdot n^{16 k}}{|\mathcal{G}(d)|}  \,.
	\end{equation*}
	
	Note that the $\ell(g) \leq 2k \ell(h_{\Gamma})$. This holds because every illegal edge on a flow-carrying path gives rise to at most $2k$ additional edges as a result of rerouting the flow over legal edges, and the removal of loops and cycles from any resulting non-simple path can only decrease its length. Combining this inequality with $\ell(h_{\Gamma}) \leq \ell(f)$ (as we noted above), we get $\ell(g) \leq 2k \cdot \ell(f)$.
	This completes the proof of (\ref{eq:good_bound}), as we have now constructed a feasible multicommodity flow $g$ in the state space graph of the switch chain with the desired properties.  
\end{proof}

%\begin{remark}[Almost strong stability]
%	The result in Theorem \ref{thm:transformation} remains true under the slightly 
%	weaker condition that a family of degree sequences is \emph{almost strongly stable}, 
%	meaning that there exists a constant $k$ and a polynomial $q(\cdot)$, so that for every $d \in \mathcal{D}$ 
%	with $n$ components there exists a set $S$ with $|S| \leq q(n)$ such that
%	\[
%	\max_{G \in \mathcal{G}'(d) \mysetminus S} \text{dist}(G,d) \leq k \,.
%	\]
%\end{remark}

%%%%%%%%%%%%%%%%%%%%%%%%%%%%%%%%%%%%%%%%%%%%%%%%%%%%%%%%%%%%%%%%%%%%%%%%
%%%%%%%%%%%%%%%%%%%%%%%%%%%%%%%%%%%%%%%%%%%%%%%%%%%%%%%%%%%%%%%%%%%%%%%%
\section{Sampling Graphs with a Given JDM}\label{sec:switch_jdm_main}
%%%%%%%%%%%%%%%%%%%%%%%%%%%%%%%%%%%%%%%%%%%%%%%%%%%%%%%%%%%%%%%%%%%%%%%%
%%%%%%%%%%%%%%%%%%%%%%%%%%%%%%%%%%%%%%%%%%%%%%%%%%%%%%%%%%%%%%%%%%%%%%%%
We may use a similar high level approach to that in Section \ref{sec:main_result}
to show that the (restricted) switch chain defined in Subsection \ref{sec:jdm_model} is \emph{always} rapidly mixing for JDM instances with two degree classes. %\yarem{maybe add some text about this being the first result of the sort}

\begin{theorem}\label{thm:stable_jdm1}
	Let $\mathcal{D}$ be the family of instances of the joint degree matrix model with two degree classes. Then the switch chain is rapidly mixing for  instances in $\mathcal{D}$.
\end{theorem}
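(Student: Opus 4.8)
The plan is to follow the same two-stage template as in Section~\ref{sec:main_result}: introduce an auxiliary Markov chain on an augmented state space, prove it is rapidly mixing via a multicommodity flow, and then transfer the flow back to the restricted switch chain by an embedding argument. Concretely, I would first define the \emph{hinge flip chain} on a state space $\mathcal{G}'(c,d)$ that contains, besides the true realizations $\mathcal{G}(c,d)$, also graphs that \emph{almost} realize the joint degree matrix $(c,d)$ — e.g.\ graphs where the degrees and the cross-class edge counts $c_{ij}$ are off by a bounded amount (a ``deficiency'' of size $O(1)$ spread over one or two vertices/class-pairs). A single hinge flip moves one endpoint of an edge to a new vertex, which changes a single degree by one and at most one entry of the realized joint degree matrix; this is the analogue of the Jerrum--Sinclair move and keeps the state space connected. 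The corresponding strong stability parameter $k_{JS}(c,d)$ is the maximum distance in this chain from any state of $\mathcal{G}'(c,d)$ to $\mathcal{G}(c,d)$; the key structural claim is that for \emph{every} JDM instance with $q=2$ classes this parameter is bounded by an absolute constant, so that all $2$-class JDM instances form a strongly stable family.

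The main work — and the main obstacle — is proving rapid mixing of the hinge flip chain on $\mathcal{G}'(c,d)$ for strongly stable (in particular, all $2$-class) instances. As in Theorem~\ref{thm:js_mixing} this goes by a multicommodity flow / canonical path argument: given two realizations $G, G'$, look at their symmetric difference, decompose it into alternating circuits, and process one circuit at a time, repairing the edges of $G$ into those of $G'$. The difficulty, flagged in the introduction, is that unlike the pure degree-sequence case, here an alternating circuit in $E(G)\triangle E(G')$ need \emph{not} be correctable by naive local moves while staying inside (or even near) $\mathcal{G}(c,d)$: swapping edges along a circuit can violate the $c_{ij}$ constraints. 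This is exactly where I would invoke the \emph{mountain-climbing problem} (Lemma~\ref{lem:mountain}), following Bhatnagar, Randall, Vazirani and Vigoda~\cite{Bhatnagar2008}: the two ``profiles'' of how the class-membership pattern changes as one traverses the circuit in $G$ versus in $G'$ can be synchronized — two climbers ascending/descending matching mountain ranges — so that at every intermediate step the partial repair only ever creates a bounded deficiency in the $c_{ij}$'s, i.e.\ stays within $\mathcal{G}'(c,d)$. One then bounds the congestion of this flow by the standard injective-encoding argument: each intermediate state together with $O(1)$ extra bits of bookkeeping determines the pair $(G,G')$, giving $\max_e f(e) \le \mathrm{poly}(n)/|\mathcal{G}'(c,d)|$ and path length $\mathrm{poly}(n)$.

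With the auxiliary chain handled, the second stage is an embedding/flow-transformation argument essentially identical to Theorem~\ref{thm:transformation}. For each $G \in \mathcal{G}'(c,d)\setminus\mathcal{G}(c,d)$ fix a nearest true realization $\varphi(G)\in\mathcal{G}(c,d)$ within distance $k = k_{JS}(c,d) = O(1)$; since the hinge flip chain has $\mathrm{poly}(n)$ degree, $|\varphi^{-1}(H)| \le \mathrm{poly}(n)^{k}$ for every $H$, hence $|\mathcal{G}'(c,d)|/|\mathcal{G}(c,d)| \le \mathrm{poly}(n)^{k}$. Restrict the auxiliary flow to pairs in $\mathcal{G}(c,d)\times\mathcal{G}(c,d)$, contract each fiber $\varphi^{-1}(H)$ to a supernode, and reroute every ``illegal'' edge of the contracted graph — an edge $(H,H')$ with $|E(H)\triangle E(H')| \le 4k$ but $H,H'$ not switch adjacent — along a short path of genuine switch moves. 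The one extra ingredient needed here beyond the undirected case is a $q=2$ JDM analogue of Theorem~\ref{thm:swap}: any two realizations $H,H'\in\mathcal{G}(c,d)$ with small symmetric difference can be connected by $O(|E(H)\triangle E(H')|)$ restricted switches, which follows from the known connectivity proofs for the restricted switch chain on $2$-class JDM instances~\cite{Amanatidis2015,CzabarkaDEM15} applied to the subgraph induced by the symmetric difference. Counting detours as in Theorem~\ref{thm:transformation} gives $g(e) \le \mathrm{poly}(n)^{k}\cdot \mathrm{poly}(n)/|\mathcal{G}(c,d)|$ and $\ell(g) \le O(k)\cdot \mathrm{poly}(n)$, so since $k = O(1)$ the restricted switch chain is rapidly mixing for all $2$-class JDM instances, as claimed.
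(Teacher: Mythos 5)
Your proposal matches the paper's proof in structure and in every key ingredient: the hinge flip chain on an augmented state space $\mathcal{G}'(c,d)$ allowing bounded degree and joint-degree deficiencies, rapid mixing of that chain via canonical paths with a mountain-climbing synchronization in the spirit of Bhatnagar et al., a constant bound on the strong stability parameter for all two-class JDM instances (the paper shows $k\le 6$), an embedding argument mirroring Theorem~\ref{thm:transformation}, and the use of the irreducibility/short-swap-distance result from~\cite{Amanatidis2015} as the analogue of Theorem~\ref{thm:swap}. This is essentially the paper's own argument (carried out across Appendices~\ref{sec:auxiliary}--\ref{sec:switch_for_jdm}), so no further comparison is needed.
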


In analogy to the JS chain we first analyze a simpler Markov chain, called the \emph{hinge flip chain}, that adds and removes (at most) one edge at a time. Very much like the JS chain, the hinge flip chain might slightly violate the degree constraints.  Now, however, the joint degree constraints might be violated as well. The definition of \emph{strong stability} is appropriately adjusted to account for both deviations from the original requirements. Finally, we use a similar embedding argument as in Theorem \ref{thm:transformation}. The relevant definitions, as well as the analysis of this auxiliary chain are deferred to Appendix \ref{sec:auxiliary} due to space constraints. Here we present a high level outline of the proof of Theorem \ref{thm:stable_jdm1}. 

\paragraph{Rapid Mixing of the Hinge Flip Chain.} The first step of the proof is to show that the hinge flip chain defined on a \emph{strict superset} of the target state space mixes rapidly for strongly stable instances. Appendix \ref{sec:auxiliary} is dedicated to this  step. The fact that we do not want to deviate by more than a constant from the joint degree constraints % (in particular, we always require that the number of edges across the two degree classes differs at most 1 from $c_{12}$) 
makes the analysis much more challenging than the one for the JS chain presented in Appendix \ref{app:js}.
In order to overcome the difficulties that arise due to this fact, %the number of edges across the two degree classes being almost constant, 
we rely on ideas introduced by Bhatnagar et al.~\cite{Bhatnagar2008} for uniformly sampling  bichromatic matchings.  % in a graph that has its edges partitioned into two color classes. 
In particular, in the circuit processing part of the proof, we process a circuit at multiple places \emph{simultaneously} in case there is only one circuit in the canonical decomposition of a pairing, or we process multiple circuits \emph{simultaneously} in case the decomposition yields multiple circuits.  At the core of this approach lies a variant of the \emph{mountain-climbing problem} \cite{Homma1952,Whittaker1966}. In our case the analysis is more involved than that of \cite{Bhatnagar2008}, and we therefore use different arguments in various parts of the proof.

It is interesting to note that the analysis of the hinge flip chain is not carried out in the JDM model but in the more general Partition Adjacency Matrix (PAM) model \cite{Czabarka14,ErdosHIM2017}. The difference from the JDM model is that in each class $V_i$ the nodes need not have the same constant degree but rather follow a given degree sequence of size $|V_i|$. Given that small deviations from the prescribed degrees cannot be directly handled---by definition---by the JDM model, the PAM model is indeed a more natural choice for this step.

\paragraph{Strong Stability of JDM Instances.} Next we show that for any JDM instance, any graph in the state space of the hinge flip chain (i.e., graphs that satisfy or \emph{almost} satisfy the joint degree requirements) can be transformed to a graphical realization of the original instance within 6 hinge flips at most. That is, the set of JDM instances is a strongly stable family of instances of the PAM model and thus the hinge flip chain mixes rapidly for JDM instances. See Theorem \ref{cor:stable_jdm} in Appendix \ref{sec:stable_jdm}.

\paragraph{Flow Transformation.} The final step is an embedding argument, along the lines of the argument of Subsection \ref{sec:js_to_switch}, for transforming the efficient flow for the hinge flip chain to an efficient flow for the switch chain. As an intermediate step we need an analog of Theorem \ref{thm:swap}, but this directly follows from the proof of irreducibility of the switch chain in \cite{Amanatidis2015}. See Appendix \ref{sec:switch_for_jdm}.

%\color{red}
%%%%%%%%%%%%%%%%%%%%%%%%%%%%%%%%%%%%%%%%%%%%%%%%%%%%%%%%%%%%%%%%%%%%%%%%
%%%%%%%%%%%%%%%%%%%%%%%%%%%%%%%%%%%%%%%%%%%%%%%%%%%%%%%%%%%%%%%%%%%%%%%%
\section{Discussion}
%%%%%%%%%%%%%%%%%%%%%%%%%%%%%%%%%%%%%%%%%%%%%%%%%%%%%%%%%%%%%%%%%%%%%%%%
%%%%%%%%%%%%%%%%%%%%%%%%%%%%%%%%%%%%%%%%%%%%%%%%%%%%%%%%%%%%%%%%%%%%%%%%
We believe that our ideas can  be also used to simplify the switch chain analyses in settings where there is some given forbidden edge set, the elements of which cannot be used in any (bipartite) graphical realization \cite{Greenhill2011,Greenhill2017journal,Erdos2015,ErdosMMS2018}. This is an interesting direction for future work, as it  captures the case of sampling directed graphs.

Moreover, we suspect it can be shown that the condition in (\ref{eq:bip_same}) is essentially best possible in terms of $\delta_{r}$ and $\Delta_{r}$ in a similar sense as described in \cite{Jerrum1989graphical} for the results in Corollaries \ref{lem:stable_jerrum}  and  \ref{cor:stable}.\footnote{We suggest using the same ideas as the proof of Theorem 6 in \cite{Jerrum1989graphical} based on a non-stable family of bipartite degree sequences presented in \cite{Kannan1999}.} While this is an interesting question,  our goal here is not to give a full bipartite analogue of \cite{Jerrum1989graphical}. Even so, a deeper understanding of when a family of bipartite degree sequences is strongly stable is missing. 
In particular, is it possible to unify the results of Corollaries \ref{cor:bipartite} and \ref{cor:bipartite2} under a single condition similar to (\ref{eq:stable1})?

Further, it is not clear whether there exist  degree sequence families---bipartite or not---that are $P$-stable but not strongly stable. For instance, in a recent work by Gao and Wormald \cite{GaoW18}, who provide a very efficient non-MCMC approximate sampler for certain power-law degree sequences,  it is argued that these power-law degree sequences are $P$-stable. Is it the case these sequences are strongly stable as well? Theorem \ref{thm:switch} would then directly imply that the switch chain is rapidly mixing for this family.

A central open question is how to go beyond (strong) stability. We suspect that the proof template of \cite{Cooper2007} cannot be used for proving rapid mixing of the switch chain for general families of degree sequences. The intuition is that it relies on the fact that there is a set of auxiliary states that is not much larger than the set of actual graphical realizations for a given degree sequence; this property seems very closely related to $P$-stability, and also arises explicitly in the analysis of the bipartite case in \cite{Kannan1999}. This observation suggests the need for a novel approach for studying the mixing of the switch chain on non-stable degree families. 

Finally, the problem of sampling graphic realizations of a given joint degree distribution with three or more degree classes is also open. Although our proof breaks down for more than two classes, we hope that our high level approach can facilitate progress on the problem.

%
%\color{black}

\section*{Acknowledgements} We are grateful to P{\'{e}}ter Erd{\H{o}}s, Tam{\'{a}}s Mezei and Istv{\'{a}}n Mikl{\'{o}}s for their useful comments.

\newpage
\appendix

%%%%%%%%%%%%%%%%%%%%%%%%%%%%%%%%%%%%%%%%%%%%%%%%%%%%%%%%%%%%%%%%%%%%%%%%
%%%%%%%%%%%%%%%%%%%%%%%%%%%%%%%%%%%%%%%%%%%%%%%%%%%%%%%%%%%%%%%%%%%%%%%%
\section{Missing Material from Section \ref{sec:main_result}}
%%%%%%%%%%%%%%%%%%%%%%%%%%%%%%%%%%%%%%%%%%%%%%%%%%%%%%%%%%%%%%%%%%%%%%%%
%%%%%%%%%%%%%%%%%%%%%%%%%%%%%%%%%%%%%%%%%%%%%%%%%%%%%%%%%%%%%%%%%%%%%%%%

%%%%%%%%%%%%%%%%%%%%%%%%%%%%%%%%%%%%%%%%%%%%%%%%%%%%%%%%%%%%%%%%%%%%%%%%
\subsection{On the Proofs of Corollaries \ref{lem:stable_jerrum} and \ref{cor:stable}}\label{app:open_question}
%%%%%%%%%%%%%%%%%%%%%%%%%%%%%%%%%%%%%%%%%%%%%%%%%%%%%%%%%%%%%%%%%%%%%%%%

Since a slightly different notion of stability (used in \cite{Jerrum1989graphical}) is defined below, to avoid confusion,  we should clarify that whenever \emph{strong} stability is mentioned it is meant in the sense of Definition \ref{def:strong_stable}.

We first introduce some notation, using the same terminology as in \cite{Jerrum1989graphical}. Let $G = (V,E)$ be an undirected graph. For distinct $u,v \in V$ we say that $u,v$ are co-adjacent if $\{u,v\} \notin E$, and $\{u,v\}$ is called a co-edge. An alternating path of length $q$ in $G$ is a sequence of nodes $v_0,v_1,\dots,v_q$ such that (i) $\{v_i,v_{i+1}\}$ is an edge when $i$ is even, and a co-edge if $i$ is odd, or (ii) $\{v_i,v_{i+1}\}$ is a co-edge when $i$ is even, and an edge if $i$ is odd. The path is called a cycle if $v_0 = v_q$. We will always specify if a path is of type $(i)$ or $(ii)$.

The definition of $P$-stability in \cite{Jerrum1989graphical}---which we here call $\pm P$-stability---is based on the following definition. Let $\mathcal{G}''(d) = \cup_{d'} \mathcal{G}(d')$ with $d'$ ranging over the set
\[
\{d\} \cup \bigg\{ d' : \sum_{i=1}^n d_i = \sum_{i=1}^n d_i' \ \text{ and } \ \sum_{i=1}^n |d_i - d_i'| = 2\bigg\} \,.
\]
That is, either $d = d'$, or $d'$ has one node with deficit one and one node with surplus one. A family of sequences is $\pm P$-stable if there exists a polynomial $p$ such that $|\mathcal{G}''(d)|/|\mathcal{G}(d)| \leq p(n)$ for every $d = (d_1,\dots,d_n)$ in the family. In order to show  in \cite{Jerrum1989graphical} that the families corresponding to the conditions in Corollaries \ref{lem:stable_jerrum} and \ref{cor:stable} are $\pm P$-stable, it is shown that for any $G \in \mathcal{G}''(d)$ there always exists a \emph{short alternating path} of even length, starting with an edge and ending with a co-edge, connecting the node with surplus one to the node with deficit one. The connection with our definition of strong stability is as follows.

\begin{lemma}\label{lem:app_A}
	Let $\mathcal{D}$ be a family of sequences, and assume that there exists a constant $k_0$ such that for all $d \in \mathcal{D}$ and any $G \in \mathcal{G}''(d) \setminus \mathcal{G}(d)$, there exists an alternating path of length at most $k_0$, starting with an edge and ending with a co-edge, connecting the node with surplus one to the node with deficit one.
	Then the family $\mathcal{D}$ is strongly stable with respect to $k_0/2 + 1$.
\end{lemma}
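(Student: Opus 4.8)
The goal is to bound $k_{JS}(d)$ by $k_0/2+1$, i.e., to show that every $G \in \mathcal{G}'(d)$ is within distance $k_0/2+1$ of some element of $\mathcal{G}(d)$ in the JS chain. The plan is to reduce statements about the JS chain (which works with the deficit-only state space $\mathcal{G}'(d)$) to the alternating-path statements available in the hypothesis (which concern the surplus-and-deficit space $\mathcal{G}''(d)$), and then to convert a short alternating path into a short sequence of JS transitions.

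\emph{Step 1: reduce to one deficiency.} Recall that a $G \in \mathcal{G}'(d)\setminus\mathcal{G}(d)$ either has one node of degree deficit two, or two nodes each of degree deficit one. In the first case, pick a non-edge $\{v,w\}$ of $G$ incident to the deficient node $\kappa$ (such a non-edge exists since $d_\kappa \le n-1$ while the current degree is $d_\kappa - 2$); adding it is a single valid JS transition and produces a graph with two nodes of deficit one. In the second case we are already there. So at the cost of one JS step we may assume $G$ has exactly two nodes $a,b$ with deficit one and the realized degree sequence is $d'$ with $d'_a = d_a-1$, $d'_b = d_b-1$. A single further JS step — adding a non-edge $\{a,b\}$ if $\{a,b\}\notin E(G)$ — would finish, so assume $\{a,b\}$ is an edge of $G$, hence cannot be added.

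\emph{Step 2: find a short alternating path and push the deficiency along it.} The graph $G$ has a degree sequence with one node of deficit one at $a$ and one node of deficit one at $b$; equivalently, viewing $a$ as the ``deficit'' node and $b$ as the ``surplus'' node is not quite the $\mathcal{G}''$ picture, so instead observe: $G+\{a,b\}$ — if it were simple — would lie in $\mathcal{G}''(d)$ with $a$ having the right degree and $b$ having surplus... the cleaner route is to directly invoke the hypothesis on a suitable auxiliary graph. Concretely, since $a$ has deficit one, pick any node $c\ne a,b$ with $\{a,c\}\notin E(G)$ and such that adding $\{a,c\}$ keeps things controlled; then $G+\{a,c\}$ has $a$ at the correct degree, $c$ with surplus one, $b$ with deficit one — but this does not reduce the total deficit. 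I would instead argue as follows: the hypothesis provides, for the graph $G' := G+\{a,b\}$ with its multi-edge removed bookkeeping... Here is the main idea stripped down: apply the hypothesis to an element of $\mathcal{G}''(d)$ obtained from $G$ by a single edge modification that trades the two deficit-one nodes for a deficit-one/surplus-one pair, obtain an alternating path $P=(v_0,\dots,v_q)$ of length $q\le k_0$ starting with an edge and ending with a co-edge connecting the surplus node to the deficit node, and then perform the classical ``rotation along an alternating path'' — delete the edges of $P$ at even positions, add the co-edges at odd positions — which is exactly a cascade of $q/2$ JS transitions (each JS transition adds a missing edge at the deficient node and, if that overshoots a degree bound, deletes an incident edge, which is precisely one ``add a co-edge, remove the next edge'' step of the alternating path). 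Since $q\le k_0$, this takes at most $k_0/2$ JS steps and lands in $\mathcal{G}(d)$.

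\emph{Step 3: tally the distance and conclude.} Combining the (at most) one step of Step 1 to normalize to two deficit-one nodes, with the $\le k_0/2$ steps of Step 2 that process the alternating path, we reach $\mathcal{G}(d)$ in at most $k_0/2+1$ transitions, so $k_{JS}(d)\le k_0/2+1$. As this bound is uniform over $\mathcal{D}$ and $k_0$ is constant, $\mathcal{D}$ is strongly stable with respect to $k_0/2+1$ by Definition \ref{def:strong_stable}.

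\textbf{Main obstacle.} The delicate point is Step 2: matching up the combinatorial ``alternating path rotation'' with the precise probabilistic description of a JS transition, and in particular verifying that each intermediate graph along the rotation is a legitimate state of the JS chain (i.e., lies in $\mathcal{G}'(d)$, meaning the total deficit never exceeds two and no surplus is created except transiently in the way the JS chain allows) and that consecutive graphs are genuinely JS adjacent with positive probability. One must also carefully handle the bookkeeping when translating between the $\mathcal{G}''(d)$ (surplus/deficit) hypothesis and the $\mathcal{G}'(d)$ (deficit-only) world the JS chain lives in — essentially showing that a surplus-one/deficit-one alternating path, read from the deficit end, corresponds step-by-step to the JS chain's ``add an edge at the deficient node, then if a degree is exceeded delete an edge there'' rule, so that a length-$q$ path costs exactly $q/2$ JS moves.
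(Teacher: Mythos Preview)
Your overall strategy matches the paper's: conceptually add one co-edge at a deficient node to land in $\mathcal{G}''(d)$, invoke the hypothesis to get a short alternating path, and then realize that path as a sequence of JS transitions. However, your Step~2 contains a genuine gap that breaks the count.

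You obtain a path $P=(v_0,\dots,v_q)$ of length $q\le k_0$ from the surplus node $c$ to the deficit node $b$ \emph{in the modified graph} $G'=G+\{a,c\}$, and then claim that rotating along $P$ is a cascade of $q/2$ JS transitions that lands in $\mathcal{G}(d)$. But the rotation must be performed in $G$, not $G'$, and processing only $P$ fixes the deficit at $b$ while creating a new deficit at $c$ and leaving $a$ untouched: after your $q/2$ steps you are back at a two-deficit-one state $(a,c)$, not in $\mathcal{G}(d)$. The missing ingredient is exactly the co-edge $\{a,c\}$ that you used conceptually to enter $\mathcal{G}''(d)$ --- it must also be part of the JS sequence. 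The paper handles this by \emph{prepending} $\{a,c\}$ to $P$, obtaining an alternating path from $a$ to $b$ in $G$ of length $q+1\le k_0+1$ that starts and ends with a co-edge; this path is then processed by $q/2$ Type~1 transitions followed by one Type~2 transition, for a total of $q/2+1\le k_0/2+1$.

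A secondary issue: your Step~1 (spending one JS move to convert the deficit-two case into the two-deficit-one case) is unnecessary. The paper treats both cases uniformly with the same prepend-and-process argument, and indeed if you keep Step~1 and then apply the \emph{corrected} Step~2 you would get $1+(k_0/2+1)=k_0/2+2$ in the deficit-two case, overshooting the target. Dropping Step~1 and running the extended-path argument directly from the original $G$ gives $k_0/2+1$ in both cases.
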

\begin{proof}
	Let $H \in \mathcal{G}'(d)$ be such that there are two nodes $u \neq v$ with deficit one (the case of one node with deficit two is very similar). Since $u$ has deficit one, there is some $x$ so that $\{u,x\}$ is a co-edge. Then the graph $H + \{u,x\} \in \mathcal{G}''(d)$ as now $x$ has surplus one, and $v$ still has deficit one (note that if $x = v$ we are immediately done).
	
	The assumption now implies that there is an alternating path of length at most $k_0$ starting at $x$ with an edge, and ending at $v$ with a co-edge. But this implies that in $H$, there exists an alternating path starting at $u$ and ending at $v$ of length at most $k_0 + 1$, where both the first and last edge on this path are co-edges. This certainly implies that $k_{JS}(d) \leq k_0/2 + 1$ by using a sequence of (see beginning of Appendix \ref{app:js} below for definitions) Type 1 transitions, and finally a Type 2 transition.
\end{proof}

The following two corollaries now explain why the families presented in Corollaries \ref{lem:stable_jerrum} and \ref{cor:stable} are strongly stable. The value of $k_0 = 10$ used below follows directly from the proofs of Theorems 8 and 2 in \cite{Jerrum1989graphical}.

\begin{corollary}
	Let $\mathcal{D} = \mathcal{D}(\delta,\Delta,m)$ be the set of all graphical degree sequences $d = (d_1,\dots,d_n)$ satisfying  
	\begin{equation*}
	(2m - n \delta)(n\Delta - 2m) \leq (\Delta - \delta)\big[(2m - n\delta)(n - \Delta - 1) + (n\Delta - 2m)\delta\big] \tag{\ref{eq:stable1}}
	\end{equation*}
	where $\delta$ and $\Delta$ are the minimum and maximum component of $d$, respectively, and $m = \frac{1}{2} \sum_{i = 1}^n d_i$.
	For all $d \in \mathcal{D}$ and any $G \in \mathcal{G}''(d) \setminus \mathcal{G}(d)$, there exists an alternating path of length at most $10$, starting with an edge and ending with a co-edge, connecting the node with surplus one to the node with deficit one. Hence, $\mathcal{D}$ is strongly stable.
\end{corollary}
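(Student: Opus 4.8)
The plan is to reduce the statement to a purely combinatorial claim about the existence of a short alternating path, and then invoke the corresponding result from \cite{Jerrum1989graphical}. Concretely, by Lemma \ref{lem:app_A}, it suffices to exhibit a constant $k_0$ such that for every $d \in \mathcal{D}$ and every $G \in \mathcal{G}''(d) \setminus \mathcal{G}(d)$ there is an alternating path of length at most $k_0$, starting with an edge and ending with a co-edge, joining the node of surplus one to the node of deficit one; strong stability with constant $k_0/2 + 1$ then follows immediately. So the entire task is to pin down $k_0$ for the family defined by \eqref{eq:stable1}.

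First I would recall the setup from \cite{Jerrum1989graphical}: one fixes $G \in \mathcal{G}''(d)\setminus\mathcal{G}(d)$, lets $w$ be the node with surplus one and $v$ the node with deficit one, and asks for a short alternating $w$--$v$ path of type (i) (edge, co-edge, edge, $\dots$, co-edge) of even length. The standard counting argument there shows that if no such short path exists, then the ``reachable'' sets obtained by alternately following edges and co-edges from $w$ are large, and a double-counting / expansion estimate on the number of edges versus co-edges between these layers forces a quantitative inequality on $\delta$, $\Delta$, $n$, and $m$; the family \eqref{eq:stable1} is precisely the one for which this inequality is violated, hence a short path must exist. I would cite this directly: the proofs of Theorems 8 and 2 in \cite{Jerrum1989graphical} establish exactly that, under \eqref{eq:stable1}, a type-(i) alternating path of length at most $10$ always exists between the surplus and deficit nodes. (The bound $k_0 = 10$ is what those proofs yield; I would not re-derive the layer-expansion estimates here, only point to them.)

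Having established the path-length bound with $k_0 = 10$, I would then simply apply Lemma \ref{lem:app_A} to conclude that $\mathcal{D}$ is strongly stable with respect to $k_0/2 + 1 = 6$, which also re-proves the bound $k_{JS}(d) \le 6$ asserted in Corollary \ref{lem:stable_jerrum}. The main obstacle is essentially bookkeeping rather than conceptual: one must check that the notion of $P$-stability via $\mathcal{G}''(d)$ used in \cite{Jerrum1989graphical} (with one surplus and one deficit node, total degree preserved) is the one for which their short-path lemma is proved, and that the translation in Lemma \ref{lem:app_A} from a $\mathcal{G}''$-path to a $\mathcal{G}'$-configuration (two deficit-one nodes, or one deficit-two node) is handled correctly — in particular the little trick of adding a co-edge at the deficit node $u$ to convert an element of $\mathcal{G}'(d)$ into an element of $\mathcal{G}''(d)$, which costs one extra unit of path length and accounts for the ``$+1$''. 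Once that correspondence is in place, the corollary is immediate.
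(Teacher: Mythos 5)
Your proposal is correct and matches the paper's own treatment: the paper likewise leaves the bound $k_0 = 10$ to the proofs of Theorems 8 and 2 in \cite{Jerrum1989graphical} (stated in the text immediately preceding the corollary) and then obtains strong stability with constant $k_0/2 + 1 = 6$ via Lemma~\ref{lem:app_A}. Your bookkeeping remark about converting a $\mathcal{G}'(d)$-configuration into a $\mathcal{G}''(d)$-configuration by adding a co-edge at the deficit node, at the cost of one extra step, is exactly the mechanism in the proof of Lemma~\ref{lem:app_A}, so the translation is handled correctly.
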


\begin{corollary}
	Let $\mathcal{D} = \mathcal{D}(\delta,\Delta)$ be the set of all graphical degree sequences $d = (d_1,\dots,d_n)$ satisfying  
	\begin{equation*}
	(\delta_{\max} - \delta_{\min} + 1)^2 \leq 4\delta_{\min}(n - \delta_{\max} - 1)  \tag{\ref{eq:stable2}}
	\end{equation*}
	where $\delta$ and $\Delta$ are the minimum and maximum component of $d$, respectively.
	For all $d \in \mathcal{D}$ and any $G \in \mathcal{G}''(d) \setminus \mathcal{G}(d)$, there exists an alternating path of length at most $10$, starting with an edge and ending with a co-edge, connecting the node with surplus one to the node with deficit one. Hence, $\mathcal{D}$ is strongly stable.
\end{corollary}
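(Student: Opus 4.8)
The plan is to obtain this corollary as an immediate consequence of its analogue for condition~\eqref{eq:stable1} (the preceding corollary in this appendix) together with Lemma~\ref{lem:app_A}. The one thing I would establish directly is the purely algebraic fact that \eqref{eq:stable2} implies \eqref{eq:stable1} for every graphical sequence $d$ with minimum degree $\delta$, maximum degree $\Delta$, and $m = \tfrac12\sum_i d_i$. Once that is in hand, $\mathcal{D}(\delta,\Delta)\subseteq \mathcal{D}(\delta,\Delta,m)$ for the relevant $m$, so the existence of an alternating path of length at most $10$ (starting with an edge, ending with a co-edge, joining the surplus node to the deficit node) carries over verbatim from the preceding corollary; strong stability with constant $k_0/2+1 = 6$ then follows by invoking Lemma~\ref{lem:app_A} with $k_0 = 10$.

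For the algebraic implication I would set $a = 2m - n\delta$ and $b = n\Delta - 2m$, both nonnegative since $\delta \le d_i \le \Delta$, note that $a + b = n(\Delta-\delta)$, substitute $b = n(\Delta-\delta) - a$ into \eqref{eq:stable1}, and simplify. The inequality should collapse to the quadratic condition
\[
a^2 - a(\Delta-\delta)(\Delta+\delta+1) + n(\Delta-\delta)^2\delta \ \ge\ 0 \,.
\]
As a quadratic in $a$ with positive leading coefficient, this holds for \emph{all} real $a$ precisely when its discriminant is non-positive, i.e.\ (in the non-degenerate case $\Delta > \delta$; the case $\Delta = \delta$ is trivial) when $(\Delta+\delta+1)^2 \le 4n\delta$. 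Expanding \eqref{eq:stable2} shows that $(\delta_{\max}-\delta_{\min}+1)^2 \le 4\delta_{\min}(n-\delta_{\max}-1)$ is itself equivalent to $(\Delta+\delta+1)^2 \le 4n\delta$, so \eqref{eq:stable2} forces the displayed inequality for every $a$, hence in particular for the value of $a$ pinned down by the given $d$; thus $d$ satisfies \eqref{eq:stable1}.

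I do not expect a genuine obstacle: all the real content is inherited from the analysis behind \eqref{eq:stable1}, and the only work is the bookkeeping in the quadratic rearrangement and the expansion of \eqref{eq:stable2} — the step to do with care rather than a hard step. If one prefers to avoid the reduction, the alternative is to quote Theorem~2 of \cite{Jerrum1989graphical} directly, whose proof constructs the short alternating path by a pigeonhole/counting argument on the neighbourhoods of the two relevant vertices and already yields $k_0 = 10$; but routing through \eqref{eq:stable1} has the advantage of making explicit that \eqref{eq:stable2} is merely a special case.
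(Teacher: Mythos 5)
Your proposal is correct, and the algebra checks out: writing $a = 2m - n\delta$, $b = n\Delta - 2m$ with $a + b = n(\Delta-\delta)$ does collapse \eqref{eq:stable1} to $a^2 - a(\Delta-\delta)(\Delta+\delta+1) + n\delta(\Delta-\delta)^2 \geq 0$, and expanding \eqref{eq:stable2} does give $(\Delta+\delta+1)^2 \leq 4n\delta$, which is exactly nonpositivity of the discriminant (when $\Delta > \delta$; $\Delta = \delta$ is trivial since \eqref{eq:stable1} degenerates to $0 \leq 0$). The chain of implications then goes the right way, and once $\mathcal{D}(\delta,\Delta) \subseteq \mathcal{D}(\delta,\Delta,m)$ the alternating-path conclusion is inherited and Lemma~\ref{lem:app_A} with $k_0 = 10$ gives strong stability with constant $6$.

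Where you differ from the paper: the paper does not carry out this reduction. It asserts in the main text that Corollary~\ref{cor:stable} is ``a special case of Corollary~\ref{lem:stable_jerrum}'' but never verifies the implication; in the appendix it instead quotes Theorem~2 of \cite{Jerrum1989graphical} directly to get $k_0 = 10$ for condition~\eqref{eq:stable2}, in parallel with quoting Theorem~8 for~\eqref{eq:stable1}. Your second suggested route (``quote Theorem~2 directly'') is precisely the paper's proof. Your primary route is the more self-contained one: it makes the subsumption claim rigorous and shows that Theorem~8 of \cite{Jerrum1989graphical} alone suffices, at the modest cost of the bookkeeping you correctly flagged as the only step needing care. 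Either is fine; the algebraic reduction is arguably preferable here because it substantiates a statement the paper leaves unproved.
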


%%%%%%%%%%%%%%%%%%%%%%%%%%%%%%%%%%%%%%%%%%%%%%%%%%%%%%%%%%%
\subsection{Proof of Theorem \ref{thm:js_mixing}}\label{app:js}
%%%%%%%%%%%%%%%%%%%%%%%%%%%%%%%%%%%%%%%%%%%%%%%%%%%%%%%%%%%

For the reader's convenience, we repeat the description of the JS chain \cite{Jerrum1990} here (ignoring the lazy part).  We also introduce some shorthand terminology for the type of moves defining the chain: in state $G \in \mathcal{G}'(d)$, select an ordered pair $i,j$ of nodes uniformly at random and then 
\begin{enumerate}[label=(\roman*)]
	\item if $G \in \mathcal{G}(d)$ and $(i,j)$ is an edge of $G$, delete $(i,j)$ from $G$ (\emph{Type 0 transition}),
	\item if $G \notin \mathcal{G}(d)$ and the degree of $i$ in $G$ is less than $d_i$, and $(i,j)$ is not an edge of $G$, add $(i,j)$ to $G$; if this causes the degree of $j$ to exceed $d_j$, select an edge $(j,k)$ uniformly at random and delete it (\emph{Type 1 transition}).
\end{enumerate}
In case the degree of $j$ does not exceed $d_j$ in (ii), we call this a \emph{Type 2 transition}.\bigskip

\begin{rtheorem}{Theorem}{\ref{thm:js_mixing}}
	Let $\mathcal{D}$ be a strongly stable family of degree sequences with respect to some constant $k$. Then there exist polynomials $p(n)$ and $r(n)$ such that for any $d = (d_1,\dots,d_n)\in \mathcal{D}$ there exists an efficient multicommodity flow $f$ for the JS chain on $\mathcal{G}'(d)$   	
	satisfying $\max_e f(e) \leq p(n)/ |\mathcal{G}'(d)|$  and  $\ell(f) \leq r(n)$.
\end{rtheorem}

\bigskip

We will use the following idea from \cite{Jerrum1989}---used in a different setting---in order to restrict ourselves to establishing flow between states in $\mathcal{G}(d)$, rather than between all states in $\mathcal{G}'(d)$. Assume that $d$ is is a degree sequence with $n$ components that is a member of a strongly stable family of degree sequences (with respect to some $k$).

\begin{lemma}\label{lem:flow_simplification}
	Let $f'$ be a flow that routes $1/|\mathcal{G}'(d)|^2$ units of flow between any pair of states in $\mathcal{G}(d)$ in the JS chain, so that $f'(e) \leq b/|\mathcal{G}'(d)|$ for all $e$ in the state space graph of the JS chain. Then $f'$ can be extended to a flow $f$ that routes $1/|\mathcal{G}'(d)|^2$ units of flow between any pair of states in $\mathcal{G}'(d)$ with the property that for all $e$ in the state space graph of the JS chain
	\[
	f(e) \leq q(n) \frac{b}{|\mathcal{G}'(d)|} \,,
	\]
	where $q(\cdot)$ is a polynomial whose degree only depends on $k_{JS}(d)$. Moreover, $\ell(f) \le \ell(f') + 2k_{JS}(d)$.\footnote{We omit the proof of Lemma \ref{lem:flow_simplification} as the lemma is actually not needed for proving Theorem \ref{thm:switch}. Careful consideration of the proof of Theorem \ref{thm:transformation} shows that we can only focus on flow between states in $\mathcal{G}(d)$, since the flow $h$ given in the proof of Theorem \ref{thm:transformation} only has positive flow between states corresponding to elements in $\mathcal{G}(d)$. That is, when defining the flow $h$, we essentially forget about all flow in $f$ between any pair of states where at least one state is an auxiliary state, i.e., an element of $\mathcal{G}'(d) \mysetminus \mathcal{G}(d)$. Said differently, in Theorem \ref{thm:transformation} we could start with the assumption that $f$ routes $1/|\mathcal{G}'(d)|^2$ units of flow between any pair of states in $\mathcal{G}(d)$ in the state space graph of the JS chain, and then the transformation still works. However, the formulations of Theorems \ref{thm:js_mixing} and \ref{thm:transformation} are more natural describing a comparison between the JS and switch chains.}
\end{lemma}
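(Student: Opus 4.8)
The plan is to extend the flow $f'$ (defined only between pairs of ``true'' states in $\mathcal{G}(d)$) to a full flow $f$ between all pairs of states in $\mathcal{G}'(d)$ by projecting auxiliary states onto nearby true states. First I would fix, for each auxiliary state $G \in \mathcal{G}'(d) \mysetminus \mathcal{G}(d)$, a true state $\varphi(G) \in \mathcal{G}(d)$ within distance $k_{JS}(d)$ of $G$ in the JS chain, together with a fixed shortest JS-path $\sigma_G$ from $G$ to $\varphi(G)$; take $\varphi(G) = G$ and $\sigma_G$ trivial for $G \in \mathcal{G}(d)$. For a pair $(x,y) \in \mathcal{G}'(d) \times \mathcal{G}'(d)$, the $1/|\mathcal{G}'(d)|^2$ units of flow should be routed by first traversing $\sigma_x$ from $x$ to $\varphi(x)$, then following the $f'$-flow paths from $\varphi(x)$ to $\varphi(y)$ (which carry exactly $1/|\mathcal{G}'(d)|^2$ units by hypothesis, since $f'$ routes that much between every pair in $\mathcal{G}(d)$), and finally traversing the reverse of $\sigma_y$ from $\varphi(y)$ to $y$. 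Concatenating and deleting any loops/cycles yields simple flow-carrying paths; this immediately gives $\ell(f) \le \ell(f') + 2k_{JS}(d)$ since each prefix/suffix detour has length at most $k_{JS}(d)$.

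Next I would bound $f(e)$ for an arbitrary edge $e$ of the JS state space graph. The flow through $e$ has two sources: (a) the ``middle'' portion coming from $f'$, and (b) the ``detour'' portions coming from the prefixes $\sigma_x$ and suffixes $\sigma_y$. For part (a), the flow between a pair $(\varphi(x),\varphi(y))$ that passes through $e$ is at most $f'(e) \le b/|\mathcal{G}'(d)|$; but a single edge $e$ may receive middle-flow destined for it via many pairs $(x,y)$ mapping to the same pair $(\varphi(x),\varphi(y))$. Here I would use $|\varphi^{-1}(H)| \le n^{3k}$ (exactly the counting argument from the proof of Proposition \ref{prop:strong_stable}, since the in-degree of the JS state space graph is bounded by $n^3$ and preimages lie within distance $k$), so the multiplicity is at most $n^{3k} \cdot n^{3k} = n^{6k}$, contributing at most $n^{6k} \cdot b/|\mathcal{G}'(d)|$. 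For part (b), a detour $\sigma_x$ uses $e$ only if $x$ lies within distance $k$ of $e$ in the JS chain; since the out-degree of the state space graph is also bounded by $n^3$, there are at most $n^{3k}$ such states $x$, and each routes $1/|\mathcal{G}'(d)|^2 \cdot |\mathcal{G}'(d)| = 1/|\mathcal{G}'(d)|$ total units across all its partners $y$ — wait, more carefully: state $x$ is the source of $|\mathcal{G}'(d)|-1$ commodities each of size $1/|\mathcal{G}'(d)|^2$, hence at most $1/|\mathcal{G}'(d)|$ units total flowing out of $x$ along $\sigma_x$, so the detour contribution to $f(e)$ is at most $2 n^{3k}/|\mathcal{G}'(d)|$ (the factor $2$ for prefixes and suffixes). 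Summing, $f(e) \le (n^{6k} b + 2 n^{3k})/|\mathcal{G}'(d)| \le q(n)\, b/|\mathcal{G}'(d)|$ with $q(n) = n^{6k} + 2n^{3k}$, whose degree depends only on $k = k_{JS}(d)$ as required (absorbing the harmless additive $2n^{3k}$ into the polynomial; one may assume $b \ge 1$).

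The main obstacle is the bookkeeping in part (b): one must be careful that the detour flow out of a single state $x$ really is bounded by $1/|\mathcal{G}'(d)|$ rather than something larger, and that the ``deletion of loops and cycles'' step after concatenation does not inadvertently increase the flow on any edge — it does not, since removing a cycle from a flow-carrying path only removes that path's contribution from the cycle's edges. A secondary subtlety is ensuring the concatenated paths, after loop removal, remain genuine simple paths in $\mathcal{P}_{xy}$ and still carry the full $1/|\mathcal{G}'(d)|^2$ units; this is fine because shortcutting preserves endpoints and total flow between a fixed pair. Finally, one should note (as the footnote in the excerpt does) that this lemma is not actually on the critical path for Theorem \ref{thm:switch}, so even a somewhat lossy polynomial $q(n)$ suffices.
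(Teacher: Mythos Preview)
Your proposal is correct and matches the paper's approach. The paper omits the proof of this lemma (noting in the footnote that it is not needed for Theorem~\ref{thm:switch}), but it does give a proof sketch for the exactly analogous Lemma~\ref{lem:flow_simplification_jdm} in Appendix~\ref{sec:auxiliary}; that sketch uses precisely your construction---fix $\varphi(G)\in\mathcal{G}(d)$ within distance $k$ together with a fixed short path, route via prefix/middle/suffix, and bound the extra congestion using $|\varphi^{-1}(H)|\le n^{3k}$---while leaving the detailed congestion bookkeeping (your parts (a) and (b)) to the reader.
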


We now continue with the proof of Theorem \ref{thm:js_mixing}. It consists of four parts following, in a conceptual sense, the proof template in \cite{Cooper2007} developed for proving rapid mixing of the switch chain for regular graphs. Certain parts use similar ideas as in \cite{Jerrum1989} where a Markov chain for sampling (near)-perfect matchings is studied. Whenever we refer to \cite{Jerrum1989}, the reader is referred to Section 3 of \cite{Jerrum1989}.

\begin{proof}[Proof of Theorem \ref{thm:js_mixing}] 
	We only need to define a flow $f'$ as in Lemma \ref{lem:flow_simplification} so that $b \leq p_1(n)$ and $\ell(f') \le p_2(n)$ for some polynomials $p_1(\cdot), p_2(\cdot)$ whose degrees may only depend on $k= k_{JS}(d)$. Actually, we are going to show that we may use $p_1(n)=p_2(n)=n^2$. Then the theorem follows from the lemma and the fact that $\ln(|\mathcal{G}'(d)|)$ is upper bounded by a polynomial in $n$. The latter follows from 
	Equation (1) of McKay and Wormald \cite{McKayW91} that implies that 
	\[|\mathcal{G}(d')| \le n^{n^2}\]
	for any degree sequence $d'$ with $n$ components (see also \cite{Greenhill2017journal}). So, by the definition of  $|\mathcal{G}'(d)|$ we have 
	\[|\mathcal{G'}(d)| \le \left( \frac{n(n-1)}{2}+n+1\right) n^{n^2} \,,\]
	and thus $\ln(|\mathcal{G}'(d)|)\le 3n^3$.
	
	Before we define $f'$, we first introduce some basic terminology similar to that in \cite{Cooper2007}. Let  $V$ be a set of labeled vertices, let $\prec_{E}$ be a fixed total order on the set $\{ \{v,w\} : v,w \in V\}$ of edges, and let $\prec_{\mathcal{C}}$ be a total order on all \emph{circuits} on the complete graph $K_V$, i.e., $\prec_{\mathcal{C}}$ is  a total order on the closed walks in $K_V$ that visit every edge at most once. We fix for every circuit one of its vertices where the walk begins and ends.
	
	For given $G,G \in \mathcal{G}(d)$, let $H = G \triangle G'$ be their symmetric difference. We refer to the edges in $G \mysetminus G'$ as \emph{blue}, and the edges in $G' \mysetminus G$ as \emph{red}.
	A \emph{pairing} of red and blue edges in $H$ is a bijective mapping that, for each node $v \in V$, maps every red edge adjacent to $v$, to a blue edge adjacent to $v$. The set of all pairings is denoted by  $\Psi(G,G')$, and, with $\theta_v$ the number of red edges adjacent to $v$ (which is the same as the number of blue edges adjacent to $v$), we have 
	$
	|\Psi(G,G')| = \displaystyle{\Pi_{v \in V}} \theta_v!.
	$ 
	
%	\medskip
	\paragraph{Canonical Path Description \cite{Cooper2007}.} Similar to the approach in \cite{Cooper2007}, the goal is to construct for each pairing $\psi \in \Psi(G,G')$ a canonical path from $G$ to $G'$ that carries a $|\Psi(G,G')|^{-1}$ fraction of the total flow from $G$ to $G'$ in $f'$. For notational convenience, for the remaining of the proof we write $uv$ instead of $\{u, v\}$ to denote an edge. For a given pairing $\psi$ and the total order $\prec_E$ given above, we first decompose $H$ into the edge-disjoint union of circuits in a canonical way. We start with the lexicographically smallest edge $w_0w_1$ in $E_H$ and follow the pairing $\psi$ until we reach the edge $w_kw_0$ that was paired with $w_0w_1$. This defines the circuit $C_1$. If $C_1 = E_H$, we are done. Otherwise, we pick the lexicographically smallest edge in $H \mysetminus C_1$ and repeat this procedure. We continue generating circuits until $E_H = C_1 \cup \dots \cup C_s$. Note that all circuits have even length and alternate between red and blue edges, and that they are pairwise edge-disjoint. We form a path 
	\[
	G = Z_0,Z_1,\dots,Z_M = G'
	\]
	from $G$ to $G'$ in the state space graph of the JS chain, by processing the circuits $C_i$ in turn according to the total order $\prec_{\mathcal{C}}$.
	The \emph{processing of a circuit $C$} is the procedure during which all blue edges on $C$ are deleted, and all red edges of $C$ are added to the current graphical realization, using the three types of transitions in the JS chain mentioned at the beginning of this section. All other edges of the current graphical realization remain unchanged. In general, this can be done similarly to the circuit processing procedure in \cite{Jerrum1989}.\footnote{This is the main difference between the switch chain analyses \cite{Cooper2007,Greenhill2017journal,Erdos2013,Erdos2015,Erdos2015decomposition,ErdosMMS2018} and our analysis. The processing of a circuit is much more complicated if performed directly in the switch chain.} 
	
	\paragraph{Circuit Processing \cite{Jerrum1989}.} Let $C = vx_1x_2\dots x_qv$ be a circuit with start node $v$. We may assume, without loss of generality, that $vx_1$ is the lexicographically smallest blue edge adjacent to the starting node $v$. We first perform a type 0 transition in which we remove the blue edge $vx_1$. Then we perform a sequence of $\frac{q-1}{2}$ type 1 transitions in which we add the red edge $x_ix_{i+1}$ and remove the blue edge $x_{i-1}x_{i}$ for $i = 1,3,\dots,q$. Finally we perform a type 2 transition in which we add the red edge $vx_q$. In particular, this means that the elements on the canonical path right before and after the processing of a circuit belong to $\mathcal{G}(d)$. It is easy to see that all the intermediate elements that we visit during the processing of the circuit $C$ belong to $\mathcal{G}'(d) \mysetminus \mathcal{G}(d)$, i.e., every element has either precisely two nodes with degree deficit one, or one node with degree deficit two. This is illustrated in Figures \ref{fig:step1_2}, \ref{fig:step3_4} and \ref{fig:step5_6} for the circuit in Figure \ref{fig:circuit}.

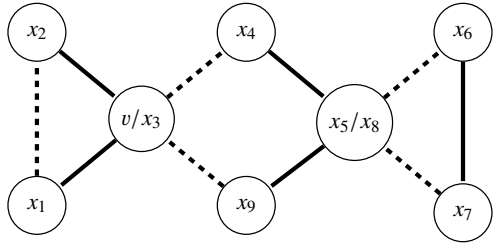
\begin{figure}[ht!]
	\centering
	\scalebox{0.8}{
		\begin{tikzpicture}[
		->,
		>=stealth',
		shorten >=1pt,
		auto,
		%node distance=2cm,
		semithick,
		every state/.style={circle,radius=0.1pt,text=black},
		]
		\begin{scope}
		\node[state]  (v)               					 		{$v/x_3$};
		\node[state]  (x1) [below left=0.7 and 1cm of v] 	 		{$x_1$};
		\node[state]  (x2) [above left=0.7 and 1cm of v]  			 		{$x_2$};
		\node[state]  (x4) [above right=0.7 and 1cm of v]					{$x_4$};
		\node[state]  (x5) [below right=0.7 and 1cm of x4] 			{$x_5/x_8$};
		\node[state]  (x6) [above right=0.7 and 1cm of x5] 			{$x_6$};
		\node[state]  (x7) [below right=0.7 and 1cm of x5] 	{$x_7$};
		\node[state]  (x9) [below right=0.7 and 1cm of v]				{$x_9$};       
		
		\path[every node/.style={sloped,anchor=south,auto=false}]
		%(v) edge[line width=1.5pt, bend left=25] 	node {} (x1)        
		(v) edge[-, line width=2pt] 	node {} (x1)            
		(x1) edge[-,dashed,line width=2pt] 	node {} (x2) 
		(x2) edge[-,line width=2pt] 	node {} (v) 
		(v) edge[-,dashed,line width=2pt] 	node {} (x4) 
		(x4) edge[-,line width=2pt] 	node {} (x5) 
		(x5) edge[-,dashed,line width=2pt] 	node {} (x6) 
		(x6) edge[-,line width=2pt] 	node {} (x7) 
		(x7) edge[-,dashed,line width=2pt] 	node {} (x5) 
		(x5) edge[-,line width=2pt] 	node {} (x9) 
		(x9) edge[-,dashed,line width=2pt] 	node {} (v);
		
		\end{scope}
		\end{tikzpicture}}
	\caption{The circuit $C = vx_1x_2x_3x_4x_5x_6x_7x_8x_9v$ with $v = x_3$ and $x_5 = x_8$. The blue edges are represented by the solid edges, and the red edges by the dashed edges.} 
	\label{fig:circuit}
\end{figure}

	\begin{figure}[ht!]
		\centering
		\scalebox{0.8}{
			\begin{tikzpicture}[
			->,
			>=stealth',
			shorten >=1pt,
			auto,
			%node distance=2cm,
			semithick,
			every state/.style={circle,radius=0.1pt,text=black},
			]
			\begin{scope}
			\node[state]  (v)               					 		{$v/x_3$};
			\node[state]  (x1) [below left=0.7 and 1cm of v] 	 		{$x_1$};
			\node[state]  (x2) [above left=0.7 and 1cm of v]  			 		{$x_2$};
			\node[state]  (x4) [above right=0.7 and 1cm of v]					{$x_4$};
			\node[state]  (x5) [below right=0.7 and 1cm of x4] 			{$x_5/x_8$};
			\node[state]  (x6) [above right=0.7 and 1cm of x5] 			{$x_6$};
			\node[state]  (x7) [below right=0.7 and 1cm of x5] 	{$x_7$};
			\node[state]  (x9) [below right=0.7 and 1cm of v]				{$x_9$};       
			
			\node (1) [right=0.1cm of x1] {$-1$};  
			\node (3) [right=0.1cm of v] {$-1$};  
			
			\path[every node/.style={sloped,anchor=south,auto=false}]
			%(v) edge[line width=1.5pt, bend left=25] 	node {} (x1)        
			%(v) edge[-, line width=2pt] 	node {} (x1)            
			(x1) edge[-,dashed,line width=3pt] 	node {} (x2) 
			(x2) edge[-,line width=3pt] 	node {} (v) 
			(v) edge[-,dashed,line width=3pt] 	node {} (x4) 
			(x4) edge[-,line width=3pt] 	node {} (x5) 
			(x5) edge[-,dashed,line width=3pt] 	node {} (x6) 
			(x6) edge[-,line width=3pt] 	node {} (x7) 
			(x7) edge[-,dashed,line width=3pt] 	node {} (x5) 
			(x5) edge[-,line width=3pt] 	node {} (x9) 
			(x9) edge[-,dashed,line width=3pt] 	node {} (v);
			
			\end{scope}
			\end{tikzpicture}}
		%\caption{Edge $vx_1$ is removed.} 
	%\label{fig:braess_5}
\ \ \ \ \ \ \ \ \ \quad \ \ \ \ \ \ \ \ \ 
\scalebox{0.8}{
\begin{tikzpicture}[
->,
>=stealth',
shorten >=1pt,
auto,
%node distance=2cm,
semithick,
every state/.style={circle,radius=0.1pt,text=black},
]
\begin{scope}
\node[state]  (v)               					 		{$v/x_3$};
\node[state]  (x1) [below left=0.7 and 1cm of v] 	 		{$x_1$};
\node[state]  (x2) [above left=0.7 and 1cm of v]  			 		{$x_2$};
\node[state]  (x4) [above right=0.7 and 1cm of v]					{$x_4$};
\node[state]  (x5) [below right=0.7 and 1cm of x4] 			{$x_5/x_8$};
\node[state]  (x6) [above right=0.7 and 1cm of x5] 			{$x_6$};
\node[state]  (x7) [below right=0.7 and 1cm of x5] 	{$x_7$};
\node[state]  (x9) [below right=0.7 and 1cm of v]				{$x_9$};       

\node (3) [right=0.1cm of v] {$-2$};  

\path[every node/.style={sloped,anchor=south,auto=false}]
%(v) edge[line width=1.5pt, bend left=25] 	node {} (x1)        
%(v) edge[-, line width=2pt] 	node {} (x1)            
(x1) edge[-,line width=3pt] 	node {} (x2) 
%(x2) edge[-,dashed,line width=0.5pt] 	node {} (v) 
(v) edge[-,dashed,line width=3pt] 	node {} (x4) 
(x4) edge[-,line width=3pt] 	node {} (x5) 
(x5) edge[-,dashed,line width=3pt] 	node {} (x6) 
(x6) edge[-,line width=3pt] 	node {} (x7) 
(x7) edge[-,dashed,line width=3pt] 	node {} (x5) 
(x5) edge[-,line width=3pt] 	node {} (x9) 
(x9) edge[-,dashed,line width=3pt] 	node {} (v);

\end{scope}
\end{tikzpicture}}
		\caption{The edge $vx_1$ is removed using a Type 0 transition (left). The edge $x_1x_2$ is added and $x_2x_3 = x_2v$ is removed using a Type 1 transition (right). We have also indicated the non-zero degree deficits.} 
		\label{fig:step1_2}
	\end{figure}
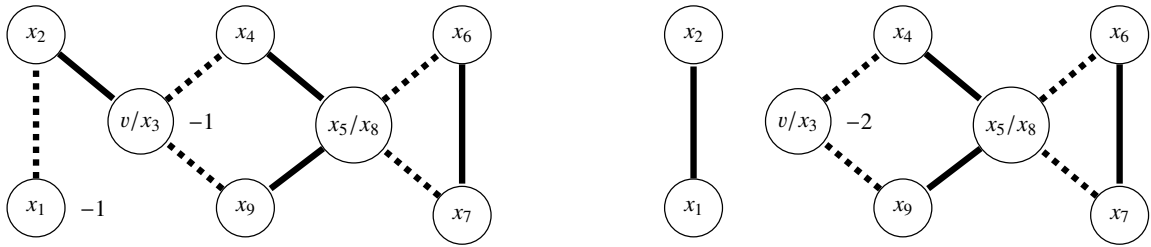
	
	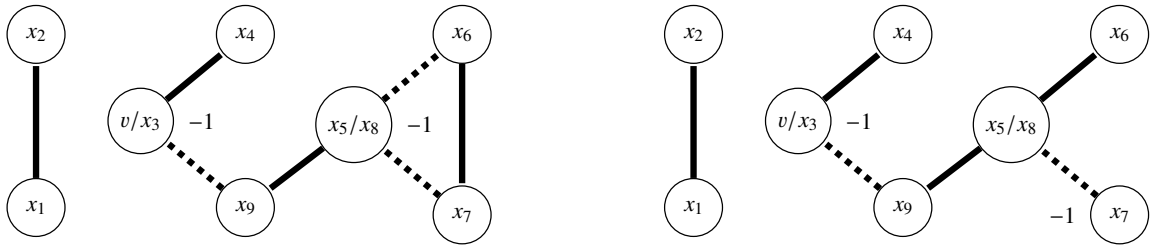
\begin{figure}[ht!]
		\centering
		\scalebox{0.8}{
			\begin{tikzpicture}[
			->,
			>=stealth',
			shorten >=1pt,
			auto,
			%node distance=2cm,
			semithick,
			every state/.style={circle,radius=0.1pt,text=black},
			]
			\begin{scope}
			\node[state]  (v)               					 		{$v/x_3$};
			\node[state]  (x1) [below left=0.7 and 1cm of v] 	 		{$x_1$};
			\node[state]  (x2) [above left=0.7 and 1cm of v]  			 		{$x_2$};
			\node[state]  (x4) [above right=0.7 and 1cm of v]					{$x_4$};
			\node[state]  (x5) [below right=0.7 and 1cm of x4] 			{$x_5/x_8$};
			\node[state]  (x6) [above right=0.7 and 1cm of x5] 			{$x_6$};
			\node[state]  (x7) [below right=0.7 and 1cm of x5] 	{$x_7$};
			\node[state]  (x9) [below right=0.7 and 1cm of v]				{$x_9$};       
			
			\node (3) [right=0.1cm of v] {$-1$};  
			\node (5) [right=0.1cm of x5] {$-1$};  
			
			\path[every node/.style={sloped,anchor=south,auto=false}]
			%(v) edge[line width=1.5pt, bend left=25] 	node {} (x1)        
			%(v) edge[-, line width=2pt] 	node {} (x1)            
			(x1) edge[-,line width=3pt] 	node {} (x2) 
			%(x2) edge[-,dashed,line width=0.5pt] 	node {} (v) 
			(v) edge[-,line width=3pt] 	node {} (x4) 
			%(x4) edge[-,line width=3pt] 	node {} (x5) 
			(x5) edge[-,dashed,line width=3pt] 	node {} (x6) 
			(x6) edge[-,line width=3pt] 	node {} (x7) 
			(x7) edge[-,dashed,line width=3pt] 	node {} (x5) 
			(x5) edge[-,line width=3pt] 	node {} (x9) 
			(x9) edge[-,dashed,line width=3pt] 	node {} (v);

			\end{scope}
			\end{tikzpicture}}
		\ \ \ \ \ \ \ \ \ \quad \ \ \ \ \ \ \ \ \ 
		\scalebox{0.8}{
			\begin{tikzpicture}[
			->,
			>=stealth',
			shorten >=1pt,
			auto,
			%node distance=2cm,
			semithick,
			every state/.style={circle,radius=0.1pt,text=black},
			]
			\begin{scope}
			\node[state]  (v)               					 		{$v/x_3$};
			\node[state]  (x1) [below left=0.7 and 1cm of v] 	 		{$x_1$};
			\node[state]  (x2) [above left=0.7 and 1cm of v]  			 		{$x_2$};
			\node[state]  (x4) [above right=0.7 and 1cm of v]					{$x_4$};
			\node[state]  (x5) [below right=0.7 and 1cm of x4] 			{$x_5/x_8$};
			\node[state]  (x6) [above right=0.7 and 1cm of x5] 			{$x_6$};
			\node[state]  (x7) [below right=0.7 and 1cm of x5] 	{$x_7$};
			\node[state]  (x9) [below right=0.7 and 1cm of v]				{$x_9$};       
			
			\node (3) [right=0.1cm of v] {$-1$};  
			\node (7) [left=0.1cm of x7] {$-1$};  
			
			\path[every node/.style={sloped,anchor=south,auto=false}]
			%(v) edge[line width=1.5pt, bend left=25] 	node {} (x1)        
			%(v) edge[-, line width=2pt] 	node {} (x1)            
			(x1) edge[-,line width=3pt] 	node {} (x2) 
			%(x2) edge[-,dashed,line width=0.5pt] 	node {} (v) 
			(v) edge[-,line width=3pt] 	node {} (x4) 
			%(x4) edge[-,line width=3pt] 	node {} (x5) 
			(x5) edge[-,line width=3pt] 	node {} (x6) 
			%(x6) edge[-,line width=3pt] 	node {} (x7) 
			(x7) edge[-,dashed,line width=3pt] 	node {} (x5) 
			(x5) edge[-,line width=3pt] 	node {} (x9) 
			(x9) edge[-,dashed,line width=3pt] 	node {} (v);

			\end{scope}
			\end{tikzpicture}}
		\caption{The edge $x_3x_4$ is added and $x_4x_5$ is removed using a Type 1 transition (left). The edge $x_5x_6$ is added and $x_6x_7$ is removed using a Type 1 transition (right).}
		\label{fig:step3_4}
	\end{figure}

	\begin{figure}[ht!]
		\centering
		\scalebox{0.8}{
			\begin{tikzpicture}[
			->,
			>=stealth',
			shorten >=1pt,
			auto,
			%node distance=2cm,
			semithick,
			every state/.style={circle,radius=0.1pt,text=black},
			]
			\begin{scope}
			\node[state]  (v)               					 		{$v/x_3$};
			\node[state]  (x1) [below left=0.7 and 1cm of v] 	 		{$x_1$};
			\node[state]  (x2) [above left=0.7 and 1cm of v]  			 		{$x_2$};
			\node[state]  (x4) [above right=0.7 and 1cm of v]					{$x_4$};
			\node[state]  (x5) [below right=0.7 and 1cm of x4] 			{$x_5/x_8$};
			\node[state]  (x6) [above right=0.7 and 1cm of x5] 			{$x_6$};
			\node[state]  (x7) [below right=0.7 and 1cm of x5] 	{$x_7$};
			\node[state]  (x9) [below right=0.7 and 1cm of v]				{$x_9$};       
			
			\node (3) [right=0.1cm of v] {$-1$};  
			\node (9) [right=0.1cm of x9] {$-1$};  
			
			\path[every node/.style={sloped,anchor=south,auto=false}]
			%(v) edge[line width=1.5pt, bend left=25] 	node {} (x1)        
			%(v) edge[-, line width=2pt] 	node {} (x1)            
			(x1) edge[-,line width=3pt] 	node {} (x2) 
			%(x2) edge[-,dashed,line width=0.5pt] 	node {} (v) 
			(v) edge[-,line width=3pt] 	node {} (x4) 
			%(x4) edge[-,line width=3pt] 	node {} (x5) 
			(x5) edge[-,line width=3pt] 	node {} (x6) 
			%(x6) edge[-,line width=3pt] 	node {} (x7) 
			(x7) edge[-,line width=3pt] 	node {} (x5) 
			%(x5) edge[-,line width=3pt] 	node {} (x9) 
			(x9) edge[-,dashed,line width=3pt] 	node {} (v);

			\end{scope}
			\end{tikzpicture}}
		\ \ \ \ \ \ \ \ \ \quad \ \ \ \ \ \ \ \ \ 
		\scalebox{0.8}{
			\begin{tikzpicture}[
			->,
			>=stealth',
			shorten >=1pt,
			auto,
			%node distance=2cm,
			semithick,
			every state/.style={circle,radius=0.1pt,text=black},
			]
			\begin{scope}
			\node[state]  (v)               					 		{$v/x_3$};
			\node[state]  (x1) [below left=0.7 and 1cm of v] 	 		{$x_1$};
			\node[state]  (x2) [above left=0.7 and 1cm of v]  			 		{$x_2$};
			\node[state]  (x4) [above right=0.7 and 1cm of v]					{$x_4$};
			\node[state]  (x5) [below right=0.7 and 1cm of x4] 			{$x_5/x_8$};
			\node[state]  (x6) [above right=0.7 and 1cm of x5] 			{$x_6$};
			\node[state]  (x7) [below right=0.7 and 1cm of x5] 	{$x_7$};
			\node[state]  (x9) [below right=0.7 and 1cm of v]				{$x_9$};       
			
			%\node (3) [below=0.1cm of v] {$-1$};  
			%\node (9) [below=0.1cm of x9] {$-1$};  
			
			\path[every node/.style={sloped,anchor=south,auto=false}]
			%(v) edge[line width=1.5pt, bend left=25] 	node {} (x1)        
			%(v) edge[-, line width=2pt] 	node {} (x1)            
			(x1) edge[-,line width=3pt] 	node {} (x2) 
			%(x2) edge[-,dashed,line width=0.5pt] 	node {} (v) 
			(v) edge[-,line width=3pt] 	node {} (x4) 
			%(x4) edge[-,line width=3pt] 	node {} (x5) 
			(x5) edge[-,line width=3pt] 	node {} (x6) 
			%(x6) edge[-,line width=3pt] 	node {} (x7) 
			(x7) edge[-,line width=3pt] 	node {} (x5) 
			%(x5) edge[-,line width=3pt] 	node {} (x9) 
			(x9) edge[-,line width=3pt] 	node {} (v);

			\end{scope}
			\end{tikzpicture}}
		\caption{The edge $x_7x_8 = x_5x_8$ is added and $x_5x_9 = x_8x_9$ is removed using a Type 1 transition (left). The edge $vx_9$ is added using a Type 2 transition (right).}
		\label{fig:step5_6}
	\end{figure}
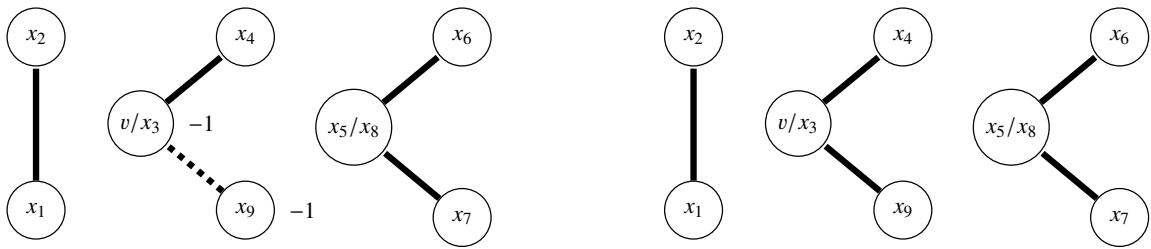\bigskip
	
	For the next part, we define the notion of an encoding that can be used to bound the congestion of an edge in the state space graph of the JS chain using an injective mapping argument.

	%\newpage
	\paragraph{Encoding \cite{Jerrum1989}.} Let $t = (Z,Z')$ be a given transition of the Markov chain.
	Suppose two graphs $G$ and $G'$ use the transition $t$ over some canonical path for some pairing $\psi \in \Psi(G,G')$. Let $H = G \triangle G'$. We define the encoding
	\[
	L_t(G,G') = \left\{ \begin{array}{ll}
	(H \triangle (Z \cup Z')) - e_{H,t} & \text{ if $t$ is a Type 1 transition, } \\
	H \triangle (Z \cup Z') & \text{ otherwise, }
	\end{array}\right.
	\]
	where $e_{H,t}$ is the first blue edge on the circuit that is currently being processed on the canonical path from $G$ to $G'$ (for the given pairing $\psi$).
	This encoding is of a similar nature as the encoding used in \cite{Jerrum1989}. An example is given in Figures \ref{fig:encoding1}, \ref{fig:encoding2} and \ref{fig:encoding3}. We also refer the reader to Figure 1 in \cite{Jerrum1989} for a detailed example.\footnote{Although the perfect matching setting might seem different at first glance, it is actually closely related to our setting, with the only difference that the symmetric difference of two perfect matchings is the union of \emph{node-disjoint cycles}, whereas in our setting the symmetric difference of two graphical realizations is the union of \emph{edge-disjoint circuits}. This is roughly why the notion of pairings is needed, as they allow us to uniquely determine the circuits. That is, the edge-disjoint circuits determined by the pairing are the analogue of the node-disjoint cycles in the perfect matching setting in \cite{Jerrum1989}.} The following lemma is crucial for the analysis. \bigskip
	
	\begin{lemma}\label{lem:recovery}
		Given $t = (Z,Z')$, $L$, and $\psi$, we can uniquely recover $G$ and $G'$. That is, if $L$ is such that  $L_t = L_t(G,G')$ for some pair $(G,G')$, then $(G,G')$ is the unique pair for which this is the case, given $t$, $L$, $\psi$.
	\end{lemma}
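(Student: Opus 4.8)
The plan is to show that, starting from $(t,L,\psi)$, each ingredient needed to rebuild $G$ and $G'$ is forced, so that no other pair $(G,G')$ can produce the same triple. First I would read off the \emph{type} of the transition $t=(Z,Z')$: it is Type 0 if $Z\in\mathcal{G}(d)$ and $Z'\notin\mathcal{G}(d)$, Type 2 if the reverse holds, and Type 1 if $Z,Z'\notin\mathcal{G}(d)$. Since we know $Z$ and $Z'$ individually (they are the tail and head of the arc $t$), this classification is immediate.

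The first substantive step is to recover $H=G\triangle G'$. Because $\triangle$ is an involution, in the Type 0 and Type 2 cases the definition $L=H\triangle(Z\cup Z')$ gives $H=L\triangle(Z\cup Z')$ outright. In the Type 1 case $L=(H\triangle(Z\cup Z'))-e_{H,t}$, where $e_{H,t}=vx_1$ is the first blue edge of the circuit currently being processed; this edge was already deleted by the opening Type 0 step of that circuit's processing, so $e_{H,t}\notin Z\cup Z'$, while $e_{H,t}\in H$ since it is blue. A one-line set computation then gives $L\triangle(Z\cup Z')=H\setminus\{e_{H,t}\}=:H^{-}$. Now $H$ is an edge-disjoint union of circuits of even length, so every vertex has even degree in $H$; hence $H^{-}$ has exactly two odd-degree vertices, which are precisely the endpoints $v$ and $x_1$ of $e_{H,t}$, and since $K_V$ is simple there is no edge between them in $H^{-}$. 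Therefore $e_{H,t}$ is recovered as the unique edge joining the two odd-degree vertices of $H^{-}=L\triangle(Z\cup Z')$, and $H=H^{-}\cup\{e_{H,t}\}$.

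Next, from $H$, $\psi$, and the fixed orders $\prec_E,\prec_{\mathcal C}$ I would rebuild the canonical circuit decomposition $H=C_1\cup\dots\cup C_s$, the processing order, and for each $C_i$ its start vertex and its two alternating edge-classes; all of this depends only on following the pairing, not on the still-unknown red/blue coloring. It remains to decide, for each $i$, which alternating class of $C_i$ is blue (the edges of $G\setminus G'$ on $C_i$). For this I use that processing circuit $C_j$ alters only the edges of $C_j$, so in the state $Z$ every circuit processed earlier is entirely in its $G'$-state (red edges present, blue absent) and every circuit processed later is entirely in its $G$-state; hence $Z\cap C_i$ is the red class if $C_i$ has already been processed and the blue class if it has not. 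The active circuit is identified as the one containing the single edge $Z\setminus Z'$ (Type 0), or the recovered edge $e_{H,t}$ (Type 1), or the single edge $Z'\setminus Z$ (Type 2), and its coloring is pinned down because in the first two cases that edge is a blue edge of the active circuit while in the last case the added edge is red. With the coloring of $H$ determined, and since $G$ and $G'$ agree with $Z$ (and with each other) on all edges outside $H$, we obtain $G=(Z\setminus H)\cup(\text{blue edges of }H)$ and $G'=(Z\setminus H)\cup(\text{red edges of }H)$. Every deduction above was forced, so the pair $(G,G')$ is unique.

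I expect the Type 1 case to be the main obstacle. There one must verify carefully that $e_{H,t}\notin Z\cup Z'$ (so that the ``$-e_{H,t}$'' in the definition of $L$ acts as a genuine deletion, making the parity argument valid), and one must check that the bookkeeping with circuits that revisit a vertex --- where degree deficits can combine, e.g.\ a single node acquiring deficit $2$, as in the figures --- does not interfere with identifying the active circuit or its first blue edge. A secondary point to check is that ``$C_i$ is entirely in its $G$- or $G'$-state in $Z$'' holds literally, given the order in which circuits are processed, including the boundary cases where $t$ is the first or the last transition of the whole canonical path.
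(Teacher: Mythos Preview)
Your proposal is correct and follows essentially the same route as the paper's proof: recover $H$ from $L$ and $t$ (using the odd-degree parity trick to locate $e_{H,t}$ in the Type~1 case), rebuild the circuit decomposition from $H$ and $\psi$, identify the active circuit from $t$, and then use the processing order $\prec_{\mathcal C}$ together with $Z$ to pin down the red/blue coloring on every circuit. Your write-up is in fact more explicit than the paper's in several places (the type classification, the verification that $e_{H,t}\notin Z\cup Z'$, and the set computation $L\triangle(Z\cup Z')=H\setminus\{e_{H,t}\}$), but the underlying argument is the same.
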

	\begin{proof}
		We give the proof for when $t$ is a Type 1 transition. The cases of the two other types are similar, and arguably somewhat easier. The proof uses the arguments in \cite{Jerrum1989} interpreted in our setting. First note that
		$
		L \triangle (Z \cup Z')
		$
		is a graph in which there are precisely two nodes with odd degree. In particular, the edge $e_{H,t}$ is the unique edge (having as endpoints these odd degree nodes) that has to be added to $L \triangle (Z \cup Z')$ to obtain $H = G \triangle G'$. That is, we have $(L \triangle(Z \cup Z')) + e_{H,t} = H$. The pairing $\psi$ then yields a unique circuit decomposition of $E(H)$ as explained at the beginning of the proof. From the transition $t$ it can be inferred which circuit is currently being processed, and, moreover, we can infer which edges of that circuit belong to $G$ and which to $G'$. Furthermore, the global ordering $\prec_{\mathcal{C}}$ on all circuits can then be used to determine for every other circuit whether it has been processed already or not. For every such circuit, we can then infer which edges on it belong to $G$ and which to $G'$ by comparing with $Z$ (or $Z'$). Therefore, $G$ and $G'$ can be uniquely recovered from $t$, $L$ and $\psi$. 
	\end{proof}
	
	%\newpage
	\begin{figure}[ht!]
		\centering
		\scalebox{0.8}{
			\begin{tikzpicture}[
			->,
			>=stealth',
			shorten >=1pt,
			auto,
			%node distance=2cm,
			semithick,
			every state/.style={circle,radius=0.1pt,text=black},
			]
			\begin{scope}
			\node[state]  (a1)               					 		{$a_1$};
			\node[state]  (a2) [above=2cm of a1]            			{$a_2$};
			\node[state]  (a3) [right=2cm of a2]                		{$a_3$};
			\node[state]  (a4) [below=2cm of a3]               	    {$a_4$};
			
			\node[state]  (x1) [right=2cm of a4]  	 				{$x_1$};
			\node[state]  (v)  [above right=0.75 and 1cm of x1]          {$v/x_3$};
			\node[state]  (x2) [above left=0.75 and 1cm of v]  			{$x_2$};
			\node[state]  (x4) [above right=0.75 and 1cm of v]			{$x_4$};
			\node[state]  (x5) [below right=0.75 and 1cm of x4] 			{$x_5/x_8$};
			\node[state]  (x6) [above right=0.75 and 1cm of x5] 			{$x_6$};
			\node[state]  (x7) [below right=0.75 and 1cm of x5] 			{$x_7$};
			\node[state]  (x9) [below right=0.75 and 1cm of v]			{$x_9$};       
			
			\node[state]  (b1) [right=2cm of x7]          			{$b_1$};
			\node[state]  (b2) [above=2cm of b1]            			{$b_2$};
			\node[state]  (b3) [right=2cm of b2]                		{$b_3$};
			\node[state]  (b4) [below=2cm of b3]               	    {$b_4$};
			\path[every node/.style={sloped,anchor=south,auto=false}]
			%(v) edge[line width=1.5pt, bend left=25] 	node {} (x1)        
			(v) edge[-, line width=2pt] 	node {} (x1)            
			(x1) edge[-,dashed,line width=2pt] 	node {} (x2) 
			(x2) edge[-,line width=2pt] 	node {} (v) 
			(v) edge[-,dashed,line width=2pt] 	node {} (x4) 
			(x4) edge[-,line width=2pt] 	node {} (x5) 
			(x5) edge[-,dashed,line width=2pt] 	node {} (x6) 
			(x6) edge[-,line width=2pt] 	node {} (x7) 
			(x7) edge[-,dashed,line width=2pt] 	node {} (x5) 
			(x5) edge[-,line width=2pt] 	node {} (x9) 
			(x9) edge[-,dashed,line width=2pt] 	node {} (v)
			%Path for the a and b cycles
			(a1) edge[-, line width=2pt] 	node {} (a2)   
			(a2) edge[-,dashed, line width=2pt] 	node {} (a3)   
			(a3) edge[-, line width=2pt] 	node {} (a4)   
			(a4) edge[-,dashed, line width=2pt] 	node {} (a1)   
			(b1) edge[-, line width=2pt] 	node {} (b2)   
			(b2) edge[-,dashed, line width=2pt] 	node {} (b3)   
			(b3) edge[-, line width=2pt] 	node {} (b4)   
			(b4) edge[-,dashed, line width=2pt] 	node {} (b1);
			
			\end{scope}
			\end{tikzpicture}}
		\caption{Symmetric difference $H = G \triangle G'$ where the solid edges represent the edges $G$ and the dashed edges the edges of $G'$. From left to right the circuit are numbered $C_1, C_2$ and $C_3$, and assume that this is also the order in which they are processed.} 
		\label{fig:encoding1}
	\end{figure}
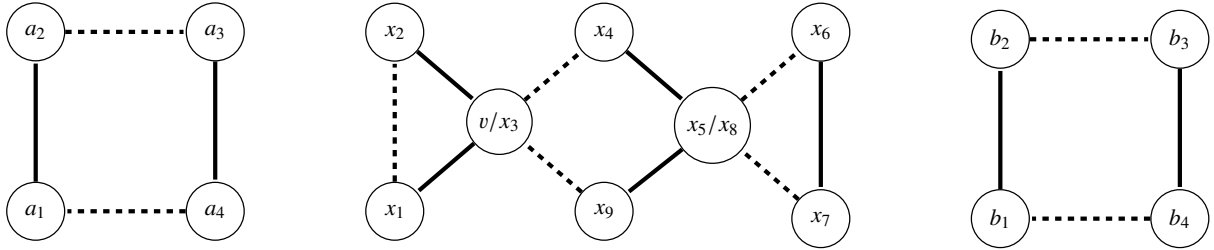 
	
	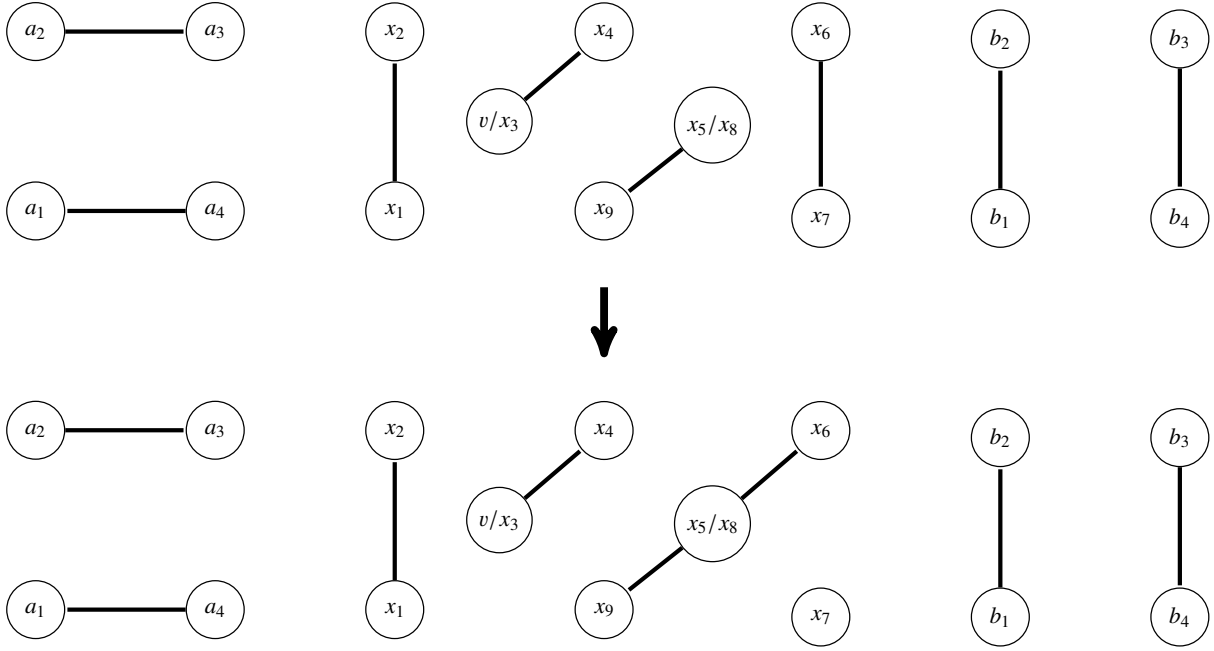
\begin{figure}[ht!]
		\centering
		\scalebox{0.8}{
			\begin{tikzpicture}[
			->,
			>=stealth',
			shorten >=1pt,
			auto,
			%node distance=2cm,
			semithick,
			every state/.style={circle,radius=0.1pt,text=black},
			]
			\begin{scope}
			\node[state]  (a1)               					 		{$a_1$};
			\node[state]  (a2) [above=2cm of a1]            			{$a_2$};
			\node[state]  (a3) [right=2cm of a2]                		{$a_3$};
			\node[state]  (a4) [below=2cm of a3]               	    {$a_4$};
			
			\node[state]  (x1) [right=2cm of a4]  	 				{$x_1$};
			\node[state]  (v)  [above right=0.75 and 1cm of x1]          {$v/x_3$};
			\node[state]  (x2) [above left=0.75 and 1cm of v]  			{$x_2$};
			\node[state]  (x4) [above right=0.75 and 1cm of v]			{$x_4$};
			\node[state]  (x5) [below right=0.75 and 1cm of x4] 			{$x_5/x_8$};
			\node[state]  (x6) [above right=0.75 and 1cm of x5] 			{$x_6$};
			\node[state]  (x7) [below right=0.75 and 1cm of x5] 			{$x_7$};
			\node[state]  (x9) [below right=0.75 and 1cm of v]			{$x_9$};       
			
			\node[state]  (b1) [right=2cm of x7]          			{$b_1$};
			\node[state]  (b2) [above=2cm of b1]            			{$b_2$};
			\node[state]  (b3) [right=2cm of b2]                		{$b_3$};
			\node[state]  (b4) [below=2cm of b3]               	    {$b_4$};

			\node (arrow1) [below=0.5cm of x9]               	    {};
			\node (arrow2) [below=1.2cm of arrow1]               	  {};
			\node (arrow3) [below=0.1cm of arrow2]               	  {};
			
			%\node (Z) [above left=0.5 and 0.5cm of a1]               	  {$Z:$};
			
			\path[every node/.style={sloped,anchor=south,auto=false}]
			%(v) edge[line width=1.5pt, bend left=25] 	node {} (x1)        
			%(v) edge[-, line width=2pt] 	node {} (x1)            
			(x1) edge[-,line width=2pt] 	node {} (x2) 
			%(x2) edge[-,line width=2pt] 	node {} (v) 
			(v) edge[-,line width=2pt] 	node {} (x4) 
			%(x4) edge[-,line width=2pt] 	node {} (x5) 
			%(x5) edge[-,dashed,line width=2pt] 	node {} (x6) 
			(x6) edge[-,line width=2pt] 	node {} (x7) 
			%(x7) edge[-,dashed,line width=2pt] 	node {} (x5) 
			(x5) edge[-,line width=2pt] 	node {} (x9) 
			%(x9) edge[-,dashed,line width=2pt] 	node {} (v)
			%Path for the a and b cycles
			%(a1) edge[-, line width=2pt] 	node {} (a2)   
			(a2) edge[-, line width=2pt] 	node {} (a3)   
			%(a3) edge[-, line width=2pt] 	node {} (a4)   
			(a4) edge[-, line width=2pt] 	node {} (a1)   
			(b1) edge[-, line width=2pt] 	node {} (b2)   
			%(b2) edge[-,dashed, line width=2pt] 	node {} (b3)   
			(b3) edge[-, line width=2pt] 	node {} (b4)   
			%(b4) edge[-,dashed, line width=2pt] 	node {} (b1);
			(arrow1) edge[line width=3.7pt] 	node {} (arrow2);     
			\end{scope}
			\end{tikzpicture}}
		\quad
		\scalebox{0.8}{
			\begin{tikzpicture}[
			->,
			>=stealth',
			shorten >=1pt,
			auto,
			%node distance=2cm,
			semithick,
			every state/.style={circle,radius=0.1pt,text=black},
			]
			\begin{scope}
			\node[state]  (a1)               					 		{$a_1$};
			\node[state]  (a2) [above=2cm of a1]            			{$a_2$};
			\node[state]  (a3) [right=2cm of a2]                		{$a_3$};
			\node[state]  (a4) [below=2cm of a3]               	    {$a_4$};
			
			\node[state]  (x1) [right=2cm of a4]  	 				{$x_1$};
			\node[state]  (v)  [above right=0.75 and 1cm of x1]          {$v/x_3$};
			\node[state]  (x2) [above left=0.75 and 1cm of v]  			{$x_2$};
			\node[state]  (x4) [above right=0.75 and 1cm of v]			{$x_4$};
			\node[state]  (x5) [below right=0.75 and 1cm of x4] 			{$x_5/x_8$};
			\node[state]  (x6) [above right=0.75 and 1cm of x5] 			{$x_6$};
			\node[state]  (x7) [below right=0.75 and 1cm of x5] 			{$x_7$};
			\node[state]  (x9) [below right=0.75 and 1cm of v]			{$x_9$};       
			
			\node[state]  (b1) [right=2cm of x7]          			{$b_1$};
			\node[state]  (b2) [above=2cm of b1]            			{$b_2$};
			\node[state]  (b3) [right=2cm of b2]                		{$b_3$};
			\node[state]  (b4) [below=2cm of b3]               	    {$b_4$};
			\path[every node/.style={sloped,anchor=south,auto=false}]
			%(v) edge[line width=1.5pt, bend left=25] 	node {} (x1)        
			%(v) edge[-, line width=2pt] 	node {} (x1)            
			(x1) edge[-,line width=2pt] 	node {} (x2) 
			%(x2) edge[-,line width=2pt] 	node {} (v) 
			(v) edge[-,line width=2pt] 	node {} (x4) 
			%(x4) edge[-,line width=2pt] 	node {} (x5) 
			(x5) edge[-,line width=2pt] 	node {} (x6) 
			%(x6) edge[-,line width=2pt] 	node {} (x7) 
			%(x7) edge[-,dashed,line width=2pt] 	node {} (x5) 
			(x5) edge[-,line width=2pt] 	node {} (x9) 
			%(x9) edge[-,dashed,line width=2pt] 	node {} (v)
			%Path for the a and b cycles
			%(a1) edge[-, line width=2pt] 	node {} (a2)   
			(a2) edge[-, line width=2pt] 	node {} (a3)   
			%(a3) edge[-, line width=2pt] 	node {} (a4)   
			(a4) edge[-, line width=2pt] 	node {} (a1)   
			(b1) edge[-, line width=2pt] 	node {} (b2)   
			%(b2) edge[-,dashed, line width=2pt] 	node {} (b3)   
			(b3) edge[-, line width=2pt] 	node {} (b4);   
			%(b4) edge[-,dashed, line width=2pt] 	node {} (b1);
			
			\end{scope}
			\end{tikzpicture}}
		\caption{The transition $t = (Z,Z')$ that removes the edge $x_6x_7$ and adds the edge $x_5x_6$ as part of the processing of $C_2$. Note that $C_1$ has already been processed. The edges in $(E(G) \cup E(G')) \setminus E(H)$ are left out.} 
		\label{fig:encoding2}
	\end{figure}

	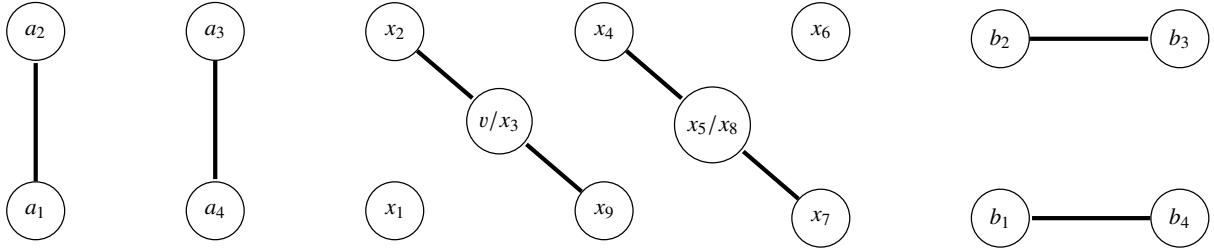
\begin{figure}[ht!]
		\centering
		\scalebox{0.8}{
			\begin{tikzpicture}[
			->,
			>=stealth',
			shorten >=1pt,
			auto,
			%node distance=2cm,
			semithick,
			every state/.style={circle,radius=0.1pt,text=black},
			]
			\begin{scope}
			\node[state]  (a1)               					 		{$a_1$};
			\node[state]  (a2) [above=2cm of a1]            			{$a_2$};
			\node[state]  (a3) [right=2cm of a2]                		{$a_3$};
			\node[state]  (a4) [below=2cm of a3]               	    {$a_4$};
			
			\node[state]  (x1) [right=2cm of a4]  	 				{$x_1$};
			\node[state]  (v)  [above right=0.75 and 1cm of x1]          {$v/x_3$};
			\node[state]  (x2) [above left=0.75 and 1cm of v]  			{$x_2$};
			\node[state]  (x4) [above right=0.75 and 1cm of v]			{$x_4$};
			\node[state]  (x5) [below right=0.75 and 1cm of x4] 			{$x_5/x_8$};
			\node[state]  (x6) [above right=0.75 and 1cm of x5] 			{$x_6$};
			\node[state]  (x7) [below right=0.75 and 1cm of x5] 			{$x_7$};
			\node[state]  (x9) [below right=0.75 and 1cm of v]			{$x_9$};       
			
			\node[state]  (b1) [right=2cm of x7]          			{$b_1$};
			\node[state]  (b2) [above=2cm of b1]            			{$b_2$};
			\node[state]  (b3) [right=2cm of b2]                		{$b_3$};
			\node[state]  (b4) [below=2cm of b3]               	    {$b_4$};
			\path[every node/.style={sloped,anchor=south,auto=false}]
			%(v) edge[line width=1.5pt, bend left=25] 	node {} (x1)        
			%(v) edge[-, line width=2pt] 	node {} (x1)            
			%(x1) edge[-,dashed,line width=2pt] 	node {} (x2) 
			(x2) edge[-,line width=2pt] 	node {} (v) 
			%(v) edge[-,dashed,line width=2pt] 	node {} (x4) 
			(x4) edge[-,line width=2pt] 	node {} (x5) 
			%(x5) edge[-,dashed,line width=2pt] 	node {} (x6) 
			%(x6) edge[-,line width=2pt] 	node {} (x7) 
			(x7) edge[-,line width=2pt] 	node {} (x5) 
			%(x5) edge[-,line width=2pt] 	node {} (x9) 
			(x9) edge[-,line width=2pt] 	node {} (v)
			%Path for the a and b cycles
			(a1) edge[-, line width=2pt] 	node {} (a2)   
			%(a2) edge[-,dashed, line width=2pt] 	node {} (a3)   
			(a3) edge[-, line width=2pt] 	node {} (a4)   
			%(a4) edge[-,dashed, line width=2pt] 	node {} (a1)   
			%(b1) edge[-, line width=2pt] 	node {} (b2)   
			(b2) edge[-, line width=2pt] 	node {} (b3)   
			%(b3) edge[-, line width=2pt] 	node {} (b4)   
			(b4) edge[-, line width=2pt] 	node {} (b1);
			
			\end{scope}
			\end{tikzpicture}}
		\caption{The encoding $L = L_t(G,G')$, where again the edges in $(E(G) \cup E(G')) \setminus E(H)$ are left out. Note that in this case  $e_{H,t}=vx_1$ and that $L$ is itself an element of $\mathcal{G}'(d)$.} 
		\label{fig:encoding3}
	\end{figure}

	%\newpage
	
	\paragraph{Injective Mapping Argument \cite{Cooper2007,Jerrum1989}.} We complete the proof by using an injective mapping argument 
	to bound the congestion of the flow $f'$ on the edges of the state space graph of the JS chain. The arguments used are a combination of ideas from \cite{Jerrum1989}  	
	and the proof of Lemma 2.5 in \cite{Cooper2007} (see also Lemma 1 in  \cite{Cooper2012corrigendum}).\footnote{Lemma 2.5 in \cite{Cooper2007} contained a flaw for which the corrigendum \cite{Cooper2012corrigendum} was published.}
	We again focus on Type 1 transitions $t$ as the proofs for the other two types are similar but simpler.

	For a tuple $(G,G',\psi)$, let $p_{\psi}(G,G')$ denote the canonical path from $G$ to $G'$ for pairing $\psi$.
	Let 
	\[
	\mathcal{L}_t =\{L_t(G,G') \,|\, (G,G',\psi) \in \mathcal{F}_t\} 
	\]
	be the set of all (distinct) encodings $L_t$, where 
	\[
	\mathcal{F}_t = \big\{(G,G',\psi) : t \in p_{\psi}(G,G')\big\}
	\] 
	is the set of all tuples $(G,G',\psi)$ such that the canonical path from $G$ to $G'$ under pairing $\psi$ uses the transition $t$. 
	A crucial observation is that every encoding $L_t(G,G')$ itself is an element of $\mathcal{G}'(d)$ (see Figure \ref{fig:encoding3} for an example). This implies that
	\begin{equation}\label{eq:injective1}
	|\mathcal{L}_t| \leq |\mathcal{G}'(d)|.
	\end{equation}

	Moreover, with $H = G \triangle G'$ and $L=L_t(G,G')$, the pairing $\psi$ has the property that it pairs up the edges of $E(H)\mysetminus E(L)$ and $E(H)\cap E(L)$ in such a way that for every node $v$ (with the exception of at most two nodes) each edge in $E(H)\mysetminus E(L)$ that is incident to $v$ is paired up with an edge in $E(H)\cap E(L)$ that is incident to $v$. However, there are either two nodes for which the incident edges in $E(H)\mysetminus E(L)$ exceed by 2 the incident edges in $E(H)\cap E(L)$, or one node for which the incident edges in $E(H)\mysetminus E(L)$ exceed by 4 the incident edges in $E(H)\cap E(L)$. These are exactly the two nodes with degree deficit 1 or the one node with degree deficit 2 in $L$; for the example in Figure \ref{fig:encoding3} these are nodes $x_1$ and $x_6$. There $\psi$ pairs up each edge of $E(H)\cap E(L)$ to an edge of $E(H)\mysetminus E(L)$ but also two edges of $E(H)\mysetminus E(L)$ with each other; or in the case of one node with degree deficit 2 $\psi$ pairs up each edge of $E(H)\cap E(L)$ to an edge of $E(H)\mysetminus E(L)$ but also makes two pairs out of the remaining 4 edges in $E(H)\mysetminus E(L)$. 
	Let $\Psi'(L)$ be the set of all pairings with this property.\footnote{Remember that we do not need to know $G$ and $G'$ in order to determine the set $H$. It can be found based on $L$ and the transition $t = (Z,Z')$, as described in the proof of Lemma \ref{lem:recovery}.} 
	Note that not every such pairing has to correspond to a tuple $(G,G',\psi)$ for which $t \in p_{\psi}(G,G')$. 
	
	By  simply counting, we can upper bound $|\Psi'(L)|$ in terms of $|\Psi(H)|$. We show the calculation for the case where $L$ has two nodes with degree deficit 1. The case of one node with degree deficit 2 is very similar and the same upper bound  works there as well. Suppose that $u, w$ are the two   nodes of $L$ with degree deficit 1. Then 
	\begin{equation}\label{eq:injective2}
	|\Psi'(L)| = \left( \Pi_{v \in V\mysetminus\{u, w\}}\theta_v! \right) \cdot \frac{(\theta_u+1)!}{2} \cdot \frac{(\theta_w+1)!}{2} = |\Psi(H)| \cdot \frac{(\theta_u+1)(\theta_w+1)}{4} \leq n^2 \cdot |\Psi(H)| \,.
	\end{equation}
	Putting everything together, we have
	\begin{eqnarray*}
	|\mathcal{G}'(d)|^2 f'(e)  & = & \sum_{(G,G')} \sum_{\psi \in \Psi(G,G')} \mathbf{1}(e \in p_{\psi}(H)) |\Psi(H)|^{-1}  \nonumber \\
	& \leq & \sum_{L \in \mathcal{L}_t} \sum_{\psi' \in \Psi'(L)}  |\Psi(H)|^{-1} \ \ \ \ \ (\text{using Lemma } \ref{lem:recovery}) \nonumber \\
	& \leq & n^2 \sum_{L \in \mathcal{L}_t} 1 \ \  \ \ \ \ \ \ \ \ \ \ \ \ \ \ \ \ (\text{using } (\ref{eq:injective2})) \nonumber \\
	& \leq & n^2\cdot|\mathcal{G}'(d)|  \ \  \ \ \ \ \ \ \ \ \ \ \ \ (\text{using } (\ref{eq:injective1}))
	\end{eqnarray*}
	The usage of Lemma \ref{lem:recovery} for the first inequality works as follows. Every tuple $(G,G',\psi)\in \mathcal{F}_t$ with encoding $L_t(G, G')$ generates a unique tuple in $\{L_t(G, G')\} \times \Psi'(L_t(G, G'))$. But since, by Lemma \ref{lem:recovery}, we can uniquely recover $G$ and $G'$ from $L$, $t$ and $\psi$, we have that $\sum_{L \in \mathcal{L}_t} |\{L\} \times \Psi'(L)| = \sum_{L \in \mathcal{L}_t} \sum_{\psi' \in \Psi'(L)} 1$ is an upper bound on the number of canonical paths that use $t$. 
	
	By rearranging \eqref{eq:injective2} we get the upper bound for  $f'$ required in Lemma \ref{lem:flow_simplification}. What is left to show is that $\ell(f')$ is not too large. This, however, is determined by the way we defined the canonical paths. It is easy to see that for any canonical path between any two graphs $G, G' \in \mathcal{G}(d)$ has length at most $\frac{3}{4} |E(G \triangle G')|$ and, therefore, $\ell(f')\le n^2$.
\end{proof}

\begin{remark}[Bipartite case]\label{rem:app_bip}
	The proof for the bipartite case is very similar. The only difference is that in the circuit processing procedure there will never be an auxiliary state where one node has degree deficit two. For this to occur there necessarily has to be a simple cycle of odd length in a circuit, see, e.g., Figure \ref{fig:step1_2}. This explains the adjusted definition of $\mathcal{G}'(r,c)$ used in the definition of (strong) stability for the bipartite case in Appendix \ref{sec:bipartite}. Moreover, the exact same encoding and injective mapping arguments can be used.
\end{remark}

\newpage

%%%%%%%%%%%%%%%%%%%%%%%%%%%%%%%%%%%%%%%%%%%%%%%%%%%%%%%%%%%%%%%%%%%%%%%%
%%%%%%%%%%%%%%%%%%%%%%%%%%%%%%%%%%%%%%%%%%%%%%%%%%%%%%%%%%%%%%%%%%%%%%%%
\section{Sampling Bipartite Graphs}\label{sec:bipartite}
%%%%%%%%%%%%%%%%%%%%%%%%%%%%%%%%%%%%%%%%%%%%%%%%%%%%%%%%%%%%%%%%%%%%%%%%
%%%%%%%%%%%%%%%%%%%%%%%%%%%%%%%%%%%%%%%%%%%%%%%%%%%%%%%%%%%%%%%%%%%%%%%%
A similar analysis as in Appendix \ref{app:js} holds for the bipartite case, under some slightly adjusted definitions.			
Throughout this section, we use $m$ and $n$ to denote the number of vertices in the two independent sets of a given bipartition. 

For a given tuple $(r,c)$ with $r = (r_1,\dots,r_m)$ and $c = (c_1,\dots,c_n)$, we define $\mathcal{G}(r,c)$ as the set of all bipartite graphs $G = (V \cup U, E)$ with $V = \{1,\dots,m\}$ and $U = \{1,\dots,n\}$, where the nodes in $V$ have degree sequence $r$, and the nodes in $U$ degree sequence $c$. The set $\mathcal{G}'(r,c)$ is defined as $\cup_{(r',c')} \mathcal{G}(r',c')$ where we either have $(r',c') = (r,c)$, or, there exist  $x\in \{1,\dots,m\}$ and $y \in \{1,\dots,n\}$ such that 
\[
r'_i = \left\{ \begin{array}{ll} r_i - 1 & \text{ if } i = x, \\
r_i & \text{ otherwise,}
\end{array}\right. \ \  \text{ and } \ \  
c'_j = \left\{ \begin{array}{ll} c_j - 1 & \text{ if } j = y, \\
c_j & \text{ otherwise.}
\end{array}\right.
\]
Note that the case in which there is one node with degree deficit two cannot occur here.
The definitions of the JS and switch chain, as well as that of the parameter in (\ref{eq:distance}), denoted by $k_{JS}(r,c)$ here, and the notions of  stability and strong stability, are very similar. In particular, in the JS chain the only difference is that we now pick $i \in U$ and $j \in V$ uniformly at random instead of picking  $i, j \in U \cup V$ uniformly at random. 

Crucially, as is stated in Theorem \ref{thm:switch_bipartite} below, the result in Theorem \ref{thm:switch} carries over. The flow transformation in Theorem \ref{thm:transformation} is completely analogous for the bipartite case. The \emph{merging} and \emph{rerouting} procedures can be carried out in exactly the same fashion, and instead of Theorem \ref{thm:swap} one can use the corresponding result for the bipartite case, see, e.g., \cite{Rao1996,Erdos2013swap}.	
Theorem \ref{thm:js_mixing} is also true for the bipartite case; this is discussed in Remark \ref{rem:app_bip} in Appendix \ref{app:js}. 

\begin{theorem}[Bipartite case]\label{thm:switch_bipartite}
	Let $\mathcal{D}$ be a strongly stable family of bipartite degree sequences with respect to some constant $k$. Then there exists a polynomial $q(n,m)$ such that, for any $0 < \epsilon < 1$, the mixing time $\tau_{\mathrm{sw}}$ of the switch chain for a graphical sequence $(r,c) \in \mathcal{D}$, with $r = (r_1,\dots,r_m)$ and $c = (c_1,\dots,c_n)$, satisfies
	\[
	\tau_{\mathrm{sw}}(\epsilon) \leq q(n,m)^k \ln(1/\epsilon) \,.
	\]
\end{theorem}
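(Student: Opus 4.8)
The plan is to mirror, essentially line-for-line, the two-step argument that yields Theorem \ref{thm:switch} in the non-bipartite case, replacing each ingredient by its bipartite analogue as flagged in Appendix \ref{sec:bipartite} and Remark \ref{rem:app_bip}. First I would establish the bipartite version of Theorem \ref{thm:js_mixing}: for a strongly stable family $\mathcal{D}$ of bipartite degree sequences, the JS chain on $\mathcal{G}'(r,c)$ admits an efficient multicommodity flow $f$ with $\max_e f(e) \le p(n,m)/|\mathcal{G}'(r,c)|$ and $\ell(f)\le r(n,m)$ for polynomials $p,r$. As noted in Remark \ref{rem:app_bip}, the proof of Theorem \ref{thm:js_mixing} carries over verbatim once one observes that in the bipartite setting the symmetric difference $H = G\triangle G'$ of two realizations decomposes into even alternating circuits with \emph{no} odd simple cycles, so the auxiliary states visited during circuit processing only ever have two nodes of degree deficit one (never one node of deficit two); the same $\prec_E$, $\prec_{\mathcal{C}}$, pairings $\Psi(G,G')$, canonical-path construction, encoding $L_t(G,G')$, Lemma \ref{lem:recovery}, and the injective-mapping count \eqref{eq:injective1}--\eqref{eq:injective2} all go through with $n$ replaced by $\max\{n,m\}$. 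The cardinality bound $\ln|\mathcal{G}'(r,c)| \le \text{poly}(n,m)$ follows from the bipartite analogue of the McKay--Wormald estimate.

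Next I would prove the bipartite flow-transformation analogue of Theorem \ref{thm:transformation}: given such an efficient JS flow $f$ on $\mathcal{G}'(r,c)$, it can be embedded into an efficient flow $g$ for the switch chain on $\mathcal{G}(r,c)$ with $\ell(g)\le 2k\cdot\ell(f)$ and $g(e)\le t(n,m)^k p(n,m)/|\mathcal{G}(r,c)|$. The argument is identical to the proof of Theorem \ref{thm:transformation}: for each auxiliary state $G\in\mathcal{G}'(r,c)\setminus\mathcal{G}(r,c)$ pick $\varphi(G)\in\mathcal{G}(r,c)$ within distance $k_{JS}(r,c)\le k$ in the JS chain; bound $|\varphi^{-1}(H)|$ using the polynomial in/out-degree of the JS state-space graph; define $h$ supported on pairs in $\mathcal{G}(r,c)\times\mathcal{G}(r,c)$, merge auxiliary states into supernodes to get $\Gamma$, and reroute every illegal edge of $\Gamma$ over a short legal detour. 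For the detour one uses that $(H,H')\in E(\Gamma)$ implies $|E(H)\triangle E(H')|\le 4k$, and then invokes the bipartite analogue of Theorem \ref{thm:swap} — the bound that $H$ can be reached from $H'$ in at most $\frac12|E(H)\triangle E(H')|$ bipartite switches — which is classical (see \cite{Rao1996,Erdos2013swap}). The congestion-overcounting step uses the polynomial bound $n\cdot m$ (or $(nm)^2$) on switch-chain degrees in place of $n^4$. Combining the two steps with the standard mixing-time bound $\tau(\epsilon)\le \ell(g)\cdot b\cdot\theta^{-1}\ln(|\Omega|/\epsilon)$ from Section \ref{sec:preliminaries}, and absorbing all polynomial factors into a single polynomial $q(n,m)$, gives $\tau_{\mathrm{sw}}(\epsilon)\le q(n,m)^k\ln(1/\epsilon)$.

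The main thing to get right — though it is not really an obstacle, more a bookkeeping point — is the circuit-processing step in the bipartite JS-chain analysis: one must verify that the sequence of Type 0 / Type 1 / Type 2 transitions used to process an alternating circuit never creates an auxiliary state outside the (narrower) bipartite $\mathcal{G}'(r,c)$, i.e. that the "degree deficit two" state genuinely cannot arise. This is precisely the content of Remark \ref{rem:app_bip}, and it hinges on the structural fact that an alternating circuit in a bipartite graph, traversed blue/red, cannot close up with an odd simple cycle; consequently the first blue edge removed at the start of circuit processing always pairs with a red edge at a \emph{distinct} node. Everything else — the encoding, Lemma \ref{lem:recovery}, the injective map, and the entire rerouting argument — is formally unchanged, so the bipartite theorem follows with no new ideas beyond those already deployed for Theorem \ref{thm:switch}.
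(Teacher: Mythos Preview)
Your proposal is correct and follows essentially the same approach as the paper: the paper's own ``proof'' of Theorem \ref{thm:switch_bipartite} is simply the observation that Theorems \ref{thm:js_mixing} and \ref{thm:transformation} carry over to the bipartite setting, with the single structural adjustment (Remark \ref{rem:app_bip}) that no deficit-two auxiliary state can arise because bipartite circuits contain no odd simple cycle, and with Theorem \ref{thm:swap} replaced by its bipartite analogue from \cite{Rao1996,Erdos2013swap}. You have identified exactly these ingredients and the one place where care is needed, so there is nothing to add.
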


Similarly, we can obtain a constant upper bound on $k_{JS}(r,c)$ for certain degree families. We first present a range of sequences that can be considered a bipartite counterpart of Corollary \ref{cor:stable}.

\begin{corollary}\label{cor:bipartite}
	Let $\mathcal{D} = \mathcal{D}(\delta_{r},\Delta_{r},\delta_{c},\Delta_{c})$ be the set of all graphical bipartite degree sequences $(r,c)$ on $m$ and $n$ nodes respectively, satisfying 
	\begin{equation}\label{eq:bip_stable}
	(\Delta_{r} - \delta_{c})^2 \leq 4\delta_{c}(n - \Delta_{r}) \ \ \ \text{ and } \ \ \ (\Delta_{c} - \delta_{r})^2 \leq 4\delta_{r}(m - \Delta_{c})
	\end{equation}
	where $\delta_{r},\Delta_{r}$ are the minimum and maximum component of $r$, and $\delta_{c},\Delta_{c}$ the minimum and maximum component of $c$ respectively. For any $(r,c) \in \mathcal{D}$, we have $k_{JS}(r,c) \leq 8$. Hence the switch chain is rapidly mixing for sequences in $\mathcal{D}$.
\end{corollary}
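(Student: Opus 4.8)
\emph{Reduction to a short alternating path.} Once we show that $k_{JS}(r,c)\le 8$ for every $(r,c)\in\mathcal D(\delta_r,\Delta_r,\delta_c,\Delta_c)$, the family is strongly stable with constant $8$ and Theorem~\ref{thm:switch_bipartite} immediately yields rapid mixing; so the whole task is the bound on $k_{JS}$. The plan is to transport the mechanism behind Lemma~\ref{lem:app_A} to the bipartite setting. A graph $H$ in the state space of the (bipartite) JS chain that is not a realization of $(r,c)$ has exactly one node $x\in V$ and one node $y\in U$ with degree deficit one (there is no deficit-two state in the bipartite case, cf.\ Remark~\ref{rem:app_bip}). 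If $\{x,y\}$ is a co-edge we add it in a single Type~2 transition and are done; otherwise we look for an alternating path $x,u_1,v_1,u_2,\dots,y$ which alternates co-edge, edge, co-edge, \dots, co-edge (hence has odd length and zig-zags $V,U,V,\dots,U$). Processing it exactly as in the proof of Lemma~\ref{lem:app_A} --- add the co-edge at $x$, then a run of Type~1 transitions each adding the next co-edge and deleting the next edge, then a final Type~2 transition adding the last co-edge at $y$ --- gives a walk in the JS chain from $H$ to $\mathcal G(r,c)$ of length roughly half the path length. Thus it suffices to show that such a path of (odd) length at most $15$ always exists, which gives $k_{JS}(r,c)\le 8$. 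The two inequalities in~\eqref{eq:bip_stable} will be used symmetrically: the first for the co-edge steps $V\to U$, the second for the edge steps $U\to V$.

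\emph{Existence of the path.} The core step is to produce this path, and here I would follow the counting/expansion method of Jerrum, McKay and Sinclair~\cite{Jerrum1989graphical} (the proofs of their Theorems~2 and~8), adapted to bipartite graphs. Run a breadth-first exploration from $x$ along alternating paths of the required type: from a set of already-reached $V$-vertices one co-edge step reaches every $U$-vertex co-adjacent to the set, and from a set of reached $U$-vertices one edge step reaches every $V$-vertex adjacent to the set; the steps alternate, the odd layers lie in $U$, and $y$ is sought as a co-edge successor of a reached $V$-vertex. If $y$ is not reached within a bounded number of rounds, the layer sizes must keep growing, because each vertex of $V$ has at least $n-\Delta_r$ co-neighbours in $U$ and each vertex of $U$ at most $\Delta_r-\delta_c$ co-neighbours back into the ``old'' part (and symmetrically, with $\delta_r$ and $m-\Delta_c$, for the edge steps). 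Combining one co-edge step with one edge step, the reached set multiplies by a factor controlled by the left-hand sides of~\eqref{eq:bip_stable}; these quadratic inequalities are exactly the conditions that force the two-step expansion to overflow $\{1,\dots,m\}$ or $\{1,\dots,n\}$ within the number of rounds producing the constant $8$ above --- unless $y$ has already been hit. (As usual in augmenting-path arguments, care is needed to extract a \emph{simple} path and to track the edge/co-edge types along it.) The concrete families listed after Corollary~\ref{cor:stable} have bipartite analogues falling under~\eqref{eq:bip_stable}.

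\emph{Main obstacle.} The delicate part is exactly this expansion bookkeeping: defining the layered vertex sets correctly in the alternating-edge/co-edge bipartite setting, guaranteeing that the extracted path is simple, and matching the layer sizes against $\delta_c,\Delta_r$ (resp.\ $\delta_r,\Delta_c$) so that the \emph{quadratic} inequalities~\eqref{eq:bip_stable}, rather than some weaker linear estimate, make the argument close within precisely the number of rounds giving $k_{JS}(r,c)\le 8$. One must also dispose of the degenerate cases ($u_1=y$, $\Delta_r$ close to $n$, or the first co-edge forcing the path back toward $x$), and verify the bipartite version of Lemma~\ref{lem:app_A} in full --- in particular that adding the initial co-edge at $x$ and then processing the path never revisits a vertex nor tries to add an edge already present.
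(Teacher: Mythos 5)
Your high-level framing is right: you reduce $k_{JS}(r,c)\le 8$ to the existence of a short alternating co-edge/edge path between the two deficit nodes, then process that path with Type~1 and a final Type~2 transition, exactly in the spirit of Lemma~\ref{lem:app_A} and of Jerrum--McKay--Sinclair. You also correctly note that the bipartite setting has no deficit-two states. Where the proposal comes up short is precisely the step you flag as ``the delicate part'': you defer the entire existence proof to a one-sided breadth-first ``expansion'' from $x$, in which the reached set supposedly ``multiplies by a factor controlled by the left-hand sides of~\eqref{eq:bip_stable}.'' That is not how the inequalities act, and a multiplicative blow-up per round would in general only close after $\Theta(\log n)$ rounds, not a constant. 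Moreover the bound you quote --- ``each vertex of $U$ has at most $\Delta_r-\delta_c$ co-neighbours back into the old part'' --- mixes parameters of the two sides and is not a correct degree estimate; the quantity $\Delta_r-\delta_c$ only appears as a discriminant, not as a degree bound.

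The paper's argument is a \emph{two-ended} exploration, not a one-sided BFS. After adding the two co-edges $\{s,a\}$ and $\{t,b\}$ it builds, simultaneously and interleaved from both surplus endpoints $a$ and $b$, the sets $X_1,X_2$ (neighbours of $a$ resp.\ $b$), then $Y_1,Y_2$ (vertices with a co-edge into $X_2$ resp.\ $X_1$), then $Z_1,Z_2$ (new neighbours of $Y_2$ resp.\ $Y_1$). At each of these three stages a structural failure (a missing edge between $X_1,X_2$; an edge between $Y_1,Y_2$; a missing edge between $Z_1,Z_2$) produces an alternating path of length $3$, $5$, or $7$, and if none occurs then $K_1=X_1\cup Z_1$ and $K_2=X_2\cup Z_2$ span a complete bipartite graph. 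The conclusion is reached not by an expansion-overflow argument but by a counting estimate on $|E(K_1,R_2\cup Y_2)|$, which after a case split on whether $k_1\le k_2$ or $k_2\le k_1$ yields a real quadratic $k_2^2-(\Delta_r+\delta_c)k_2+\delta_c n<0$; its discriminant is exactly $(\Delta_r-\delta_c)^2-4\delta_c(n-\Delta_r)$, which the hypothesis forces to be $\le 0$, a contradiction. This meet-in-the-middle structure with a discriminant-based contradiction is the essential content that is missing from your sketch, and without it the claimed constant $8$ --- and even the existence of a path of any fixed constant length --- is not established.
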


The proof is  a bipartite analogue of Lemma 1 in \cite{Jerrum1989graphical}.
In particular, we use the notions of co-edges and alternating paths as defined in Appendix \ref{app:open_question}. 
\begin{proof}
	Suppose that $(r',c')$ is such that there exist precisely one $s \in V$ and $t \in U$ with degree deficit one, and all other nodes have no deficit. Let $G$ be a realization for the sequence $(r',c')$ if $\mathcal{G}(r',c') \neq \emptyset$.

	As both $s$ and $t$ have a deficit, there exist $a \in U$ and $b \in V$ such that $\{s,a\}$ and $\{t,b\}$ are co-edges in $G$. If either $a = t$ or $b = s$, we are done. Therefore, we may assume that $\{a,b\}$ is a co-edge in $G$. Now consider the graph $G' = G + \{s,a\} + \{t,b\}$. Let 
	\[
	X_1 = \{ v\ |\ \{a,v\} \in G' \} \subseteq V
	\] 
	be the set of all neighbors of $a$, and similarly $X_2$ the set of all neighbors of $b$ in $U$. We may assume that the sets $X_1$ and $X_2$ form a complete bipartite graph in $G'$, otherwise there is an alternating path of length $3$ and we are done. Define $Y_i$ as the set of nodes with at least one co-edge in $X_{i+1}$. We may assume that the sets $Y_1$ and $Y_2$ form an independent set in $G'$, otherwise there is an alternating path of length $5$ and we are done.
	Furthermore, let $Z_i$ be the set of nodes not in $X_i$ with a neighbor in $Y_{i+1}$. Also here, it must be the case that $Z_1$ and $Z_2$ form a complete bipartite graph, otherwise there is an alternating path of length $7$ and we are done.  Moreover, by definition of $Y_i$, all nodes in $Z_i$ are connected to all nodes in $X_{i+1}$, and so the sets $K_1 = X_1 \cup Z_1$ and $K_2 = X_2 \cup Z_2$ form a complete bipartite graph. Finally, we let $R_1 = V \mysetminus (K_1 \cup Y_1)$ and $R_2 = U \mysetminus (K_2 \cup Y_2)$.

	Before we continue, let us introduce some additional notation. For subsets $A \subseteq V$ and $B \subseteq U$ we write $E(A,B)$ for the set of edges with one endpoint in $A$ and one endpoint in $B$.  Moreover, for all sets that were introduced, lower case letters are used to represent the cardinalities of these sets, e.g., $y_1 = |Y_1|$. We will upper and lower bound the quantity $|E(K_1,R_2 \cup Y_2)|$.

	Now, by definition all neighbors of nodes in $Y_2$ are part of $K_1$, so that
	\[
	|E(K_1,Y_2)| \geq \delta_{c} \cdot y_2 \,.
	\]
	Also, by definition all nodes in $R_1$ are adjacent to all nodes in $X_1$, and, in combination with the fact that $|X_1| \geq \delta_{c} + 1$ (as $t$ has a degree surplus of one), this implies that
	\[
	|E(K_1,R_2)| \geq |E(X_1,R_2)| \geq (\delta_{c}+1)\cdot r_2 \,.
	\]
	Combining these results, we find that
	\begin{equation}\label{eq:bip1}
	|E(K_1,R_2 \cup Y_2)| > \delta_{c} (r_2 + y_2) = \delta_{c} (n - k_2) \,.
	\end{equation}
	We now continue with an upper bound. As $K_1$ and $K_2$ form a complete bipartite graph, we have
	\begin{equation}\label{eq:bip2}
	|E(K_1,R_2 \cup Y_2)| \leq k_1(\Delta_{r} - k_2)  \,.
	\end{equation}
	Combining (\ref{eq:bip1}) and (\ref{eq:bip2}) we find
	\begin{equation}\label{eq:bip3}
	\delta_{c} (n - k_2) < k_1(\Delta_{r} - k_2) \,.
	\end{equation}
	Similarly, by interchanging the roles of $U$ and $V$, we find
	\begin{equation}\label{eq:bip4}
	\delta_{r} (m - k_1) < k_2(\Delta_{c} - k_1) \,.
	\end{equation}
	We now make a case distinction depending on the sizes of $k_1$ and $k_2$.
	
	\textbf{Case 1: $k_1 \leq k_2$.} Then \eqref{eq:bip3} implies that
	\[
	\delta_{c} (n - k_2) < k_2(\Delta_{r} - k_2),
	\]
	which can be rewritten as
	\[
	f(k_2) := k_2^2 - (\Delta_{r} + \delta_{c})k_2 + \delta_{c} n < 0.
	\]
	This implies that the function $f(k)$ has two null points, and therefore its discriminant must be strictly greater than zero. However, its discriminant is equal to $(\Delta_{r} + \delta_{c})^2 - 4\delta_{c}n = (\Delta_{r} - \delta_{c})^2 - 4\delta_{c}(n - \Delta_{r}) \leq 0$ which is a contradiction. Recall that the last inequality holds by assumption. 
	
	\textbf{Case 2: $k_2 \leq k_1$.} We can repeat the analysis of Case 1 but based on (\ref{eq:bip4}) instead of \eqref{eq:bip3}. 
\end{proof}

If $\Delta_{r} = \Delta_{c}$, $\delta_{r} = \delta_{c}$ and $m = n$, the condition in (\ref{eq:bip_stable}) reduces to
\begin{equation}\label{eq:bip_same}
(\Delta_{r} - \delta_{r})^2 \leq 4\delta_{r}(m - \Delta_{r}).
\end{equation}
In particular, condition \eqref{eq:bip_same} is satisfied if $\delta_{r} \geq m/4$ and $\Delta_{r} \leq 3m/4 - 1$. This strictly improves the range of $\delta_{r} \geq m/3$ and $\Delta_{r} \leq 2m/3$ for the biparte case recently obtained in \cite{ErdosMMS2018}. 
It should be noted that Corollary \ref{cor:bipartite} does not capture the (almost) half-regular case  \cite{Erdos2013,Erdos2015decomposition}.\footnote{A pair $(r,c)$ is \emph{half-regular} if either $r$ of $c$ is regular, and \emph{almost half-regular} if either $\Delta_{r} \leq \delta_{r} + 1$ or $\Delta_{c} \leq \delta_{c} + 1$.} Those cases can be captured by the next result. In \cite{ErdosMMS2018} it was recently shown that the switch chain  is rapidly mixing for the condition in (\ref{eq:ErdosMMS2018}). We now provide an alternative proof for this fact by showing that these bipartite degree sequences are strongly stable.

\begin{corollary}\label{cor:bipartite2}
	Let $\mathcal{D} = \mathcal{D}(\delta_{r},\Delta_{r},\delta_{c},\Delta_{c})$ be the set of all graphical bipartite degree sequences $(r,c)$  on $m$ and $n$ nodes respectively, satisfying 
	\begin{equation}\label{eq:ErdosMMS2018}
	(\Delta_{c} - \delta_{c} - 1)(\Delta_{r} - \delta_{r} - 1) < 1 +  \max\big\{\delta_{c}(n - \Delta_{r}),\delta_{r}(m - \Delta_{c})\big\}
	\end{equation}	
	where $\delta_{r},\Delta_{r}$ are the minimum and maximum component of $r$, and $\delta_{c},\Delta_{c}$ the minimum and maximum component of $c$ respectively. For any $(r,c) \in \mathcal{D}$, we have $k_{JS}(r,c) \leq 8$. Hence the switch chain is rapidly mixing for sequences in $\mathcal{D}$.
\end{corollary}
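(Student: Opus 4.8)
The plan is to follow the proof of Corollary~\ref{cor:bipartite} essentially verbatim up to the structural inequalities \eqref{eq:bip3} and \eqref{eq:bip4}, and to replace only its final case analysis. Concretely, suppose $(r',c')$ is obtained from $(r,c)$ by giving one node $s\in V$ and one node $t\in U$ a degree deficit of one, fix a realization $G$, and assume $\mathcal{G}(r',c')\neq\emptyset$. Exactly as in the proof of Corollary~\ref{cor:bipartite}, pick co-edges $\{s,a\}$, $\{t,b\}$ (with $a\in U$, $b\in V$), set $G'=G+\{s,a\}+\{t,b\}$, and build the nested sets $X_i,Y_i,Z_i,K_i,R_i$. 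Either one of the assumptions along the way produces a short alternating path (of length at most $7$) connecting the two surplus nodes $a,b$ of $G'$ — and then, translating this path into JS-chain transitions exactly as in Corollary~\ref{cor:bipartite}, we get $k_{JS}(r,c)\le 8$ — or none of them does, in which case we obtain \eqref{eq:bip3}, i.e. $\delta_{c}(n-k_2)<k_1(\Delta_{r}-k_2)$, and \eqref{eq:bip4}, i.e. $\delta_{r}(m-k_1)<k_2(\Delta_{c}-k_1)$, together with $\delta_{c}+1\le k_1\le \Delta_{c}$ and $\delta_{r}+1\le k_2\le \Delta_{r}$. Here one should be careful to retain the additive ``$+1$'' coming from $|E(X_1,R_2)|\ge(\delta_{c}+1)r_2$ and its counterpart on the other side, because the hypothesis at hand is strictly weaker than \eqref{eq:bip_stable} and these terms will be needed.

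The new content is to show that \eqref{eq:bip3}, \eqref{eq:bip4}, and the bounds on $k_1,k_2$ contradict \eqref{eq:ErdosMMS2018}. By the $(r,m,\delta_{r},\Delta_{r})\leftrightarrow(c,n,\delta_{c},\Delta_{c})$ symmetry of both the hypothesis and the conclusion, we may assume $\max\{\delta_{c}(n-\Delta_{r}),\,\delta_{r}(m-\Delta_{c})\}=\delta_{c}(n-\Delta_{r})$. First dispose of the (almost) half-regular degenerate case: if $\Delta_{r}\le\delta_{r}+1$ or $\Delta_{c}\le\delta_{c}+1$, then the left-hand side of \eqref{eq:ErdosMMS2018} is non-positive, whereas \eqref{eq:bip3} combined with $k_2\ge\delta_{r}+1$ (resp. \eqref{eq:bip4} with $k_1\ge\delta_{c}+1$) forces $\delta_{c}(n-\delta_{r}-1)<k_1(\Delta_{r}-\delta_{r}-1)\le 0$, which is impossible except in near-complete-bipartite boundary situations (a degree equal to $n$ or $m$) that are checked directly. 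In the remaining regime $\Delta_{r}-\delta_{r}\ge 2$ and $\Delta_{c}-\delta_{c}\ge 2$, split on whether $k_1\le k_2$ or $k_2\le k_1$; in each case eliminate the larger variable from the corresponding one of \eqref{eq:bip3}/\eqref{eq:bip4} as in Corollary~\ref{cor:bipartite}, feed the resulting bound into the complementary inequality, and use $\delta_{c}<k_1\le\Delta_{c}$, $\delta_{r}<k_2\le\Delta_{r}$ together with the retained ``$+1$'' terms to arrive at $(\Delta_{c}-\delta_{c}-1)(\Delta_{r}-\delta_{r}-1)\ge 1+\delta_{c}(n-\Delta_{r})$, contradicting \eqref{eq:ErdosMMS2018}. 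Hence a short alternating path always exists, so $k_{JS}(r,c)\le 8$, and rapid mixing of the switch chain follows from Theorem~\ref{thm:switch_bipartite}.

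I expect the endgame of the second paragraph to be the main obstacle. Unlike Corollary~\ref{cor:bipartite}, the clean single-variable discriminant argument no longer suffices: bounding $k_1\le\Delta_{c}$ crudely in \eqref{eq:bip3} only yields $\delta_{c}(n-\Delta_{r})<\Delta_{c}(\Delta_{r}-\delta_{r}-1)$, which is weaker than the bound we need with $\Delta_{c}-\delta_{c}-1$ in place of $\Delta_{c}$, the two differing by roughly $\Delta_{r}-\delta_{r}$. Closing this gap forces one to (i) keep the additive refinements in the edge counts rather than discarding them, and (ii) use \eqref{eq:bip3} and \eqref{eq:bip4} \emph{simultaneously} — so that the bound on one of $k_1,k_2$ extracted from one inequality is tight enough when substituted into the other — instead of handling one inequality at a time as in Corollary~\ref{cor:bipartite}. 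The auxiliary near-complete-bipartite boundary cases are harmless but must be flagged separately, and the bookkeeping that converts the alternating path into JS-chain moves is identical to that of Corollary~\ref{cor:bipartite}, which is why the same constant $k_{JS}(r,c)\le 8$ comes out.
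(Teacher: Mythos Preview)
Your plan diverges from the paper's proof. You keep the full $X_i,Y_i,Z_i,K_i$ scaffolding of Corollary~\ref{cor:bipartite} and try to wring the contradiction out of \eqref{eq:bip3} and \eqref{eq:bip4}. The paper instead discards $Z_i,K_i$ after building $X_i,Y_i$: it sets $R_1=V\setminus(X_1\cup Y_1\cup\{b\})$, $R_2=U\setminus(X_2\cup Y_2\cup\{a\})$, and bounds the \emph{co-edge} count $|\bar{E}(Y_1,U)|$ from above and below. This yields a master inequality in $y_1,y_2,r_1,r_2$, and sharp two-sided bounds on the $y_i$ (e.g.\ $m-\Delta_c\le y_1\le m-\delta_c-r_1-2$, the upper bound coming from $Y_1\cup R_1$ lying among the non-neighbours of $a$) then produce both target inequalities by direct expansion. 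The paper's route extracts more combinatorial structure; yours, if it closes, is more purely algebraic.

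Your framework does close, but not through the case split $k_1\lessgtr k_2$ with quadratic-style elimination that you outline in the second paragraph---that mimics the discriminant trick of Corollary~\ref{cor:bipartite}, which, as you yourself concede in your third paragraph, is too lossy here. The correct closure matches your ``use both simultaneously'' instinct and needs no case split. Rewrite \eqref{eq:bip3} as $(k_1-\delta_c)(\Delta_r-k_2)>\delta_c(n-\Delta_r)$ and \eqref{eq:bip4} as $(k_2-\delta_r)(\Delta_c-k_1)>\delta_r(m-\Delta_c)$. Both right-hand sides are nonnegative and $k_1-\delta_c,\,k_2-\delta_r\ge 1$, so each inequality forces the \emph{other} variable strictly below its naive ceiling: $k_1\le\Delta_c-1$ and $k_2\le\Delta_r-1$. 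Feeding these back in, $(k_1-\delta_c)(\Delta_r-k_2)\le(\Delta_c-\delta_c-1)(\Delta_r-\delta_r-1)$, and likewise for the second product, giving both $(\Delta_c-\delta_c-1)(\Delta_r-\delta_r-1)>\delta_c(n-\Delta_r)$ and $>\delta_r(m-\Delta_c)$, contradicting \eqref{eq:ErdosMMS2018}. The strictness in \eqref{eq:bip3}--\eqref{eq:bip4} is genuine because $a\in R_2$ and $b\in R_1$ (check $a\notin X_2\cup Y_2\cup Z_2$ using $X_1\cap Y_1=\emptyset$, and symmetrically), so $r_1,r_2\ge 1$; this also makes your separate half-regular analysis redundant, since $\Delta_r-k_2\le 0$ already kills \eqref{eq:bip3} when $\Delta_r\le\delta_r+1$.
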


\begin{proof}
	The proof bears similarities with the the proof of Corollary \ref{cor:bipartite}, 
	especially in the beginning. In fact, let $s , t, G, a, b , G', X_1, X_2, Y_1$ and $Y_2$ be exactly as in the proof of Corollary \ref{cor:bipartite}.
	Also, let $R_1 =  V \mysetminus (X_1 \cup Y_1 \cup \{b\})$ and $R_2 = U  \mysetminus (X_2 \cup Y_2 \cup \{a\})$ and note that, by the definition of $Y_1$, $R_1$ and $X_2$ form a complete bipartite graph in $G'$; the same is true for $X_1$ and $R_2$.  
	
	Before we continue, we introduce some additional notation. For subsets $A \subseteq V$ and $B \subseteq U$ we write $\bar{E}(A,B)$ for the set of co-edges in $G'$ with one endpoint in $A$ and one endpoint in $B$.  Like in the proof of Corollary \ref{cor:bipartite}, for all sets that were introduced, lower case letters are used to represent the cardinalities of these sets, e.g., $y_1 = |Y_1|$. 
	
	We will upper and lower bound the quantity $|\bar{E}(Y_1,U)|$. The fact that the degree of each node in $Y_1$ is at least $\delta_{r}$, gives us an obvious upper bound: 
	\begin{equation}\label{eq:xbip1}
	|\bar{E}(Y_1,U)| \le y_1 (n-\delta_{r}) \,.
	\end{equation}
	Since $Y_1$ and $Y_2$ form an independent set and, by definition, no node in $Y_1$ is adjacent to $a$, we have
	\begin{equation}\label{eq:xbip2}
	|\bar{E}(Y_1,U)| = y_1 (y_2 + 1) + |\bar{E}(Y_1,U - Y_2 - \{a\})| \,.
	\end{equation}	
	So, we need to lower bound the quantity $|\bar{E}(Y_1,U - Y_2 - \{a\})|= |\bar{E}(Y_1,X_2 \cup R_2)|$.	Using straightforward upper and lower bounds and the fact that $|\bar{E}(X_1,R_2)| = |\bar{E}(X_1 \cup R_1 \cup \{b\}, X_2)| = 0$,\footnote{This follows from the definitions of the sets $X, Y$ and $R$, and the fact that $X_1$ and $X_2$ form a complete bipartite graph.} we have
	\begin{IEEEeqnarray}{rCl}
		|\bar{E}(Y_1,X_2 \cup R_2)| & = & |\bar{E}(V,X_2 \cup R_2)| - |\bar{E}(X_1 \cup R_1 \cup \{b\},X_2 \cup R_2)|       \nonumber                  \\
		& \ge & (m- \Delta_{c})(n-y_2-1) - |\bar{E}(X_1 \cup R_1 \cup \{b\}, X_2)| - |\bar{E}(X_1,R_2)| - |\bar{E}(R_1 \cup \{b\},R_2)| \nonumber \\ 
		& \ge &  (m- \Delta_{c})(n-y_2-1) - (r_1+1)r_2\,. \label{eq:xbip3}
	\end{IEEEeqnarray}
	Now we may combine \eqref{eq:xbip1}, \eqref{eq:xbip2} and \eqref{eq:xbip3} to get
	\begin{equation*}
	y_1 (n-\delta_{r}) \ge y_1 (y_2 + 1) + (m- \Delta_{c})(n-y_2-1) - (r_1+1)r_2 \,,
	\end{equation*}
	and by rearranging terms
	\begin{equation}\label{eq:xbip4}
	(r_1+1)r_2 \ge (y_2 + 1 - n) (y_1 - m + \Delta_{c}) + \delta_{r} y_1  \,.
	\end{equation}
	
	Next we derive bounds on the sizes of the different sets involved in the proof so far. First note that $a$ has at least $\delta_{c} + 1$ neighbors as it has degree surplus of one. Thus, $x_1\ge \delta_{c} + 1$. Similarly, $x_2\ge \delta_{r} + 1$. 
	To bound $y_1$, notice that a node in $X_2$ has at least $m - \Delta_{c}$ non-neighbors in $V$ and all of them must be in $Y_1$. Thus, $y_1\ge m - \Delta_{c}$. 
	On the other hand,  $Y_1 \cup R_1$ contains the non-neighbors of $a$ (except from $b$) and these can be at most $m - \delta_{c} - 2$. Thus, $y_1\le m - \delta_{c} - r_1 - 2$. Similarly, $n - \Delta_{r} \le y_2 \le n - \delta_{r} - r_2 - 2$.

	We are going to combine the latter bounds with \eqref{eq:xbip4}:
	\begin{equation*}
	(n - \delta_{r} - 2 - y_2)(m - \delta_{c} - 1 - y_1)   \ge   (y_2 + 1 - n) (y_1 - m + \Delta_{c}) + \delta_{r} y_1  \,.
	\end{equation*}
	By multiplying everything out and canceling several terms, we get
	\begin{equation*}
	-n\delta_{c} - n - m \delta_{r} + \delta_{r}\delta_{c} +\delta_{r}-m+2\delta_{c} +2+y_1+\delta_{c}y_2 +y_2  \ge   \Delta_{c}(y_2 + 1 - n)\,,
	\end{equation*}
	and by rearranging 
	\begin{equation*}
	( n  -y_2- 1) (\Delta_{c} - \delta_{c}-1 ) +( y_1 - m + \delta_{c}+1)  \ge \delta_{r} (m - \delta_{c}-1) \,.
	\end{equation*}
	Next we may use $y_2\ge n - \Delta_{r}$ and  $y_1\le m - \delta_{c}  - 2$ to finally get
	\begin{equation*}
	(\Delta_{r} - 1)(\Delta_{c} - \delta_{c} - 1) -1 \geq  \delta_{r} (m - \Delta_{c}) + \delta_{r} (\Delta_{c} - \delta_{c}-1)\,,
	\end{equation*}
	which is equivalent to
	\begin{equation*}
	(\Delta_{r} - \delta_{r} - 1)(\Delta_{c} - \delta_{c} - 1) \geq 1 + \delta_{r} (m - \Delta_{c})\,.
	\end{equation*}
	Using the exact same arguments, we can work with $|\bar{E}(V,Y_2)|$ to get 
	\begin{equation*}
	(\Delta_{c} - \delta_{c} - 1)(\Delta_{r} - \delta_{r} - 1) \geq 1 +  \delta_{c}(n - \Delta_{r}) \,.
	\end{equation*}
	The last two inequalities contradict the choice of $(r, c)$.
\end{proof}

%\newpage
\bigskip

%%%%%%%%%%%%%%%%%%%%%%%%%%%%%%%%%%%%
%%%%%%%%%%%%%%%%%%%%%%%%%%%%%%%%%%%%
\section{The JDM Model: An Example}\label{app:example}
%%%%%%%%%%%%%%%%%%%%%%%%%%%%%%%%%%%%
%%%%%%%%%%%%%%%%%%%%%%%%%%%%%%%%%%%%
In this section we provide an example of the joint degree matrix model with two degree classes. We let $V = \{1,\dots,11\}$ and consider the partition given by $V_1 = \{1,\dots,6\}$ and $V_2 = \{7,\dots,11\}$, i.e., $q = 2$. We let
\[
c = \begin{pmatrix} 7 & 4 \\ 4 & 8
\end{pmatrix} \ \ \text{ and } \ \ d = (3,4).
\]
This means that the nodes in $V_1$ have degree three, the nodes in $V_2$ degree four, and there are in total four edges between the nodes of $V_1$ and $V_2$. In Figure \ref{fig:jdm_example} below  we  give a possible graphical realization of this tuple $(c,d)$.

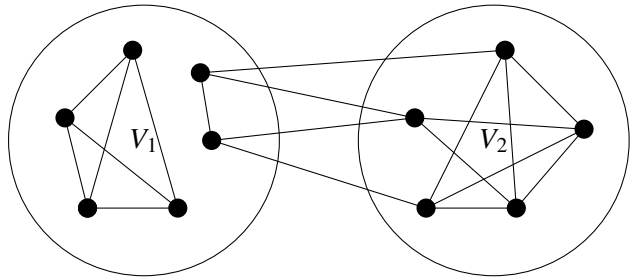
\begin{figure}[ht!]
\centering
\begin{tikzpicture}[scale=0.6]
\coordinate (a) at (-0.5,0); 
\coordinate (b) at (1.5,0);
\coordinate (c) at (-1,2); 
\coordinate (d) at (2.25,1.5); 
\coordinate (f) at (0.5,3.5); 
\coordinate (e) at (2,3); 

\coordinate (a2) at (7,0); 
\coordinate (b2) at (9,0);
\coordinate (c2) at (6.75,2); 
\coordinate (d2) at (10.5,1.75); 
\coordinate (e2) at (8.75,3.5);

\node at (a) [circle,scale=0.7,fill=black] {};

%\node (A) [below=0.1cm of a]  {};
\node at (a) [circle,scale=0.7,fill=black] {};
\node at (b) [circle,scale=0.7,fill=black] {};
\node at (c) [circle,scale=0.7,fill=black] {};
\node at (d) [circle,scale=0.7,fill=black] {};
\node at (f) [circle,scale=0.7,fill=black] {};
\node at (e) [circle,scale=0.7,fill=black] {};

\node at (a2) [circle,scale=0.7,fill=black] {};
\node at (b2) [circle,scale=0.7,fill=black] {};
\node at (c2) [circle,scale=0.7,fill=black] {};
\node at (d2) [circle,scale=0.7,fill=black] {};
\node at (e2) [circle,scale=0.7,fill=black] {};

\path[every node/.style={sloped,anchor=south,auto=false}]
(a) edge[-,] node {} (b)
(a) edge[-,] node {} (c)
(a) edge[-,] node {} (f)
(b) edge[-,] node {} (c)
(b) edge[-,] node {} (f)
(c) edge[-,] node {} (f)
(d) edge[-,] node {} (e)
(a2) edge[-,] node {} (b2)
(a2) edge[-,] node {} (d2)
(a2) edge[-,] node {} (e2)
(b2) edge[-,] node {} (c2)
(b2) edge[-,] node {} (d2)
(b2) edge[-,] node {} (e2)
(c2) edge[-,] node {} (d2)
%(c2) edge[-,] node {} (e2)
(d2) edge[-,] node {} (e2)
(e) edge[-,] node {} (c2)
(e) edge[-,] node {} (e2)
(d) edge[-,] node {} (a2)
(d) edge[-,] node {} (c2);

\draw (0.75,1.5) circle (3cm) node{$V_1$};
\draw (8.5,1.5) circle (3cm) node{$V_2$};
\end{tikzpicture}
\caption{An example of a graphical realization for $c$ and $d$ as given above.}
\label{fig:jdm_example}
\end{figure}

\newpage
%%%%%%%%%%%%%%%%%%%%%%%%%%%%%%%%%%%%%%%%
%%%%%%%%%%%%%%%%%%%%%%%%%%%%%%%%%%%%%%%%
\section{The PAM Model and the Hinge Flip Markov Chain}\label{sec:auxiliary}
%%%%%%%%%%%%%%%%%%%%%%%%%%%%%%%%%%%%%%%%
%%%%%%%%%%%%%%%%%%%%%%%%%%%%%%%%%%%%%%%%
We start with a general description of the Partition Adjacency Matrix (PAM) model as a generalization of the JDM model. Let $V = \{1,\dots,n\}$ be a given set. An instance of the partition adjacency matrix model is given by a partition $V_1 \cup V_2 \cup \dots \cup V_q$ of $V$ into pairwise disjoint classes. Moreover, we are given a symmetric \emph{partition adjacency matrix} $c = (c_{ij})_{i,j \in [q]}$  of non-negative integers, and a sequence $d = (d_1,\dots,d_n)$ of non-negative integers. We say that the tuple $((V_i)_{i \in q},c,d)$ is graphical if there exists a simple, undirected, labelled graph $G = (V,E)$ on the nodes in $V$ with node $i \in V$ having degree $d_i$, and so that there are precisely $c_{ij}$ edges between endpoints in $V_i$ and $V_j$. This is denoted by $E[V_i,V_j] = c_{ij}$. The graph $G$ is called a graphical realization of the tuple. We let $\mathcal{G}((V_i)_{i \in q},c,d)$ denote the set of all graphical realizations of the tuple $((V_i)_{i \in q},c,d)$. We often write $\mathcal{G}(c,d)$ instead of $\mathcal{G}((V_i)_{i \in [q]},c,d)$ when it is  clear what the partition is. 

%One special case of the partition adjacency model is when the degrees are component-wise regular, i.e., when $d_a = d_b$ whenever $a,b \in V_i$ for $i = 1,\dots,q$. This corresponds to the \emph{joint degree matrix} model \cite{Amanatidis2015}.
In this work we focus on the case of a partition into two classes $V_1$ and $V_2$, and, without loss of generality, assume that $1 \leq c_{12} \leq |V_1|\cdot|V_2| - 1$.\footnote{It is not hard to see that the cases $c_{12} \in \{0,|V_1|\cdot|V_2|\}$ reduce to the single class case.} For the case of two classes an initial state can be computed in polynomial time \cite{ErdosHIM2017}.\footnote{For general instances, it is not known if an initial state can be computed in time polynomial in $n$. It is conjectured to be NP-hard  in general \cite{ErdosHIM2017}, see also \cite{CzabarkaSTW2017}.} 
We let $\mathcal{G}'(c,d) = \cup_{(c',d')}\mathcal{G}'(c',d')$ with $(c',d')$ ranging over tuples satisfying
%\[
%\bigg\{ (\gamma',d') : \sum_{i=1}^n |d_i - d_i'| \in \{0,2,4\} \text{ and } \gamma' \in \{\gamma'-1,\gamma',\gamma'+1\} \bigg\} \,.
%\]
\begin{enumerate}[label=(\roman*)]
\item $\sum_{i=1}^n d_i - d_i' = 0$,
\item $\sum_{i=1}^n |d_i - d_i'| \in \{0,2,4\}$,
\item $c_{12}' \in \{c_{12}-1,c_{12},c_{12}+1\}$. %\ \ (or $\gamma' \in \{\gamma - 1,\gamma,\gamma+1\}$).
\end{enumerate}

We call elements in $\mathcal{G}'(c,d) \setminus \mathcal{G}(c,d)$ \emph{perturbed (auxiliary) states}. For any $G \in \mathcal{G}'(c,d)$ the \emph{perturbation at node $v \in V$} is defined as $\alpha_v = d_v - d_v'$ where $d'$ is the degree sequence of $G$. We say that the node $v$ has a degree surplus if $\alpha_v > 0$ and a degree deficit if $\alpha_v < 0$. Moreover, the total degree surplus is defined as $\sum_{v : \alpha_v > 0} \alpha_v$, and the total degree deficit as $- \sum_{v : \alpha_v < 0} \alpha_v$. Note that
\[
\sum_{v : \alpha_v > 0} \alpha_v  - \sum_{v : \alpha_v < 0} \alpha_v = \sum_{i=1}^n |d_i - d_i'|.
\]
Finally, we say that a tuple $(c',d')$ is \emph{edge-balanced} if $c' = c$ (but possibly $d' \neq d$). From the conditions defining $\mathcal{G}'(c,d)$, we may infer the following properties. %In particular, the second property implies that if the number of cut edges in a graphical realization is off by at most one edge, then this is also the case for the number of edges with both endpoints in $V_1$, and the number of edges with both endpoints in $V_2$.

\begin{proposition}\label{prop:basic_properties} For any $G \in \mathcal{G}'(c,d)$, for some tuple $(c',d')$, it holds that
\begin{enumerate}[label=(\alph*)]
\item the perturbation at node $v$ satisfies $\alpha_{v} \in \{-2,-1,0,1,2\}$ for any $v \in V$, 
\item $\max_{i,j=1,2} |c_{ij} - c_{ij}'| \leq 1$, and $\sum_{1 \leq i < j \leq 2} |c_{ij} - c_{ij}'| \in \{0,2\}$.
\end{enumerate}
\end{proposition}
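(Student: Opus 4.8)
The plan is to fix $G\in\mathcal{G}'(c,d)$ together with a witnessing tuple $(c',d')$, i.e., one with $G\in\mathcal{G}(c',d')$ satisfying conditions (i)--(iii) from the definition, and to read off both parts from elementary counting identities involving the perturbations $\alpha_v=d_v-d'_v$; parity is the only mildly delicate ingredient.

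For (a) I would write $P=\sum_{v:\alpha_v>0}\alpha_v$ and $N=-\sum_{v:\alpha_v<0}\alpha_v$. Condition (i) forces $P=N$, and condition (ii) forces $P+N\in\{0,2,4\}$, hence $P=N\le 2$. Since no single positive $\alpha_v$ can exceed $P$ and no single $|\alpha_v|$ with $\alpha_v<0$ can exceed $N$, every $\alpha_v$ lies in $\{-2,-1,0,1,2\}$, which is (a). The same decomposition gives the auxiliary bound $\bigl|\sum_{v\in V_i}\alpha_v\bigr|\le\max\{P,N\}\le 2$ for each class $V_i$, which I will reuse.

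For the off-diagonal entry of (b) there is nothing to do: $c_{12}=c_{21}$ by symmetry, and (iii) says $|c_{12}-c'_{12}|\le 1$. The diagonal entries are the point that needs an extra idea, since they are not constrained directly by the definition of $\mathcal{G}'(c,d)$; here I would invoke the within-class handshake identity. As $(c,d)$ is a (graphical) instance and $G\in\mathcal{G}(c',d')$, for $i=1,2$ we have $\sum_{v\in V_i}d_v=2c_{ii}+c_{12}$ and $\sum_{v\in V_i}d'_v=2c'_{ii}+c'_{12}$, so subtracting gives
\[
2(c_{ii}-c'_{ii})=\sum_{v\in V_i}\alpha_v-(c_{12}-c'_{12}).
\]
The right-hand side has modulus at most $2+1=3$ by the auxiliary bound and (iii), and the left-hand side is even, so $|c_{ii}-c'_{ii}|\le 1$; combined with the off-diagonal case this yields $\max_{i,j}|c_{ij}-c'_{ij}|\le 1$. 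Summing the displayed identity over $i=1,2$ and using (i) (whence $\sum_v\alpha_v=0$) gives
\[
(c_{11}-c'_{11})+(c_{22}-c'_{22})+(c_{12}-c'_{12})=0,
\]
and three integers from $\{-1,0,1\}$ with zero sum are either all zero or a permutation of $(1,-1,0)$ (their sum is odd whenever all three are nonzero), so the sum of their absolute values is $0$ or $2$. Reading the sum in the statement as ranging over the pairs with $i\le j$ (for $i<j$ alone only $|c_{12}-c'_{12}|$ appears, which is in $\{0,1\}$ by (iii)), this is exactly the last assertion.

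The only real obstacle is the diagonal of $c$: unlike the degrees and the cross-class count $c_{12}$, the within-class counts $c_{11},c_{22}$ are not bounded by fiat and must be pinned down through the degree sums of the two classes, after which everything is a parity argument. The rest is bookkeeping.
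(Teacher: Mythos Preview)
Your proof is correct and follows essentially the same route as the paper: part (a) is identical, and for part (b) both you and the paper use the within-class degree-sum identity $\sum_{v\in V_i}d_v=2c_{ii}+c_{12}$ to bound the diagonal entries, followed by a parity argument (your identity $(c_{11}-c'_{11})+(c_{22}-c'_{22})+(c_{12}-c'_{12})=0$ is exactly the paper's observation that the total edge count is preserved). You are also right that the sum in the statement should be read over $i\le j$; the paper's own proof confirms this intended reading.
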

\begin{proof}
If there is some node with degree surplus greater or equal than three, then the total degree deficit is also at least three, which follows from the first condition defining $\mathcal{G}'(\gamma,d)$. This means that $\sum_{i=1}^n |d_i - d_i'| \geq 6$, which violates the second condition defining $\mathcal{G}'(c,d)$. A similar argument holds in case there is some node with degree deficit greater or equal than three. To see that the second property is true, assume without loss of generality that $c_{11}' \geq c_{11} + 2$ (similar arguments hold for $c_{22}'$). Because of the fact that $c_{12}' \in \{c_{12}-1,c_{12},c_{12}+1\}$, by the third condition defining $\mathcal{G}'(c,d)$, it must be that the total degree surplus of the nodes in $V_1$ is at least three. This gives a contradiction for similar reasons as before. An analogous argument holds in case $c_{11}' \leq c_{11} - 2$.  Finally, the last property is a direct consequence of $\max_{i,j=1,2} |c_{ij} - c_{ij}'| \leq 1$ and the fact that $\sum_{i,j=1,2} |c_{ij} - c_{ij}'|$ is an even number, because of the first property defining $\mathcal{G}'(c,d)$.
\end{proof}

\begin{remark}
As we focus on the case in which $V$ is partitioned into two classes $V_1$ and $V_2$ here, we will use some shorthand notation in this section.\footnote{Still, the notation in Proposition \ref{prop:basic_properties} is more convenient in Appendix \ref{sec:stable_jdm}.} Given a sequence $d$ the number $\gamma = c_{12}$ uniquely determines the matrix $c$, and the set $\mathcal{G}(c,d)$ is then denoted by $\mathcal{G}(V_1,V_2,\gamma,d)$. As before, we often leave out $V_1$ and $V_2$ from the tuple (for sake of readability). That is, we then write $\mathcal{G}'(\gamma,d)$ instead of $\mathcal{G}'(V_1,V_2,c,d)$. We do this in order to emphasize that the derived results only hold for two classes.
\end{remark}
 
We define the \emph{hinge flip Markov chain}\footnote{One could also choose a natural generalization of the JS chain here. Similarly, for the PAM model with one class as studied in Section \ref{sec:main_result}, one could also use a hinge flip based Markov chain. We choose to work with the JS chain in Section \ref{sec:main_result} as this is the formulation introduced in \cite{Jerrum1990} (and already claimed to be rapidly mixing for various degree sequences).} $\mathcal{M}(\gamma,d)$ on $\mathcal{G}'(\gamma,d)$ as follows. Suppose the current state of the chain is $G \in \mathcal{G}'(\gamma,d)$:
\begin{itemize}
\item With probability $1/2$, do nothing.
\item Otherwise, perform a \emph{hinge flip} operation: select an ordered triple $i,j,k$ of nodes uniformly at random. If $\{i,j\} \in E(G)$, $\{j,k\} \notin E(G)$, and $G - \{i,j\} + \{j,k\} \in \mathcal{G}'(\gamma,d)$, then delete $\{i,j\}$ and add $\{j,k\}$. 
\end{itemize}
Note that we can check if $G - \{i,j\} + \{j,k\} \in \mathcal{G}'(\gamma,d)$ in time polynomial in $n$ based on the  state $G$.

\begin{figure}[ht!]
\centering
\begin{tikzpicture}[scale=0.6]
\coordinate (i) at (0,0); 
\coordinate (k) at (4,0);
\coordinate (j) at (2,2);

\coordinate (P1) at (5,1);
\coordinate (P2) at (6,1);

\node at (i) [circle,scale=0.7,fill=black] {};
\node (i1) [left=0.1cm of i]  {$i$};
\node at (j) [circle,scale=0.7,fill=black] {};
\node (j1) [left=0.1cm of j]  {$j$};
\node at (k) [circle,scale=0.7,fill=black] {};
\node (k1) [left=0.1cm of k]  {$k$};

\path[every node/.style={sloped,anchor=south,auto=false}]
(i) edge[-,very thick] node {} (j)
(P1) edge[->,very thick] node {} (P2);   
\end{tikzpicture}
\quad
\begin{tikzpicture}[scale=0.6]
\coordinate (i) at (0,0); 
\coordinate (k) at (4,0);
\coordinate (j) at (2,2);

\node at (i) [circle,scale=0.7,fill=black] {};
\node (i1) [left=0.1cm of i]  {$i$};
\node at (j) [circle,scale=0.7,fill=black] {};
\node (j1) [left=0.1cm of j]  {$j$};
\node at (k) [circle,scale=0.7,fill=black] {};
\node (k1) [left=0.1cm of k]  {$k$};

\path[every node/.style={sloped,anchor=south,auto=false}]
(j) edge[-,very thick] node {} (k);  
\end{tikzpicture}
\caption{Example of a hinge flip operation for the ordered triple $i,j,k$.}
%\label{fig:switch}
\end{figure}
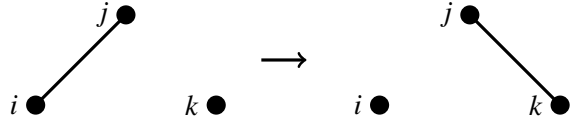

%This chain will henceforth be referred to as the \emph{JS chain on $\mathcal{G}'(d)$}.\footnote{A slightly different definition of stability is given by Jerrum, McKay and Sinclair \cite{Jerrum1989graphical}. Based on this variant, one could define the corresponding variant of the JS chain. Nevertheless, the definitions of stability in \cite{Jerrum1989graphical} and \cite{Jerrum1990} (and their corresponding definitions of strong stability) are equivalent. To avoid confusion, here we only use the definitions in \cite{Jerrum1990}.} 
Graphs $G,G' \in \mathcal{G}'(\gamma,d)$ are said to be \emph{adjacent} in $\mathcal{M}$ if $G$ can be obtained from $G'$ with positive probability in one transition of the chain $\mathcal{M}$. We say that two graphs $G, G'$ are within distance $r$ in $\mathcal{M}$ if there exists a path of at most length $r$ from $G$ to $G'$ in the state space graph of $\mathcal{M}$. By $\text{dist}(G,\gamma,d)$ we denote the  minimum distance of $G$ from an element in $\mathcal{G}(\gamma,d)$.
The following parameter is the analog of \eqref{eq:distance} for the current setting and will be used in a similar manner to define the appropriate variant of strong stability. %\pkblue{Different definition needed here! Give the combinatorial interpretation in terms of short alternating path(s).}  
We define 
\begin{equation}%\label{eq:distance}
k(\gamma,d) = \max_{G \in \mathcal{G}'(\gamma,d)} \text{dist}(G,\gamma,d).
\end{equation}
%Based on the parameter $k$, we define the notion of \emph{strong stability}. 
In the PAM model with two degree classes, a family $\mathcal{D}$ of graphical tuples $(\gamma,d)$ is called %\emph{strongly stable}\footnote{This is abuse of notation with respect to the notion of strong stability in Section \ref{sec:degree_sequence}, but both definitions are conceptually similar.} 
if there exists a constant $k$ such that $k(\gamma,d) \leq k$ for all $(\gamma,d) \in \mathcal{D}$. %The simple observation in Proposition \ref{prop:strong_stable} justifies the terminology.

\begin{theorem} %\yarem{Make poly(n) in the theorem explicit? \pkblue{Did $n^3$.}}
Let $\mathcal{D}$ be a family of graphical tuples that is strongly stable with respect to some constant $k$. Then for every $(\gamma,d) \in \mathcal{D}$, the chain $\mathcal{M}(\gamma,d)$ is irreducible, aperiodic and symmetric, and, hence, has uniform stationary distribution over $\mathcal{G}'(\gamma,d)$. Moreover, $P(G,G')^{-1} \leq n^3$ for all adjacent $G,G' \in \mathcal{G}'(\gamma,d)$, and also 
the maximum in- and out-degrees of the state space graph of the chain $\mathcal{M}(\gamma,d)$ are bounded by $n^3$.\footnote{It might be the case that the chain is always irreducible, even if $\mathcal{D}$ is not strongly stable, but this is not relevant at this point. The assumption of strong stability allows for a shortcut in the proof of irreducibility.}
\end{theorem}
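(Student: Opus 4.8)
The plan is to establish the four claimed properties of $\mathcal{M}(\gamma,d)$ in turn, with the bulk of the work going into irreducibility; aperiodicity, symmetry, and the transition-probability/degree bounds are essentially bookkeeping.

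\textbf{Symmetry and the numerical bounds.} I would first observe that a single hinge flip is determined by an ordered triple $(i,j,k)$ with $\{i,j\}\in E(G)$ and $\{j,k\}\notin E(G)$: the move deletes $\{i,j\}$ and adds $\{j,k\}$. Its inverse is exactly the hinge flip for the ordered triple $(k,j,i)$ applied to $G' = G - \{i,j\} + \{j,k\}$, and since triples are chosen uniformly among all $n(n-1)(n-2)$ ordered triples, $P(G,G') = P(G',G)$, so the chain is symmetric; hence the stationary distribution is uniform on $\mathcal{G}'(\gamma,d)$ once we know the chain is irreducible. For the numerical bounds, each non-lazy transition has probability at least $\tfrac12\cdot\tfrac1{n(n-1)(n-2)} \ge \tfrac1{n^3}$ (the laziness contributes the factor $\tfrac12$, and the acceptance is deterministic given that $G - \{i,j\} + \{j,k\}\in\mathcal{G}'(\gamma,d)$), giving $P(G,G')^{-1}\le n^3$; and since every out-neighbor of $G$ arises from one of at most $n(n-1)(n-2) \le n^3$ ordered triples, the out-degree — and by symmetry the in-degree — in the state-space graph is at most $n^3$.

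\textbf{Aperiodicity.} This is immediate from laziness: $P(G,G)\ge\tfrac12>0$ for every $G\in\mathcal{G}'(\gamma,d)$, so the chain has a self-loop at every state and is aperiodic (given irreducibility).

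\textbf{Irreducibility — the main obstacle.} This is the only substantive step. The goal is to show that the state-space graph of $\mathcal{M}(\gamma,d)$ is connected. The natural route is: (1) $\mathcal{G}(\gamma,d)$ is connected \emph{within itself} under the restricted switch chain, by the irreducibility result of Amanatidis et al.\ and Czabarka et al.\ \cite{Amanatidis2015,CzabarkaDEM15} cited in Subsection~\ref{sec:jdm_model}; (2) each restricted switch move on $\mathcal{G}(\gamma,d)$ can be simulated by a short sequence of hinge flips staying inside $\mathcal{G}'(\gamma,d)$ — a switch replacing $\{a,b\},\{x,y\}$ by $\{a,y\},\{x,b\}$ can be realized as ``delete $\{a,b\}$, add $\{a,y\}$'' (one hinge flip, through an intermediate state with two deficit-one nodes, which lies in $\mathcal{G}'(\gamma,d)$ since the perturbation and the $c_{12}$-deviation are within the allowed range), followed by ``delete $\{x,y\}$, add $\{x,b\}$'' (a second hinge flip that returns to $\mathcal{G}(\gamma,d)$); and (3) every perturbed state $G\in\mathcal{G}'(\gamma,d)\setminus\mathcal{G}(\gamma,d)$ is connected by a path in the hinge-flip state-space graph to some element of $\mathcal{G}(\gamma,d)$ — and this is precisely where strong stability is invoked, since by definition $\mathrm{dist}(G,\gamma,d)\le k(\gamma,d)\le k<\infty$, so such a path of length at most $k$ exists. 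Combining (1)–(3): from any $G\in\mathcal{G}'(\gamma,d)$ reach some $H\in\mathcal{G}(\gamma,d)$, then move freely within $\mathcal{G}(\gamma,d)$ (simulating switches by hinge flips), and finally, if the target is itself a perturbed state, run the corresponding length-$\le k$ path backwards. The one point requiring care is step~(2): one must check that the intermediate single-edge-modified graph indeed satisfies conditions (i)–(iii) defining $\mathcal{G}'(\gamma,d)$ — adding or deleting one edge changes two degrees by $1$ (so $\sum|d_i-d_i'|\le 2$) and changes $c_{12}$ by at most $1$, which is within the allowed slack — and that the endpoints are chosen so the edge being added is not already present and the edge being deleted is. The use of the footnote's remark is that without strong stability one does not have the bound in step~(3) ready-made, which is why the hypothesis gives ``a shortcut'': strong stability hands us connectivity of the perturbed states to $\mathcal{G}(\gamma,d)$ for free, bypassing a direct combinatorial argument.
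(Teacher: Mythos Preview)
Your treatment of symmetry, aperiodicity, and the numerical bounds is correct and essentially matches the paper's one-line dismissal of these points. One small slip: after the first hinge flip in your step~(2), the intermediate graph has one node with surplus one and one with deficit one (not ``two deficit-one nodes''), but this does not affect membership in $\mathcal{G}'(\gamma,d)$, so the simulation of a switch by two hinge flips is fine.

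The genuine gap is in step~(1) of your irreducibility argument. The theorem is stated for arbitrary strongly stable families of PAM tuples with two classes, not only for JDM instances. The irreducibility result of \cite{Amanatidis2015,CzabarkaDEM15} that you invoke applies only to the JDM case (regular degree classes); for general PAM instances the restricted switch chain is \emph{not} always irreducible --- the paper itself points this out later, citing \cite{ErdosHIM2017}. So you cannot conclude that $\mathcal{G}(\gamma,d)$ is switch-connected for an arbitrary strongly stable $(\gamma,d)$, and your reduction to switch-chain irreducibility breaks down exactly at that point.

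The paper avoids this by not going through the switch chain at all: it forward-references the canonical-path (``landscape processing'') construction developed in the remainder of the section, which explicitly produces a hinge-flip path inside $\mathcal{G}'(\gamma,d)$ between any two elements of $\mathcal{G}(\gamma,d)$. That construction works for any graphical PAM tuple with two classes, independently of whether the switch chain is irreducible. Combined with strong stability (connecting perturbed states to $\mathcal{G}(\gamma,d)$, as in your step~(3)), this yields irreducibility of $\mathcal{M}(\gamma,d)$. Your argument would be valid as written for the JDM family specifically, but not at the stated generality.
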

\begin{proof}
The only claim that requires a detailed argument, and uses the assumption of strong stability, is that of the irreducibility of the chain. By definition of strong stability, we always know that every perturbed state is connected to some element in $\mathcal{G}(\gamma,d)$ so it suffices to show that there is a path between any two states in $\mathcal{G}(\gamma,d)$. This follows from the analysis in the remainder of this section. Aperiodicity follows from the holding probability in the description of the chain $\mathcal{M}$, and symmetry is straightforward. The bound on $P(G,G^{-1})$ follows directly from the description of the chain, as well as the bound on the in- and out-degrees of the state space graph. 
\end{proof}

The remainder of this section is devoted to proving Theorem \ref{thm:auxiliary} based on ideas introduced in \cite{Bhatnagar2008}. Throughout this section we always consider tuples $(\gamma,d)$ coming from strongly stable families.

\begin{theorem}\label{thm:auxiliary}
Let $\mathcal{D}$ be a strongly stable family of tuples $(\gamma,d)$ with respect to some constant $k$. Then there exist polynomials $p(n), r(n)$ such that for any $(\gamma,d) \in \mathcal{D}$, with $d = (d_1,\dots,d_n)$, there exists an efficient multicommodity flow $f$ for the auxiliary chain $\mathcal{M}(\gamma,d)$ on $\mathcal{G}'(\gamma,d)$   	
%\yarem{Make poly(n) explicit.\pkblue{This okay? For actual bound the proof in appendix has to be made explicit.}}
satisfying $\max_e f(e) \leq p(n)/ |\mathcal{G}'(\gamma,d)|$  and  $\ell(f) \leq r(n)$. Hence, the chain $\mathcal{M}(\gamma,d)$ is rapidly mixing for families of strongly stable tuples.
\end{theorem}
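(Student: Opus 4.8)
The plan is to follow the same four-part template used for the JS chain in Appendix \ref{app:js} (canonical path description, circuit processing, encoding, injective mapping argument), but with the modifications of Bhatnagar et al.~\cite{Bhatnagar2008} needed to handle the extra constraint $c_{12}' \in \{c_{12}-1, c_{12}, c_{12}+1\}$. By an analog of Lemma \ref{lem:flow_simplification} (which holds here because, as in the proof of Theorem \ref{thm:transformation}, one only ever needs positive flow between true realizations in $\mathcal{G}(\gamma,d)$), it suffices to construct a flow $f'$ routing $1/|\mathcal{G}'(\gamma,d)|^2$ units between each pair of states in $\mathcal{G}(\gamma,d)$ with $f'(e) \le \mathrm{poly}(n)/|\mathcal{G}'(\gamma,d)|$ and $\ell(f') \le \mathrm{poly}(n)$. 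As before, $\ln|\mathcal{G}'(\gamma,d)|$ is polynomially bounded via the counting bound $|\mathcal{G}(d')| \le n^{n^2}$ of McKay--Wormald and the definition of $\mathcal{G}'(\gamma,d)$, and $P(G,G')^{-1} \le n^3$ for adjacent states.

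First I would set up, for each ordered pair $(G,G') \in \mathcal{G}(\gamma,d)^2$, the symmetric difference $H = G \triangle G'$ with its red/blue coloring, the notion of a pairing $\psi \in \Psi(G,G')$, and the canonical decomposition of $H$ into edge-disjoint alternating circuits $C_1 \prec_{\mathcal{C}} \dots \prec_{\mathcal{C}} C_s$ determined by $\psi$ and the edge order $\prec_E$. The key point is that a hinge flip changes exactly one edge, so unlike a switch it cannot in general trade a blue edge for a red edge while staying in $\mathcal{G}'(\gamma,d)$: deleting a blue edge first and adding a red one after creates an intermediate state, and since $G, G' \in \mathcal{G}(\gamma,d)$ both respect $c_{12}$, the intermediate perturbation of $c_{12}$ must stay within $\{\gamma-1,\gamma,\gamma+1\}$ along the whole path. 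This is precisely where processing a single long circuit edge-by-edge can fail — a long run of edges all crossing between $V_1$ and $V_2$ would push the $c_{12}$-perturbation out of range. The fix, following \cite{Bhatnagar2008}, is to process a circuit \emph{at several places simultaneously} (when the canonical decomposition yields a single circuit) or to process \emph{several circuits simultaneously} (when there are many), so that crossing edges added and deleted roughly cancel and the $c_{12}$-deviation never exceeds $1$; here the \emph{mountain-climbing lemma} (Lemma \ref{lem:mountain}) is what guarantees the existence of a synchronized schedule keeping all partial perturbations bounded. Since this part is deferred to the full appendix I would only describe the schedule and state that every intermediate state lies in $\mathcal{G}'(\gamma,d)$, invoking Proposition \ref{prop:basic_properties} to see that the bounded deviations in degrees and in $c_{12}$ are exactly what the definition of $\mathcal{G}'(\gamma,d)$ permits.

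Next I would define, for a transition $t = (Z, Z')$ of $\mathcal{M}$, an encoding $L_t(G,G')$ of the same flavor as in Appendix \ref{app:js}, essentially $H \triangle (Z \cup Z')$ corrected by the currently-processed ``active'' edges (of which there are now $O(1)$ rather than one, because of the simultaneous processing), and prove the recovery lemma: from $t$, $L_t(G,G')$ and $\psi$ one can uniquely reconstruct $G$ and $G'$ — using $\prec_{\mathcal{C}}$ to tell which circuits are done, $t$ to locate the active positions, and $\psi$ to re-thread the remaining circuits. The crucial structural observation, exactly as before, is that each encoding $L_t(G,G')$ is itself an element of $\mathcal{G}'(\gamma,d)$ (the degree and $c_{12}$ deviations of the encoding are within the allowed bounds, again by Proposition \ref{prop:basic_properties}), so $|\mathcal{L}_t| \le |\mathcal{G}'(\gamma,d)|$; and the number of pairings $\psi'$ consistent with a given encoding is at most $\mathrm{poly}(n) \cdot |\Psi(H)|$ by the same factorial-counting argument as in \eqref{eq:injective2}. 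The injective map $(G,G',\psi) \mapsto (L_t(G,G'), \psi')$ then yields $|\mathcal{G}'(\gamma,d)|^2 f'(e) \le \mathrm{poly}(n) \cdot |\mathcal{G}'(\gamma,d)|$, i.e. $f'(e) \le \mathrm{poly}(n)/|\mathcal{G}'(\gamma,d)|$, and $\ell(f') \le O(|E(G \triangle G')|) \le n^2$, completing the argument via Lemma \ref{lem:flow_simplification}. The main obstacle is genuinely the circuit-processing step: designing the simultaneous schedule and proving, via the mountain-climbing lemma, that it keeps \emph{both} the degree perturbations in $\{-2,\dots,2\}$ \emph{and} the $c_{12}$-perturbation in $\{-1,0,1\}$ at every intermediate step, while still keeping the encoding invertible with only $O(1)$ active edges to record. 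Once that schedule is in hand, the encoding and injective-mapping parts are close to mechanical adaptations of Appendix \ref{app:js}.
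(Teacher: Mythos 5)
Your overall plan matches the paper's: reduce to flow between true realizations via an analog of Lemma~\ref{lem:flow_simplification} (this is exactly the paper's Lemma~\ref{lem:flow_simplification_jdm}), decompose the symmetric difference into circuits via pairings, schedule the hinge flips by a mountain-climbing traversal so the cut-edge count stays within~$\pm 1$ of~$\gamma$, and bound congestion by an encoding-and-counting argument. There is, however, a genuine gap in the encoding step. You assert that from~$t$, $L_t(G,G')$ and~$\psi$ one can \emph{uniquely} reconstruct~$(G,G')$, by analogy with Lemma~\ref{lem:recovery} in Appendix~\ref{app:js}, and that this only requires recording ``$O(1)$ active edges.'' That claim fails here. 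The JS recovery was unique because processing is strictly one-directional circuit-by-circuit; the simultaneous unwind/rewind schedule forced by the traversal destroys this. Concretely, from $t$ and $L$ alone one cannot tell whether the segment containing~$t$ is currently being unwound or rewound, nor where the boundary between the two simultaneously active positions on the mountain lies. The paper's Lemma~\ref{lem:recovery_jdm} therefore establishes only that at most~$\frac{1}{8}n^4$ pairs~$(G,G')$ are consistent with a given~$(t,L,\psi)$ -- one factor~$\frac{1}{4}\binom{n}{2}$ for each of the two unknown boundary circuits~$T_\psi$ and~$\sigma_\psi$, times a factor~$8$ for the binary edge-configuration choices on the three ambiguous circuits~$\Gamma$,~$T_\psi$,~$\sigma_\psi$. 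This polynomial loss is then absorbed into the final congestion bound in~\eqref{eq:injective3}; fixing ``recover uniquely'' to ``recover up to $\mathrm{poly}(n)$ candidates'' is not a cosmetic change, it changes the constant in the exponent of the mixing time.

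A smaller but related point: the paper's encoding is simply~$L_\tau(G,G') = (G\triangle G')\triangle Z$, with no correction for active edges (unlike the JS case, where~$e_{H,t}$ was subtracted). It is precisely this symmetric definition that makes~$L$ itself an element of~$\mathcal{G}'(\gamma,d)$ (the degree and cut-edge perturbations of~$L$ are the mirror images of those of~$Z$), which gives~$|\mathcal{L}_\tau|\le|\mathcal{G}'(\gamma,d)|$. Your proposed encoding~$H\triangle(Z\cup Z')$ with active edges removed is a different graph, and it is not obvious it lies in~$\mathcal{G}'(\gamma,d)$ -- so even granting unique recovery you would still need to re-establish the cardinality bound on~$\mathcal{L}_\tau$. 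Finally, your estimate~$\ell(f')\le n^2$ is optimistic: the traversal of Lemma~\ref{lem:mountain} has length~$O((t-a)(b-t))$ per mountain, so the correct statement (as in Lemma~\ref{lem:feasible_path}) is just that~$\ell(f')$ is polynomially bounded, without that specific degree.
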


%\begin{corollary}
%There exists a \emph{fully polynomial randomized approximate sampler (fpras)} for uniformly sampling graphical realizations in $\mathcal{G}(\gamma,d)$ when $(\gamma,d)$ comes from a family of strongly stable tuples $\mathcal{D}$.
%\end{corollary}

We will use the following lemma in order to simplify the proof of Theorem \ref{thm:auxiliary}. It is the analog of Lemma \ref{lem:flow_simplification} in Subsection \ref{app:js}. %It is  similar to an idea used in \cite{Jerrum1989} for the analysis of a Markov chain that can be used to sample perfect matchings in a given dense graph. %in order to restrict ourselves to establishing flow between states in $\mathcal{G}(d)$, rather than between all states in $\mathcal{G}'(d)$. 

\begin{lemma}\label{lem:flow_simplification_jdm}
Let $f'$ be a flow that routes $1/|\mathcal{G}'(\gamma,d)|^2$ units of flow between any pair of states in $\mathcal{G}(\gamma,d)$ in the chain $\mathcal{M}(\gamma,d)$, so that $f'(e) \leq b/|\mathcal{G}'(\gamma,d)|$ for all $e$ in the state space graph of $\mathcal{M}(\gamma,d)$. Then $f'$ can be extended to a flow $f$ that routes $1/|\mathcal{G}'(\gamma,d)|^2$ units of flow between any pair of states in $\mathcal{G}'(\gamma,d)$ with the property that for all $e$
	\begin{equation}\label{eq:extending}
		f(e) \leq q(n) \frac{b}{|\mathcal{G}'(\gamma,d)|} \,,
	\end{equation}
	where $q(\cdot)$ is a polynomial whose degree only depends on $k(\gamma,d)\ (\leq k)$. Moreover, $\ell(f) \le \ell(f') + 2k(\gamma,d)$. %\footnote{We omit the proof of Lemma \ref{lem:flow_simplification} as the lemma is actually not needed. Careful consideration of the proof of Theorem \ref{thm:transformation} shows that we can only focus on flow between states in $\mathcal{G}(d)$, since the flow $h$ given in the proof of Theorem \ref{thm:transformation} only has positive flow between states corresponding to elements in $\mathcal{G}(d)$. That is, when defining the flow $h$, we essentially forget about all flow in $f$ between any pair of states where at least one state is an auxiliary state, i.e., an element of $\mathcal{G}'(d) \mysetminus \mathcal{G}(d)$. Said differently, in Theorem \ref{thm:transformation} we could start with the assumption that $f$ routes $1/|\mathcal{G}'(d)|^2$ units of flow between any pair of states in $\mathcal{G}(d)$ in the state space graph of the JS chain, and then the transformation still works. However, the formulations of Theorems \ref{thm:js_mixing} and \ref{thm:transformation} are more natural to describe a comparison between the JS and switch chain.}
\end{lemma}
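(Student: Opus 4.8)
The plan is to mimic the proof of Lemma~\ref{lem:flow_simplification} from Appendix~\ref{app:js} (whose own proof was omitted there, but whose structure is standard), adapting it to the hinge flip chain $\mathcal{M}(\gamma,d)$ on the larger state space $\mathcal{G}'(\gamma,d)$. First I would fix $(\gamma,d)$ from the strongly stable family and write $k^* = k(\gamma,d) \le k$. For every perturbed state $G \in \mathcal{G}'(\gamma,d)\setminus\mathcal{G}(\gamma,d)$, strong stability gives a graph $\varphi(G) \in \mathcal{G}(\gamma,d)$ within distance $k^*$ of $G$ in the state space graph of $\mathcal{M}(\gamma,d)$; set $\varphi(G)=G$ for $G\in\mathcal{G}(\gamma,d)$, and fix a shortest such path $\pi_G$ of length at most $k^*$ from $G$ to $\varphi(G)$ (and its reverse from $\varphi(G)$ to $G$). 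Since the in- and out-degrees of the state space graph of $\mathcal{M}(\gamma,d)$ are bounded by $n^3$, the number of length-$\le k^*$ paths ending at any fixed $H\in\mathcal{G}(\gamma,d)$ is at most $(n^3)^{k^*}$, so $|\varphi^{-1}(H)| \le n^{3k^*}$ for all $H$, and in particular $|\mathcal{G}'(\gamma,d)|/|\mathcal{G}(\gamma,d)| \le n^{3k^*}$.

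Next I would extend $f'$ to $f$ by routing, for each ordered pair $(x,y)\in\mathcal{G}'(\gamma,d)^2$ with $x\ne y$, the required $1/|\mathcal{G}'(\gamma,d)|^2$ units of flow along the concatenation $\pi_x \cdot p \cdot \overline{\pi_y}$, where $p$ ranges over the $f'$-flow-carrying paths from $\varphi(x)$ to $\varphi(y)$, weighting the concatenated path by $f'(p)\,|\mathcal{G}'(\gamma,d)|^2 / |\mathcal{G}(\gamma,d)|^2$ times $1/|\mathcal{G}'(\gamma,d)|^2$ (so that the total is $1/|\mathcal{G}'(\gamma,d)|^2$, using $\sum_p f'(p) = 1/|\mathcal{G}'(\gamma,d)|^2$); if $\varphi(x)=\varphi(y)$ one routes along $\pi_x\cdot\overline{\pi_y}$ directly. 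After this, I would delete any loops and cycles from each resulting walk to make it a simple path, which only decreases edge loads. The length bound $\ell(f)\le\ell(f')+2k^*$ is then immediate, since we prepend and append paths of length at most $k^*$.

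For the congestion bound, I would estimate $f(e)$ for a fixed edge $e$ of the state space graph by splitting the flow through $e$ into (a) the ``middle'' part, coming from some $p$-segment, and (b) the two ``prefix/suffix'' parts, coming from some $\pi_x$ or $\overline{\pi_y}$. For (a): the total middle flow through $e$ equals $\sum_{(x,y)} \frac{1}{|\mathcal{G}(\gamma,d)|^2}\sum_{p\ni e, p\in\mathcal{P}_{\varphi(x)\varphi(y)}} f'(p)|\mathcal{G}'(\gamma,d)|^2 \cdot \frac{1}{|\mathcal{G}'(\gamma,d)|^2}$; grouping $(x,y)$ by $(\varphi(x),\varphi(y))=(H,H')$ introduces a factor $|\varphi^{-1}(H)|\cdot|\varphi^{-1}(H')| \le n^{6k^*}$, and $\sum_{p\ni e} f'(p) = f'(e) \le b/|\mathcal{G}'(\gamma,d)|$, so this contributes at most $n^{6k^*}\cdot(|\mathcal{G}'(\gamma,d)|/|\mathcal{G}(\gamma,d)|)\cdot b/|\mathcal{G}'(\gamma,d)| \le n^{9k^*} b/|\mathcal{G}'(\gamma,d)|$. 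For (b): each edge $e$ lies on the fixed prefix path $\pi_x$ for at most $(n^3)^{k^*}$ many $x$ (those within distance $k^*$ of an endpoint of $e$), and for each such $x$ the prefix carries at most $\frac{1}{|\mathcal{G}(\gamma,d)|^2}\sum_{y}\sum_{p} f'(p)|\mathcal{G}'(\gamma,d)|^2\cdot\frac{1}{|\mathcal{G}'(\gamma,d)|^2} \le 1/|\mathcal{G}(\gamma,d)|$ units (summing over all $p$ out of $\varphi(x)$), giving at most $n^{3k^*}/|\mathcal{G}(\gamma,d)| \le n^{6k^*}/|\mathcal{G}'(\gamma,d)|$; the suffix part is symmetric. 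Adding the three contributions yields $f(e) \le q(n)\, b/|\mathcal{G}'(\gamma,d)|$ with $q(n)$ a polynomial of degree $O(k^*)$, which is exactly \eqref{eq:extending}. The main obstacle here is purely bookkeeping: making sure the normalizing factor $|\mathcal{G}'(\gamma,d)|^2/|\mathcal{G}(\gamma,d)|^2$ is correctly propagated and bounded via $|\mathcal{G}'(\gamma,d)|/|\mathcal{G}(\gamma,d)| \le n^{3k^*}$ in every term, and being careful that the prefix/suffix flow is counted correctly when $\varphi(x)=\varphi(y)$ — but conceptually this is a routine adaptation of the embedding argument already used in the proof of Theorem~\ref{thm:transformation}, specialized to the case where the source chain and target chain share the same transitions (so no rerouting over ``illegal'' edges is needed, only the prepending/appending of short correction paths).
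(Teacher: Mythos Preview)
Your approach is essentially identical to the paper's: fix $\varphi(G)\in\mathcal{G}(\gamma,d)$ within distance $k^*$ for each perturbed state, fix a short path $\pi_G$ to it, route every pair $(x,y)$ via $\pi_x\cdot p\cdot\overline{\pi_y}$ using the $f'$-paths $p$ between $\varphi(x)$ and $\varphi(y)$, and bound congestion via $|\varphi^{-1}(H)|\le n^{3k^*}$. The paper gives only a sketch and leaves the congestion computation to the reader, so your split into ``middle'' and ``prefix/suffix'' contributions is in fact more detailed than what appears there.

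One genuine bookkeeping slip: you should drop the renormalization factor $|\mathcal{G}'(\gamma,d)|^2/|\mathcal{G}(\gamma,d)|^2$. That factor was needed in Theorem~\ref{thm:transformation} because there the target chain lives on the strictly smaller space $\mathcal{G}(d)$ with a different uniform distribution; here both $f'$ and $f$ are flows on the \emph{same} state space $\mathcal{G}'(\gamma,d)$ with the same $1/|\mathcal{G}'(\gamma,d)|^2$ demand, so the correct weight on $\pi_x\cdot p\cdot\overline{\pi_y}$ is simply $f'(p)$, which already sums to $1/|\mathcal{G}'(\gamma,d)|^2$ over $p\in\mathcal{P}_{\varphi(x)\varphi(y)}$. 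With your extra factor the total flow you route between $x$ and $y$ comes out to $1/(|\mathcal{G}'(\gamma,d)|^2|\mathcal{G}(\gamma,d)|^2)$ rather than $1/|\mathcal{G}'(\gamma,d)|^2$. Once you remove that factor, your middle-part bound becomes $f(e)_{\text{mid}}\le n^{6k^*} f'(e)\le n^{6k^*} b/|\mathcal{G}'(\gamma,d)|$ and the prefix/suffix bounds simplify similarly; the conclusion is unchanged.
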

%\pkrem{Do you consider this sketch sufficient? Formalizing everything is a bit tedious...}
\begin{proof}[Proof (sketch)]
We extend the flow $f'$ to $f$ as follows. For any $G \in \mathcal{G}'(\gamma,d) \setminus \mathcal{G}(\gamma,d)$ fix some $\phi(G) \in \mathcal{G}(\gamma,d)$ within distance $k$ of $G$ (which exists by assumption of strong stability), and fix some path in the state space graph from $G$ to $\phi(G)$. Moreover, define $\phi(H) = H$ for all $H \in \mathcal{G}(\gamma,d)$. The flow between $G$ and any given $G' \in \mathcal{G}'(\gamma,d)$ is now send as follows. 

First route $1/|\mathcal{G}'(\gamma,d)|^2$ units of flow from $G$ to $\phi(G)$ over the fixed path from $G$ to $\phi(G)$. Then use the flow-carrying paths used to send $1/|\mathcal{G}'(\gamma,d)|^2$ units of flow  between $\phi(G)$ and $\phi(G')$ as in the flow $f'$ (note that in general multiple paths might be used for this in the flow $f'$). Finally, use the reverse of the fixed path from $G'$ to $\phi(G')$ to route $1/|\mathcal{G}'(\gamma,d)|^2$ from $\phi(G')$ to $G'$.  For any $H \in \mathcal{G}(\gamma,d)$, we have $|\phi^{-1}(H)| \leq \text{poly}(n^k)$, as the in- and out-degrees of the nodes in the state space graph of $\mathcal{M}(\gamma,d)$ are polynomially bounded. It can then be shown (left to the reader) that this extension of $f'$, yielding the flow $f$, only gives an additional term of at most $\text{poly}(n^k) \frac{b}{|\mathcal{G}'(\gamma,d)|}$ to the congestion of every arc in the state space graph of the chain $\mathcal{M}(\gamma,d)$ in the flow $f'$. Hence, the extended flow $f$  satisfies (\ref{eq:extending}) for some appropriately chosen polynomial $q(n)$.
\end{proof}

\subsection{Proof of Theorem \ref{thm:auxiliary}}
Because of Lemma \ref{lem:flow_simplification_jdm} it now suffices to show that there exists a flow $f'$  that routes $1/|\mathcal{G}'(\gamma,d)|^2$ units of flow between any two pair of states in $\mathcal{G}(\gamma,d)$, in the state space graph of the chain $\mathcal{M}(\gamma,d)$, with the property that $f'(e) \leq p(n)/|\mathcal{G}'(\gamma,d)|$, and $\ell(f') \le q(n)$ for some polynomials $p(\cdot), q(\cdot)$ whose degrees may only depend on $k(\gamma,d)$. Note that $f'$ is not a feasible multi-commodity flow as defined in Section \ref{sec:preliminaries}, but should rather be interpreted as an intermediate auxiliary flow.
The proof of Theorem \ref{thm:auxiliary} will consist of multiple parts following, conceptually, the proof template in \cite{Cooper2007} developed for proving rapid mixing of the switch chain for regular graphs. The main difference is that for the so-called \emph{canonical paths} between states we rely on ideas introduced in \cite{Bhatnagar2008}. \medskip

\subsubsection{Canonical Paths}
We first introduce some basic terminology similar to that in \cite{Cooper2007}. Let  $V$ be a set of labeled vertices, let $\prec_{E}$ be a fixed total order on the set $\{ \{v,w\} : v,w \in V\}$ of edges, and let $\prec_{\mathcal{C}}$ be a total order on all \emph{circuits} on the complete graph $K_V$, i.e., $\prec_{\mathcal{C}}$ is  a total order on the closed walks in $K_V$ that visit every edge at most once. We fix for every circuit one of its vertices where the walk begins and ends.%\footnote{In \cite{Amanatidis2018} this is a bit imprecise as we only specify a starting point, but in the circuit processing procedure we never specify with with which blue edge around the starting point we start processing the cycle (one can just take the lexicographically smallest blue edge around the starting point for this).}

For given $G,G \in \mathcal{G}(\gamma,d)$, let $H = G \triangle G'$ be their symmetric difference. We refer to the edges in $G \mysetminus G'$ as \emph{blue}, and the edges in $G' \mysetminus G$ as \emph{red}.
A \emph{pairing} of red and blue edges in $H$ is a bijective mapping that, for each node $v \in V$, maps every red edge adjacent to $v$, to a blue edge adjacent to $v$. The set of all pairings is denoted by  $\Psi(G,G')$, and, with $\theta_v$ the number of red edges adjacent to $v$ (which is the same as the number of blue edges adjacent to $v$), we have 
$
|\Psi(G,G')| = \displaystyle{\Pi_{v \in V}} \theta_v!.
$ 

Remember that we are considering an instance of the PAM problem with two classes $V_1$ and $V_2$. For a given graphical realization $G \in \mathcal{G}(\gamma,d)$ we say that $e \in E(G)$ is a \emph{cut} edge if it has an endpoint in both $V_1$ and $V_2$. Otherwise we say that $e$ is an \emph{internal} edge, as both endpoints either lie both in the class $V_1$ or both in class $V_2$.

Similar to the approach in \cite{Cooper2007}, the goal is to construct for each pairing $\psi \in \Psi(G,G')$ a canonical path from $G$ to $G'$ that carries a fraction $|\Psi(G,G')|^{-1}$ of the total flow from $G$ to $G'$ in $f'$. For a given pairing $\psi$ and the total order $\prec_E$ given above, we first decompose $H$ into the edge-disjoint union of circuits in a canonical way. We start with the lexicographically least edge $w_0w_1$ in $E_H$ and follow the pairing $\psi$ until we reach the edge $w_kw_0$ that was paired with $w_0w_1$. This defines the circuit $C_1$ (which is indeed a closed walk). If $C_1 = E_H$, we are done. Otherwise, we pick the lexicographically least edge in $H \mysetminus C_1$ and repeat this procedure. We continue generating circuits until $E_H = C_1 \cup \dots \cup C_s$. Note that all circuits have even length and alternate between red and blue edges, and that they are pairwise edge-disjoint. 

We form a path from $G$ to $G'$ in the state space graph of the chain $\mathcal{M}(\gamma,d)$ by changing the blue edges of $G$ into the red edges of $G'$ using hinge flip operations. For certain parings this can be done in a straightforward way, but in general this is not the case. As a warm-up, we first consider a simple case (this case essentially describes how we would process the circuits in case there is only one class).

\paragraph{Warm-up Example.}
%We start explaining this processing step for a special case of the symmetric difference (and given pairing). 
If for every $i$, the circuit $C_i$ exclusively consist of internal edges, only within $V_1$ or only within $V_2$, or exclusively of cut edges, circuits can be processed according to the ordering $\prec_{\mathcal{C}}$ as follows.
%according to the total order $\prec_{\mathcal{C}}$.\footnote{The total order $\prec_{C}$ can be chosen in such a way that we always process circuit $C_i$ before $C_j$ if $i \leq j$ for any $G$ and $G'$, but this is not necessary.} The \emph{processing of a circuit $C$} means that all blue edges on $C$ are deleted, and all red edges of $C$ are added to the current graphical realization, using the three types of transitions in the JS chain mentioned at the beginning of this section. All other edges of the current graphical realization remain unchanged. In general, this can be done similar to the circuit processing procedure in \cite{Jerrum1989}.\footnote{This is the main difference between the switch chain analyses \cite{Cooper2007,Greenhill2017journal,Erdos2013,Erdos2015,Erdos2015decomposition,ErdosMMS2018} and our analysis. The processing of a circuit is much more complicated if performed directly in the switch chain.} 
Let $C = x_0x_1x_2\dots x_qx_0$ be a circuit, and assume w.l.o.g. that $x_0x_1$ is the lexicographically smallest blue edge adjacent to the starting node $x_0$ of the circuit. The processing of $C$ now consists of performing a sequence of hinge flips on the ordered pairs $(x_{i-1},x_i,x_{i+1})$ for $i = 1,\dots,q$ with the convention that $x_{q+1} = x_0$. 
%We first perform a type 0 transition in which we remove the blue edge $vx_1$; then we perform a sequence of type 1 transitions in which we add the red edge $x_ix_{i+1}$ and remove the blue edge $x_{i-1}x_{i}$ for $i = 1,3,\dots,q$; and finally we perform a type 2 transition in which we add the red edge $vx_q$. In particular, this means that the elements on the canonical path right before and after the processing of a circuit belong to $\mathcal{G}(d)$. All the intermediate elements that we visit during the processing of the circuit $C$ belong to $\mathcal{G}'(d) \mysetminus \mathcal{G}(d)$, i.e., every element has either precisely two nodes with degree deficit one, or one node with degree deficit two. 
This is illustrated in Figures \ref{fig:circuit_step1}, \ref{fig:step2_3} and \ref{fig:step4_5} for an example of $C$ as illustrated in Figure \ref{fig:circuit_step1} on the left.\footnote{This is similar to the procedure described in Figures \ref{fig:step1_2}, \ref{fig:step3_4} and \ref{fig:step5_6} for the JS chain (we give an example here as well for the sake of completeness).} We have also indicated the degree surplus and deficit at every step. By assumption, the edges of $C$ either are all internal edges, or all cut edges. Therefore, throughout the processing of $C$, we never violate the constraint that there should be $\gamma$ edges between the classes $V_1$ and $V_2$, and, in particular, this implies that every intermediate state is an element of $\mathcal{G}'(\gamma,d)$.

\begin{figure}[ht!]
\centering
\scalebox{0.8}{
\begin{tikzpicture}[
  ->,
  >=stealth',
  shorten >=1pt,
  auto,
  %node distance=2cm,
  semithick,
  every state/.style={circle,radius=0.1pt,text=black},
]
\begin{scope}
  \node[state]  (v)               					 		{$x_0/x_3$};
  \node[state]  (x1) [below left=0.7 and 1cm of v] 	 		{$x_1$};
  \node[state]  (x2) [above left=0.7 and 1cm of v]  			 		{$x_2$};
  \node[state]  (x4) [above right=0.7 and 1cm of v]					{$x_4$};
  \node[state]  (x5) [below right=0.7 and 1cm of x4] 			{$x_5/x_8$};
  \node[state]  (x6) [above right=0.7 and 1cm of x5] 			{$x_6$};
  \node[state]  (x7) [below right=0.7 and 1cm of x5] 	{$x_7$};
  \node[state]  (x9) [below right=0.7 and 1cm of v]				{$x_9$};       

\path[every node/.style={sloped,anchor=south,auto=false}]
%(v) edge[line width=1.5pt, bend left=25] 	node {} (x1)        
(v) edge[-, line width=2pt] 	node {} (x1)            
(x1) edge[-,dashed,line width=2pt] 	node {} (x2) 
(x2) edge[-,line width=2pt] 	node {} (v) 
(v) edge[-,dashed,line width=2pt] 	node {} (x4) 
(x4) edge[-,line width=2pt] 	node {} (x5) 
(x5) edge[-,dashed,line width=2pt] 	node {} (x6) 
(x6) edge[-,line width=2pt] 	node {} (x7) 
(x7) edge[-,dashed,line width=2pt] 	node {} (x5) 
(x5) edge[-,line width=2pt] 	node {} (x9) 
(x9) edge[-,dashed,line width=2pt] 	node {} (v);
        
\end{scope}
\end{tikzpicture}}
 \ \ \ \ \ \ \ \ \ \quad \ \ \ \ \ \ \ \ \ 
\scalebox{0.8}{
\begin{tikzpicture}[
  ->,
  >=stealth',
  shorten >=1pt,
  auto,
  %node distance=2cm,
  semithick,
  every state/.style={circle,radius=0.1pt,text=black},
]
\begin{scope}
  \node[state]  (v)               					 		{$x_0/x_3$};
  \node[state]  (x1) [below left=0.7 and 1cm of v] 	 		{$x_1$};
  \node[state]  (x2) [above left=0.7 and 1cm of v]  			 		{$x_2$};
  \node[state]  (x4) [above right=0.7 and 1cm of v]					{$x_4$};
  \node[state]  (x5) [below right=0.7 and 1cm of x4] 			{$x_5/x_8$};
  \node[state]  (x6) [above right=0.7 and 1cm of x5] 			{$x_6$};
  \node[state]  (x7) [below right=0.7 and 1cm of x5] 	{$x_7$};
  \node[state]  (x9) [below right=0.7 and 1cm of v]				{$x_9$};       

\node (1) [right=0.1cm of x2] {$+1$};  
\node (3) [right=0.1cm of v] {$-1$};  

\path[every node/.style={sloped,anchor=south,auto=false}]
%(v) edge[line width=1.5pt, bend left=25] 	node {} (x1)        
%(v) edge[-, line width=2pt] 	node {} (x1)            
(x1) edge[-,line width=3pt] 	node {} (x2) 
(x2) edge[-,line width=3pt] 	node {} (v) 
(v) edge[-,dashed,line width=3pt] 	node {} (x4) 
(x4) edge[-,line width=3pt] 	node {} (x5) 
(x5) edge[-,dashed,line width=3pt] 	node {} (x6) 
(x6) edge[-,line width=3pt] 	node {} (x7) 
(x7) edge[-,dashed,line width=3pt] 	node {} (x5) 
(x5) edge[-,line width=3pt] 	node {} (x9) 
(x9) edge[-,dashed,line width=3pt] 	node {} (v);
        
\end{scope}
\end{tikzpicture}}
\caption{The circuit $C = x_0x_1x_2x_3x_4x_5x_6x_7x_8x_9x_0$ with $x_0 = x_3$ and $x_5 = x_8$. The blue edges are represented by the solid edges, and the red edges by the dashed edges (left). The edge $x_0x_1$ is removed and $x_1x_2$ is added  (right).} 
\label{fig:circuit_step1}
\end{figure}
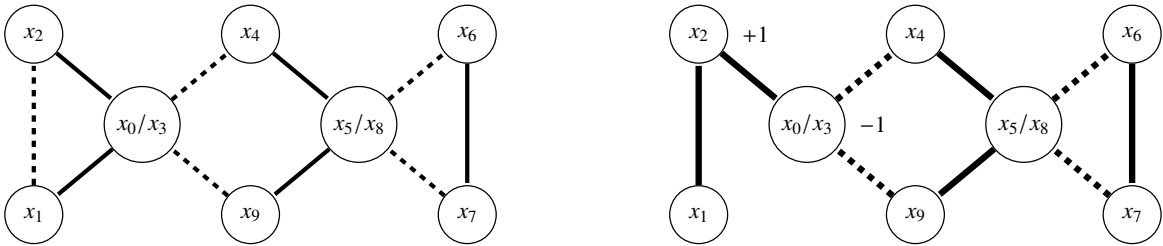 

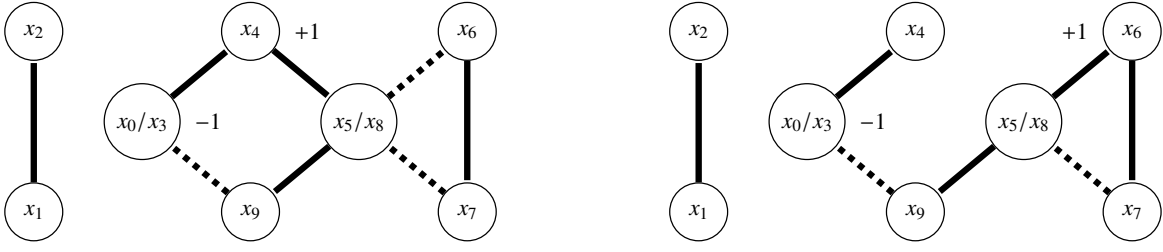
\begin{figure}[ht!]
\centering

%\caption{Edge $vx_1$ is removed.} 
%\label{fig:braess_5}

\scalebox{0.8}{
\begin{tikzpicture}[
  ->,
  >=stealth',
  shorten >=1pt,
  auto,
  %node distance=2cm,
  semithick,
  every state/.style={circle,radius=0.1pt,text=black},
]
\begin{scope}
  \node[state]  (v)               					 		{$x_0/x_3$};
  \node[state]  (x1) [below left=0.7 and 1cm of v] 	 		{$x_1$};
  \node[state]  (x2) [above left=0.7 and 1cm of v]  			 		{$x_2$};
  \node[state]  (x4) [above right=0.7 and 1cm of v]					{$x_4$};
  \node[state]  (x5) [below right=0.7 and 1cm of x4] 			{$x_5/x_8$};
  \node[state]  (x6) [above right=0.7 and 1cm of x5] 			{$x_6$};
  \node[state]  (x7) [below right=0.7 and 1cm of x5] 	{$x_7$};
  \node[state]  (x9) [below right=0.7 and 1cm of v]				{$x_9$};       

\node (3) [right=0.1cm of v] {$-1$};  
\node (4) [right=0.1cm of x4] {$+1$};  

\path[every node/.style={sloped,anchor=south,auto=false}]
%(v) edge[line width=1.5pt, bend left=25] 	node {} (x1)        
%(v) edge[-, line width=2pt] 	node {} (x1)            
(x1) edge[-,line width=3pt] 	node {} (x2) 
%(x2) edge[-,dashed,line width=0.5pt] 	node {} (v) 
(v) edge[-,line width=3pt] 	node {} (x4) 
(x4) edge[-,line width=3pt] 	node {} (x5) 
(x5) edge[-,dashed,line width=3pt] 	node {} (x6) 
(x6) edge[-,line width=3pt] 	node {} (x7) 
(x7) edge[-,dashed,line width=3pt] 	node {} (x5) 
(x5) edge[-,line width=3pt] 	node {} (x9) 
(x9) edge[-,dashed,line width=3pt] 	node {} (v);

\end{scope}
\end{tikzpicture}}
 \ \ \ \ \ \ \ \ \ \quad \ \ \ \ \ \ \ \ \ 
 \scalebox{0.8}{
\begin{tikzpicture}[
  ->,
  >=stealth',
  shorten >=1pt,
  auto,
  %node distance=2cm,
  semithick,
  every state/.style={circle,radius=0.1pt,text=black},
]
\begin{scope}
  \node[state]  (v)               					 		{$x_0/x_3$};
  \node[state]  (x1) [below left=0.7 and 1cm of v] 	 		{$x_1$};
  \node[state]  (x2) [above left=0.7 and 1cm of v]  			 		{$x_2$};
  \node[state]  (x4) [above right=0.7 and 1cm of v]					{$x_4$};
  \node[state]  (x5) [below right=0.7 and 1cm of x4] 			{$x_5/x_8$};
  \node[state]  (x6) [above right=0.7 and 1cm of x5] 			{$x_6$};
  \node[state]  (x7) [below right=0.7 and 1cm of x5] 	{$x_7$};
  \node[state]  (x9) [below right=0.7 and 1cm of v]				{$x_9$};       

\node (3) [right=0.1cm of v] {$-1$};  
\node (5) [left=0.1cm of x6] {$+1$};  

\path[every node/.style={sloped,anchor=south,auto=false}]
%(v) edge[line width=1.5pt, bend left=25] 	node {} (x1)        
%(v) edge[-, line width=2pt] 	node {} (x1)            
(x1) edge[-,line width=3pt] 	node {} (x2) 
%(x2) edge[-,dashed,line width=0.5pt] 	node {} (v) 
(v) edge[-,line width=3pt] 	node {} (x4) 
%(x4) edge[-,line width=3pt] 	node {} (x5) 
(x5) edge[-,line width=3pt] 	node {} (x6) 
(x6) edge[-,line width=3pt] 	node {} (x7) 
(x7) edge[-,dashed,line width=3pt] 	node {} (x5) 
(x5) edge[-,line width=3pt] 	node {} (x9) 
(x9) edge[-,dashed,line width=3pt] 	node {} (v);

\end{scope}
\end{tikzpicture}}
\caption{The edge $x_2x_3$ is removed and $x_3x_4$ is added (left). The edge $x_4x_5$ is removed and $x_5x_6$ is added (right).} 
\label{fig:step2_3}
\end{figure}

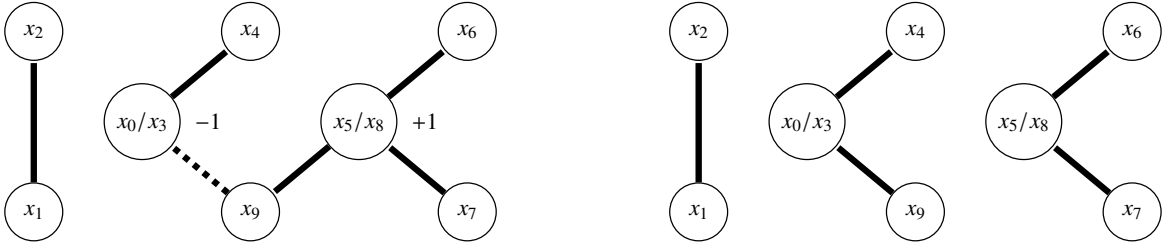
\begin{figure}[ht!]
\centering
\scalebox{0.8}{
\begin{tikzpicture}[
  ->,
  >=stealth',
  shorten >=1pt,
  auto,
  %node distance=2cm,
  semithick,
  every state/.style={circle,radius=0.1pt,text=black},
]
\begin{scope}
  \node[state]  (v)               					 		{$x_0/x_3$};
  \node[state]  (x1) [below left=0.7 and 1cm of v] 	 		{$x_1$};
  \node[state]  (x2) [above left=0.7 and 1cm of v]  			 		{$x_2$};
  \node[state]  (x4) [above right=0.7 and 1cm of v]					{$x_4$};
  \node[state]  (x5) [below right=0.7 and 1cm of x4] 			{$x_5/x_8$};
  \node[state]  (x6) [above right=0.7 and 1cm of x5] 			{$x_6$};
  \node[state]  (x7) [below right=0.7 and 1cm of x5] 	{$x_7$};
  \node[state]  (x9) [below right=0.7 and 1cm of v]				{$x_9$};       

\node (3) [right=0.1cm of v] {$-1$};  
\node (7) [right=0.1cm of x5] {$+1$};  

\path[every node/.style={sloped,anchor=south,auto=false}]
%(v) edge[line width=1.5pt, bend left=25] 	node {} (x1)        
%(v) edge[-, line width=2pt] 	node {} (x1)            
(x1) edge[-,line width=3pt] 	node {} (x2) 
%(x2) edge[-,dashed,line width=0.5pt] 	node {} (v) 
(v) edge[-,line width=3pt] 	node {} (x4) 
%(x4) edge[-,line width=3pt] 	node {} (x5) 
(x5) edge[-,line width=3pt] 	node {} (x6) 
%(x6) edge[-,line width=3pt] 	node {} (x7) 
(x7) edge[-,line width=3pt] 	node {} (x5) 
(x5) edge[-,line width=3pt] 	node {} (x9) 
(x9) edge[-,dashed,line width=3pt] 	node {} (v);

\end{scope}
\end{tikzpicture}}
 \ \ \ \ \ \ \ \ \ \quad \ \ \ \ \ \ \ \ \ 
 \centering
\scalebox{0.8}{
\begin{tikzpicture}[
  ->,
  >=stealth',
  shorten >=1pt,
  auto,
  %node distance=2cm,
  semithick,
  every state/.style={circle,radius=0.1pt,text=black},
]
\begin{scope}
  \node[state]  (v)               					 		{$x_0/x_3$};
  \node[state]  (x1) [below left=0.7 and 1cm of v] 	 		{$x_1$};
  \node[state]  (x2) [above left=0.7 and 1cm of v]  			 		{$x_2$};
  \node[state]  (x4) [above right=0.7 and 1cm of v]					{$x_4$};
  \node[state]  (x5) [below right=0.7 and 1cm of x4] 			{$x_5/x_8$};
  \node[state]  (x6) [above right=0.7 and 1cm of x5] 			{$x_6$};
  \node[state]  (x7) [below right=0.7 and 1cm of x5] 	{$x_7$};
  \node[state]  (x9) [below right=0.7 and 1cm of v]				{$x_9$};       

%\node (3) [below=0.1cm of v] {$-1$};  
%\node (9) [below=0.1cm of x9] {$-1$};  

\path[every node/.style={sloped,anchor=south,auto=false}]
%(v) edge[line width=1.5pt, bend left=25] 	node {} (x1)        
%(v) edge[-, line width=2pt] 	node {} (x1)            
(x1) edge[-,line width=3pt] 	node {} (x2) 
%(x2) edge[-,dashed,line width=0.5pt] 	node {} (v) 
(v) edge[-,line width=3pt] 	node {} (x4) 
%(x4) edge[-,line width=3pt] 	node {} (x5) 
(x5) edge[-,line width=3pt] 	node {} (x6) 
%(x6) edge[-,line width=3pt] 	node {} (x7) 
(x7) edge[-,line width=3pt] 	node {} (x5) 
%(x5) edge[-,line width=3pt] 	node {} (x9) 
(x9) edge[-,line width=3pt] 	node {} (v);

\end{scope}
\end{tikzpicture}}
\caption{The edge $x_6x_7$ is removed and $x_7x_8$ is added (left). The edge $x_8x_9$ is removed and $x_9x_0$ is added (right).}
\label{fig:step4_5}
\end{figure}

In general, however, it might happen that circuits contain both cut and internal edges, in which case we cannot use the circuit processing procedure explained above, as the processing of a circuit might result in a graphical realization for which the number of edges between the classes $V_1$ and $V_2$ lies outside the set $\{\gamma-1,\gamma,\gamma+1\}$. The latter condition is necessary for the intermediate states in the circuit processing procedure to be elements of $\mathcal{G}'(\gamma,d)$, by definition of that set. In order to overcome the issue described above, we will use the ideas in \cite{Bhatnagar2008}, and process a circuit at multiple places \emph{simultaneously} in case there is only one circuit in the canonical decomposition of a pairing, or, process multiple circuits \emph{simultaneously}  in case the decomposition yields multiple circuits.  At the core of this approach lies (a variation of) the \emph{mountain-climbing problem} \cite{Homma1952,Whittaker1966}. We begin with introducing this problem, and afterwards continue with the description of the circuit processing procedure, based on the solution to the mountain climbing problem. \medskip

\textit{Intermezzo: mountain climbing problem.} We first introduce some notation and terminology. For non-negative integers $a +1 < b$ we define an \emph{$\{a,b\}$-mountain} as a function $P : \{a,a+1,\dots,b\} \rightarrow \Z_{\geq 0}$ with the properties that (i) $P(a) = P(b) = 0$; (ii) $P(i) > 0$ for all $i \in \{a+1,\dots,b-1\}$; and (iii) $|P(i+1) - P(i)| = 1$ for all $i \in \{a,\dots,b-1\}$. A function $P : \{a,a+1,\dots,b\} \rightarrow \Z_{\leq 0}$ is called an \emph{$\{a,b\}$-valley} if the function $-P$ is an $\{a,b\}$-mountain. We subdivide a mountain into a left side $\{a,\dots,t\}$ and right side $\{t,\dots,b\}$ where $t$ is the smallest integer maximizing the function $P$. %We define $T = P(t)$ as the value of $P$ at the top $t$. 
For a valley function $P$, the left and right side are determined by the smallest integer $t$ minimizing the function $P$. 

\begin{definition}\label{def:transversal}
A \emph{traversal} of the mountain $P$ on  $\{a,\dots,b\}$ is a sequence $(0,t) = (i_1,j_1),\dots,(i_k,j_k) = (t,b)$ with the properties 
\begin{enumerate}[label=(\alph*)]
\item $|i_{r} - i_{r+1}| = |j_{r} - j_{r+1}| = 1$,
\item $P(i_r) + P(j_r) = P(t)$,
\item $a \leq i_r \leq t$ and $t \leq j_r \leq b$,
\end{enumerate}
for all $1 \leq r \leq k-1$. We always assume that a traversal is minimal, in the sense that there is no subsequence of $(0,t) = (i_1,j_1),\dots,(i_k,j_k) = (t,b)$ which is also a traversal.
\end{definition}

Roughly speaking, we place one person at the far left end of the mountain, and one at the first top. These persons now simultaneously traverse the mountain in such a way that the sum of their heights is always equal, and they always stay on their respective sides of the mountain that they started. The goal of the person on the left it to ascend to the top, whereas the goal of the player at the top is to descend to the far right of the mountain.

\begin{lemma}[\cite{Bhatnagar2008}]\label{lem:mountain} For any mountain or valley function $P$ on $\{a,\dots,b\}$ with first top $t$, there exists a traversal of $P$ of length at most $O((t-a)(b-t))$, that can be found in time $O((t-a)(b-t))$.
\end{lemma}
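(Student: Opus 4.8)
The plan is to recast the statement as a connectivity claim about a graph of feasible configurations and then settle it by a parity (handshake) argument, with no perturbation of $P$ needed. First, it suffices to treat mountains: if $P$ is an $\{a,b\}$-valley then $-P$ is an $\{a,b\}$-mountain with the same first top $t$, and a traversal of $-P$ is literally a traversal of $P$ (all three conditions of Definition~\ref{def:transversal} are invariant under $P\mapsto -P$). So assume $P$ is a mountain with $m:=P(t)=\max P\ge 1$ and $a<t<b$. Write $L=P|_{\{a,\dots,t\}}$ and $R=P|_{\{t,\dots,b\}}$, and build the graph $\mathcal{G}$ with vertex set
\[
V(\mathcal{G})=\{(i,j):a\le i\le t,\ t\le j\le b,\ L(i)+R(j)=m\},
\]
and an edge between $(i,j)$ and $(i',j')$ whenever $|i-i'|=|j-j'|=1$. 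By construction any simple path in $\mathcal{G}$ from $(a,t)$ to $(t,b)$ (this is the ``$(0,t)$'' and ``$(t,b)$'' of Definition~\ref{def:transversal}) is a traversal of $P$, and it is minimal after deleting redundancies; its length is at most $|V(\mathcal{G})|\le (t-a+1)(b-t+1)=O((t-a)(b-t))$. Hence the whole lemma reduces to showing that $(a,t)$ and $(t,b)$ lie in the same connected component of $\mathcal{G}$.

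The key step is a local degree computation. First I would locate the feasible boundary vertices: since $P>0$ on $(a,b)$ and $t$ is the \emph{smallest} maximizer, $L(i)=0$ forces $i=a$, $R(j)=0$ forces $j=b$, and $L(i)=m$ forces $i=t$; working through the four sides of $[a,t]\times[t,b]$ this leaves only $(a,t)$, $(t,b)$, and the points $(a,j)$ with $P(j)=m$ (co-maxima of $P$ to the right of $t$). Next, at a vertex $(i,j)$ I would inspect the at most four candidate neighbours $(i\pm 1,j\pm 1)$: using that $P$ has steps $\pm 1$, feasibility of each candidate reduces to an equation among $\pm 1$'s, and a short case distinction on whether $i$ is a local extremum of $L$ and $j$ one of $R$ shows that the degree of $(i,j)$ is always even — including at the ``co-critical'' vertices (both extrema), where it is $0$ or $4$ — with the sole exceptions $(a,t)$ and $(t,b)$, each of which has exactly one surviving neighbour ($(a{+}1,t{+}1)$ and $(t{-}1,b{-}1)$ respectively) and hence degree $1$. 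The co-maximum boundary vertices $(a,j)$ turn out to have degree exactly $2$. So $(a,t)$ and $(t,b)$ are precisely the odd-degree vertices of $\mathcal{G}$.

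To conclude, apply the handshake lemma component by component: every connected component of $\mathcal{G}$ contains an even number of odd-degree vertices, so the two degree-$1$ vertices $(a,t)$ and $(t,b)$ cannot sit in different components; therefore they are joined by a path, and a shortest such path is the desired traversal, of length $O((t-a)(b-t))$. For the algorithmic bound, $\mathcal{G}$ has $O((t-a)(b-t))$ vertices and edges and is built in that time (each feasibility test is $O(1)$), and a breadth-first search then produces the path in $O((t-a)(b-t))$ time. The main obstacle — and the reason one might expect to need a perturbation of $P$, as in the continuous mountain-climbing literature — is exactly the behaviour at co-critical vertices, where the feasibility graph fails to be a disjoint union of paths and cycles (it has vertices of degree $4$); the point to make carefully is that the degrees there are nonetheless even, which is all the parity argument requires. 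This is essentially the discrete form of the mountain-climbing argument used in \cite{Bhatnagar2008}.
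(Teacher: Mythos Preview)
Your proof is correct. The paper does not actually prove Lemma~\ref{lem:mountain}; it is quoted from \cite{Bhatnagar2008} and used as a black box, so there is no ``paper's own proof'' to compare against. Your argument via the configuration graph $\mathcal{G}$ and the handshake lemma is a clean, self-contained version of the discrete mountain-climbing argument: the local degree computation is right (interior vertices have degree $2$; co-critical vertices have degree $0$ or $4$; the co-maximum boundary vertices $(a,j)$ have degree $2$; only $(a,t)$ and $(t,b)$ have degree $1$), and the parity conclusion then forces $(a,t)$ and $(t,b)$ into the same component. The length and running-time bounds follow immediately from $|V(\mathcal{G})|\le (t-a+1)(b-t+1)$ and bounded vertex degree. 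One cosmetic point: the paper's Definition~\ref{def:transversal} writes the start as $(0,t)$, which is a typo for $(a,t)$; you have silently corrected this, which is fine.
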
 \medskip

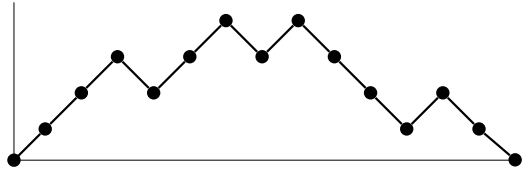
\begin{figure}[ht!]
\centering
\scalebox{0.7}{
\begin{tikzpicture}%[scale=0.1]
\coordinate (i0) (0,0) {};
%\node [below left=0.1cm and 0.1cm] {$\mathbf{(0,0)}$};

\node at (i0) [circle,scale=0.7,fill=black] {};
\node (i1) [above right=0.5cm and 0.5cm of i0, circle,scale=0.7,fill=black] {};
\node (i2) [above right=0.5cm and 0.5cm of i1, circle,scale=0.7,fill=black] {};
\node (i3) [above right=0.5cm and 0.5cm of i2, circle,scale=0.7,fill=black] {};
\node (i4) [below right=0.5cm and 0.5cm of i3, circle,scale=0.7,fill=black] {};
\node (i5) [above right=0.5cm and 0.5cm of i4, circle,scale=0.7,fill=black] {};
\node (i6) [above right=0.5cm and 0.5cm of i5, circle,scale=0.7,fill=black] {};
\node (i7) [below right=0.5cm and 0.5cm of i6, circle,scale=0.7,fill=black] {};
\node (i8) [above right=0.5cm and 0.5cm of i7, circle,scale=0.7,fill=black] {};
\node (i9) [below right=0.5cm and 0.5cm of i8, circle,scale=0.7,fill=black] {};
\node (i10) [below right=0.5cm and 0.5cm of i9, circle,scale=0.7,fill=black] {};
\node (i11) [below right=0.5cm and 0.5cm of i10, circle,scale=0.7,fill=black] {};
\node (i12) [above right=0.5cm and 0.5cm of i11, circle,scale=0.7,fill=black] {};
\node (i13) [below right=0.5cm and 0.5cm of i12, circle,scale=0.7,fill=black] {};
\node (i14) [below right=0.4cm and 0.5cm of i13, circle,scale=0.7,fill=black] {};

\path[every node/.style={sloped,anchor=south,auto=false}]
(i0) edge[-,very thick] node {} (i1)
(i1) edge[-,very thick] node {} (i2)
(i2) edge[-,very thick] node {} (i3)
(i3) edge[-,very thick] node {} (i4)
(i4) edge[-,very thick] node {} (i5)
(i5) edge[-,very thick] node {} (i6)
(i6) edge[-,very thick] node {} (i7)
(i7) edge[-,very thick] node {} (i8)
(i8) edge[-,very thick] node {} (i9)
(i9) edge[-,very thick] node {} (i10)
(i10) edge[-,very thick] node {} (i11)
(i11) edge[-,very thick] node {} (i12)
(i12) edge[-,very thick] node {} (i13)
(i13) edge[-,very thick] node {} (i14);

%Axis
\draw (0,0) -- (9.5,0);
\draw (0,0) -- (0,3);

\end{tikzpicture}}
\caption{Example of a mountain function $P$ on the integers in  $\{0,\dots,14\}$ with the first top at $t = 6$. The left side of the mountain is given by $\{0,\dots,6\}$ and the right side by $\{6,\dots,14\}$. A traversal of $P$ is given by the sequence $(0,6),(1,7),(0,8),(1,9),(2,10),(3,11),(4,12),(5,13),(6,14)$.}
%\label{fig:switch}
\end{figure} 

We finish this part with some additional notation that will be used later on. Let $P_j : \{a_j,\dots,b_j\} \rightarrow \Z$ for $j = 1,\dots,p$ be a collection of mountain and valley functions such that $a_1 = 0$, $a_j = b_{j+1}$ for $j = 1,\dots,p-1$, and every $P_j$ is either a mountain or a valley.  We define the \emph{landscape} $Q$ of the functions $P_1,\dots,P_l$ as the function $Q : \{0,1,\dots,b_l\} \rightarrow \Z$ given by $Q(i) = P_j(i)$ where $j = j(i)$ is such that $i \in \{a_j,\dots,b_j\}$. Note that  $Q(0) = Q(b_l) = 0$, and $|Q(i+1) - Q(i)| = 1$ for all $i \in \{0,\dots,b_j-1\}$. Moreover, for any function $R : \{0,\dots,r\} \rightarrow \Z$ satisfying the latter two conditions, there is a unique collection of mountain and valley functions so that $R$ is the landscape of those functions. We call functions satisfying these conditions \emph{landscape functions}.

\paragraph{General Case.}
We first partition every circuit into a collection of so-called sections, which in turn will be grouped into so-called segments. Let $C_1,\dots,C_s$ be the canonical circuit decomposition of the symmetric difference $G \triangle G'$ for some pairing $\psi$, and assume w.l.o.g. that $C_i \prec_{C} C_j$ whenever $i < j$. We write $C_i = x_0^ix_1^i\dots x_{q_i}^ix_0^i$ where $x_0^ix_1^i$ is the lexicographically smallest blue edge adjacent to the starting point $x_0^i$ of the circuit $C_i$ that contains $q_i + 1$ edges (and where $x_0 = x_{q_i+1}$). For any $i$, we define the function
\[
l_i\left(r\right) = \left\{ \begin{array}{ll}
-1 & \text{ if $\{x_{r-2}^i,x_{r-1}^i\}$ is cut edge and $\{x_{r-1}^i,x_{r}^i\}$ is internal edge}, \\
1 & \text{ if $\{x_{r-2}^i,x_{r-1}^i\}$ is internal edge and $\{x_{r-1}^ix_{r}^i\}$ is cut edge}, \\
0 & \text{ otherwise, }
\end{array}\right.
\]
for $r = 2,4,\dots,q_i + 1$. The function $l_i$ indicate what happens to number of cut edges of a graphical realization when we perform a hinge flip on a pair of consecutive edges $\{x_{r-2}^i,x_{r-1}^i\}$ and $\{x_{r-1}^i,x_{r}^i\}$ on the circuit $C_i$. \medskip

\textit{Decomposition into segments.} We subdivide every circuit $C_i$ into a sequence of (not necessarily closed) walks of even length, called \emph{sections}. Let $Z_i = \{r : l_i(r) \neq 0\} =  \{z_1,\dots,z_{u_i}\} \subseteq \{2,4,\dots,q_i + 1\}$ be the set of indices that represent a change in cut edges along the circuit, where we assume that $z_1 \leq z_2 \leq \dots \leq z_{u_i}$. We define $C_i^1 = x_0^ix_1^i\dots x_{z_1}^i$ and $C_i^j = x_{z_j}^i\dots x_{z_{j+1}}^i$ for $j = 2,\dots,u_i-1$. %Intuitively, the index $z_1$ is the first point at which the number of cut edges will change if we would apply hinge flip operations up to that point (as in the warm-up example) to remove the blue edges of $G \triangle G'$ and add the red edges of $G \triangle G'$, on that section.
%We continue this procedure, i.e., we define $C_i^2 = x_{z_1}^i\dots x_{z_2 + 2}^i$
%where $z_2$ is the second even index on the walk $C_i$ so that $l_i(z_2) \neq 0$. In general, $C_i^{w} = x_{z_{w-1}}^i\dots x_{z_w + 2}^i$ where $z_{w - 1}$ and $z_{w}$ are the $(w-1)$-th and $w$-th index on the walk $C_i$ with $l_i$ unequal to zero. 
If $l_i(q_i + 1) \neq 0$ this procedure partitions the circuit $C_i$ completely, with $C_i^{u_i}$ being the last section. Otherwise, we define $C_i^{u_1+1} = x_{z_{u_i}}^i \dots x_0^i$ as the final section, which is the remainder of the circuit $C_i$. We define $U_i$ as the total number of obtained sections, which is either $u_i$ or $u_i + 1$. %Note that $l_i(r) = 0$ for all even indices $r$ on the section in this case. 
Note that when $Z_i = \emptyset$, the whole circuit will form one section $C_i = C_i^1$. Also note that a section always starts with a blue edge. We extend the function $l_i$ to sections in the following way:
\[
l_i\left(C_i^j \right) = \sum_{r=2,4,\dots,z_j} l_i(r) = \left\{ \begin{array}{ll}
-1 & \text{ if } l_i(z_j) = -1, \\
1 & \text{ if } l_i(z_j) = 1, \\
0 & \text{ otherwise, }
\end{array}\right.
\]
for $j = 1,\dots,U_i$. Note that $l(C_i^j) \in \{-1,1\} \text{ for } j = 1,\dots,u_i$, and zero for $j = u_i+1$ if this term is present. An example is given in Figure \ref{fig:sections}. 

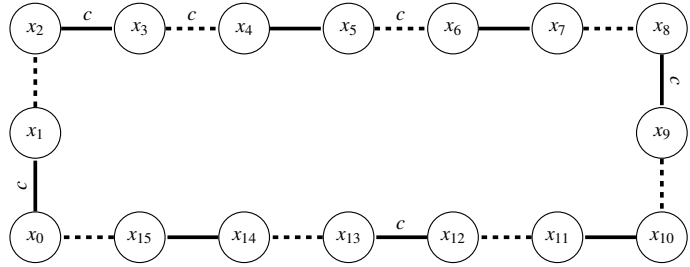
\begin{figure}[ht!]
\centering
\scalebox{0.7}{
\begin{tikzpicture}[
  ->,
  >=stealth',
  shorten >=1pt,
  auto,
  %node distance=2cm,
  semithick,
  every state/.style={circle,radius=0.1pt,text=black},
]
\begin{scope}
  \node[state]  (x0)               					 		{$x_0$};
  \node[state]  (x1) [above=1cm of x0] 	 		{$x_1$};
  \node[state]  (x2) [above=1cm of x1]  			 		{$x_2$};
  \node[state]  (x3) [right=1cm of x2]					{$x_3$};
  \node[state]  (x4) [right=1cm of x3]					{$x_4$};
  \node[state]  (x5) [right=1cm of x4]					{$x_5$};
  \node[state]  (x6) [right=1cm of x5]					{$x_6$};
  \node[state]  (x7) [right=1cm of x6]					{$x_7$};
  \node[state]  (x8) [right=1cm of x7]					{$x_8$};
  \node[state]  (x9) [below=1cm of x8]					{$x_9$};
  \node[state]  (x10) [below=1cm of x9]					{$x_{10}$};
  \node[state]  (x11) [left=1cm of x10]					{$x_{11}$};
  \node[state]  (x12) [left=1cm of x11]					{$x_{12}$};
  \node[state]  (x13) [left=1cm of x12]					{$x_{13}$};
  \node[state]  (x14) [left=1cm of x13]					{$x_{14}$};
  \node[state]  (x15) [left=1cm of x14]					{$x_{15}$};
 
\path[every node/.style={sloped,anchor=south,auto=false}]
%(v) edge[line width=1.5pt, bend left=25] 	node {} (x1)        

(x0) edge[-, line width=2pt] 	node {$c$} (x1)            
(x1) edge[-,dashed,line width=2pt] 	node {} (x2) 
(x2) edge[-,line width=2pt] 	node {$c$} (x3) 
(x3) edge[-,dashed,line width=2pt] 	node {$c$} (x4) 
(x4) edge[-, line width=2pt] 	node {} (x5)
(x5) edge[-,dashed,line width=2pt] 	node {$c$} (x6) 
(x6) edge[-, line width=2pt] 	node {} (x7)
(x7) edge[-,dashed,line width=2pt] 	node {} (x8) 
(x8) edge[-, line width=2pt] 	node {$c$} (x9)
(x9) edge[-,dashed,line width=2pt] 	node {} (x10) 
(x10) edge[-, line width=2pt] 	node {} (x11)
(x11) edge[-,dashed,line width=2pt] 	node {} (x12) 
(x12) edge[-, line width=2pt] 	node {$c$} (x13)
(x13) edge[-,dashed,line width=2pt] 	node {} (x14) 
(x14) edge[-, line width=2pt] 	node {} (x15)
(x15) edge[-,dashed,line width=2pt] 	node {} (x0); 
     
\end{scope}
\end{tikzpicture}}
\caption{The circuit $C_1 = x_0x_1\dots x_{15}x_0$ with $q_1 = 15$. The blue edges are represented by the solid edges, and the red edges by the dashed edges. A label $c$ on an edge indicates that it is a cut edge (all others are internal edges). We have $C_1^1 = x_0x_1x_2$ with $l_1(C_1^1) = -1$; $C_1^2 =x_2x_3x_4x_5x_6$ with $l_1(C_1^2) = 1$; $C_1^3 = x_6x_7x_8x_9x_{10}$ with $l_1(C_1^3) = -1$; $C_1^4 = x_{10}x_{11}x_{12}x_{13}x_{14}$ with $l_1(C_1^4) = -1$; and $C_1^5 = x_{14}x_{15}x_0$ with $l_1(C_1^5) = 0$ (note that $U_1 = 5$ in this example).} 
\label{fig:sections}
\end{figure}

We now continue by grouping the union of all sections into segments in a similar flavor. For sake of readability, we rename the sections $C_1^1,\dots,C_1^{U_1},C_2^{1},\dots,C_2^{U_2},\dots,C_s^1,\dots,C_s^{U_s}$ as $D_1,\dots,D_U$ in the obvious way, where $U = \sum_{i=1}^s U_i$, and we define $l(D_k) = l_i(C_i^j)$ if $C_i^j$ was renamed $D_k$.  
We define $W = \{k : l(D_k) \neq 0\}
 = \{w_1,\dots,w_{B}\}$ as the set of sections representing a change in cut edges along a circuit, where we assume that $w_1 \leq \dots \leq w_{B}$.
We define the \emph{segment} $S_1 = (D_1,\dots,D_{w_1})$, and $S_i = (D_{w_{i-1} + 1},\dots,D_{w_{i}})$ for $i = 2,\dots,w_{B} - 1$.  If $l(D_U) \neq 0$, i.e., when $w_B = U$, this procedure completely groups the collection of sections into segments. Otherwise, we redefine the last segment as $S_B = (D_{w_{B-1} + 1},\dots,D_U)$.  We can extend the function $l$ to segments in the following way:
\[
l\left(S_i \right) = \sum_{j=w_{i-1}+1}^{w_i} l(D_j) = \left\{ \begin{array}{ll}
-1 & \text{ if } l(D_{w_i}) = -1, \\
1 & \text{ if } l(D_{w_i}) = 1, \\
\end{array}\right.
\]
for $i = 1,\dots,B-1$, and $l\left(S_B \right) = \sum_{j=w_{B-1}+1}^{U} l(D_j)$. Note that 
\begin{equation}\label{eq:segment_nonzero}
l(S_i) \in \{-1,1\} \ \ \  \text{ for } i = 1,\dots,B.\footnote{Unless in the special case that there is only one segment $S_1$ covering all circuits, then $l(S_1) = 0$. This happens, e.g., in the situation of the warm-up example.}
\end{equation}

An example of a decomposition into segments is given in Figures \ref{fig:sym_dif} and \ref{fig:landscape_encoding} later on.
Roughly speaking, a segment is a maximal collection of edges that could be processed, using hinge flips operations as in the warm-up example, until the number of cut-edges changes. In particular, the first segment represents precisely the point up to where we could carry out the same processing steps as in the warm-up example until the number of cut edges will have changed for the first time. 
Note that a segment might contain sections from multiple circuits, in particular,  it might consist of a final section of a circuit $J_1$, then some full circuits $J_2,\dots,J_h$ (which all form a section on their own) and then the first section of some circuit $J_{h+1}$. The function $l$ is then zero on the last section of $J_1$ and all circuits (sections) $J_2,\dots,J_h$, and non-zero on the section of $J_{h+1}$. \medskip

 \textit{Unwinding/rewinding of a segment.} The \emph{unwinding} of a section $D = x_f\dots x_g$ consists of performing a number of hinge flip operations, that represent transitions in the Markov chain $\mathcal{M}'(\gamma,d)$. That is we perform a sequence of hinge flip operations replacing the (blue) edges $\{x_{r-2},x_{r-1}\}$ by (red) edges $\{x_{r-1},x_r\}$ for $r = f+2,\dots,g$, in increasing order of $r$. Sometimes, we need to temporarily undo the unwinding of a section, in which case we perform a sequence of hinge flip operations replacing the (red) edges $\{x_{r-1},x_{r}\}$ by (blue) edges $\{x_{r-2},x_{r-1}\}$ for $r = f+2,\dots,g$, in decreasing order of $r$ this time. That is, we reverse the operations done during the unwinding. This is called \emph{rewinding} a section. We say that a circuit is (currently) processed if all its sections have been unwound, and it is (currently) unprocessed if at least one section has not been unwound.
 
The unwinding of a segment $S_i = (D_{a_i},\dots,D_{a_i+1})$ consists of unwinding the sections $D_{a_i}, \dots, D_{a_i+1}$ in increasing order. The rewinding of $S_i$ consists of rewinding the section $D_{a_i}, \dots, D_{a_i+1}$ in decreasing order.

 \medskip
 
\begin{figure}[ht!]
\centering
%\scalebox{0.8}{
%\begin{tikzpicture}[
%  ->,
%  >=stealth',
%  shorten >=1pt,
%  auto,
%  %node distance=2cm,
%  semithick,
%  every state/.style={circle,radius=0.1pt,text=black},
%]
%\begin{scope}
%  \node[state]  (x0)               					 		{$x_0$};
%  \node[state]  (x1) [right=1cm of x0]					{$x_1$};
%  \node[state]  (x2) [right=1cm of x1]					{$x_2$};
%  \node[state]  (x3) [right=1cm of x2]					{$x_3$};
%  \node[state]  (x4) [right=1cm of x3]					{$x_4$};
%  \node[state]  (x5) [right=1cm of x4]					{$x_5$};
%  \node[state]  (x6) [right=1cm of x5]					{$x_6$};
% 
%\path[every node/.style={sloped,anchor=south,auto=false}]
%%(v) edge[line width=1.5pt, bend left=25] 	node {} (x1)        
%
%
%(x0) edge[-, line width=2pt] 	node {} (x1)            
%(x1) edge[-,dashed,line width=2pt] 	node {} (x2) 
%(x2) edge[-,line width=2pt] 	node {} (x3) 
%(x3) edge[-,dashed,line width=2pt] 	node {} (x4) 
%(x4) edge[-, line width=2pt] 	node {} (x5)
%(x5) edge[-,dashed,line width=2pt] 	node {} (x6);
%     
%\end{scope}
%\end{tikzpicture}}
%\quad \bigskip
\scalebox{0.7}{
\begin{tikzpicture}[
  ->,
  >=stealth',
  shorten >=1pt,
  auto,
  %node distance=2cm,
  semithick,
  every state/.style={circle,radius=0.1pt,text=black},
]
\begin{scope}
  \node[state]  (x0)               					 		{$x_0$};
  \node[state]  (x1) [right=1cm of x0]					{$x_1$};
  \node[state]  (x2) [right=1cm of x1]					{$x_2$};
  \node[state]  (x3) [right=1cm of x2]					{$x_3$};
  \node[state]  (x4) [right=1cm of x3]					{$x_4$};
  \node[state]  (x5) [right=1cm of x4]					{$x_5$};
  \node[state]  (x6) [right=1cm of x5]					{$x_6$};
  
  \node[state]  (y0) [below=2cm of x0]					{$x_0$};
  \node[state]  (y1) [right=1cm of y0]					{$x_1$};
  \node[state]  (y2) [right=1cm of y1]					{$x_2$};
  \node[state]  (y3) [right=1cm of y2]					{$x_3$};
  \node[state]  (y4) [right=1cm of y3]					{$x_4$};
  \node[state]  (y5) [right=1cm of y4]					{$x_5$};
  \node[state]  (y6) [right=1cm of y5]					{$x_6$};
  
  \node (P1) [below left=1cm and 0.5cm of x3] {$\text{unwinding}$};
  \node (P2) [below right=1cm and 0.5cm of x3] {$\text{rewinding}$};
  
  \node (A1) [below left=0.25cm and 0.25cm of x3] {};
  \node (B1) [below=1.5cm of A1] {};
  \node (A2) [below right=0.25cm and 0.25cm of x3] {};
  \node (B2) [below=1.5cm of A2] {};
  
\path[every node/.style={sloped,anchor=south,auto=false}]
%(v) edge[line width=1.5pt, bend left=25] 	node {} (x1)        

(x0) edge[-, line width=2pt] 	node {} (x1)            
%(x1) edge[-,dashed,line width=2pt] 	node {} (x2) 
(x2) edge[-,line width=2pt] 	node {} (x3) 
%(x3) edge[-,dashed,line width=2pt] 	node {} (x4) 
(x4) edge[-, line width=2pt] 	node {} (x5)
%(x5) edge[-,dashed,line width=2pt] 	node {} (x6);

(y1) edge[-, line width=2pt] 	node {} (y2)            
%(x1) edge[-,dashed,line width=2pt] 	node {} (x2) 
(y3) edge[-,line width=2pt] 	node {} (y4) 
%(x3) edge[-,dashed,line width=2pt] 	node {} (x4) 
(y5) edge[-, line width=2pt] 	node {} (y6)
(A1) edge[->,line width=2pt] node {} (B1)
(B2) edge[->,line width=2pt] node {} (A2);
%(x5) edge[-,dashed,line width=2pt] 	node {} (x6);
     
\end{scope}
\end{tikzpicture}}
\caption{A section $D = x_0x_1\dots x_{6}$. The blue edges are represented by the solid edges. The unwinding consists of performing first a hinge flip with $\{x_0,x_1\}$ to $\{x_1,x_2\}$; then $\{x_2,x_3\}$ to $\{x_3,x_4\}$; and finally $\{x_4,x_5\}$ to $\{x_5,x_6\}$. The rewinding consist of first a hinge flip with $\{x_5,x_6\}$ to $\{x_4,x_5\}$; then $\{x_3,x_4\}$ to $\{x_2,x_3\}$; and finally $\{x_1,x_2\}$ to $\{x_0,x_1\}$  } 
\label{fig:unwinding_rewinding}
\end{figure}
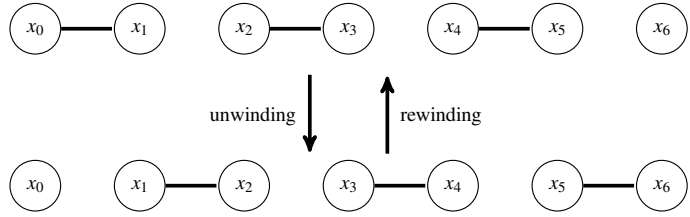 
 \medskip

\textit{Landscape processing.} Remember that $B$ is the number of segments obtained from the decomposition of circuits into segments. We define the function $P : \{0,1,\dots,B\} \rightarrow \Z$ by $P(0) = 0$ and $P(i) = \sum_{j=1}^i l(S_j)$ for $i = 1,\dots,B$.

\begin{lemma}\label{lem:landscape}
The function $P$ is a landscape function.
\end{lemma}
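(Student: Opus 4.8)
The statement to prove is Lemma~\ref{lem:landscape}: the function $P : \{0,1,\dots,B\} \rightarrow \Z$ defined by $P(0)=0$ and $P(i) = \sum_{j=1}^i l(S_j)$ is a landscape function. Recall that a landscape function $R:\{0,\dots,r\}\to\Z$ is one satisfying $R(0)=R(r)=0$ and $|R(i+1)-R(i)|=1$ for all $i$.

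The plan is to verify the two defining conditions directly. First, the step condition: $|P(i)-P(i-1)| = |l(S_i)| = 1$ by \eqref{eq:segment_nonzero} (in the non-degenerate case; I would note that the degenerate single-segment case is handled separately, exactly as in the footnote, since there the whole construction reduces to the warm-up example and there is nothing to process via a landscape). So the bulk of the work is showing the endpoint condition $P(B) = 0$, i.e.\ $\sum_{j=1}^B l(S_j) = 0$.

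The key step is a telescoping/conservation argument. I would unfold the definitions: $\sum_{i=1}^B l(S_i) = \sum_{k=1}^U l(D_k) = \sum_{i=1}^s \sum_{j=1}^{U_i} l_i(C_i^j) = \sum_{i=1}^s \sum_{r \in \{2,4,\dots,q_i+1\}} l_i(r)$, where each equality is just the definition of how $l$ was extended from sections to segments and from indices to sections (each of those extensions was defined as a plain sum, so nothing is lost). Then I would fix a single circuit $C_i = x_0^i x_1^i \dots x_{q_i}^i x_0^i$ and show $\sum_{r} l_i(r) = 0$. The point is that $l_i(r)$ records the change $(\text{is }\{x_{r-1}^i,x_r^i\}\text{ a cut edge?}) - (\text{is }\{x_{r-2}^i,x_{r-1}^i\}\text{ a cut edge?})$, encoded as $\pm 1$ or $0$; more precisely, if $\chi(e)\in\{0,1\}$ is the indicator that $e$ is a cut edge, then $l_i(r) = \chi(\{x_{r-1}^i,x_r^i\}) - \chi(\{x_{r-2}^i,x_{r-1}^i\})$, and one checks this matches the three-case definition (both internal gives $0-0=0$; both cut gives $1-1=0$; the mixed cases give $\pm1$). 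Summing this telescoping expression over $r = 2,4,\dots,q_i+1$ walks around the closed circuit: consecutive terms share the edge $\{x_{r-1}^i, x_{r}^i\} = \{x_{(r+2)-2}^i, x_{(r+2)-1}^i\}$, so everything cancels except the first subtracted edge $\{x_0^i, x_1^i\}$ and the last added edge $\{x_{q_i}^i, x_{q_i+1}^i\} = \{x_{q_i}^i, x_0^i\}$; but both $\{x_0^i,x_1^i\}$ and $\{x_{q_i}^i,x_0^i\}$ are blue edges incident to $x_0^i$, and since the circuit is a closed alternating walk, actually the telescoping is over \emph{all} the edges at even/odd positions — I should be a bit careful that the sum is over $r$ even, picking out every pair of consecutive edges, so the telescope closes up fully around the cycle and yields exactly $0$.

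The main obstacle I expect is bookkeeping the index conventions correctly — which edges are ``blue'' (odd-indexed pairs, i.e.\ $\{x_{r-2},x_{r-1}\}$) versus ``red'', the convention $x_{q_i+1}=x_0$, and the fact that $l_i$ is only evaluated at $r = 2,4,\dots,q_i+1$ (so $q_i$ must be odd, consistent with circuits having even length). I would make explicit that walking through $r=2,4,\dots,q_i+1$ visits each of the $q_i+1$ edges of $C_i$ exactly once as one of the two edges of a hinge-flip pair, with each edge serving as the ``incoming'' edge of one pair and the ``outgoing'' edge of the next, so the $\chi$-differences telescope cyclically to $0$. Once this per-circuit identity is established, summing over all circuits $C_1,\dots,C_s$ gives $P(B)=0$, and combined with the step condition this shows $P$ is a landscape function. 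I would keep the proof short, essentially: ``$|P(i)-P(i-1)|=|l(S_i)|=1$ by \eqref{eq:segment_nonzero}, and $P(B)=\sum_i\sum_r l_i(r)=\sum_i 0 = 0$ by the telescoping observation above, hence $P$ is a landscape function.''
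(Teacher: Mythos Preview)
Your verification of the step condition $|P(i)-P(i-1)|=|l(S_i)|=1$ via \eqref{eq:segment_nonzero} matches the paper exactly. However, your argument for $P(B)=0$ contains a genuine error. You claim that for each circuit $C_i$ the sum $\sum_{r} l_i(r)$ telescopes cyclically to zero, but this is false. Writing $e_k = \{x_{k-1}^i,x_k^i\}$ and $l_i(r) = \chi(e_r)-\chi(e_{r-1})$, the sum over $r=2,4,\dots,q_i+1$ gives $\sum_{r\text{ even}}\chi(e_r) - \sum_{r\text{ odd}}\chi(e_r)$, i.e.\ the number of \emph{red} cut edges minus the number of \emph{blue} cut edges on $C_i$. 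Consecutive terms in your sum (at $r$ and $r+2$) involve the disjoint edge pairs $(e_{r-1},e_r)$ and $(e_{r+1},e_{r+2})$, so nothing cancels between them; each edge of the circuit appears in exactly one pair, not two, and there is no cyclic telescoping. A single circuit can easily have an unequal number of red and blue cut edges (take a $4$-cycle with one blue cut edge and all other edges internal).

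The paper's proof fixes this by working globally rather than per circuit: since $G,G' \in \mathcal{G}(\gamma,d)$ both have exactly $\gamma$ cut edges, the total number of red cut edges in $H=G\triangle G'$ equals the total number of blue cut edges in $H$, and hence $P(B)=\sum_i\sum_r l_i(r)=0$. Your unfolding of $P(B)$ into $\sum_i\sum_r l_i(r)$ is correct and useful; you just need to replace the (incorrect) per-circuit claim with this global counting observation.
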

\begin{proof}
We have to check that $P(0) = P(B) = 0$ and that $|P(i+1) - P(i)| = 1$ for all $i = 0,\dots,B-1$, see the description of the mountain climbing problem. We have $P(0)$ by definition. Moreover, since both graphical realizations $G$ and $G'$ contain $\gamma$ cut edges, it holds that
$
P(B) = \sum_{i=1}^B l(S_i) = 0.
$
Finally, using (\ref{eq:segment_nonzero}) and the definition of $P$, it follows that
\[
|P(i+1) - P(i)| = \Bigg|\sum_{j=1}^{i+1}
 l(S_j) - \sum_{j=1}^i l(S_j)\Bigg| = |l(S_i)| = 1
\]
for all $i = 1,\dots,B-1$.
\end{proof}

Based on the segments $S_1,\dots,S_B$, we define the canonical path from $G$ to $G'$ in the state space graph of the chain $\mathcal{G}'(\gamma,d)$ that replaces all the blue edges in $G \triangle G'$ with the red edges in $G \triangle G'$. By Lemma \ref{lem:landscape} we know $P$ is a landscape function and therefore there is a unique decomposition into mountain and valley functions $P_1,\dots,P_p$ so that $P$ is the landscape function for this collection, where every function is of the form $P_j : \{a_j,\dots,b_j\} \rightarrow \Z$ with $a_1 = 0, b_{j} = a_{j+1}$ for $j = 1,\dots,p-1$, and $b_p = B$.\footnote{The function $P_1$ can be found by determining the first $j > 0$ so that $P(j) = 0$. The sign of $P(1)$ determines if it is a mountain or a valley. The remaining mountains and valleys can be found similarly.} 
The \emph{processing of a mountain/valley} $P_j$ means that all segments $S_{a_j + 1},\dots,S_{b_j}$ will be unwound (it might be that during this procedure segments are temporarily rewound). This processing will rely on a traversal of the mountain, see Definition \ref{def:transversal}. We say that the segments $S_{a_j + 1},\dots,S_{c_j}$ are on the left side of the mountain, and the segments $S_{c_j + 1},\dots,S_{b_j}$ on the right side. Let $P = P_j$ for some $j$ and assume that $P$ is a mountain function. For sake of notation, we write $a = a_j$ and $b = b_j$, and $t = t_j$ where $t_j$ is the first top of the mountain. 

Now, fix some traversal $(a,t) = (r_1,s_1),\dots,(r_k,s_k) = (t,b)$ of $P$. For $c = 1,\dots,k-1$ in increasing order, do the following:

\begin{itemize}
\item if $r_{c+1} > r_c$ and $s_{c+1} > s_c$: first unwind segment $S_{r_{c+1}}$, then unwind segment $S_{s_{c+1}}$;
\item if $r_{c+1} > r_c$ and $s_{c+1} < s_c$: first unwind segment $S_{r_{c+1}}$, then rewind segment $S_{s_{c}}$;
\item if $r_{c+1} < r_c$ and $s_{c+1} > s_c$: first  rewind segment $S_{r_{c}}$, then unwind segment $S_{s_{c+1}}$;
\item if $r_{c+1} < r_c$ and $s_{c+1} < s_c$: first rewind segment $S_{r_{c}}$, then rewind segment $S_{s_{c}}$.
\end{itemize}
This describes the processing of a mountain based on a traversal. Note that after the processing of a mountain, indeed all its segments have been unwound. If $P$ is a valley function, we can use essentially the same procedure performed on $-P$. The \emph{processing of a landscape} is done by processing the mountains/valleys $P_1,\dots, P_p$ in increasing order. 

This procedure generates a sequence $G = Z_1,Z_2,\dots,Z_l = G'$ of graphical realizations transforming $G$ into $G'$ where any two consecutive realizations differ by a hinge flip operation. The following lemma shows that this sequence indeed defines a (canonical) path from $G$ to $G'$ in the state space graph of $\mathcal{M}(\gamma,d)$, for a given pairing $\psi$. This lemma is essentially the motivation for the definition of $\mathcal{G}'(\gamma,d)$.

\begin{lemma}\label{lem:feasible_path}
Let $Z = Z_i$ be a graphical realization on the constructed path from $G$ to $G'$ for pairing $\psi$, with degree sequence $d'$ and $\gamma'$ cut edges. Then properties (i), (ii) and (iii) defining $\mathcal{G}'(\gamma,d)$ are satisfied. 

Moreover, there exists a polynomial $r(\cdot)$ such that the length of any constructed (canonical) path carrying flow is at most $r(n)$.
\end{lemma}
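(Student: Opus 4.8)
The plan is to verify the three defining conditions of $\mathcal{G}'(\gamma,d)$ for an arbitrary intermediate realization $Z=Z_i$ on the constructed path, and then to bound the path length. For conditions (i) and (ii) — the ones about degrees — the key observation is that the hinge flips during the unwinding or rewinding of a \emph{single} section of a circuit $C_i$ behave exactly as in the warm-up example: when a section $D = x_f \dots x_g$ is partially unwound up to index $r$, the only nodes whose degree differs from the target $d$ are $x_f$ (degree deficit one, since its incident blue edge has been removed but no red edge added yet) and the current ``frontier'' node $x_{r-1}$ (degree surplus one). Since at most two sections are ever in a partially unwound/rewound state at any point in time — namely the two sections $S_{r_{c}}$ (or $S_{s_c}$) being manipulated at the current traversal step, and moreover the definition of the unwinding/rewinding schedule always completes one section's operation before starting the next — the total degree surplus and deficit are each at most $2$. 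This gives $\sum_i |d_i - d_i'| \le 4$, and since the quantity is visibly even, it lies in $\{0,2,4\}$, establishing (ii); (i) follows because the surplus/deficit is concentrated on at most two nodes each contributing at most $\pm 2$ (and in fact the bound on the total already forces $\alpha_v \in \{-2,-1,0,1,2\}$). One must be slightly careful when a section boundary coincides with the start node of a circuit, i.e.\ when the ``frontier'' closes up around $x_0^i$; this is the only place where a single node picks up surplus $2$ rather than two distinct nodes picking up surplus $1$, exactly as in Figures \ref{fig:step1_2}--\ref{fig:step5_6}.

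For condition (iii), which controls the number $\gamma'$ of cut edges, the point is precisely why the machinery of segments, landscapes and traversals was set up. The cut-edge count changes only when we unwind or rewind a section $D$ with $l(D) \neq 0$, and by construction such a section is always the \emph{last} section of a segment. Hence after each completed segment operation the cut count equals $\gamma + (\text{partial sum of } l(S_j))$, i.e.\ $\gamma + P(\cdot)$ evaluated at the appropriate point of the traversal. The defining property (b) of a traversal, $P(r_c) + P(s_c) = P(t)$, together with the fact that along the left side $P(r_c)$ and along the right side $P(s_c) - P(t)$ each stay within one unit of their running values, shows that at \emph{every} intermediate step the total displacement from $\gamma$ is the sum of at most two contributions each of absolute value at most $1$ — but crucially, because we always finish manipulating one segment before touching the other, at most \emph{one} of those contributions is ``in flux'' at a time, so $|\gamma' - \gamma| \le 1$. (When $P_j$ is a valley one runs the identical argument on $-P_j$.) This is the step I expect to be the main obstacle to write cleanly: one has to track, across the four cases of the traversal-driven schedule, that the partially-processed cut-edge offset never leaves $\{\gamma-1,\gamma,\gamma+1\}$, and in particular that the ``simultaneous'' processing of the two sides of a mountain never stacks two unit changes on top of each other at an intermediate state. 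The traversal condition (c), $a \le r_c \le t \le s_c \le b$, is what guarantees the two sides do not interfere.

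Finally, for the length bound: each circuit $C_i$ has $q_i + 1 \le n^2$ edges, so $\sum_i (q_i+1) \le n^2$, and the number of segments $B$ is at most this quantity. By Lemma \ref{lem:mountain} each mountain/valley $P_j$ on $\{a_j,\dots,b_j\}$ admits a traversal of length $O((t_j - a_j)(b_j - t_j)) = O((b_j - a_j)^2)$, and each step of a traversal triggers the unwinding or rewinding of at most two segments, each of which consists of unwinding/rewinding a bounded number of sections whose total edge count is at most the segment length. Summing, the total number of hinge flips is $O\big(\sum_j (b_j-a_j)^2 \cdot (\text{max segment size})\big) = O(B^3) = O(n^6)$, say, which is polynomial in $n$; thus $\ell(f') \le r(n)$ for a suitable polynomial $r$. (Any crude polynomial bound suffices here, since we did not attempt to optimize constants.) This completes the proof.
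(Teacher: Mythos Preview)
Your plan follows the paper's approach and handles property (iii) and the length bound correctly. Your argument for (ii), however, has a gap. You assert that the only perturbed nodes are the two endpoints of each ``partially unwound'' section and that at most two sections are ever in such a state, but this does not account for what happens at a \emph{rest} state $(r_c,s_c)$ between traversal steps: there no section is mid-processing, yet the unwound segments form two disjoint intervals $[1,r_c]$ and $[t+1,s_c]$, and the perturbations sit at the boundaries of the corresponding processed chunks of the underlying circuit, not at any ``in-progress'' section. As stated, your reasoning would yield zero perturbation at rest states, which is false.

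What actually forces $\sum_i|d_i-d_i'|\le 4$ is a structural fact that you (and, to be fair, the paper's sketch) leave implicit: since $\sum_r l_i(r)=0$ for each circuit $C_i$ (both $G$ and $G'$ have $\gamma$ cut edges), the landscape $P$ returns to $0$ after each circuit's nonzero sections, so every mountain or valley lies within a \emph{single} circuit $C_m$ (apart from trailing zero-$l$ sections of earlier circuits absorbed into its first segment). Hence the two unwound segment-intervals produce at most two processed chunks in $C_m$, and therefore at most four perturbed nodes. Without this observation one could in principle have three distinct circuits straddling the boundaries at $r_c$, at $t$, and at $s_c$, and the bound would fail. (A minor point: property (i) is simply $\sum_i d_i' = \sum_i d_i$, i.e.\ edge-count preservation under hinge flips; it is not deduced from the concentration of surplus and deficit on few nodes.)
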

\begin{proof}[Proof (sketch).] Since hinge flip operations never add or remove edges, property (i) is clearly satisfied. Since the operations $(1)-(4)$ given above unwind and rewind at most two segments, and by construction of the trajectories describing the traversal, the property (ii) is also satisfied. Finally, the cases $(1)-(4)$, in combination with the second property of a traversal as in Definition \ref{def:transversal}, guarantee that property (iii) is satisfied. To see that all canonical paths have polynomial length, note that the traversal has polynomial length, and also every individual segment has polynomial length.
\end{proof}

%\newpage

\subsubsection{Encoding}
We continue with defining the notion of an \emph{encoding} that will be used in the next section to bound the congestion of an edge in the state space graph of $\mathcal{M}(\gamma,d)$.  
Let $\tau = (Z,Z')$ be a given transition of the Markov chain. Suppose that a canonical path from $G$ to $G'$ for some pairing $\psi \in \Psi(G,G')$, with canonical circuit decomposition $\{C_1,\dots,C_s\}$, uses the transition $\tau$. %Let $P$ be the landscape function of this canonical path, and $P_1,\dots,P_p$ its decomposition into mountain and valley functions. 
We define $L_\tau(G,G) = (G \triangle G') \triangle Z$.  An example is given in Figures \ref{fig:sym_dif}, \ref{fig:landscape_encoding}, \ref{fig:transition} and \ref{fig:encoding}.
%Let $P_j$ with top $t_j$ be the mountain/valley containing the transition $\tau$. 

\begin{lemma}\label{lem:recovery_jdm}
Given $\tau = (Z,Z')$, $\psi$, $L$, %$T_{\psi}$, and $\sigma_{\psi}$ be given (where $T$ and $\sigma$ are circuits from the set of circuits obtained from the canonical decomposition of the pairing into circuits).
if there is some pair $(G,G')$ so that $L = L_{\tau}(G,G')$, %$\sigma_{\psi} = \sigma_{\psi}(G,G')$, and $T_{\psi} = T_{\psi}(G,G')$, 
then there are at most $\frac{1}{8}n^4$ such pairs.
\end{lemma}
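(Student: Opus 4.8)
The plan is to mimic the recovery argument of Lemma \ref{lem:recovery} (for the JS chain) but to account for the extra ambiguity introduced by the segment/landscape machinery. Given $\tau=(Z,Z')$, $\psi$, and an encoding $L$, we want to reconstruct the symmetric difference $H=G\triangle G'$ and then, using $\psi$ and $\tau$, recover which edges of $H$ belong to $G$ and which to $G'$. Since $L=H\triangle Z$, we have $H=L\triangle Z$, so $H$ is \emph{uniquely} determined by $L$ and $\tau$. Note that $Z$ itself is the graph appearing in $\tau$ and the whole of $Z$ (not just the part in $H$) can be read off, so this step is unambiguous. Then the pairing $\psi$ gives the canonical circuit decomposition $C_1,\dots,C_s$ of $E(H)$ exactly as in the canonical path construction, and hence the section decomposition $D_1,\dots,D_U$, the segment decomposition $S_1,\dots,S_B$, the landscape function $P$ and its decomposition into mountains and valleys $P_1,\dots,P_p$ are all determined. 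So far everything is deterministic.

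The first real difference from Lemma \ref{lem:recovery} is that, unlike the JS chain where a Type~1 transition pins down exactly which circuit is being processed, a hinge-flip transition $\tau$ occurring during the processing of a landscape does \emph{not} by itself tell us which step $c$ of which traversal of which mountain $P_j$ we are at: the same edge-replacement $\{x_{r-2},x_{r-1}\}\to\{x_{r-1},x_r\}$ inside a section $D_k$ could arise during its unwinding or be part of a rewinding, and several intermediate states along the canonical path can carry out the exact same hinge flip. This is where the bound $\tfrac18 n^4$ comes in rather than a unique recovery. The plan is: from $\tau$ we read off the blue edge $\{x_{r-2},x_{r-1}\}$ being deleted and the red edge $\{x_{r-1},x_r\}$ being added (or vice versa for a rewinding step); since $H$, the circuits, and the sections are known, this locates the \emph{section} $D_k$ and the position $r$ within it up to a bounded number of possibilities. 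The key quantitative point is that the index $k$ of the active section is almost determined: knowing $Z$ (hence $H\triangle Z=L$, hence which edges of each circuit are currently ``blue-side'' versus ``red-side''), together with the fact that along any canonical path the sections are unwound/rewound in a constrained order dictated by some traversal, the set of candidate global states $(G,G')$ is in bijection with a bounded choice. Concretely, I would argue that given $L,\tau,\psi$ one can determine, for each section, whether it is ``definitely unwound'', ``definitely not unwound'', or ``ambiguous'', and that the number of ambiguous sections together with the choice of which traversal step produced $\tau$ is bounded by a polynomial; a crude counting then yields at most $\tfrac18 n^4$ pairs. The cleanest way to get the constant $\tfrac18$ and the exponent $4$ is probably: the transition $\tau$ fixes three of the four relevant vertices (the two endpoints of the deleted edge and the far endpoint of the added edge — but one is shared), so at most $O(n)$ choices remain for identifying the ``other end'' of the relevant path, and there are at most $2$ mountains/valleys and $O(n)$ segment-boundary choices consistent with $L$; multiplying these and the factor coming from ``unwinding vs.\ rewinding'' gives a bound of the stated form. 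I would phrase this as: the ambiguity is captured by at most two integer parameters each in a range of size $\le \tfrac12 n^2$ (e.g.\ the pair $(r_c,s_c)$ indexing the traversal position, each bounded by the number of segments $\le \tfrac12 n^2$), whence at most $\tfrac18 n^4$ reconstructions.

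The second difference is that, once the traversal position is fixed (one of the $\le\tfrac18 n^4$ choices), the reconstruction of $G$ and $G'$ from $H$ and $\psi$ is completely analogous to Lemma \ref{lem:recovery}: the active hinge flip tells us the orientation (which side is $G$, which is $G'$) of the section containing the transition; the landscape ordering $\prec_{\mathcal C}$ together with the known traversals of $P_1,\dots,P_p$ tells us, for every other section, whether it has already been unwound or not, and comparing with $Z$ then tells us on each such section which edges belong to $G$ and which to $G'$. So each admissible traversal position yields at most one pair $(G,G')$, and the total is at most $\tfrac18 n^4$.

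I expect the main obstacle to be the second paragraph: pinning down precisely \emph{why} the ambiguity in locating the active traversal step is bounded by $\tfrac18 n^4$ and no more. The subtlety is that a segment can span several circuits (a final section of one circuit, whole circuits, a first section of another), and that rewindings mean a section may be visited multiple times, so a naive bound could blow up; the argument must use the structure of Lemma \ref{lem:mountain} (traversal length $O((t-a)(b-t))$, hence each mountain contributes $O(n^2)$ states, and there are $O(1)$ mountains relevant to any single edge because the sequence of cut-edge counts $P(i)$ visits any given level only boundedly often near the transition) to keep the count polynomial and in fact $\le\tfrac18 n^4$. I would make this rigorous by observing that the transition $\tau$ determines the value $\gamma'$ of the number of cut edges in $Z$, hence the \emph{level} $P(\cdot)$ at which $\tau$ occurs; the candidate ``$(r_c,s_c)$'' are exactly the lattice points on the relevant traversal at that level, and there are $O(1)$ such points per mountain times $O(1)$ relevant mountains, but the \emph{identity} of the segment being (re)wound at that step still ranges over $\le \tfrac12 n^2$ possibilities for each of the two simultaneously-processed segments — giving the clean product bound $\tfrac14\cdot(\tfrac12 n^2)\cdot(\tfrac12 n^2)=\tfrac{1}{16}n^4$, which is even better than claimed, and I would then simply state $\le\tfrac18 n^4$ as a safe bound absorbing the ``unwinding versus rewinding'' factor of $2$.
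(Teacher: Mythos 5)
Your high-level plan is sound and close in spirit to the paper's: recover $H=L\triangle Z$ (which is indeed unique given $L$ and $\tau$), use $\psi$ to recover the circuit/section/segment/landscape decomposition (all deterministic), and then bound the residual ambiguity in locating the "current processing state" by $O(n^4)$. The paper also does exactly this. However, your method of bounding that ambiguity has a concrete gap.

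The central error is the claim that the transition $\tau$ determines ``the level $P(\cdot)$ at which $\tau$ occurs'' from the number of cut edges in $Z$. This is false. By property~(iii) in the definition of $\mathcal{G}'(\gamma,d)$, and by the invariant $P(r_c)+P(s_c)=P(t)$ maintained by the traversal, the number of cut edges in $Z$ is always in $\{\gamma-1,\gamma,\gamma+1\}$ regardless of how high up the mountain the traversal currently is: the two simultaneously-processed segments are chosen precisely so that the cut-edge count stays near $\gamma$ at every intermediate state. So counting cut edges in $Z$ tells you essentially nothing about $P(r_c)$. Once this claim fails, the follow-on claim that the candidate $(r_c,s_c)$ are ``$O(1)$ lattice points at that level'' fails too, and your first argument for a polynomial bound evaporates. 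Your fallback reasoning in the same paragraph (``the identity of the segment \ldots still ranges over $\le\tfrac12 n^2$ possibilities for each of the two simultaneously-processed segments'') contradicts the $O(1)$ claim you just made and is the one that actually has to do the work -- but there the constants don't come out: with $\le\tfrac12 n^2$ possibilities per segment and a single orientation factor of $2$, you would get $\tfrac12 n^4$, not $\tfrac18 n^4$; the $\tfrac14$ you insert in the final display is unmotivated.

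The paper's actual argument is structurally different in its bookkeeping. Rather than indexing by the traversal step $(r_c,s_c)$ directly, it introduces two auxiliary \emph{circuits} (not segment indices): $T_\psi$, the circuit containing the first node of the first segment of the right part of the mountain $P_j$ in which $\tau$ lies, and $\sigma_\psi$, the circuit at the ``frontier'' of currently-unwound segments on the side of the mountain opposite to $\tau$. Together with the circuit $\Gamma$ containing $\tau$ (which \emph{is} determined by $\tau$), knowledge of these three circuits lets one determine, via the global ordering $\prec_{\mathcal{C}}$, which other circuits have been fully unwound and which have not, and hence their $G/G'$ labelling by comparison with $Z$. The residual ambiguity is then exactly the $2$-fold orientation of each of the three special circuits, giving the factor $2^3=8$; there are at most $\tfrac14\binom{n}{2}$ circuits (each has length $\ge 4$), so the final bound is $8\cdot\bigl(\tfrac14\binom{n}{2}\bigr)^2\le\tfrac18 n^4$. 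Note in particular that the paper's orientation factor is $8$, not your $2$; your factor of $2$ only accounts for ``unwinding vs.\ rewinding'' of the active segment but not for the orientation ambiguity of the other two boundary circuits. If you want to carry through your traversal-position indexing, you would still have to track these extra two-fold choices at the segment boundaries, which is essentially what the paper's $T_\psi,\sigma_\psi$ formalism does for you.
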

\begin{proof}%[\pkblue{Proof}] 

For any pair $(G,G')$, let $P$ be the landscape function of this canonical path between $G$ and $G'$ using the transition $\tau$, and $P_1,\dots,P_p$ its decomposition into mountain and valley functions.  Let $T_{\tau,\psi}(G,G') \in \{C_1,\dots,C_s\}$ be the circuit containing the first node of the first segment of the right part of the mountain/valley $P_j$ containing the transition $\tau$. Moreover, if $\tau$ is used in the processing of a segment on the left side of the mountain $P_j$ containing $\tau$, let $\sigma_{\psi}(G,G')$ be the circuit containing the last node of the segment with highest index on the right side of the mountain that is currently unwound. If $\tau$ lies on the right side of the mountain, we let $\sigma_{\psi}(G,G')$ be the circuit containing the last node of the segment with highest index on the left side of the mountain that is currently unwound.

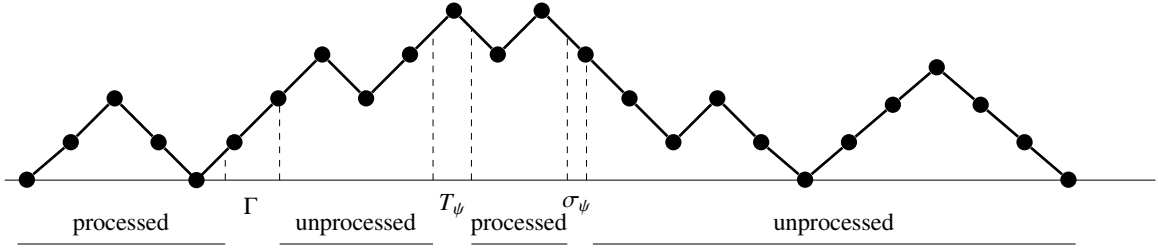
\begin{figure}[ht!]
\centering
\scalebox{0.85}{
\begin{tikzpicture}%[scale=0.1]
\coordinate (i0) (0,0) {};
%\node [below left=0.1cm and 0.1cm] {$\mathbf{(0,0)}$};

\node at (i0) [circle,scale=0.7,fill=black] {};
\node (i1) [above right=0.5cm and 0.5cm of i0, circle,scale=0.7,fill=black] {};
\node (i2) [above right=0.5cm and 0.5cm of i1, circle,scale=0.7,fill=black] {};
\node (i3) [above right=0.5cm and 0.5cm of i2, circle,scale=0.7,fill=black] {};
\node (i4) [below right=0.5cm and 0.5cm of i3, circle,scale=0.7,fill=black] {};
\node (i5) [above right=0.5cm and 0.5cm of i4, circle,scale=0.7,fill=black] {};
\node (i6) [above right=0.5cm and 0.5cm of i5, circle,scale=0.7,fill=black] {};
\node (i7) [below right=0.5cm and 0.5cm of i6, circle,scale=0.7,fill=black] {};
\node (i8) [above right=0.5cm and 0.5cm of i7, circle,scale=0.7,fill=black] {};
\node (i9) [below right=0.5cm and 0.5cm of i8, circle,scale=0.7,fill=black] {};
\node (i10) [below right=0.5cm and 0.5cm of i9, circle,scale=0.7,fill=black] {};
\node (i11) [below right=0.5cm and 0.5cm of i10, circle,scale=0.7,fill=black] {};
\node (i12) [above right=0.5cm and 0.5cm of i11, circle,scale=0.7,fill=black] {};
\node (i13) [below right=0.5cm and 0.5cm of i12, circle,scale=0.7,fill=black] {};
\node (i14) [below right=0.4cm and 0.5cm of i13, circle,scale=0.7,fill=black] {};
\node (i15) [above right=0.4cm and 0.5cm of i14, circle,scale=0.7,fill=black] {};
\node (i16) [above right=0.4cm and 0.5cm of i15, circle,scale=0.7,fill=black] {};
\node (i17) [above right=0.4cm and 0.5cm of i16, circle,scale=0.7,fill=black] {};
\node (i18) [below right=0.4cm and 0.5cm of i17, circle,scale=0.7,fill=black] {};
\node (i19) [below right=0.4cm and 0.5cm of i18, circle,scale=0.7,fill=black] {};
\node (i20) [below right=0.4cm and 0.5cm of i19, circle,scale=0.7,fill=black] {};

\node (j1) [above left=0.5cm and 0.5cm of i0, circle,scale=0.7,fill=black] {};
\node (j2) [above left=0.5cm and 0.5cm of j1, circle,scale=0.7,fill=black] {};
\node (j3) [below left=0.5cm and 0.5cm of j2, circle,scale=0.7,fill=black] {};
\node (j4) [below left=0.4cm and 0.5cm of j3, circle,scale=0.7,fill=black] {};

\path[every node/.style={sloped,anchor=south,auto=false}]
(i0) edge[-,very thick] node {} (i1)
(i1) edge[-,very thick] node {} (i2)
(i2) edge[-,very thick] node {} (i3)
(i3) edge[-,very thick] node {} (i4)
(i4) edge[-,very thick] node {} (i5)
(i5) edge[-,very thick] node {} (i6)
(i6) edge[-,very thick] node {} (i7)
(i7) edge[-,very thick] node {} (i8)
(i8) edge[-,very thick] node {} (i9)
(i9) edge[-,very thick] node {} (i10)
(i10) edge[-,very thick] node {} (i11)
(i11) edge[-,very thick] node {} (i12)
(i12) edge[-,very thick] node {} (i13)
(i13) edge[-,very thick] node {} (i14)
(i14) edge[-,very thick] node {} (i15)
(i15) edge[-,very thick] node {} (i16)
(i16) edge[-,very thick] node {} (i17)
(i17) edge[-,very thick] node {} (i18)
(i18) edge[-,very thick] node {} (i19)
(i19) edge[-,very thick] node {} (i20)
(i0) edge[-,very thick] node {} (j1)
(j1) edge[-,very thick] node {} (j2)
(j2) edge[-,very thick] node {} (j3)
(j3) edge[-,very thick] node {} (j4);

%Axis
\draw (-3,0) -- (15,0);

%T
\draw (3.7,0) edge[-,dashed] (3.7,2.35);
\draw (4.3,0) edge[-,dashed] (4.3,2.4);
\node at (4,-0.4) {$T_{\psi}$};
%Gamma
\draw (0.45,0) edge[-,dashed] (0.45,0.45);
\draw (1.3,0) edge[-,dashed] (1.3,1.45);
\node at (0.85,-0.4) {$\Gamma$};
%Sigma
\draw (5.8,0) edge[-,dashed] (5.8,2.3);
\draw (6.1,0) edge[-,dashed] (6.1,2);
\node at (5.95,-0.4) {$\sigma_{\psi}$};
%\draw (0,0) -- (0,3);

\draw (-2.8,-1) edge[-] node[anchor=south] {$\text{processed}$} (0.45,-1);
\draw (1.3,-1) edge[-] node[anchor=south] {$\text{unprocessed}$} (3.7,-1);
\draw (4.3,-1) edge[-] node[anchor=south] {$\text{processed}$} (5.8,-1);
\draw (6.2,-1) edge[-] node[anchor=south] {$\text{unprocessed}$} (13.75,-1);
\end{tikzpicture}}
\caption{The dashed vertical lines sketch the ranges of the circuits $T_{\psi}$, $\sigma_{\psi}$ and $\Gamma$. For every other circuit, contained in one of the four regions represented below the landscape, we know whether it has currently been processed or not.}
%\label{fig:switch}
\end{figure} 

We claim that, given $T_{\psi}, \sigma_{\psi} \in \{C_1,\dots,C_s\}$, it can be argued that there are at most $8$ pairs $(G,G')$ so that $T_{\psi} = T_{\psi}(G,G')$, $\sigma_{\psi} = \sigma_{\psi}(G,G')$. %, and  $\gamma$ is the circuit containing $\tau$ for the canonical path used by $(G,G')$.
This can be seen as follows. Note that we can infer for all other circuits in $\{C_1,\dots,C_s\} \setminus \{T_{\psi}$, $\sigma_{\psi},\Gamma\}$ which edges belong to $G$ and which to $G'$ using the (global) circuit ordering. To see this, assume that $\Gamma \preceq_{C}  T_{\psi} \preceq_{C} \sigma_{\psi}$ (the only other case $\sigma_{\psi} \preceq_{C}  T_{\psi} \preceq_{C} \Gamma$ is similar). Because the landscapes of the canonical paths always respect the circuit ordering, we know that all circuits in the canonical decomposition of $\psi$ appearing before $\Gamma$ have been unwound at this point.  All circuits lying strictly between $\Gamma$ and $T_{\psi}$ are not unwound. The circuits strictly between $T_{\psi}$ and $\sigma_{\psi}$ again have been unwound, and finally, all circuits appearing after $\sigma_{\psi}$ have not been unwound. By comparison with $Z$, it is uniquely determined with edges on these circuits belong to $G$ and which to $G'$. For the remaining three circuits $T_{\psi}$, $\sigma_{\psi}$ and $\gamma$ there are for every circuit two possible configurations of the edges of $G$ and $G'$, since every circuit alternates between edges of $G$ and $G'$.\footnote{Note that we cannot use the transition $\tau$ to infer which edges belong to $G$ and $G'$ on the circuit $\Gamma$, as we do not know (i.e., we do not encode) whether we are unwinding or rewinding the segment containing $\tau$.}
 Hence, there are at most $2^3 = 8$ possible pairs $(G,G')$ with the desired properties given $T_{\psi}$ and $\sigma_{\psi}$.

Finally, note that for any pairing $\psi$, there are at most $\frac{1}{4} \binom{n}{2}$ circuits in the canonical circuit decomposition $\{C_1,\dots,C_s\}$ of the pairing $\psi$, as every circuit has length at least four. Hence, for both $T_{\psi}$ and $\sigma_{\psi}$ there are at most $\frac{1}{4}\binom{n}{2}$ possible choices. Since $\Gamma$ is uniquely determined by the transition $\tau$, this implies that there are at most 
\[
8 \cdot \frac{1}{4} \binom{n}{2}\cdot \frac{1}{4} \binom{n}{2} \leq \frac{n^4}{8}
\] 
possible pairs $(G,G')$ with $L = L_{\tau}(G,G')$.\footnote{A canonical path uses every transition at most once, which follows from the fact that we assumed that a traversal is always minimal, see Definition \ref{def:transversal}.}
\end{proof}

\begin{figure}[ht!]
		\centering
		\scalebox{0.7}{
			\begin{tikzpicture}[
			->,
			>=stealth',
			shorten >=1pt,
			auto,
			%node distance=2cm,
			semithick,
			every state/.style={circle,radius=0.1pt,text=black},
			]
			\begin{scope}
			\node[state]  (a1)               					 		{$a_1$};
			\node[state]  (a2) [above=3cm of a1]            			{$a_2$};
			\node[state]  (a3) [right=2cm of a2]                		{$a_3$};
			\node[state]  (a4) [below=3cm of a3]               	    {$a_4$};
			
  \node[state]  (x0) [right=1cm of a4]             					 		{$x_0$};
  \node[state]  (x1) [above=1cm of x0] 	 		{$x_1$};
  \node[state]  (x2) [above=1cm of x1]  			 		{$x_2$};
  \node[state]  (x3) [right=1cm of x2]					{$x_3$};
  \node[state]  (x4) [right=1cm of x3]					{$x_4$};
  \node[state]  (x5) [right=1cm of x4]					{$x_5$};
  \node[state]  (x6) [right=1cm of x5]					{$x_6$};
  \node[state]  (x7) [right=1cm of x6]					{$x_7$};
  \node[state]  (x8) [right=1cm of x7]					{$x_8$};
  \node[state]  (x9) [below=1cm of x8]					{$x_9$};
  \node[state]  (x10) [below=1cm of x9]					{$x_{10}$};
  \node[state]  (x11) [left=1cm of x10]					{$x_{11}$};
  \node[state]  (x12) [left=1cm of x11]					{$x_{12}$};
  \node[state]  (x13) [left=1cm of x12]					{$x_{13}$};
  \node[state]  (x14) [left=1cm of x13]					{$x_{14}$};
  \node[state]  (x15) [left=1cm of x14]					{$x_{15}$};

			\node[state]  (b1) [right=1cm of x10]          			{$b_1$};
			\node[state]  (b2) [above=3cm of b1]            			{$b_2$};
			\node[state]  (b3) [right=2cm of b2]                		{$b_3$};
			\node[state]  (b4) [below=3cm of b3]               	    {$b_4$};
			\path[every node/.style={sloped,anchor=south,auto=false}]
			%(v) edge[line width=1.5pt, bend left=25] 	node {} (x1)        
(x0) edge[-, line width=2pt] 	node {$c$} (x1)            
(x1) edge[-,dashed,line width=2pt] 	node {} (x2) 
(x2) edge[-,line width=2pt] 	node {$c$} (x3) 
(x3) edge[-,dashed,line width=2pt] 	node {$c$} (x4) 
(x4) edge[-, line width=2pt] 	node {} (x5)
(x5) edge[-,dashed,line width=2pt] 	node {$c$} (x6) 
(x6) edge[-, line width=2pt] 	node {} (x7)
(x7) edge[-,dashed,line width=2pt] 	node {} (x8) 
(x8) edge[-, line width=2pt] 	node {$c$} (x9)
(x9) edge[-,dashed,line width=2pt] 	node {} (x10) 
(x10) edge[-, line width=2pt] 	node {} (x11)
(x11) edge[-,dashed,line width=2pt] 	node {} (x12) 
(x12) edge[-, line width=2pt] 	node {$c$} (x13)
(x13) edge[-,dashed,line width=2pt] 	node {} (x14) 
(x14) edge[-, line width=2pt] 	node {} (x15)
(x15) edge[-,dashed,line width=2pt] 	node {} (x0)
			%Path for the a and b cycles
			(a1) edge[-, line width=2pt] 	node {} (a2)   
			(a2) edge[-,dashed, line width=2pt] 	node {} (a3)   
			(a3) edge[-, line width=2pt] 	node {} (a4)   
			(a4) edge[-,dashed, line width=2pt] 	node {} (a1)   
			(b1) edge[-, line width=2pt] 	node {} (b2)   
			(b2) edge[-,dashed, line width=2pt] 	node {$c$} (b3)   
			(b3) edge[-, line width=2pt] 	node {} (b4)   
			(b4) edge[-,dashed, line width=2pt] 	node {$c$} (b1);
			
			\end{scope}
			\end{tikzpicture}}
		\caption{Symmetric difference $H = G \triangle G'$ where the solid edges represent the (blue) edges $G$ and the dashed edges the (red) edges of $G'$. From left to right the circuit are numbered $C_1 = a_1a_2a_3a_4a_1$, $C_2 = x_0\cdots x_{15}x_0$ and $C_3 = b_1b_2b_3b_4b_1$, and assume that this is also the order in which they are processed. Cut edges are indicated with the label $c$.} 
		\label{fig:sym_dif}
	\end{figure}
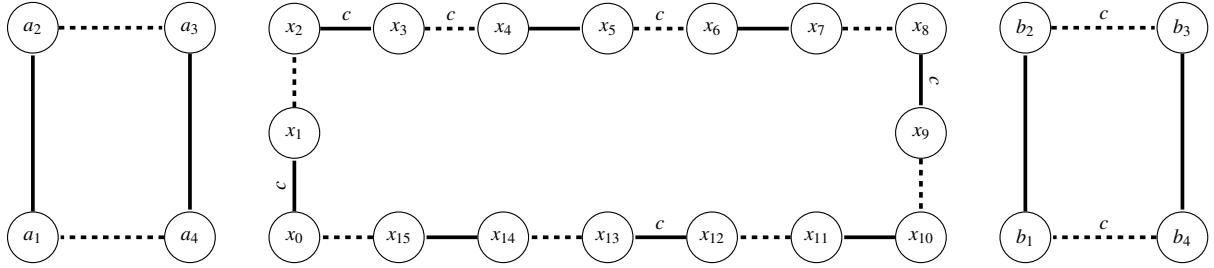

	\begin{figure}[ht!]
\centering
\scalebox{1}{
\begin{tikzpicture}%[scale=0.1]
\coordinate (i0) (0,0) {};
%\node [below left=0.1cm and 0.1cm] {$\mathbf{(0,0)}$};

\node at (i0) [circle,scale=0.7,fill=black] {};
\node (i1) [below right=0.75cm and 0.75cm of i0, circle,scale=0.7,fill=black] {};
\node (i2) [above right=0.7cm and 0.75cm of i1, circle,scale=0.7,fill=black] {};
\node (i3) [below right=0.75cm and 0.75cm of i2, circle,scale=0.7,fill=black] {};
\node (i4) [below right=0.75cm and 0.75cm of i3, circle,scale=0.7,fill=black] {};
\node (i5) [above right=0.75cm and 0.75cm of i4, circle,scale=0.7,fill=black] {};
\node (i6) [above right=0.75cm and 0.75cm of i5, circle,scale=0.7,fill=black] {};
%\node (i7) [below right=0.5cm and 0.5cm of i6, circle,scale=0.7,fill=black] {};
%\node (i8) [above right=0.5cm and 0.5cm of i7, circle,scale=0.7,fill=black] {};
%\node (i9) [below right=0.5cm and 0.5cm of i8, circle,scale=0.7,fill=black] {};
%\node (i10) [below right=0.5cm and 0.5cm of i9, circle,scale=0.7,fill=black] {};
%\node (i11) [below right=0.5cm and 0.5cm of i10, circle,scale=0.7,fill=black] {};
%\node (i12) [above right=0.5cm and 0.5cm of i11, circle,scale=0.7,fill=black] {};
%\node (i13) [below right=0.5cm and 0.5cm of i12, circle,scale=0.7,fill=black] {};
%\node (i14) [below right=0.4cm and 0.5cm of i13, circle,scale=0.7,fill=black] {};

\path[every node/.style={sloped,anchor=north,auto=false}]
(i0) edge[-,very thick] node {$S_1$} (i1)
(i1) edge[-,very thick] node {$S_2$} (i2)
(i2) edge[-,very thick] node {$S_3$} (i3)
(i3) edge[-,very thick] node {$S_4$} (i4)
(i4) edge[-,very thick] node {$S_5$} (i5)
(i5) edge[-,very thick] node {$S_6$} (i6);
%(i6) edge[-,very thick] node {} (i7)
%(i7) edge[-,very thick] node {} (i8)
%(i8) edge[-,very thick] node {} (i9)
%(i9) edge[-,very thick] node {} (i10)
%(i10) edge[-,very thick] node {} (i11)
%(i11) edge[-,very thick] node {} (i12)
%(i12) edge[-,very thick] node {} (i13)
%(i13) edge[-,very thick] node {} (i14);

%Axis
\draw (-1,0) -- (6,0);
%\draw (0,0) -- (0,-1.5);

\end{tikzpicture}}
\caption{The landscape, consisting of two valleys, corresponding to the symmetric difference in Figure \ref{fig:sym_dif}. The segments are given by $S_1 = (a_1a_2a_3a_4a_1,x_0x_1x_2)$, $S_2 = (x_2x_3x_4x_5x_6)$, $S_3 = (x_6x_7x_8x_9x_{10})	$, $S_4 = (x_{10}x_{11}x_{12}x_{13}x_{14})$, $S_5 = (x_{14}x_{15}x_{16}, b_1b_2)$, and $S_6 = (b_3b_4b_1)$.}
\label{fig:landscape_encoding}
\end{figure}
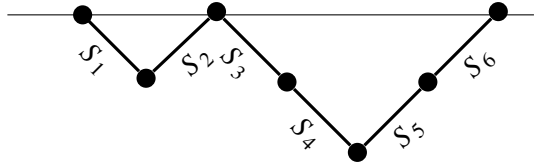

	\begin{figure}[ht!]
		\centering
		\scalebox{0.7}{
\begin{tikzpicture}[
			->,
			>=stealth',
			shorten >=1pt,
			auto,
			%node distance=2cm,
			semithick,
			every state/.style={circle,radius=0.1pt,text=black},
			]
			\begin{scope}
			\node[state]  (a1)               					 		{$a_1$};
			\node[state]  (a2) [above=3cm of a1]            			{$a_2$};
			\node[state]  (a3) [right=2cm of a2]                		{$a_3$};
			\node[state]  (a4) [below=3cm of a3]               	    {$a_4$};
			
  \node[state]  (x0) [right=1cm of a4]             					 		{$x_0$};
  \node[state]  (x1) [above=1cm of x0] 	 		{$x_1$};
  \node[state]  (x2) [above=1cm of x1]  			 		{$x_2$};
  \node[state]  (x3) [right=1cm of x2]					{$x_3$};
  \node[state]  (x4) [right=1cm of x3]					{$x_4$};
  \node[state]  (x5) [right=1cm of x4]					{$x_5$};
  \node[state]  (x6) [right=1cm of x5]					{$x_6$};
  \node[state]  (x7) [right=1cm of x6]					{$x_7$};
  \node[state]  (x8) [right=1cm of x7]					{$x_8$};
  \node[state]  (x9) [below=1cm of x8]					{$x_9$};
  \node[state]  (x10) [below=1cm of x9]					{$x_{10}$};
  \node[state]  (x11) [left=1cm of x10]					{$x_{11}$};
  \node[state]  (x12) [left=1cm of x11]					{$x_{12}$};
  \node[state]  (x13) [left=1cm of x12]					{$x_{13}$};
  \node[state]  (x14) [left=1cm of x13]					{$x_{14}$};
  \node[state]  (x15) [left=1cm of x14]					{$x_{15}$};
  
  	\node (arrow1) [below=0.2cm of x13]               	    {};
	\node (arrow2) [below=1.2cm of arrow1]               	  {};
	\node (arrow3) [below=0cm of arrow2]               	  {};

			\node[state]  (b1) [right=1cm of x10]          			{$b_1$};
			\node[state]  (b2) [above=3cm of b1]            			{$b_2$};
			\node[state]  (b3) [right=2cm of b2]                		{$b_3$};
			\node[state]  (b4) [below=3cm of b3]               	    {$b_4$};
			\path[every node/.style={sloped,anchor=south,auto=false}]
			%(v) edge[line width=1.5pt, bend left=25] 	node {} (x1)        
%(x0) edge[-, line width=2pt] 	node {$c$} (x1)            
(x1) edge[-,line width=2pt] 	node {} (x2) 
%(x2) edge[-,line width=2pt] 	node {$c$} (x3) 
(x3) edge[-,line width=2pt] 	node {} (x4) 
%(x4) edge[-, line width=2pt] 	node {} (x5)
(x5) edge[-,line width=2pt] 	node {} (x6) 
%(x5) edge[-,dashed,line width=0pt,color=white] 	node {$\textcolor{black}{c}$} (x6) 
%(x6) edge[-, line width=2pt] 	node {} (x7)
(x7) edge[-,line width=2pt] 	node {} (x8) 
%(x8) edge[-, line width=2pt] 	node {} (x9)
(x9) edge[-,line width=2pt] 	node {} (x10) 
(x10) edge[-, line width=2pt] 	node {} (x11)
%(x11) edge[-,dashed,line width=2pt] 	node {} (x12) 
(x12) edge[-, line width=2pt] 	node {} (x13)
%(x13) edge[-,dashed,line width=2pt] 	node {} (x14) 
(x14) edge[-, line width=2pt] 	node {} (x15)
%(x15) edge[-,dashed,line width=2pt] 	node {} (x0)
			%Path for the a and b cycles
			%(a1) edge[-, line width=2pt] 	node {} (a2)   
			(a2) edge[-,line width=2pt] 	node {} (a3)   
			%(a3) edge[-, line width=2pt] 	node {} (a4)   
			(a4) edge[-,line width=2pt] 	node {} (a1)   
			%(b1) edge[-, line width=2pt] 	node {} (b2)   
			(b2) edge[-,line width=2pt] 	node {} (b3)   
			(b3) edge[-, line width=2pt] 	node {} (b4)
			(arrow1) edge[line width=3.8pt] 	node {} (arrow2); ;
			%(b4) edge[-,dashed, line width=2pt] 	node {$c$} (b1);
			
			\end{scope}
			\end{tikzpicture}}	
		\quad
		\scalebox{0.7}{
			\begin{tikzpicture}[
			->,
			>=stealth',
			shorten >=1pt,
			auto,
			%node distance=2cm,
			semithick,
			every state/.style={circle,radius=0.1pt,text=black},
			]
			\begin{scope}
			\node[state]  (a1)               					 		{$a_1$};
			\node[state]  (a2) [above=3cm of a1]            			{$a_2$};
			\node[state]  (a3) [right=2cm of a2]                		{$a_3$};
			\node[state]  (a4) [below=3cm of a3]               	    {$a_4$};
			
  \node[state]  (x0) [right=1cm of a4]             					 		{$x_0$};
  \node[state]  (x1) [above=1cm of x0] 	 		{$x_1$};
  \node[state]  (x2) [above=1cm of x1]  			 		{$x_2$};
  \node[state]  (x3) [right=1cm of x2]					{$x_3$};
  \node[state]  (x4) [right=1cm of x3]					{$x_4$};
  \node[state]  (x5) [right=1cm of x4]					{$x_5$};
  \node[state]  (x6) [right=1cm of x5]					{$x_6$};
  \node[state]  (x7) [right=1cm of x6]					{$x_7$};
  \node[state]  (x8) [right=1cm of x7]					{$x_8$};
  \node[state]  (x9) [below=1cm of x8]					{$x_9$};
  \node[state]  (x10) [below=1cm of x9]					{$x_{10}$};
  \node[state]  (x11) [left=1cm of x10]					{$x_{11}$};
  \node[state]  (x12) [left=1cm of x11]					{$x_{12}$};
  \node[state]  (x13) [left=1cm of x12]					{$x_{13}$};
  \node[state]  (x14) [left=1cm of x13]					{$x_{14}$};
  \node[state]  (x15) [left=1cm of x14]					{$x_{15}$};

			\node[state]  (b1) [right=1cm of x10]          			{$b_1$};
			\node[state]  (b2) [above=3cm of b1]            			{$b_2$};
			\node[state]  (b3) [right=2cm of b2]                		{$b_3$};
			\node[state]  (b4) [below=3cm of b3]               	    {$b_4$};
			\path[every node/.style={sloped,anchor=south,auto=false}]
			%(v) edge[line width=1.5pt, bend left=25] 	node {} (x1)        
%(x0) edge[-, line width=2pt] 	node {$c$} (x1)            
(x1) edge[-,line width=2pt] 	node {} (x2) 
%(x2) edge[-,line width=2pt] 	node {$c$} (x3) 
(x3) edge[-,line width=2pt] 	node {} (x4) 
%(x4) edge[-, line width=2pt] 	node {} (x5)
(x5) edge[-,line width=2pt] 	node {} (x6) 
%(x5) edge[-,dashed,line width=0pt,color=white] 	node {$\textcolor{black}{c}$} (x6) 
%(x6) edge[-, line width=2pt] 	node {} (x7)
(x7) edge[-,line width=2pt] 	node {} (x8) 
%(x8) edge[-, line width=2pt] 	node {} (x9)
(x9) edge[-,line width=2pt] 	node {} (x10) 
%(x10) edge[-, line width=2pt] 	node {} (x11)
(x11) edge[-,line width=2pt] 	node {} (x12) 
(x12) edge[-, line width=2pt] 	node {} (x13)
%(x13) edge[-,dashed,line width=2pt] 	node {} (x14) 
(x14) edge[-, line width=2pt] 	node {} (x15)
%(x15) edge[-,dashed,line width=2pt] 	node {} (x0)
			%Path for the a and b cycles
			%(a1) edge[-, line width=2pt] 	node {} (a2)   
			(a2) edge[-,line width=2pt] 	node {} (a3)   
			%(a3) edge[-, line width=2pt] 	node {} (a4)   
			(a4) edge[-,line width=2pt] 	node {} (a1)   
			%(b1) edge[-, line width=2pt] 	node {} (b2)   
			(b2) edge[-,line width=2pt] 	node {} (b3)   
			(b3) edge[-, line width=2pt] 	node {} (b4);
			%(b4) edge[-,dashed, line width=2pt] 	node {$c$} (b1);
			
			\end{scope}
			\end{tikzpicture}}
		\caption{The transition $\tau = (Z,Z')$ that is the hinge flip operation that removes the edge $\{x_{10},x_{11}\}$ and adds the edge $\{x_{11},x_{12}\}$ as part of the unwinding of $S_4$. Note that the segments $S_1$ and as $S_2$, forming the first valley, have been processed already. Also, the first segment $S_3$ of the left part of the second valley, as well as the segment $S_5$ being the first segment of the right part of the second valley, have been processed already. The edges in $(E(G) \cup E(G')) \setminus E(H)$ are left out.} 
		\label{fig:transition}
	\end{figure}
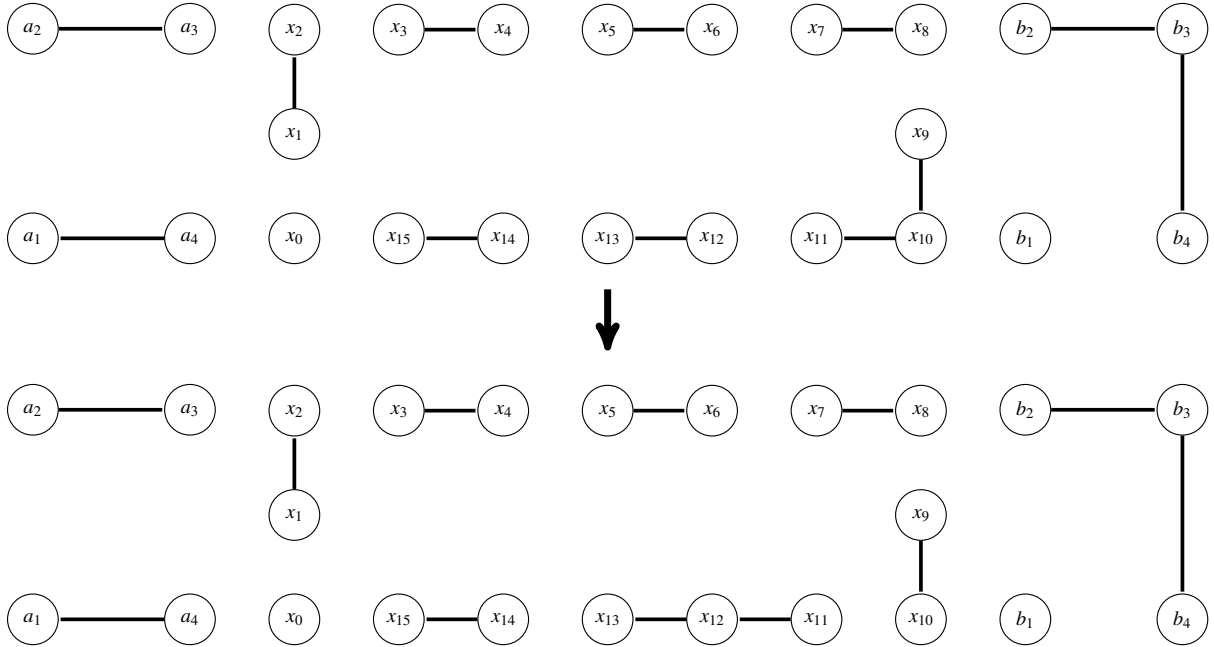

	\begin{figure}[ht!]
		\centering
		\scalebox{0.7}{
			\begin{tikzpicture}[
			->,
			>=stealth',
			shorten >=1pt,
			auto,
			%node distance=2cm,
			semithick,
			every state/.style={circle,radius=0.1pt,text=black},
			]
			\begin{scope}
			\node[state]  (a1)               					 		{$a_1$};
			\node[state]  (a2) [above=3cm of a1]            			{$a_2$};
			\node[state]  (a3) [right=2cm of a2]                		{$a_3$};
			\node[state]  (a4) [below=3cm of a3]               	    {$a_4$};
			
  \node[state]  (x0) [right=1cm of a4]             					 		{$x_0$};
  \node[state]  (x1) [above=1cm of x0] 	 		{$x_1$};
  \node[state]  (x2) [above=1cm of x1]  			 		{$x_2$};
  \node[state]  (x3) [right=1cm of x2]					{$x_3$};
  \node[state]  (x4) [right=1cm of x3]					{$x_4$};
  \node[state]  (x5) [right=1cm of x4]					{$x_5$};
  \node[state]  (x6) [right=1cm of x5]					{$x_6$};
  \node[state]  (x7) [right=1cm of x6]					{$x_7$};
  \node[state]  (x8) [right=1cm of x7]					{$x_8$};
  \node[state]  (x9) [below=1cm of x8]					{$x_9$};
  \node[state]  (x10) [below=1cm of x9]					{$x_{10}$};
  \node[state]  (x11) [left=1cm of x10]					{$x_{11}$};
  \node[state]  (x12) [left=1cm of x11]					{$x_{12}$};
  \node[state]  (x13) [left=1cm of x12]					{$x_{13}$};
  \node[state]  (x14) [left=1cm of x13]					{$x_{14}$};
  \node[state]  (x15) [left=1cm of x14]					{$x_{15}$};

			\node[state]  (b1) [right=1cm of x10]          			{$b_1$};
			\node[state]  (b2) [above=3cm of b1]            			{$b_2$};
			\node[state]  (b3) [right=2cm of b2]                		{$b_3$};
			\node[state]  (b4) [below=3cm of b3]               	    {$b_4$};
			\path[every node/.style={sloped,anchor=south,auto=false}]
			%(v) edge[line width=1.5pt, bend left=25] 	node {} (x1)        
(x0) edge[-, line width=2pt] 	node {} (x1)            
%(x1) edge[-,line width=2pt] 	node {} (x2) 
(x2) edge[-,line width=2pt] 	node {} (x3) 
%(x3) edge[-,line width=2pt] 	node {} (x4) 
(x4) edge[-, line width=2pt] 	node {} (x5)
%(x5) edge[-,line width=2pt] 	node {} (x6) 
%(x5) edge[-,dashed,line width=0pt,color=white] 	node {$\textcolor{black}{c}$} (x6) 
(x6) edge[-, line width=2pt] 	node {} (x7)
%(x7) edge[-,line width=2pt] 	node {} (x8) 
(x8) edge[-, line width=2pt] 	node {} (x9)
%(x9) edge[-,line width=2pt] 	node {} (x10) 
%(x10) edge[-, line width=2pt] 	node {} (x11)
(x11) edge[-,line width=2pt] 	node {} (x12) 
%(x12) edge[-, line width=2pt] 	node {} (x13)
(x13) edge[-,line width=2pt] 	node {} (x14) 
%(x14) edge[-, line width=2pt] 	node {} (x15)
(x15) edge[-,line width=2pt] 	node {} (x0)
			%Path for the a and b cycles
			(a1) edge[-, line width=2pt] 	node {} (a2)   
			%(a2) edge[-,line width=2pt] 	node {} (a3)   
			(a3) edge[-, line width=2pt] 	node {} (a4)   
			%(a4) edge[-,line width=2pt] 	node {} (a1)   
			(b1) edge[-, line width=2pt] 	node {} (b2)   
			%(b2) edge[-,line width=2pt] 	node {} (b3)   
			%(b3) edge[-, line width=2pt] 	node {} (b4);
			(b4) edge[-,line width=2pt] 	node {} (b1);
			
			\end{scope}
			\end{tikzpicture}}
		\caption{The encoding $L = L_t(G,G') = (G \triangle G') \triangle Z$ for the symmetric difference in Figure \ref{fig:sym_dif} and transition as in Figure \ref{fig:transition}, where again the edges in $(E(G) \cup E(G')) \setminus E(H)$ are left out.} 
		\label{fig:encoding}
	\end{figure}
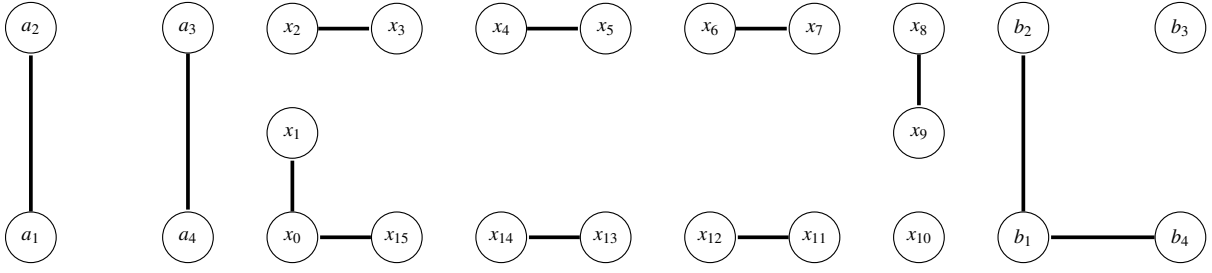

\subsubsection{Bounding the Congestion}
For a tuple $(G,G',\psi)$, let $p_{\psi}(G,G')$ denote the canonical path from $G$ to $G'$ for pairing $\psi$.
Let 
\[
\mathcal{L}_\tau = \cup_{(G,G',\psi) \in \mathcal{F}_\tau} L_\tau(G,G')
\]
be the union of all (distinct) encodings $L_\tau$, where 
$
\mathcal{F}_\tau = \{(G,G',\psi) : \tau \in p_{\psi}(G,G')\}
$ 
is the set of all tuples $(G,G',\psi)$ such that the canonical path from $G$ to $G'$ under pairing $\psi$ uses the transition $\tau$. A crucial observation is the following.

\begin{lemma}
If $L_{\tau}(G,G') = (G \triangle G') \triangle Z$ for transition $\tau = (Z,Z')$ used by a canonical path between $G$ and $G'$, then $L \in \mathcal{G}'(\gamma,d)$. This implies that 
\begin{equation}\label{eq:injective1_jdm}
|\mathcal{L}_\tau| \leq |\mathcal{G}'(\gamma,d)|.
\end{equation}
\end{lemma}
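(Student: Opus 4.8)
The plan is to verify directly that $L := L_\tau(G,G') = (G \triangle G') \triangle Z$ satisfies the three conditions (i)--(iii) defining $\mathcal{G}'(\gamma,d)$, deducing them from the fact that $Z$ satisfies them (it lies in $\mathcal{G}'(\gamma,d)$ by Lemma \ref{lem:feasible_path}) together with $G,G' \in \mathcal{G}(\gamma,d)$. The inequality \eqref{eq:injective1_jdm} is then immediate, since $\mathcal{L}_\tau$ is, by definition, a set of such graphs $L_\tau(G,G')$.

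Write $H = E(G) \triangle E(G')$. The preliminary observation is that every hinge flip on the canonical path only deletes and adds edges lying on the circuits of the canonical decomposition of $H$, hence edges of $H$; consequently, for every intermediate state $Z$ on the path, the edges of $Z$ outside $H$ are precisely the common edges, $E(Z) \setminus H = E(G) \cap E(G')$ (common edges are never removed; non-edges of both $G$ and $G'$ are never added). It follows that $L = H \triangle Z$ agrees with $G$ (equivalently with $G'$) outside $H$, while on $H$ it is the complement of $Z$: $E(L) \cap H = H \setminus E(Z)$.

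From this, at any vertex $v$ the edges incident to $v$ split into (a) those in $H$, which are partitioned between $Z$ and $L$, and (b) those outside $H$, which belong to both $Z$ and $L$ exactly when they are common edges of $G,G'$. Hence $\deg_Z(v) + \deg_L(v) = \deg_H(v) + 2 c_{\cap}(v)$, where $c_{\cap}(v)$ is the number of edges of $E(G)\cap E(G')$ at $v$; and since $\deg_H(v) = \deg_G(v) + \deg_{G'}(v) - 2 c_{\cap}(v) = 2 d_v - 2 c_{\cap}(v)$, this gives $\deg_Z(v) + \deg_L(v) = 2 d_v$. The same counting applied to the edges between $V_1$ and $V_2$, using that $G$ and $G'$ each have exactly $\gamma$ of them, yields $\gamma^L + \gamma^Z = 2\gamma$. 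In other words, $L$ has at every vertex the perturbation opposite to that of $Z$, and $\gamma^L - \gamma = -(\gamma^Z - \gamma)$.

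Conditions (i)--(iii) for $L$ now follow from those for $Z$: negating all perturbations preserves both $\sum_v (d_v - \deg(v)) = 0$ and $\sum_v |d_v - \deg(v)| \in \{0,2,4\}$, while $\gamma^Z \in \{\gamma-1,\gamma,\gamma+1\}$ forces $\gamma^L = 2\gamma - \gamma^Z \in \{\gamma-1,\gamma,\gamma+1\}$. Hence $L \in \mathcal{G}'(\gamma,d)$, and \eqref{eq:injective1_jdm} follows. The only point that really needs care is this bookkeeping — in particular the claim that $E(Z) \setminus H = E(G) \cap E(G')$ for every $Z$ on the path (which is where one uses that canonical-path hinge flips touch only $H$-edges) and the two counting identities derived from $G,G' \in \mathcal{G}(\gamma,d)$; everything else is a one-line deduction.
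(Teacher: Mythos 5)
Your proof is correct and follows essentially the same approach as the paper: both arguments reduce to the observation that $L$ has the opposite perturbation to $Z$ at every vertex (and likewise for the cut-edge count), from which conditions (i)--(iii) are immediate. Your version is a bit more explicit in justifying this via $E(Z)\setminus H = E(G)\cap E(G')$ and the resulting identities $\deg_Z(v)+\deg_L(v)=2d_v$ and $\gamma^Z+\gamma^L=2\gamma$, but the underlying idea is identical to the paper's sketch.
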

\begin{proof}[Proof (sketch)]
We check that the properties (i), (ii) and (iii) defining the set $\mathcal{G}'(\gamma,d)$ are satisfied by $L$. Note that $L \triangle Z = G \triangle G$. As every individual hinge flip operations adds and removes an arc from the symmetric difference, it follows that $L$ and $Z$ have the same number of edges. This proves property (i). Also, if $Z$ has a perturbation of $\alpha_v \in \{-2,-1,0,-1,-2\}$ (see Proposition \ref{prop:basic_properties}) at node $v$, then $L$ has a perturbation of $-\alpha_v$ at node $v$, which shows that property (ii) is satisfied for $L$. Finally, with $\beta \in \{-1,0,1\}$, if $Z$ contains $\gamma - \beta$ cut edges, then $L$ contains $\gamma + \beta$ cut edges (using implicitly that $G$ and $G'$ contain the same number of cut edges). This implies that property (iii) is satisfied.
\end{proof}

	Moreover, with $H = G \triangle G'$ and $L=L_t(G,G')$, the pairing $\psi$ has the property that it pairs up the edges of $E(H)\mysetminus E(L)$ and $E(H)\cap E(L)$ in such a way that for every node $v$  each edge in $E(H)\mysetminus E(L)$ that is incident to $v$ is paired up with an edge in $E(H)\cap E(L)$ that is incident to $v$, except for at most four pairs.\footnote{These are the nodes $x_0, x_{12}, b_1$ and $b_3$ in Figure \ref{fig:encoding}.} %However, there are either two nodes for which the incident edges in $E(H)\mysetminus E(L)$ exceed by 2 the incident edges in $E(H)\cap E(L)$, or one node for which the incident edges in $E(H)\mysetminus E(L)$ exceed by 4 the incident edges in $E(H)\cap E(L)$. These are exactly the two nodes with degree deficit 1 or the one node with degree deficit 2 in $L$; for the example in Figure \ref{fig:encoding3} these are nodes $x_1$ and $x_6$. There $\psi$ pairs up each edge of $E(H)\cap E(L)$ to an edge of $E(H)\mysetminus E(L)$ but also two edges of $E(H)\mysetminus E(L)$ with each other; or in the case of one node with degree deficit 2 $\psi$ pairs up each edge of $E(H)\cap E(L)$ to an edge of $E(H)\mysetminus E(L)$ but also makes two pairs out of the remaining 4 edges in $E(H)\mysetminus E(L)$. 
	Let $\Psi'(L)$ be the set of all pairings with this property. Remember that we do not need to know $G$ and $G'$ in order to determine the set $H = L_\tau \triangle Z$. Note that not every such pairing has to correspond to a tuple $(G,G',\psi)$ for which $t \in p_{\psi}(G,G')$. 
Using a counting argument,\footnote{This can be done similarly as the argument used towards the end of Appendix \ref{app:js}.} we can upper bound $|\Psi'(L)|$ in terms of $|\Psi(H)|$. In particular, there exists a polynomial $q(n)$ such that 
	\begin{equation}\label{eq:injective2_jdm}
	%|\Psi'(L)| = \left( \Pi_{v \in V\mysetminus\{u, %w\}}\theta_v! \right) \cdot \frac{(\theta_u+1)!}{2} %\cdot \frac{(\theta_w+1)!}{2} = |\Psi(H)| \cdot 
	%\frac{(\theta_u+1)(\theta_w+1)}{4} \leq n^2 \cdot |
	%\Psi(H)| \,.
	|\Psi'(L)| \leq q(n)\cdot \Psi(H)|.\footnote{A very rough choice is $q(n) = n^{20}$.}
	\end{equation}
	Putting everything together, we have
	\begin{eqnarray}\label{eq:injective3}
	|\mathcal{G}'(\gamma,d)|^2 f'(\tau)  & = & \sum_{(G,G')} \sum_{\psi \in \Psi(G,G')} \mathbf{1}(e \in p_{\psi}(H)) |\Psi(H)|^{-1}  \nonumber \\
	& \leq & \frac{1}{8}n^4 \sum_{L \in \mathcal{L}_\tau} \sum_{\psi' \in \Psi'(L)}  |\Psi(H)|^{-1} \ \ \ \ \ (\text{using Lemma } \ref{lem:recovery}) \nonumber \\
	& \leq & \frac{1}{8}n^4 \cdot q(n) \sum_{L \in \mathcal{L}_\tau} 1 \ \  \ \ \ \ \ \ \ \ \ \ \ \ \ \ \ \ (\text{using } (\ref{eq:injective2_jdm})) \nonumber \\
	& \leq & \frac{1}{8}n^4\cdot q(n) \cdot|\mathcal{G}'(\gamma,d)|  \ \  \ \ \ \ \ \ \ \ \ \ \ \ (\text{using } (\ref{eq:injective1_jdm}))
	\end{eqnarray}
	The usage of Lemma \ref{lem:recovery_jdm} for the first inequality works as follows. Every tuple $(G,G',\psi)\in \mathcal{F}_t$ with encoding $L_t(G, G')$ generates a unique tuple in $\{L_t(G, G')\} \times \Psi'(L_t(G, G'))$. But since, by Lemma \ref{lem:recovery_jdm}, there are at most $\frac{1}{8}n^4$ pairs $(G,G')$ with $L = L_{\tau}(G,G')$ for given $L$, $\tau$ and $\psi$, we have that $\frac{1}{8}n^4 \sum_{L \in \mathcal{L}_\tau} |\{L\} \times \Psi'(L)| = \frac{1}{8}n^4 \sum_{L \in \mathcal{L}_\tau} \sum_{\psi' \in \Psi'(L)} 1$ is an upper bound on the number of canonical paths that use $\tau$. 
	
	By rearranging \eqref{eq:injective3} we get the upper bound for  $f'$ required in Lemma \ref{lem:flow_simplification_jdm}. We already observed that the length of any canonical path is polynomially bounded as well. This then completes the proof of Theorem \ref{thm:auxiliary}. % What is left to show is that $\ell(f')$ is not too large. This, however, is determined by the way we defined the canonical paths. It is easy to see that for any canonical path between any two graphs $G, G' \in \mathcal{G}(d)$ has length at most $\frac{3}{4} |E(G \triangle G')|$ and, therefore, $\ell(f')\le n^2$.

%\begin{remark}[Bipartite case]\label{rem:app_bip}
%The proof for the bipartite case is very similar. The only difference is that in the circuit processing procedure there will never be an auxiliary state where one node has degree deficit two. For this to occur there necessarily has to be a simple cycle of odd length in a circuit, see, e.g., Figure \ref{fig:step1_2}. This explains the adjusted definition of stability for the bipartite case as in Section \ref{sec:bipartite}. Moreover, we can use the same encoding and injective mapping arguments.
%\end{remark}

%\newpage
\bigskip

%%%%%%%%%%%%%%%%%%%%%%%%%%%%%%%%%%%%%%%%
%%%%%%%%%%%%%%%%%%%%%%%%%%%%%%%%%%%%%%%%
\section{Strongly Stable Families for the PAM Model}\label{sec:stable_jdm}
%%%%%%%%%%%%%%%%%%%%%%%%%%%%%%%%%%%%%%%%
%%%%%%%%%%%%%%%%%%%%%%%%%%%%%%%%%%%%%%%%
In Appendix \ref{sec:auxiliary} we have shown that the hinge flip Markov chain for PAM instances with two classes is rapidly mixing on $\mathcal{G}'(\gamma,d)$ in case $(\gamma,d)$ comes from a family of strongly stable tuples. In this section we give two explicit families of sequences that are strongly stable. When dealing with a family of instances, even when this is not explicitly mentioned, we  only consider the tuples $(c,d)$ for which there is at least one graphical realization.

\begin{theorem}[Regular classes]\label{cor:stable_jdm}
Let $\mathcal{D}$ be the family of instances of the joint degree matrix model, i.e., where for every tuple $(V_1,V_2,\gamma,d)$ it holds that 
$
1 \leq \beta_1,\beta_2 \leq |V| - 1,
$
and $1 \leq \gamma \leq |V_1||V_2| - 1$,
%$3 \leq \beta_i \leq |V_i| - 3$ for $i = 1,2$. 
where $\beta_1$ and $\beta_2$ are the common degrees in the classes $V_1$ and $V_2$, respectively. The family $\mathcal{D}$ is strongly stable for $k = 6$, and, hence, the hinge flip chain is rapidly mixing for all tuples in $\mathcal{D}$.
\end{theorem}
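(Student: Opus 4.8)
The plan is to establish directly that every perturbed state lies within $6$ hinge flips of a genuine realization, i.e.\ that $k(\gamma,d)\le 6$ for every $(\gamma,d)\in\mathcal D$; rapid mixing then follows immediately from Theorem~\ref{thm:auxiliary}. Fix $(\gamma,d)\in\mathcal D$ and $G\in\mathcal G'(\gamma,d)\setminus\mathcal G(\gamma,d)$ with degree sequence $d'$; write $\alpha_v=d_v-d_v'$ (so $\alpha_v\in\{-2,\dots,2\}$ by Proposition~\ref{prop:basic_properties}), $s=\sum_{v:\alpha_v>0}\alpha_v=-\sum_{v:\alpha_v<0}\alpha_v\in\{0,1,2\}$, let $\gamma'$ be the number of cut edges of $G$, and set $\Delta_1=-\sum_{v\in V_1}\alpha_v$. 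The first thing I would record is the parity identity $\Delta_1\equiv\gamma'-\gamma\pmod 2$, which follows from $\sum_{v\in V_1}d_v'=2c_{11}'+\gamma'$ and $|V_1|\beta_1=2c_{11}+\gamma$; in particular $s=0$ would force $\gamma'=\gamma$ (as $|\gamma'-\gamma|\le 1$) and hence $G\in\mathcal G(\gamma,d)$, so we may assume $s\in\{1,2\}$. The proof is then a case analysis on $s$, on $\Delta_1$, and on $\gamma'-\gamma$.

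The workhorse is a one-move lemma that exploits the regularity of the classes: if $u,w$ lie in the same class $V_i$ with $\alpha_u>0>\alpha_w$, then some triple $(w,x,u)$ is a legal hinge flip, and it leaves $\gamma'$ unchanged. Indeed $\deg_G(u)=\beta_i-\alpha_u\le\beta_i-1$ while $\deg_G(w)=\beta_i-\alpha_w\ge\beta_i+1$, so if every $x\in N_G(w)\setminus\{u\}$ were also in $N_G(u)$ we would get $\deg_G(u)\ge\deg_G(w)-1\ge\beta_i$, a contradiction; and for an admissible pivot $x$, removing $\{w,x\}$ and adding $\{x,u\}$ either keeps both edges internal to $V_i$ or keeps both of them cut edges, so the cut count does not move. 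Iterating this \emph{within-class transfer} (each application reduces $\sum_i|d_i-d_i'|$ by $2$ and keeps the graph in $\mathcal G'(\gamma,d)$) settles every case with $\Delta_1=0$: the parity identity then gives $\gamma'=\gamma$, and since $\sum_{v\in V_1}\alpha_v=\sum_{v\in V_2}\alpha_v=0$ we can pair surplus with deficit vertices inside each class and reach $\mathcal G(\gamma,d)$ in at most $s\le 2$ moves.

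It remains to handle $\Delta_1\ne 0$. Then $1$ or $2$ units of degree must cross the partition, and by the parity identity $\gamma'\in\{\gamma-1,\gamma+1\}$ when $\Delta_1$ is odd while $\gamma'=\gamma$ when $\Delta_1=\pm 2$. The key tool is a \emph{cross-class transfer}: a hinge flip $(w,x,u)$ with $w,u$ in opposite classes, which raises $\gamma'$ by $1$ if $x$ lies in $w$'s class and lowers it by $1$ if $x$ lies in $u$'s class. One first cancels any matched $\pm$ pair that happens to lie inside a single class via a within-class transfer (needed at most once here), and then performs $|\Delta_1|\in\{1,2\}$ cross-class transfers, in each choosing the pivot's class so that $\gamma'$ is driven back towards $\gamma$ (when $\Delta_1=\pm 2$ the two transfers are given opposite effects on $\gamma'$, and either order keeps the intermediate cut count in $\{\gamma-1,\gamma+1\}$). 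If $w$ has a pivot of the required class the transfer costs one move; otherwise one reroutes through an auxiliary vertex $a$ in $w$'s class — first a hinge flip supplying the correct edge type at $u$ and creating a residual deficit at $a$, then one within-class transfer cancelling $w$ against $a$ — which costs at most $3$ moves. This reroute is exactly where the remaining hypotheses enter: $\gamma\ge 1$ guarantees that cut edges exist and $\gamma\le|V_1||V_2|-1$ that non-edges across the partition exist, and regularity again controls the degrees at $a$ and $u$. Tallying: $\Delta_1=\pm 2$ uses two cross-class transfers of cost $\le 3$ and no clean-up ($\le 6$ moves); $\Delta_1=\pm 1$ uses at most one within-class transfer plus one cross-class transfer of cost $\le 3$ ($\le 4$ moves); the cases $\Delta_1=0$ and $s=0$ cost at most $2$. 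Hence $k(\gamma,d)\le 6$ and $\mathcal D$ is strongly stable with $k=6$.

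The main obstacle I expect is the cross-class transfer step, and in particular verifying that in every obstructed configuration — when $w$ carries no incident edge of the type that must be removed, or when all candidate pivots happen to be neighbours of $u$, or when even the auxiliary vertex $a$ is hard to locate — a reroute of length at most $3$ genuinely exists; this is where the strict inequalities $1\le\beta_i\le|V|-1$ and $1\le\gamma\le|V_1||V_2|-1$ together with the regularity of the two classes must be combined, and where one also has to fix the order of the hinge flips so that no intermediate graph ever has total degree deviation above $4$ or cut count outside $\{\gamma-1,\gamma,\gamma+1\}$ (so that it stays in $\mathcal G'(\gamma,d)$). Making the detour lengths tight and the accumulated count land at exactly $6$ in the worst case is the bulk of the work.
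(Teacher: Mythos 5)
Your outline is structurally the same as the paper's argument: first drive the instance to an edge-balanced state (correct cut-edge count), then clean up the remaining degree surpluses/deficits with cancellation flips inside a class. Your ``within-class transfer'' is exactly the paper's cancellation flip, and your proof of its existence (via regularity forcing $\deg_G(u)<\deg_G(w)$) matches the paper's. The parity identity $\Delta_1\equiv\gamma'-\gamma\pmod 2$ is a nice explicit formulation of what the paper uses implicitly when it partitions into the cases $c_{12}'\in\{c_{12}-1,c_{12},c_{12}+1\}$ combined with the sign of $c_{11}'-c_{11}$; your cases $\Delta_1\in\{0,\pm1,\pm2\}$ are a direct relabelling of the paper's Cases~1, 2 and 3.

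The gap — which you flag yourself — is that the cross-class transfer plus its reroute is asserted to cost at most $3$ hinge flips but not established, and this is where essentially all of the paper's work goes. Your proposed $2$-move reroute (hinge flip from an auxiliary $a$ into $u$, then a within-class cancellation between $w$ and $a$) presupposes that there exists a vertex $a$ in $w$'s class carrying an edge of the required type (cut vs.\ internal) whose other endpoint is \emph{not} already a neighbour of $u$. That can fail: all cut edges might land in $N_G(u)$. This is precisely the configuration the paper addresses in its Case~B, and there the paper spends two further hinge flips just to manoeuvre into a situation where the $2$-move step applies — so a single ``cross-class transfer'' can genuinely cost $4$ rather than $\le 3$ in the worst case. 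That overrun is harmless for $\Delta_1=\pm1$ (the budget there has slack), but for $\Delta_1=\pm2$ your estimate $2\times3=6$ has no slack at all, and $2\times4=8$ would break the bound. The paper avoids this by not performing two independent cross-class transfers for Case~3; instead it reduces Case~3 to Case~1 with a short prefix and reuses the Case-1 machinery once. To make your argument close, you would need either to sharpen the reroute to a genuine $\le 3$-move bound in \emph{all} obstructed configurations (using $1\le\gamma\le|V_1||V_2|-1$ and $1\le\beta_i\le|V|-1$ more aggressively, in the style of the paper's Cases~A/B), or to restructure the $\Delta_1=\pm2$ case so that the two cut-count corrections share work, as the paper does.

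A smaller point worth recording: you should also verify that each intermediate graph on your route stays inside $\mathcal G'(\gamma,d)$, i.e.\ that $\sum_i|d_i-d_i'|\le4$ and $|\gamma'-\gamma|\le1$ throughout. Your within-class transfer and your $2$-move reroute respect this, but once you start composing reroutes in the $\Delta_1=\pm2$ case the ordering of the hinge flips matters (you note this), and the paper's argument is carefully sequenced precisely to keep the cut count inside $\{\gamma-1,\gamma,\gamma+1\}$ at every step.
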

\begin{proof}
We first show that this family is strongly stable for $k = 6$. For convenience, we will work with the notation $\mathcal{G}'(c,d)$ instead of $\mathcal{G}'(\gamma,d)$. Remember that 
\[
c_{ii} = \left( \sum_{j \in V_i} d_j \right) - \gamma
\]
for $i = 1,2$ is the number of internal edges that $V_i$ has in any graphical realization in $\mathcal{G}(\gamma,d)$, and that $\gamma = c_{12} = c_{21}$.  For sake of readability, we define the notion of a cancellation hinge flip. For either $i = 1$ or $i = 2 $, suppose nodes $v,w \in V_i$, are such that $v$ has a degree deficit of at least one, and $w$ a degree surplus of at least one. Then $w$ has a neighbor $z \in V$ that is not a neighbor of $v$ (using that $v$ and $w$ have the same degree $\beta_i$). The hinge flip operation that removes the edge $\{z,w\}$ and adds the edge $\{z,v\}$ is called a \emph{cancellation flip on $v$ and $w$}. Note that the number of internal edges in $V_1$ and $V_2$ as well as the number of cut edges does not change with such an operation.\footnote{That is, either $z$ lies in the other class, in which case the cancellation flip removes and adds a cut edge, or, $z$ lies in the same class as $v$ and $w$ in which case an internal edge in $V_i$ is removed and added.} Moreover, we say that an edge $\{a,b\}$ is a non-edge of a graphical realization $G$ if $\{a,b\} \notin E(G)$.

Let $G \in \mathcal{G}'(c,d)$ for some tuple $(c',d')$ as in the definition of $\mathcal{G}'(c,d)$ at the start of Section \ref{sec:auxiliary}. We first show that with at most four hinge flip operations, we can obtain a perturbed auxiliary state $G^* \in \mathcal{G}'(c,d)$ for which its tuple $(c^*,d^*)$  is edge-balanced. That is, it satisfies $c^* = c$. Remember that the value $c'_{12}$ uniquely determines the matrix $c'$, and, by assumption of $\mathcal{G}'(c,d)$, we have $c_{12}' \in \{c_{12}-1,c_{12},c_{12}+1\}$. We can therefore distinguish the following cases.

%We first show that if $(c_{11}',\gamma',c_{22}') \neq (c_{11},\gamma,c_{22})$, with $c_{ii}'$ the number of internal edges of $V_i$ in the realization $G$, then we can find some $G^* \in \mathcal{G}'(\gamma,d)$ at distance at most four from $G$ that has partition adjacency matrix $c^*$ satisfying $c^* = c$. That is, we can obtain such a $G^*$ from $G$ using at most four hinge flip operations.

\begin{itemize}
\item \textbf{Case 1: $c_{12}' = c_{12} + 1.$} Then, by Proposition \ref{prop:basic_properties}, either $c_{11}' = c_{11} - 1$ and $c_{22}' = c_{22}$, or, $c_{22}' = c_{22} - 1$ and $c_{11}' = c_{11}$. Assume without loss of generality that we are in the first case. Then it holds that 
\begin{equation}\label{eq:regular1}
\sum_{j \in V_1} d_j' = \left( \sum_{j \in V_1} d_j \right) - 1 \ \ \ \text{ and } \ \ \ \sum_{j \in V_2} d_j' = \left( \sum_{j \in V_2} d_j \right) + 1. 
\end{equation}
Moreover, there must be at least one node $v_2 \in V_2$ with a degree surplus (of either one or two), and there is at least one non-edge $\{a,b\}$ with both endpoints in $V_1$. If $v_2$ is adjacent to either $a$ or $b$, we can perform a hinge flip to make the realization $G$ edge-balanced, so assume this is not the case. Also, if the total deficit of $a$ and $b$ is $-2$, there must be a node in $V_1$ with degree surplus, otherwise (\ref{eq:regular1}) is violated. Then we can perform a cancellation flip in $V_1$ to remove the deficit at either $a$ or $b$. Hence, we may assume without loss of generality that $a$ does not have a degree deficit. 
\begin{itemize}
\item \textbf{Case A: $v_2$ has a neighbor $v_1 \in V_1$.} If $v_1$ has a degree surplus we can perform a cancellation flip in $V_1$ to remove it, which must exist by (\ref{eq:regular1}). So assume $v_1$ has no degree surplus. As node $a$ has no deficit, and is not connected to $v_2$, whereas $v_1$ is, there must be some neighbor $p$ of $a$ which is not a neighbor of $v_1$. This holds since $v_1$ and $b$ have the same degree $\beta_1$ in the sequence $d$. Then the path $v_2 - v_1 - p - a - b$ alternates between edges and non-edges of $G$, and with two hinge flips we can obtain an edge-balanced realization in $\mathcal{G}'(\gamma,d)$.
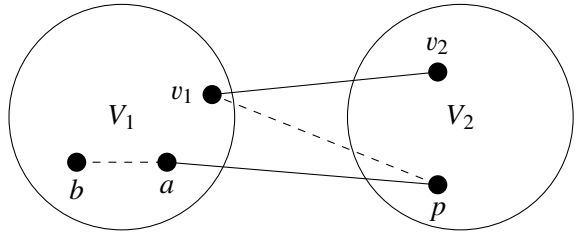
\begin{figure}[ht!]
\centering
\begin{tikzpicture}[scale=0.6]
\coordinate (a) at (0,0.5); 
\coordinate (b) at (2,0.5);
\coordinate (p) at (8,0);
\coordinate (v1) at (3,2);
\coordinate (v2) at (8,2.5);

\node at (b) [circle,scale=0.7,fill=black] {};
\node (B) [below=0.1cm of a]  {$b$};
\node at (a) [circle,scale=0.7,fill=black] {};
\node (A) [below=0.1cm of b]  {$a$};
\node at (p) [circle,scale=0.7,fill=black] {};
\node (P) [below=0.1cm of p]  {$p$};
\node at (v1) [circle,scale=0.7,fill=black] {};
\node (V1) [left=0.1cm of v1]  {$v_1$};
\node at (v2) [circle,scale=0.7,fill=black] {};
\node (V2) [above=0.1cm of v2]  {$v_2$};

\path[every node/.style={sloped,anchor=south,auto=false}]
(a) edge[-,dashed] node {} (b)
(b) edge[-,] node {} (p)
(p) edge[-,dashed] node {} (v1)
(v1) edge[-] node {} (v2);

\draw (1,1.5) circle (2.5cm) node{$V_1$};
\draw (8.5,1.5) circle (2.5cm) node{$V_2$};
\end{tikzpicture}
\caption{Sketch of first case with subcase A.}
\end{figure} 

\item \textbf{Case B: $v_2$ has no neighbors in $V_1$.} We know that there is at least one cut edge $\{q,r\}$ in the realization $G$, since $c_{12}' = c_{12} + 1$. If $q$ has a degree surplus, we are in the situation of Case A. Otherwise $v_2$ has a neighbor $u$ which is not a neighbor of $q$, since $q$ and $v_2$ have the same degree $\beta_2$ in the sequence $d$. We can then perform the hinge flip that removes $\{v_2,u\}$ and adds $\{u,q\}$. If $q$ now has a degree surplus, we are in Case A. Otherwise, in case this hinge flip cancelled out a degree deficit at $q$, there must be at least one other node in $V_2$ with a degree surplus, because of (\ref{eq:regular1}). We can then perform the same step again, which will now result in a degree surplus at $q$. This is true since the node $q$ cannot have a deficit of $-2$, since (\ref{eq:regular1}) would then imply that the the total degree surplus of nodes in $V_2$ is at least three, which violates the second property defining $\mathcal{G}'(c,d)$. That is, we can always reduce to the situation in Case A.
\end{itemize}

Summarizing, we can always find an edge-balanced realization $G^*$ using at most four hinge flip operations in case $c_{12}' = c_{12} + 1$.

\item \textbf{Case 2: $c_{12}' = c_{12} - 1.$} Using complementation, it can be seen that this case is similar to Case 1. That is, we consider the tuple $(\bar{c},\bar{d})$ in which all nodes in $V_1$ have degree $|V| - \beta_1$, all nodes in $V_2$ have degree $|V| - \beta_2$, and where all feasible graphical realizations have $\bar{c}_{12} = |V_1||V_2| - c_{12}$ cut edges. The case $c_{12} = c_{12} - 1$ then corresponds to the case $\bar{c}_{12}' = \bar{c}_{12} + 1$.
 %Then, by Proposition \ref{prop:basic_properties}, either $c_{11}' = c_{11} + 1$ and $c_{22}' = c_{22}$, or, $c_{22}' = c_{22} + 1$ and $c_{11}' = c_{11}$. Assume without loss of generality that we are in the first case. Then it holds that 
%\begin{equation}\label{eq:regular1}
%\sum_{j \in V_1} d_j' = \left( \sum_{j \in V_1} d_j \right) + 1 \ \ \ \text{ and } \ \ \ \sum_{j \in V_2} d_j' = \left( \sum_{j \in V_2} d_j \right) - 1. 
%\end{equation}
%Moreover, there is at least one non-edge $\{a,b\}$ with $a \in V_1$ and $b \in V_2$.

\item \textbf{Case 3: $c_{12}' = c_{12}$}. If also $c_{11} = c_{11}'$ we are done. Otherwise, suppose that $c_{11} = c_{11}' + 1$. Then it must be that $c_{22} = c_{22}' - 1$, as $c_{12}' = c_{12}$, and it holds that 
\begin{equation}\label{eq:regular2}
\sum_{j \in V_1} d_j' = \left( \sum_{j \in V_1} d_j \right) + 2 \ \ \ \text{ and } \ \ \ \sum_{j \in V_2} d_j' = \left( \sum_{j \in V_2} d_j \right) - 2. 
\end{equation}
Then there is at least one edge $\{a,b\}$ in the graphical realization with $a,b \in V_1$. Moreover, we may assume that $a$ has a degree surplus. If not, then there is at least one other node $u$ with a degree surplus because of (\ref{eq:regular2}). Performing a cancellation flip then gives the node $a$ a degree surplus (it could not be that $a$ had a degree deficit, as this would imply, in combination with (\ref{eq:regular2}), that the total degree surplus of nodes in $V_1$ is at least three).
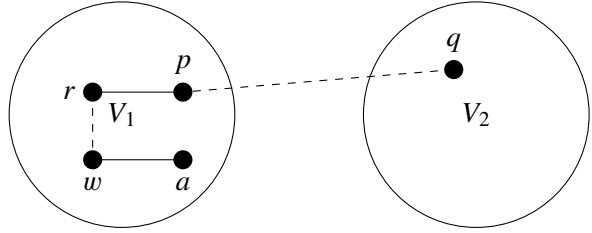
\begin{figure}[ht!]
\centering
\begin{tikzpicture}[scale=0.6]
\coordinate (w) at (0,0.5); 
\coordinate (a) at (2,0.5);
\coordinate (r) at (0,2);
\coordinate (p) at (2,2);
\coordinate (q) at (8,2.5);

\node at (a) [circle,scale=0.7,fill=black] {};
\node (A) [below=0.1cm of a]  {$a$};
\node at (w) [circle,scale=0.7,fill=black] {};
\node (W) [below=0.1cm of w]  {$w$};
\node at (r) [circle,scale=0.7,fill=black] {};
\node (R) [left=0.1cm of r]  {$r$};
\node at (p) [circle,scale=0.7,fill=black] {};
\node (P) [above=0.1cm of p]  {$p$};
\node at (q) [circle,scale=0.7,fill=black] {};
\node (Q) [above=0.1cm of q]  {$q$};

\path[every node/.style={sloped,anchor=south,auto=false}]
(q) edge[-,dashed] node {} (p)
(p) edge[-,] node {} (r)
(r) edge[-,dashed] node {} (w)
(w) edge[-] node {} (a);

\draw (0.65,1.5) circle (2.5cm) node{$V_1$};
\draw (8.5,1.5) circle (2.5cm) node{$V_2$};
\end{tikzpicture}
\caption{Sketch of last situation in Case 3.}
\end{figure}

 Now, if there is a non-edge of the form $\{b,v_2\}$ for some $v_2 \in V_2$, we can perform a hinge flip operation removing $\{a,b\}$ and adding $\{b,v_2\}$ in order to end up in Case 1. Otherwise, assume that $b$ is adjacent to all $v_2 \in V_2$. As $b$ is also adjacent to $a$, and $a$ has a degree surplus of at least one,\footnote{That is, $b$ can have a degree surplus of at most one. A degree surplus of two at $b$ would only give a bound of $\beta_1 \geq |V_2| - 1$.} it follows that $\beta_1 \geq |V_2|$.
 Now, by the assumption that $c_{12} \leq |V_1||V_2| - 1$, there is at least one non-edge $\{p,q\}$ with $p \in V_1$ and $q \in V_2$. As $p$ is not adjacent to $q$, but has degree at least $\beta_1 \geq |V_2|$, it must be that $p$ is adjacent to some $r \in V_1$. If $r$ has a degree surplus, then we can perform a hinge flip that removes $\{p,r\}$ and adds $\{p,q\}$ in order to end up in the situation of Case 1. Otherwise, node $a$, which has a degree surplus, has some neighbor $w$ which is not a neighbor of $r$. This implies the path $a - w - r - p - q$ alternates between edges and non-edges of $G$. Performing two hinge flips then brings us in the situation of Case 1.
\end{itemize}
%From now on we may assume that $G \in \mathcal{G}(\gamma,d)$ is such that the number of cut edges is precisely $\gamma$, and the number of internal edges in $V_{i}$ is $c_{ii}$ for $i = 1,2$. Now suppose there is some node $v_1 \in V_1$ that has a degree surplus. Then, by assumption that $G$ has the correct partition adjacency matrix, there must also be some node $w_1 \in V_1$ with degree deficit. As both $v_1$ and $w_1$ have degree $\beta_1$, it follows that $v_1$ has a neighbor $a$ that is not a neighbor of $w_1$. Performing a hinge flip operation that removes $\{v_1,a\}$ and add $\{a,w_1\}$ reduces the total degree deficit. Note that it is not relevant whether $a$ lies in $V_1$ or $V_2$. Moreover, a similar argument holds in case there is some node with degree surplus in $V_2$. Repeating this step one more time if there is still a node with degree surplus, we can reduce the total (absolute) degree deficit to zero. This means that in at most six hinge flip operations in total, we can always transform any $G \in \mathcal{G}'(\gamma,d)$ to a state in $\mathcal{G}(\gamma,d)$.

We have shown that with at most four hinge flips we can always obtain some $G^* \in \mathcal{G}'(c,d)$ that is edge-balanced. This implies that
\begin{equation}\label{eq:regular3}
\sum_{j \in V_1} d_j^* =  \sum_{j \in V_1} d_j \ \ \ \text{ and } \ \ \ \sum_{j \in V_2} d_j^* =  \sum_{j \in V_2} d_j. 
\end{equation}
Now, if $v \in V_1$ has a degree surplus, there must be some $w \in V_1$ that has a degree deficit, because of (\ref{eq:regular3}). We can then perform a cancellation flip to decrease the sum of the total degree deficit and degree surplus. A similar statement is true if $v \in V_2$ has a degree surplus. By performing this step at most twice, we obtain a realization $H \in \mathcal{G}(c,d)$. That is, with at most six hinge flip operations in total we can transform $G$ into a graphical realization in $\mathcal{G}(c,d)$. This shows that $\mathcal{D}$ is strongly stable for $k = 6$.
\end{proof}

We next show that certain families of sparse instances are strongly stable as well. Sparsity here refers to the fact that the maximum degree in a class in significantly smaller than the size of the class, as well as the fact that the number of cut edges is (much) smaller than the total number of edges in a graphical realization.

%\pkrem{Originally there were lower bounds on $c_{11},c_{22}$, but this is replaced by upper bound on $c_{12}$. That is, not only the degree sequence is sparse, but also the number of cut edges has to be sufficiently sparse.}
\begin{theorem}[Sparse irregular families]\label{cor:stable_sparse}
Let $0 < \alpha < 1/2$ be fixed, and let  $\mathcal{D}_{\alpha}$ be the family of tuples $(V_1,V_2,c,d)$ for which
\begin{enumerate}[label=(\roman*)]
\item $|V_1|, |V_2| \geq \alpha n$, 
\item $2 \leq d_i \leq  \sqrt{\frac{\alpha n}{4}} \ \ $\  for $i \in V_1 \cup V_2$, and, 
%\item $c_{11},c_{22} \geq\frac{2n}{\alpha}$, and $\frac{\alpha n}{2} \geq c_{12} \geq 1$.
\item  $1 \leq c_{12} \leq \frac{\alpha n}{2}$.
\end{enumerate}
The class $\mathcal{D}_{\alpha}$ is strongly stable for $k = 9$, and, hence, the hinge flip chain is rapidly mixing for all tuples in $\mathcal{D}_{\alpha}$.
\end{theorem}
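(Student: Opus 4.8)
We must show $k(\gamma,d)\le 9$ for every $(\gamma,d)\in\mathcal{D}_\alpha$, i.e.\ that every $G\in\mathcal{G}'(\gamma,d)$ can be transformed into some realization in $\mathcal{G}(\gamma,d)$ by at most nine hinge flips. The plan is to imitate the two-phase scheme from the proof of Theorem~\ref{cor:stable_jdm}: \emph{Phase 1}, spend a bounded number of hinge flips to reach an \emph{edge-balanced} perturbed state $G^\ast\in\mathcal{G}'(\gamma,d)$, i.e.\ one whose tuple $(c^\ast,d^\ast)$ has $c^\ast=c$ (exactly $\gamma$ cut edges and the right number of internal edges in each class); \emph{Phase 2}, eliminate the residual degree surplus/deficit by \emph{cancellation flips} inside the classes, in the sense defined in the proof of Theorem~\ref{cor:stable_jdm}. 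The only genuine departure from the regular case is that the regularity-based hunt for short alternating paths — ``$v$ and $w$ have the same class degree $\beta_i$, hence $w$ has a neighbour off $v$'' — must be replaced by one driven by the sparsity hypotheses (i)--(iii).

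First I would record the elementary consequences of (i)--(iii). Nonemptiness of $\mathcal{D}_\alpha$ already forces $\sqrt{\alpha n/4}\ge 2$, hence $\alpha n\ge 16$. Since every node has at most $\Delta:=\lfloor\sqrt{\alpha n/4}\rfloor\le\tfrac12\sqrt{\alpha n}$ neighbours and $|V_1|,|V_2|\ge\alpha n$, in any $G\in\mathcal{G}'(\gamma,d)$ every node $v\in V_i$ has at least $\alpha n-1-\Delta\ (\ge 13)$ non-neighbours inside $V_i$, even after one forbids any small explicit set of nodes. Likewise, since $|V_1||V_2|\ge\tfrac12\alpha n^2$ while $c_{12}\le\alpha n/2$, there is always a non-cut pair $(u,w)\in V_1\times V_2$, and one avoiding any small explicit set of forbidden vertices (the budget $\alpha n\ge 16$ comfortably accommodates the handful of vertices the argument needs to track at any moment). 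Finally, $d_i\ge 2$ guarantees that a node with a degree surplus has at least three neighbours in $G$, which is exactly what keeps the alternating-path searches below from dead-ending. In particular these counts hold uniformly over all nonempty instances, so there is no separate small-$n$ case to treat.

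Phase 1 is a case analysis on how $c'$ deviates from $c$. By Proposition~\ref{prop:basic_properties} and conservation of total degree (property (i) defining $\mathcal{G}'(\gamma,d)$) the only possibilities are: $c'=c$ (nothing to do); $c'_{12}=c_{12}+1$ with $(c'_{11},c'_{22})=(c_{11}-1,c_{22})$ or $(c_{11},c_{22}-1)$; the mirror case $c'_{12}=c_{12}-1$; and $c'_{12}=c_{12}$ with $(c'_{11},c'_{22})\in\{(c_{11}+1,c_{22}-1),(c_{11}-1,c_{22}+1)\}$. In each case the needed correction is to relocate a single edge between a class-interior and the cut (or between the two interiors), which is precisely the effect of one hinge flip; so the work is to exhibit a short alternating path of edges and non-edges of $G$ whose two ends are parked at nodes carrying the appropriate deficit/surplus — possibly after a preliminary cancellation flip, needed (as in the regular proof) when a node involved already carries a full $-2$ deficit. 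I would run each case exactly as in the proof of Theorem~\ref{cor:stable_jdm}, replacing each regularity appeal by the sparsity facts: whenever the direct one-flip correction is obstructed, the abundance of non-neighbours together with $d_G(w)\ge 3$ still exposes a detour costing only a bounded number of extra flips, and these searches never exceed four hinge flips (plus at most one preliminary flip). Phase 2 then starts from $G^\ast$, for which $\sum_{j\in V_i}d^\ast_j=\sum_{j\in V_i}d_j$, so inside each class surplus equals deficit; a cancellation flip (after at most one preparatory flip, if every neighbour of the surplus node happens to be $v$ or a neighbour of $v$ for the chosen deficit node $v$) strictly lowers the residual degree defect by $2$ without touching any $c_{ij}$, and since that defect is at most $4$, Phase 2 costs a small constant. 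Adding the two phases — with a little slack — yields the bound $9$.

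\textbf{Main obstacle.} The crux is the worst subcase of Phase 1: every surplus/deficit node sits ``far'' from every cut edge — e.g.\ all neighbours of a surplus node lie in the opposite class and, at the same time, inside a blocked small neighbourhood — forcing the longest detours. One must verify that sparsity always supplies a bounded-length alternating path there and that $d_i\ge 2$ excludes true dead ends, and then do the bookkeeping so that the worst chain of preliminary flips plus the correcting flips plus the Phase-2 flips never exceeds nine. This accounting is routine but delicate, and is presumably exactly why the constant here is $9$ rather than the $6$ of the regular case.
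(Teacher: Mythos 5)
Your two-phase plan --- reach an edge-balanced perturbed state, then cancel the residual degree defects inside each class --- matches the paper's overall strategy, and you correctly observe that the regular-case cancellation trick (which leans on both nodes having the same class degree) is unavailable here. The gap is in what you substitute for it. The paper's proof hinges on a single technical tool, the alternating-path lemma extracted from Jerrum and Sinclair (Lemma~\ref{lem:alternating_path}): in any graph on $r$ nodes with all degrees in $[1,\sqrt{r/2}]$, if one node $x$ sits one degree above its target and another node $y$ sits one degree below, there is an alternating path of length at most four from $x$ to $y$. The hypotheses (i)--(iii) are calibrated precisely so that this lemma applies inside the subgraph induced on each class: $c_{12}\le\alpha n/2$ together with $d_i\ge 2$ forces all but at most $\alpha n/4$ nodes of each $V_i$ to retain a neighbour inside their own class, so the induced subgraph has $r\ge\alpha n/2$ relevant nodes, and then $d_i\le\sqrt{\alpha n/4}\le\sqrt{r/2}$. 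That chain of inequalities is what bounds the path length, and hence the hinge-flip count, in every awkward subcase of both phases.

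Your surrogate --- ``abundance of non-neighbours together with $d_G(w)\ge 3$'' --- does not produce this. Knowing that a node has many non-neighbours, or a few neighbours, says nothing about whether any of them carries an edge back toward the other endpoint of the alternating path; an alternating path needs adjacencies and non-adjacencies to interleave correctly, and the degree cap $\sqrt{r/2}$ is exactly what makes that feasible in the Jerrum--Sinclair lemma. The ``main obstacle'' you flag yourself --- a surplus/deficit node isolated from the cut --- is precisely the configuration the lemma closes, and your proposal offers no mechanism for it. Consequently the flip accounting you sketch (``never more than four plus one preliminary'' for Phase~1, a vaguely-bounded Phase~2, then ``a little slack'') is unsupported. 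The correct accounting, once the lemma is in hand, is tighter: Phase~1 costs at most three flips (a direct hinge flip, or two flips along a length-four alternating path followed by one more), and Phase~2 costs at most three flips per unit of residual surplus, of which there are at most two, giving $3+6=9$.
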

\begin{proof}
We proceed in a similar fashion as the proof of Corollary \ref{cor:stable_jdm}. We will use the notion of an alternating path. For a given graph $H = (W,E)$, an alternating path $(x_1,\dots,x_q)$ is an odd sequence of nodes so that $\{x_i,x_{i+1}\} \in E$ for $i$ even, and $\{x_i,x_{i+1}\} \notin E$ for $i$ odd.% We first give a lemma following from \cite{Jerrum1989} that will be useful in the analysis.

\begin{lemma}[following from \cite{Jerrum1989}]\label{lem:alternating_path}
Let $\delta = (\delta_1,\dots,\delta_r)$ be a degree sequence with $1 \leq \delta_i \leq \sqrt{r/2}\ $ for all $i = 1,\dots,r$. %and let $\mathcal{G}(d)$ be the set of all labelled graphical realizations of this sequence on $r$ nodes. Let $\mathcal{G}'(d) = \cup_{d'} \mathcal{G}(d')$ with $d'$ satisfying $\sum_{i = 1}^r |d_i - d_i'| = 2$, i.e., either $d' = d$ or there exist two nodes $x,y$ so that $d_x' = d_x - 1$, $d_y' = d_y + 1$ and $d_i' = d_i$ for all $i \neq x,y$.
Fix $x,y \in [r]$ and let $H = ([r],E)$ be a graphical realization of the degree sequence $\delta'$ where $\delta_x' = \delta_x + 1$, $\delta_y' = \delta_y - 1$ and $\delta_i' = \delta_i$ for all $i \in [r] \setminus \{x,y\}$. Then there exists an alternating path of length at most four starting at $x$ and ending at $y$.
\end{lemma}

Now, let $G \in \mathcal{G}'(c,d)$ for some tuple $(c',d')$ as in the definition of $\mathcal{G}'(c,d)$ at the start of Section \ref{sec:auxiliary}. We show that with at most four hinge flip operations, we can transform $G$ into a perturbed auxiliary state $G^* \in \mathcal{G}'(c,d)$ for which its tuple $(c^*,d^*)$  is edge-balanced.

\begin{itemize}
\item \textbf{Case 1: $c_{12}' = c_{12} + 1.$} Then, by Proposition \ref{prop:basic_properties}, either $c_{11}' = c_{11} - 1$ and $c_{22}' = c_{22}$, or, $c_{22}' = c_{22} - 1$ and $c_{11}' = c_{11}$. Assume without loss of generality that we are in the first case. Then it holds that 
\begin{equation}\label{eq:sparse1}
\sum_{j \in V_1} d_j' = \left( \sum_{j \in V_1} d_j \right) - 1 \ \ \ \text{ and } \ \ \ \sum_{j \in V_2} d_j' = \left( \sum_{j \in V_2} d_j \right) + 1. 
\end{equation}
Moreover, there must be at least one node $v_2 \in V_2$ with a degree surplus (of either one or two). 

\begin{itemize}
\item \textbf{Case A: $v_2$ has a neighbor $v_1 \in V_1$.} 
By assumptions $i)$ and $ii)$ it must be that there is some $b \in V_1$ so that $v_1$ is not a neighbor of $b$. Then we can perform the hinge flip that removes $\{v_2,v_1\}$ and adds $\{v_1,b\}$, resulting in an edge-balanced realization. 
\item \textbf{Case B: $v_2$ has no neighbors in $V_1$.} Since $c_{12} \geq 1$ by assumption $iii)$, there is some edge $\{a,b\}$ in $G$ with $a \in V_1$ and $b \in V_2 \setminus \{v_2\}$. We consider the induced subgraph $H$ on the nodes in $V_2$ and use $\delta'$ to denote its degree sequence. We next apply Lemma \ref{lem:alternating_path} with $v_2 = x$ and $b = y$. To see that this is possible, note that $c_{12} \leq \alpha n/2$ by assumption $iii)$. This implies, in combination with assumption $i)$ that at least $\alpha n/2$ nodes in $V_1$ have degree at least one in $H$. Thus, we may apply Lemma \ref{lem:alternating_path} with $r = |V(H)| \geq \alpha n /2$, since $d_i \leq \sqrt{\alpha n/4}$ by assumption $ii)$ (which is less or equal than $\sqrt{r/2}$ if $r \geq \alpha n/2$) . Hence, there exists an alternating path of length at most four starting at $v_2$ and ending at $b$. Then by performing two hinge flips, resulting in the removal of $\{v_2,f\}, \{g,h\}$ and addition of $\{f,g\}$, $\{h,b\}$, we are in the situation of Case A where $b$ now plays the role of $v_2$.

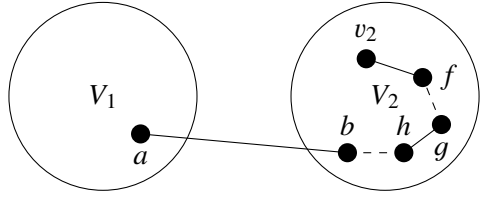
\begin{figure}[ht!]
\centering
\begin{tikzpicture}[scale=0.5]
\coordinate (a) at (2,0.5); 
\coordinate (b) at (7.5,0);
\coordinate (v2) at (8,2.5);
\coordinate (f) at (9.5,2);
\coordinate (g) at (10,0.75);
\coordinate (h) at (9,0);

\node at (a) [circle,scale=0.7,fill=black] {};
\node (A) [below=0.1cm of a]  {$a$};
\node at (f) [circle,scale=0.7,fill=black] {};
\node (F) [right=0.1cm of f]  {$f$};
\node at (g) [circle,scale=0.7,fill=black] {};
\node (G) [below=0.1cm of g]  {$g$};
\node at (h) [circle,scale=0.7,fill=black] {};
\node (H) [above=0.1cm of h]  {$h$};
\node at (b) [circle,scale=0.7,fill=black] {};
\node (B) [above=0.1cm of b]  {$b$};
\node at (v2) [circle,scale=0.7,fill=black] {};
\node (V2) [above=0.1cm of v2]  {$v_2$};

\path[every node/.style={sloped,anchor=south,auto=false}]
(a) edge[-] node {} (b)
(b) edge[-,dashed] node {} (h)
(h) edge[-] node {} (g)
(g) edge[-,dashed] node {} (f)
(f) edge[-] node {} (v2);

\draw (1,1.5) circle (2.5cm) node{$V_1$};
\draw (8.5,1.5) circle (2.5cm) node{$V_2$};
\end{tikzpicture}
\caption{Sketch of subcase B in Case 1.}
\end{figure} 
\end{itemize}

\item \textbf{Case 2: $c_{12}' = c_{12} - 1.$} Suppose without loss of generality that $c_{11}' = c_{11} + 1$ and $c_{22}' = c_{22}$, and note that there must be at least one node $v_1 \in V_1$ with a degree surplus. If $v_1$ has a neighor in $V_1$ we can perform a hinge flip operation to obtain an edge-balanced realization as desired (as this neighbor has at least one non-neighbor in $V_2$ by assumptions $i)$ and $ii)$). Therefore, assume that all neighbors of $v_1$ lie in $V_2$. Pick some neighbor $b \in V_2$ of $v_1$. Since all nodes in $V_1$ have degree at least one, and $c_{12} \leq \alpha n/2$, it must be that $b$ is not adjacent to some node $a \in V_1$ that has degree at least one in the induced subgraph $H$ on the nodes of $V_1$, as $d_b \leq \sqrt{\alpha n/4}$ and $|V(H)| \geq \alpha n/2$. Let $c$ be some neighbor of $a$ in $V_1$ that exists by assumption that $a$ has degree at least one in $H$. Also, $c$ is not adjacent to some $d \in V_2$ for similar reasons as that $b$ was not adjacent to $a$. This means that with two hinge flip operations we can obtain an edge-balanced realization.

\begin{figure}[ht!]
\centering
\begin{tikzpicture}[scale=0.5]
\coordinate (v1) at (2,0.5); 
\coordinate (b) at (7.5,0);
\coordinate (a) at (2,2);
\coordinate (c) at (2,3);
\coordinate (d) at (9,2);

\node at (v1) [circle,scale=0.7,fill=black] {};
\node (V1) [below=0.1cm of v1]  {$v_1$};
\node at (a) [circle,scale=0.7,fill=black] {};
\node (A) [below=0.1cm of a]  {$a$};
\node at (b) [circle,scale=0.7,fill=black] {};
\node (B) [right=0.1cm of b]  {$b$};
\node at (c) [circle,scale=0.7,fill=black] {};
\node (C) [left=0.1cm of c]  {$c$};
\node at (d) [circle,scale=0.7,fill=black] {};
\node (D) [above=0.1cm of d]  {$d$};

\path[every node/.style={sloped,anchor=south,auto=false}]
(v1) edge[-] node {} (b)
(b) edge[-,dashed] node {} (a)
(a) edge[-] node {} (c)
(c) edge[-,dashed] node {} (d);

\draw (1,1.5) circle (2.5cm) node{$V_1$};
\draw (8.5,1.5) circle (2.5cm) node{$V_2$};
\end{tikzpicture}
\caption{Sketch of Case 2.}
\end{figure}
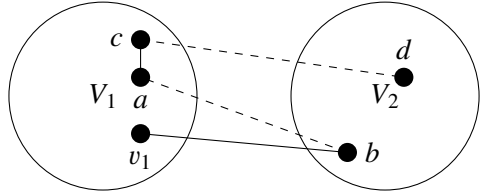 

\item \textbf{Case 3: $c_{12}' = c_{12}$}. If also $c_{11} = c_{11}'$ we are done. Otherwise, suppose that $c_{11} = c_{11}' + 1$. Then it must be that $c_{22} = c_{22}' - 1$, as $c_{12}' = c_{12}$, and it holds that 
\begin{equation}\label{eq:sparse2}
\sum_{j \in V_1} d_j' = \left( \sum_{j \in V_1} d_j \right) + 2 \ \ \ \text{ and } \ \ \ \sum_{j \in V_2} d_j' = \left( \sum_{j \in V_2} d_j \right) - 2. 
\end{equation}
There is at least one node $v_1 \in V_1$ with a degree surplus. If $v_1$ has a neighbor $a \in V_1$, which in turn has a non-neighbor $b \in V_2$ by assumptions $i)$ and $ii)$, then we can reduce to the situation of Case 1 by performing a hinge flip removing $\{v_1,a\}$ and adding $\{a,b\}$. Therefore, assume that all neighbors of $v_1$ are in $V_2$. We can then pick some neighbor $b \in V_2$ and perform a similar step as in Case 2 to find an alternating path $(v_1,b,a,c)$ with $v_1,a,c \in V_1$ and $b \in V_2$. Then we can perform two hinge flip operations to reduce to Case 1 again.
\end{itemize}

We have shown that with at most three hinge flips we can always obtain some $G^* \in \mathcal{G}'(c,d)$ that is edge-balanced. This implies that
\begin{equation}\label{eq:sparse3}
\sum_{j \in V_1} d_j^* =  \sum_{j \in V_1} d_j \ \ \ \text{ and } \ \ \ \sum_{j \in V_2} d_j^* =  \sum_{j \in V_2} d_j. 
\end{equation}
Now, if $v \in V_1$ has a degree surplus, there must be some $w \in V_1$ that has a degree deficit, because of (\ref{eq:sparse3}). Moreover, if all neighbors in of $v$ are in $V_2$, we can transfer the degree deficit to some node with degree at least one in the subgraph induced on $V_1$ at the cost of one hinge flip operation (similar as the analysis in Case 2). That is, we may assume that $v$ has a neighbor in $V_1$. Then, using similar arguments as in Case 1.B, it follows that there exists an alternating path from $v$ to $w$ in $V_1$, which allows us to decrease the total degree surplus/deficit using two hinge flip operations. This can be repeated to obtain a feasible realization $H \in \mathcal{G}(\gamma,d)$. 

Overall, this can be done using at most nine hinge flip operations.
\end{proof}

In particular, Theorem \ref{cor:stable_sparse} directly implies the following which, to the best of our knowledge, is the first result of its kind for the PAM model.
\begin{corollary}\label{cor:first_sparse}
Let $\mathcal{D}_{\alpha}$ be as in Theorem \ref{cor:stable_sparse}. Then there is an fully polynomial almost uniform generator for tuples in the family $\mathcal{D}_{\alpha}$.
\end{corollary}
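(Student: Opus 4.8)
This is the standard two‑step passage from a rapidly mixing chain on a mildly enlarged state space to a fully polynomial almost uniform generator on the target set: first use Theorem~\ref{thm:auxiliary} to sample almost uniformly from $\mathcal{G}'(\gamma,d)$, then recover an almost uniform sample from $\mathcal{G}(\gamma,d)$ by rejection. All the real content is already contained in Theorem~\ref{cor:stable_sparse} and Theorem~\ref{thm:auxiliary}; what remains is bookkeeping.

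\textbf{Step 1: consequences of strong stability.} By Theorem~\ref{cor:stable_sparse}, $\mathcal{D}_{\alpha}$ is strongly stable with constant $k=9$. Hence Theorem~\ref{thm:auxiliary} applies: the hinge flip chain $\mathcal{M}(\gamma,d)$ is rapidly mixing on $\mathcal{G}'(\gamma,d)$, so there is a polynomial $s(\cdot)$ with $\tau_{\mathcal{M}}(\eta)\le s(n)\big(\ln|\mathcal{G}'(\gamma,d)|+\ln(1/\eta)\big)$; moreover $\ln|\mathcal{G}'(\gamma,d)|$ is polynomially bounded in $n$, since $|\mathcal{G}'(\gamma,d)|$ is at most the number of admissible tuples times the maximum number of realizations of a tuple on $n$ vertices, both of which have logarithm polynomial in $n$. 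Separately, repeating verbatim the argument in the proof of Proposition~\ref{prop:strong_stable}, but with the bound $n^{3}$ on the in‑degree of the state space graph of $\mathcal{M}(\gamma,d)$ and with $k=9$, we obtain $|\mathcal{G}'(\gamma,d)|/|\mathcal{G}(\gamma,d)|\le n^{3k}=n^{27}$.

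\textbf{Step 2: the generator and its analysis.} Given $(\gamma,d)\in\mathcal{D}_{\alpha}$ and an accuracy $\epsilon\in(0,1)$, first compute in polynomial time some $G_{0}\in\mathcal{G}(\gamma,d)\subseteq\mathcal{G}'(\gamma,d)$, which is possible for two classes by~\cite{ErdosHIM2017}. Put $\eta=\epsilon/(4n^{27})$ and $T=\lceil 4n^{27}\ln(4/\epsilon)\rceil$. Repeat up to $T$ times the following trial: run $\mathcal{M}(\gamma,d)$ from $G_{0}$ for $\tau_{\mathcal{M}}(\eta)$ steps (each step is polynomial‑time, as membership in $\mathcal{G}'(\gamma,d)$ is checkable in $\mathrm{poly}(n)$ time), producing $X$ with $\|\mathrm{law}(X)-\pi'\|_{\mathrm{TV}}\le\eta$, where $\pi'$ is uniform on $\mathcal{G}'(\gamma,d)$; if $X\in\mathcal{G}(\gamma,d)$, output $X$ and halt. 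If all $T$ trials fail, output $G_{0}$. Correctness: since $\pi'(\mathcal{G}(\gamma,d))=|\mathcal{G}(\gamma,d)|/|\mathcal{G}'(\gamma,d)|\ge n^{-27}$ and $\eta\le\tfrac12 n^{-27}$, one trial succeeds with probability at least $\tfrac12 n^{-27}$, so all $T$ trials fail with probability at most $(1-\tfrac12 n^{-27})^{T}\le\epsilon/4$, contributing at most $\epsilon/4$ to the total variation distance from uniform. Conditioned on a successful trial, the output law is $\mathrm{law}(X\mid X\in\mathcal{G}(\gamma,d))$; the elementary estimate ``if $\mathrm{law}(X)$ is $\eta$‑close to $\pi'$ and $\pi'(S)\ge p$ then $\mathrm{law}(X\mid X\in S)$ is within $2\eta/p$ of uniform on $S$'' yields closeness $2\eta n^{27}=\epsilon/2$ to uniform on $\mathcal{G}(\gamma,d)$. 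Hence the overall output is within $\epsilon/2+\epsilon/4<\epsilon$ of uniform, and the running time is $T\cdot\tau_{\mathcal{M}}(\eta)\cdot\mathrm{poly}(n)=\mathrm{poly}(n)\cdot\big(\mathrm{poly}(n)+\ln(1/\epsilon)\big)$, i.e., polynomial in $n$ and $\log(1/\epsilon)$. This is precisely a fully polynomial almost uniform generator in the sense of~\cite{Sinclair1989}.

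\textbf{Main obstacle.} There is no genuine difficulty here; the only point needing care is propagating the accuracy parameter through the rejection step. One must feed the chain an accuracy $\eta$ smaller than the target $\epsilon$ by the polynomial factor $|\mathcal{G}'(\gamma,d)|/|\mathcal{G}(\gamma,d)|$, and simultaneously bound the number of rejection rounds required to succeed with probability $1-\epsilon/4$; both rely on the ratio bound $|\mathcal{G}'(\gamma,d)|/|\mathcal{G}(\gamma,d)|\le n^{27}$ from Step~1. It is also worth re‑checking that the per‑step cost of $\mathcal{M}(\gamma,d)$ and the cost of computing the initial realization are genuinely polynomial, both of which the excerpt already records for the two‑class case.
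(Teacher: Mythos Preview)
Your proposal is correct and is exactly the standard Jerrum--Sinclair rejection argument that the paper has in mind; the paper itself gives no proof beyond stating that the corollary ``directly'' follows from Theorem~\ref{cor:stable_sparse} (i.e., from strong stability together with Theorem~\ref{thm:auxiliary}). Your write-up simply makes explicit the bookkeeping that the paper leaves implicit, and all the ingredients you invoke---the ratio bound via the Proposition~\ref{prop:strong_stable} argument, polynomial-time construction of an initial realization for two classes, and polynomial per-step cost---are indeed available in the paper.
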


%In the next section we show that, in the case of two regular degree classes, we can even prove that the switch chain mixes rapidly (and not only the hinge flip chain). However, for the PAM model it is not known when the (restricited) switch chain is irreducible. In particular, in \cite{ErdosHIM2017} it is shown that the switch chain is in general not irreducible. However, there in \cite{ErdosHIM2017} it is also shown that for the PAM model with two classes, the set of all realizations is connected under so-called \emph{double switches}.

%\newpage
\bigskip

%%%%%%%%%%%%%%%%%%%%%%%%%%%%%%%%%%%%%%%%
%%%%%%%%%%%%%%%%%%%%%%%%%%%%%%%%%%%%%%%%
\section{Switch Markov Chain for Two Regular Degree Classes}\label{sec:switch_for_jdm}
%%%%%%%%%%%%%%%%%%%%%%%%%%%%%%%%%%%%%%%%
%%%%%%%%%%%%%%%%%%%%%%%%%%%%%%%%%%%%%%%%
In this section we will use an embedding argument similar to that in the proof of Theorem \ref{thm:transformation} to show the switch chain is rapidly mixing in case both classes are regular, i.e., for instances that are essentially JDM instances with two degree classes. Recall the  \emph{restricted switch Markov chain} for sampling graphical realizations of $\mathcal{G}(c,d)$ defined in Section \ref{sec:preliminaries}: Let $G$ be the current state of the \emph{switch chain}. 
\begin{itemize}
	\item With probability $1/2$, do nothing.
	\item Otherwise, perform a \emph{switch move}: select two edges $\{a,b\}$ and $\{x,y\}$ uniformly at random, and select a perfect matching $M$ on nodes $\{x,y,a,b\}$ uniformly at random (there are three possible options). If $M \cap E(G) = \emptyset$ and $G + M - \{a,b\} \cup \{x,y\} \in \mathcal{G}(c,d)$, then delete $\{a,b\}, \{x,y\}$ from $E(G)$ and add the edges of $M$. This local operation is called a \emph{switch}.
	%\item With probability $1/4$ perform a \emph{double switch move}: \pkblue{to do...}
\end{itemize}
  As usual, graphs $G,G' \in \mathcal{G}(c,d)$ are said to be \emph{switch adjacent} if $G$ can be obtained from $G'$ with positive probability in one transition of this chain. The \emph{switch-distance} $\text{dist}_{\mathcal{G}(c,d)}(G,G')$ is the length of a shortest path between $G$ and $G'$ in the state space graph of the switch chain. Also recall, $P(G,G')^{-1} \leq n^4$ for all adjacent $G, G' \in \mathcal{G}'(c,d)$, and also the maximum in- and out-degrees of the state space graph of the switch chain are bounded by $n^4$.
%  
%\begin{assumption}[Irreducibility and feasibility]\label{ass:irreducibility}
%For given $c$ and $d$, we assume that $\mathcal{G}(c,d)$ is non-empty and that an element from it can be computed in time polynomial in $n$. Moreover, we assume that the switch chain is irreducible, i.e., the partition adjacency matrix $c$ and degree sequence $d$ are such that for any $H,H' \in \mathcal{G}(c,d)$ there is a directed path from $H$ to $H'$ in the state space graph of the switch chain.
%\end{assumption}
%
%\begin{theorem} \label{thm:switch_properties}
%The switch chain is aperiodic and symmetric, and, hence (under Assumption \ref{ass:irreducibility}), has uniform stationary distribution over $\mathcal{G}(c,d)$. Moreover, $P(G,G')^{-1} \leq n^4$ for all adjacent $G, G' \in \mathcal{G}'(c,d)$, and also 
%the maximum in- and out-degrees of the state space graph of the switch chain are bounded by $n^4$.
%\end{theorem}

While this chain is known to be irreducible for the instances of the JDM model \cite{Amanatidis2015,CzabarkaDEM15}, in general this is not true \cite{ErdosHIM2017}. To the best of our knowledge, there is no clear understanding for which pairs $c$ and $d$ it is irreducible in general. Nevertheless, we present the following meta-result for the rapid mixing of the switch chain, which in particular applies in case the degrees are component-wise regular (Theorem \ref{thm:stable_jdm1}).  %For the case of two classes, however, the realization problem is polynomial time solvable, and (an adjusted version of) the switch chain is irreducible if one also allows so-called double switches \cite{ErdosHIM2017}. 

%\pkrem{More/different references here.}
%One special case for which Assumption \ref{ass:irreducibility} is satisfied, is when all degrees within a given set of the partition are regular, that is, if $d_a = d_b$ for $a,b \in V_i$ for any $i = 1,\dots,q$. We denote the common degree in class $i$ by $\beta_i$. This corresponds to the \emph{Joint Degree Matrix model} introduced in \cite{Amanatidis2015}. Moreover, for this model it is also well-known that an initial state for the switch chain above can be computed in time polynomial in $n$, and that the switch chain is irreducible \cite{Amanatidis2015}. The following is the main result of this section.

\begin{theorem}\label{thm:switch_2} Let $\mathcal{D}$ be a strongly stable family of tuples $(\gamma,d)$ with respect to some constant $k$, and suppose there exists a function $p_0 : \N \rightarrow \N$ with the property that, for any fixed $x \in \N$: if $(\gamma,d) \in \mathcal{D}$, and $G, G' \in \mathcal{G}(\gamma,d)$ so that $|E(G)\triangle E(G')| \leq x$, the switch-distance satisfies
$
\text{dist}_{\mathcal{G}(\gamma,d)}(G,G') \leq p_0(x).
$\label{prop:switch2}
Then the switch chain is rapidly mixing for all tuples in the family $\mathcal{D}$ with respect to the uniform stationary distribution over $\mathcal{G}(\gamma,d)$.
\end{theorem}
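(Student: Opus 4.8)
The plan is to combine the rapid mixing of the hinge flip chain (Theorem~\ref{thm:auxiliary}) with an embedding argument in the same spirit as the proof of Theorem~\ref{thm:transformation}. First I would invoke Theorem~\ref{thm:auxiliary} to obtain, for every $(\gamma,d)\in\mathcal{D}$, an efficient multicommodity flow $f=f_{(\gamma,d)}$ for the hinge flip chain $\mathcal{M}(\gamma,d)$ on the enlarged state space $\mathcal{G}'(\gamma,d)$, with $\max_e f(e)\le p(n)/|\mathcal{G}'(\gamma,d)|$ and $\ell(f)\le r(n)$ for fixed polynomials $p,r$. By strong stability, for each $G\in\mathcal{G}'(\gamma,d)\setminus\mathcal{G}(\gamma,d)$ we fix some $\varphi(G)\in\mathcal{G}(\gamma,d)$ within hinge-flip-distance $k$ of $G$, and set $\varphi(G)=G$ on $\mathcal{G}(\gamma,d)$; since the in- and out-degrees of the state space graph of $\mathcal{M}(\gamma,d)$ are at most $n^3$, we get $|\varphi^{-1}(H)|\le n^{3k}$ for every $H\in\mathcal{G}(\gamma,d)$, and in particular $|\mathcal{G}'(\gamma,d)|/|\mathcal{G}(\gamma,d)|\le n^{3k}$.

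Next I would mimic the three moves of the proof of Theorem~\ref{thm:transformation}. Restrict the flow $f$ to the pairs $(x,y)\in\mathcal{G}(\gamma,d)\times\mathcal{G}(\gamma,d)$ and rescale it by $|\mathcal{G}'(\gamma,d)|^2/|\mathcal{G}(\gamma,d)|^2$ to obtain a flow $h$ that routes $1/|\mathcal{G}(\gamma,d)|^2$ units between each such pair; as in~\eqref{eq:traverse1}, $h(e)\le n^{3k}p(n)/|\mathcal{G}(\gamma,d)|$ on every edge of the hinge-flip state space graph. Then merge the auxiliary states: form the graph $\Gamma$ with vertex set $\mathcal{G}(\gamma,d)$ where $(H,H')$ is an edge iff $H,H'$ are switch adjacent (\emph{legal} edges) or there are $G\in\varphi^{-1}(H)$, $G'\in\varphi^{-1}(H')$ adjacent in $\mathcal{M}(\gamma,d)$ (\emph{illegal} edges); push $h$ down to a flow $h_\Gamma$ on $\Gamma$ by replacing each flow-carrying path $(G_1,\dots,G_q)$ by $(\varphi(G_1),\dots,\varphi(G_q))$ with loops removed, so that $\ell(h_\Gamma)\le\ell(f)$ and, exactly as in~\eqref{eq:up_bound_h_Gamma}--\eqref{eq:traverse2}, $h_\Gamma(e)\le n^{3k}\cdot n^{6k}\cdot p(n)/|\mathcal{G}(\gamma,d)|$ using $|\varphi^{-1}(H)\times\varphi^{-1}(H')|\le n^{6k}$. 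The crucial geometric observation carries over: if $(H,H')\in E(\Gamma)$ then $|E(H)\triangle E(H')|\le 4k$, since any two graphs adjacent in $\mathcal{M}(\gamma,d)$ differ by at most two edges, hence $|E(G^*)\triangle E(\varphi(G^*))|\le 2k$ for any $G^*$, and the triangle inequality on symmetric differences gives the bound.

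The final step is the rerouting of flow on illegal edges over short legal detours, and this is exactly where the hypothesis of Theorem~\ref{thm:switch_2} is used in place of Theorem~\ref{thm:swap}. For an illegal edge $(H,H')$ we have $|E(H)\triangle E(H')|\le 4k$, and both $H,H'$ lie in $\mathcal{G}(\gamma,d)$; by assumption the switch-distance $\mathrm{dist}_{\mathcal{G}(\gamma,d)}(H,H')\le p_0(4k)$, a constant depending only on $k$, so we pick a shortest switch path from $H$ to $H'$ and reroute $h_\Gamma$ of the illegal edge along it. Since the maximum degree in the switch-chain state space graph is at most $n^4$ and any illegal edge rerouted through a given legal edge $e$ must lie within switch-distance $p_0(4k)$ of $e$, the number of illegal detours through $e$ is at most $(n^4)^{p_0(4k)}\cdot(n^4)^{p_0(4k)}=n^{8p_0(4k)}$, so the resulting flow $g$ on the switch chain satisfies $g(e)\le n^{9k+8p_0(4k)}\,p(n)/|\mathcal{G}(\gamma,d)|$ (after also deleting any loops, which only decreases flow), while $\ell(g)\le p_0(4k)\cdot\ell(h_\Gamma)\le p_0(4k)\cdot r(n)$. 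Both bounds are polynomial in $n$ since $k$ and $p_0(4k)$ are constants, so $g$ is an efficient multicommodity flow for the switch chain; by the flow–mixing bound of Section~\ref{sec:preliminaries} (using $\theta\ge n^{-4}$ and the uniform stationary distribution on $\mathcal{G}(\gamma,d)$) the switch chain is rapidly mixing. I expect the main obstacle to be purely bookkeeping: verifying that the merging/rerouting argument of Theorem~\ref{thm:transformation} transfers verbatim when the auxiliary chain is the hinge flip chain rather than the JS chain — in particular that $h$ is still a valid flow, that deleting cycles never increases congestion, and that irreducibility of the switch chain on $\mathcal{G}(\gamma,d)$ (needed for the flow to exist between all pairs) is guaranteed, which for the component-wise regular case follows from~\cite{Amanatidis2015,CzabarkaDEM15} and in general is subsumed by the hypothesis that every switch-distance in question is finite.
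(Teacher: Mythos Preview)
Your proposal is correct and follows essentially the same approach as the paper, which merely sketches that one should repeat the embedding argument of Theorem~\ref{thm:transformation} with the hinge flip chain replacing the JS chain and the hypothesis on $p_0$ replacing Theorem~\ref{thm:swap}. One small arithmetic slip: since every hinge flip changes exactly two edges (unlike the JS chain, where a Type~0 or Type~2 move touches only one), you get $|E(G^*)\triangle E(\varphi(G^*))|\le 2k$ rather than $2k-1$, and hence $|E(H)\triangle E(H')|\le 2k+2+2k=4k+2$ for an illegal edge, so you should invoke $p_0(4k+2)$ rather than $p_0(4k)$; this is inconsequential for the argument.
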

\begin{proof}[Proof (sketch)]
First note that by definition of the function $p_0$ the switch chain is irreducible. Moreover, it is not hard to see that the switch chain is aperiodic and symmetric as well. This implies that it has a unique stationary distribution which is the uniform distribution over $\mathcal{G}(\gamma,d)$. Moreover, by assumption of strong stability, we know that the auxiliary chain $\mathcal{M}(\gamma,d)$ is rapidly mixing. In particular, from Theorem \ref{thm:auxiliary}, we know there exists a flow $f$  that routes $1/|\mathcal{G}'(\gamma,d)|^2$ units of flow between any pair of states in $\mathcal{G}(\gamma,d)$ in the state space graph of the chain $\mathcal{M}(\gamma,d)$, with the property that $f(e) \leq p(n)/|\mathcal{G}'(\gamma,d)|$, and $\ell(f) \le r(n)$, for some polynomials $p(\cdot), r(\cdot)$ whose degrees may only depend on $k(\gamma,d)$.

One can then use exactly the same embedding argument as in the proof of Theorem \ref{thm:transformation}. The existence of the function $p_0$, together with the notion of strong stability as defined in Appendix \ref{sec:auxiliary}, are sufficient for reproducing all the arguments.
\end{proof}

%\subsection{Regular families}
%We show that when both classes have regular degrees, the switch chain is rapidly mixing. We do this by showing that the properties in the statement of Theorem \ref{thm:switch} are satisfied.
%%We obtain the following for the Joint Degree Matrix model with two regular classes. We assume that the regular degrees in the classes are so that for any graphical realization $G \in \mathcal{G}'(\gamma,d)$ no node $v$ can be adjacent to all nodes in $V_1$ and also not adjacent to all nodes in $V_2$.

\begin{rtheorem}{Theorem}{\ref{thm:stable_jdm1}}
	Let $\mathcal{D}$ be the family of instances of the joint degree matrix model with two degree classes. Then the switch chain is rapidly mixing for  instances in $\mathcal{D}$.
\end{rtheorem}

\begin{proof}
Strong stability was shown in the previous section in Theorem \ref{cor:stable_jdm}. Moreover, from the proof of Lemma 7 in \cite{Amanatidis2015} it follows that for any two graphs $H, H' \in \mathcal{G}(\gamma,d)$, $H$ can be transformed into $H'$ using at most $\frac{3}{2}|E(H) \triangle E(H')|$ switches of the restricted switch chain. That is, we may take $p_0(x) = \frac{3}{2}x$. Then the statement follows from Theorem \ref{thm:switch_2}.
\end{proof}

%\newpage

\bigskip

%%%%%%%%%%%%%%%%%%%%%%%%%%%%%%%%%%%%%%%%
%%%%%%%%%%%%%%%%%%%%%%%%%%%%%%%%%%%%%%%%
\addcontentsline{toc}{section}{References}
\bibliographystyle{plain}
%\bibliography{references-03,references_04}
\bibliography{references_04}
%%%%%%%%%%%%%%%%%%%%%%%%%%%%%%%%%%%%%%%%
%%%%%%%%%%%%%%%%%%%%%%%%%%%%%%%%%%%%%%%%

\end{document}